\documentclass[a4paper,USenglish,cleveref,autoref,thm-restate,numberwithinsect]{lipics-v2021}

\usepackage{amsmath,amssymb,amsthm}
\usepackage{thmtools}
\usepackage{url}
\usepackage{subcaption}

\usepackage{todonotes}

\usepackage[draft]{fixme}
\fxsetup{mode=multiuser,theme=color, layout=inline}
\FXRegisterAuthor{tk}{atk}{\color{blue} {\bf T}}
\FXRegisterAuthor{ai}{aai}{\color{red} {\bf GPT}}
\FXRegisterAuthor{rv}{arv}{\color{purple} {\bf Reviewer}}

\usepackage{verbatim}
\usepackage{tikz}
\usetikzlibrary{positioning,arrows,arrows.meta,fit,decorations.pathreplacing}
\tikzset{nosep/.style={inner sep=0, outer sep=0}}

\newtheorem{fact}[theorem]{Fact}
\def\restatespacefix{
\vspace{-1.9ex}
}
\newenvironment{proofsketch}{\begin{proof}[Proof sketch]}{\end{proof}}

\newcommand{\floor}[1]{\lfloor#1\rfloor}
\newcommand{\angles}[1]{\langle#1\rangle}

\newcommand{\ceil}[1]{\lceil#1\rceil}

\newcommand{\absolute}[1]{\lvert#1\rvert}

\newcommand{\aabsolute}[1]{\|#1\|}
\newcommand{\absoluteenc}[1]{\absolute{\senc{#1}}}

\newcommand{\opstyle}[1]{\textnormal{\textsf{{#1}}}}

\def\Oh{{\mathcal O}}
\def\tOh{{\tilde{\Oh}}}

\def\eps{{\varepsilon}}
\newcommand{\Z}{\mathbb{Z}}
\newcommand{\Zz}{\Z_{\ge 0}}
\newcommand{\Zp}{\Z_{+}}

\def\poly{\textnormal{poly}}
\def\polylog{\textnormal{polylog}}
\def\emptystring{{\epsilon}}

\def\B{\textnormal{\textbf{B}}}

\def\tB{\textnormal{\textbf{C}}}

\def\len{{\opstyle{len}}}

\def\int{{\opstyle{int}}}
\def\dd{\mathop{.\,.}}
\newcommand{\senc}[1]{\mathsf{sp\smash{\underline{\phantom{o}}}enc}(#1)}

\def\rank{\textup{\textsf{rank}}}
\def\select{\textup{\textsf{select}}}
\def\pred{\textup{\textsf{pred}}}
\def\zip{\textnormal{\textsf{zip}}}

\newcommand{\Sync}{\mathsf{Sync}}

\let\eqmatch\cong
\let\neqmatch\ncong

\title{Time-Optimal Construction\texorpdfstring{\\}{\ }of String Synchronizing Sets}
\titlerunning{Time-Optimal Construction of String Synchronizing Sets}

\author{Jonas Ellert}{DIENS, \'{E}cole normale sup\'{e}rieure de Paris, PSL Research University, France}{ellert.jonas@gmail.com}{https://orcid.org/0000-0003-3305-6185}{Partially funded by grant ANR20-CE48-0001 from the French National Research Agency.}

\author{Tomasz Kociumaka}{Max Planck Institute for Informatics, Saarland Informatics Campus, Saarbrücken, Germany}{tomasz.kociumaka@mpi-inf.mpg.de}{https://orcid.org/0000-0002-2477-1702}{}

\authorrunning{J. Ellert and T. Kociumaka} 

\Copyright{Jonas Ellert and Tomasz Kociumaka} 

\ccsdesc[500]{Theory of computation~Pattern matching}

\keywords{synchronizing sets, local consistency, packed strings} 

\category{} 

\hideLIPIcs
\nolinenumbers

\EventEditors{Meena Mahajan, Florin Manea, Annabelle McIver, and Nguy\~{\^{e}}n Kim Th\'{\u{a}}ng}
\EventNoEds{4}
\EventLongTitle{43rd International Symposium on Theoretical Aspects of Computer Science (STACS 2026)}
\EventShortTitle{STACS 2026}
\EventAcronym{STACS}
\EventYear{2026}
\EventDate{March 9--September 13, 2026}
\EventLocation{Grenoble, France}
\EventLogo{}
\SeriesVolume{364}
\ArticleNo{53}

\begin{document}

\maketitle

\begin{abstract}
    A powerful design principle behind many modern string algorithms is \emph{local consistency}: breaking the symmetry between string positions based on their short contexts so that matching fragments are handled consistently. 
Among the most influential instantiations of this principle are \emph{string synchronizing sets} [Kempa \& Kociumaka; STOC 2019]. 
A $\tau$-synchronizing set of a string of length~$n$ is a set of $O(n/\tau)$ string positions, chosen using their length-$2\tau$ contexts, such that (outside of highly periodic regions) every block of $\tau$ consecutive positions contains at least one element of the set.
Synchronizing sets have found dozens of applications in diverse settings, from quantum and dynamic algorithms to fully compressed computation.
In the classic word RAM model, particularly for strings over small alphabets, they enabled faster solutions to core problems in data compression, text indexing, and string similarity.

In this work, we show that any string $T \in [0 \dd \sigma)^n$ can be preprocessed in $O(n \log \sigma / \log n)$ time so that, for any given integer $\tau\in [1\dd n]$, a $\tau$-synchronizing set of $T$ can be constructed in $O((n \log \tau)/(\tau \log n))$ time.
Both bounds are optimal in the word RAM model with machine word size $w=\Theta(\log n)$, matching the information-theoretic minimum for the input and output sizes, respectively.
Previously, constructing a $\tau$-synchronizing set required $O(n/\tau)$ time after an $O(n)$-time preprocessing [Kociumaka, Radoszewski, Rytter, and Waleń; SICOMP 2024], or, in the restricted regime of $\tau < 0.2 \log_\sigma n$, without any preprocessing needed [Kempa \& Kociumaka; STOC 2019].

A simple instantiation of our method outputs the synchronizing set as a sorted list in $O(n/\tau)$ time, or as a bitmask in $O(n/\log n)$ time.
Our optimal construction produces a compact fully indexable dictionary, supporting select queries in $O(1)$ time and rank queries in $O\!\left(\log\!\left(\tfrac{\log \tau}{\log \log n}\right)\right)$ time.
The latter complexity matches known unconditional cell-probe lower bounds for $\tau \le n^{1-\Omega(1)}$.

To achieve this, we introduce a general framework for efficiently processing sparse integer sequences via a custom variable-length encoding. We also augment the optimal variant of van Emde Boas trees [Pătraşcu \& Thorup; STOC 2006] with a deterministic linear-time construction. When the set is represented as a bitmask under our sparse encoding, the same guarantees for select and rank queries hold after preprocessing in time proportional to the size of our encoding (in words).
\end{abstract}

\section{Introduction}

In many string processing tasks, one can afford a relatively costly preprocessing for a small fraction of positions in the input string.
The choice of these positions often governs how useful the preprocessing is.
The most natural selection mechanisms rely only on the length $n$ of the input string and the ``selection rate''~$\frac1\tau$; this includes selecting multiples of $\tau$, sampling positions uniformly at random, as well as difference covers~\cite{Mae85}.
Although these simple mechanisms are sufficient in some scenarios, the principle of a \emph{locally consistent} selection has enabled a myriad of new applications.
Here, the idea is to select a position $i$ based on the symbols at the nearby positions, without looking at the integer $i$ itself. 
This ensures that any two positions are handled in the same way whenever they appear in the same context.

The concept of local consistency can be traced back to Sahinalp and Vishkin's \emph{locally consistent parsing}~\cite{SV94a,SV94b,SV96}.
Further hierarchical parsing mechanisms appeared in~\cite{CM05,CM07,DBLP:journals/tcs/Jez15,DBLP:journals/jacm/Jez16,MSU97}.
Their drawback is that the context size is typically bounded in terms of the number of fixed-level phrases of the parsing, which can vary between regions of the string.

Local consistency is also frequently used in applied research, especially bioinformatics, where minimizers~\cite{10.1093/bioinformatics/bth408,DBLP:conf/sigmod/SchleimerWA03} are the most popular mechanism. 
Here, the context size is easily controlled, but the number of selected positions is known to be $\Oh(n/\tau)$ only for~random~strings.

String synchronizing sets~\cite{KK19} are a relatively recent local consistency mechanism that addresses the two issues with previous alternatives, paving the way for many more applications (see a dedicated paragraph near the end of this section).
A \mbox{$\tau$-synchronizing} set for a length-$n$ string $T$ is a set $\Sync\subseteq [0\dd n - 2\tau]$ satisfying two conditions:
The \emph{consistency} condition states that, if two positions $i,j$ share their size-$\Oh(\tau)$ contexts (formalized as $T[i\dd i+2\tau)=T[j\dd j+2\tau)$),
then either both positions are synchronizing ($i,j\in \Sync$) or neither of them is ($i,j\notin \Sync$).
The \emph{density} condition requires that, among every $\tau$ consecutive positions in $T$, at least one belongs to $\Sync$, except in highly periodic regions of $T$ (see \cref{def:sync} for a formal definition).
The central contribution of Kempa and Kociumaka~\cite{KK19} is that every length-$n$ string $T$ admits, for every positive integer $\tau<\frac12 n$, a $\tau$-synchronizing set of size $\Oh(n/\tau)$ that can be constructed deterministically in $\Oh(n)$ time.
This construction uses sliding window minima, similarly to how minimizers are defined.

The original motivation for string synchronizing sets was to efficiently process strings over small alphabets of size $\sigma = n^{o(1)}$.
In this regime, one can store $\Theta(\log_{\sigma} n)= \omega(1)$ symbols in a single machine word (integer variable) of $\Theta(\log n)$ bits,
which allows solving problems in $o(n)$ time, e.g., in $\Oh(n / \log_\sigma n)$ time, proportional to the size of the input in machine words.
The main algorithms of~\cite{KK19} utilize $\tau$-synchronizing sets for a small $\tau = \Theta(\log_{\sigma} n)$, which are built in $\Oh(n/\log_\sigma n)$ time.
More generally, \cite{KK19} provides an $\Oh(n/\tau)$-time $\tau$-synchronizing set construction for every $\tau < \frac15 \log_{\sigma} n$ by adapting the $\Oh(n)$-time construction for arbitrary $\tau$.

Several further works~\cite{ACIKRRSWZ19,DBLP:conf/esa/Charalampopoulos21,DBLP:journals/siamcomp/KociumakaRRW24} rely on a hierarchy of $\tau$-synchronizing sets constructed for different values of $\tau$.
Kociumaka, Radoszewski, Rytter, and Waleń~\cite{DBLP:journals/siamcomp/KociumakaRRW24} build such a hierarchy using a novel construction algorithm based on \emph{restricted recompression}, a locally consistent parsing scheme that modifies recompression by Jeż~\cite{DBLP:journals/tcs/Jez15,DBLP:journals/jacm/Jez16}. 
After an $\Oh(n)$-time preprocessing, their algorithm constructs a $\tau$-synchronizing set (given any $\tau$) in $\Oh(n/\tau)$ time.

While the two existing constructions have already been very impactful, several natural questions remain open, limiting further use cases.
The following is perhaps the simplest one:
\begin{quote}\itshape
    Can one construct a small $\tau$-synchronizing set in $o(n)$ time if $\tau \ge \log_\sigma n$ and $\sigma=n^{o(1)}$?
\end{quote}
In general, we can ask for a single construction algorithm subsuming the two current methods:
\begin{quote}\itshape
    Is there an algorithm that, after  $\Oh(n/\log_\sigma n)$-time preprocessing of a given string, allows constructing a $\tau$-synchronizing set in $\Oh(n/\tau)$ time for any given $\tau < \frac12n$?
\end{quote}
The first main result of this work, presented in \cref{sec:recompression,sec:runs,sec:sync}, answers both questions positively:

\begin{theorem}[Simplified version of \cref{thm:ss-explicit}]\label{thm:ss-explicit-simple}
    A string $T\in [0\dd \sigma)^n$ can be preprocessed in $\Oh(n/\log_\sigma n)$ time so that, given $\tau \le \frac12n$, a $\tau$-synchronizing set $\Sync$ of $T$ of size ${|\Sync| < \frac{70n}{\tau}}$ can be constructed in $\Oh(\frac{n}{\tau})$ time.
\end{theorem}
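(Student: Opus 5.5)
The plan is to follow the route suggested by the section titles (recompression, runs, sync). We build a restricted-recompression hierarchy in the spirit of Kociumaka, Radoszewski, Rytter, and Waleń, but we compute it in packed form. The target is that preprocessing runs in $\Oh(n/\log_\sigma n)$ time rather than $\Oh(n)$.

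\textbf{Step 1: packed levels.} Set $\tau_0=\Theta(\log_\sigma n)$ with a small constant. For $\tau\le\tau_0$ we do not use the hierarchy at all. Instead we apply the classic sliding-window-minimum construction with word-packed lookup tables of size $n^{o(1)}$. A window of $\Oh(\tau)$ symbols fits in $o(\log n)$ bits, so each block of $\tau$ positions is processed in $\Oh(1)$ time via table lookups. This gives $\Oh(n/\tau)$ time after $o(n/\log_\sigma n)$ preprocessing, matching the approach of Kempa and Kociumaka.

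\textbf{Step 2: the first recompression level.} To go beyond $\tau_0$, we process $T$ in blocks of $\Theta(\log_\sigma n)$ symbols, again via lookup tables. The goal is a locally consistent parsing of $T$ whose context is $\Oh(\tau_0)$ and whose phrases have length $\Theta(\tau_0)$, except inside runs with period at most $\tau_0$. Runs are handled separately: we compute all maximal runs with small period in packed time, as promised by \cref{sec:runs}, and collapse each run into a single phrase. This yields a string $T_1$ of length $\Oh(n/\tau_0)$ over a polynomial alphabet.

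\textbf{Step 3: the remaining hierarchy.} On $T_1$ we run the $\Oh(|T_1|)$-time restricted-recompression algorithm. It produces levels $\ell$ where phrases at level $\ell$ have context $\Oh(2^\ell\tau_0)$ and their number is $\Oh(n/(2^\ell\tau_0))$. Since $|T_1|=\Oh(n/\log_\sigma n)$, the total preprocessing is $\Oh(n/\log_\sigma n)$.

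\textbf{Step 4: answering a query $\tau$.} Given $\tau>\tau_0$, we pick the level with $2^\ell\tau_0=\Theta(\tau)$ and take the starting positions of its phrases as $\Sync$, filtered by the condition $i\le n-2\tau$. This is where the density and consistency conditions of \cref{def:sync} must be checked. Consistency follows because a phrase boundary at level $\ell$ depends only on a window of length less than $2\tau$ once the constants are tuned. Density holds because consecutive boundaries are at most $\tau$ apart, except inside periodic regions. To meet the exact definition in those regions, we adjust with the run information: we add or remove boundaries at run ends so that positions whose length-$2\tau$ context has period at most $\tau/3$ are treated as the definition requires. Listing the boundaries costs time proportional to their number, $\Oh(n/\tau)$. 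The size bound $|\Sync|<70n/\tau$ comes from the constants in the phrase-length guarantees.

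\textbf{Main obstacle.} The hardest part is Step 2: simulating the low levels of recompression and the run detection in packed time. Under restricted recompression, the alphabet of phrases grows between levels, so we must show that the first $\Theta(\log\tau_0)$ levels together act as a local function of $\Oh(\tau_0)$ symbols. That locality is exactly what makes a single lookup table of size $\sigma^{\Oh(\tau_0)}=n^{o(1)}$ sufficient. We also need the run boundaries to interact correctly with later levels.
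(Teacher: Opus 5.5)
\paragraph{The first rounds of recompression.} The step you flag as the main obstacle is the heart of the proof, and locality alone does not resolve it. That the first rounds are a function of contexts of length $\Oh(\alpha_K)$ is already \cref{prop:bk_properties}(b), so a table indexed by contexts exists. The difficulty is filling that table in $\Oh(n/\log_\sigma n)$ time, because its contents depend on $T$ globally. In each odd round, the partition of the distinct short phrases into $L$ and $R$ is an approximate maximum directed cut of a graph weighted by how often pairs of phrases occur adjacently in $T$. The bound $|\B_k|\le 4n/\lambda_k$ rests on that cut: with a text-independent assignment that puts $\mathtt a,\mathtt c$ on the same side, no boundary of $(\mathtt{ac})^{n/2}$ is ever removed.

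The missing idea is to simulate the first $K=\Theta(\log\log_\sigma n)$ rounds over the $\Oh(\sqrt n)$ possible contexts rather than over text positions:
\begin{enumerate}
\item Build, in $\Oh(n/\log_\sigma n)$ time, an index that returns the number of occurrences in $T$ of any string of length at most $(\log_\sigma n)/8$.
\item Maintain the set $\tB_k$ of boundary contexts as a bitmask. Obtain $\tB_{k+1}$ by locating the neighboring boundaries inside each length-$2\alpha_{k+1}$ context through probes to $\tB_k$.
\item In odd rounds, weight the phrase graph by these occurrence counts and compute the cut in $\Oh(\sqrt n)$ time.
\item Only then scan $T$ block-wise with a lookup table to mark occurrences of contexts in $\tB_K$.
\item Name the resulting $\Oh(n/\log_\sigma n)$ phrases by their length and length-$2\lambda_K$ prefix, and continue with the linear-time algorithm of \cite{DBLP:journals/siamcomp/KociumakaRRW24}.
\end{enumerate}
Because this computes exactly the sets $\B_k$ of \cref{prop:bk_properties}, it also avoids a problem in your Steps~2--3. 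Collapsing runs into single phrases and restarting recompression on $T_1$ lets a boundary's context span collapsed runs of unbounded length in $T$. Property (b), a context bounded in characters of $T$, would then have to be re-proved from scratch.

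\paragraph{Step 4.} This step has two further gaps. First, you only compute runs of period at most $\tau_0$. For $\tau>\tau_0$, however, the construction needs all of $\mathsf{RUNS}_\tau(T)$, with periods up to $\tau/3$, in $\Oh(n/\tau)$ time. It needs them both to add positions next to run ends and to discard positions $i$ whose context $T[i\dd i+2\tau)$ is periodic. The paper obtains them from constant-time run-extension queries, after $\Oh(n/\log_\sigma n)$ preprocessing, applied to the $\Oh(n/\tau)$ sampled fragments $T[j\Delta\dd j\Delta+2\floor{\tau/3})$ with $\Delta=\tau+1-2\floor{\tau/3}$.

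Second, taking the phrase starts themselves as $\Sync$ breaks consistency. A boundary at $i$ is determined by a context centered at $i$, not by $T[i\dd i+2\tau)$. You need the shift $i\in\Sync$ iff $i+\tau\in\B_{k(\tau)}$, with $k(\tau)$ chosen so that $\alpha_{k(\tau)}\le\tau$. The run-based positions are then added and the result filtered by the periodicity condition, exactly as in \cref{prp:sync}.
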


Our algorithm returns $\Sync$ as a sorted list, but the same techniques allow returning $\Sync$ as a bitmask in $\Oh(n/\log_\sigma n)$ time, which is faster when $\tau = o(\log_\sigma n)$; see \cref{thm:ss-bitmask}.

While these two representations are \emph{compact} (asymptotically optimal up to a constant factor) for $\Oh(n/\tau)$-size subsets when $\tau = n^{\Omega(1)}$ and $\tau = \Oh(1)$, respectively, we can hope for a smaller representation and a faster construction algorithm for intermediate values of~$\tau$.
We address this in \cref{sec:faster}, where we provide an $\Oh(\frac{n \log \tau}{\tau \log n})$-time algorithm that outputs a representation supporting efficient select and rank queries (asking for the $r$-th smallest synchronizing position and the number of synchronizing positions smaller than~$i$,~respectively).

\begin{theorem}[Simplified version of \cref{thm:sss_sparse_with_support}]\label{thm:sss_sparse_with_support_simple}
  A string $T\in [0\dd \sigma)^n$ can be preprocessed in $\Oh(n/\log_\sigma n)$ time so that, given $\tau \le \frac12n$, a $\tau$-synchronizing set $\Sync$ of $T$ of size $|\Sync| < \frac{70n}{\tau}$ can be constructed in $\Oh(\frac{n \log \tau}{\tau \log n})$ time.
  The set is reported in an $\Oh(\frac{n \log \tau}{\tau})$-bit representation that supports select queries in $\Oh(1)$ time and rank queries in $\Oh(\log\frac{\log \tau}{\log \log n})$~time.
\end{theorem}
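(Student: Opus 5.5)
The plan is to build on \cref{thm:ss-explicit-simple}, but with a faster second stage. In the explicit construction, restricted recompression produces $\Oh(n/\tau)$ synchronizing positions, and each of them costs $\Theta(1)$ time. To get below that, the output must be produced in a compressed form: gap encodings of size $\Oh(\log \tau)$ bits per element, packed $\Theta(\log n/\log\tau)$ to a machine word. All operations must act on whole words of this encoding. The preprocessing is the same as in \cref{thm:ss-explicit-simple}: we compute the restricted-recompression hierarchy and the runs structure in $\Oh(n/\log_\sigma n)$ time. For a query $\tau$, we select the level whose phrases have length $\Theta(\tau)$. If $\tau \ge n^{\Omega(1)}$, then $\log\tau = \Theta(\log n)$, and the explicit algorithm already meets the bound. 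If $\tau \le \log^{\Oh(1)} n$ (more precisely, $\tau=\Oh(\log_\sigma n)$), the bitmask variant \cref{thm:ss-bitmask} runs in $\Oh(n/\log_\sigma n)$ time; that bitmask can then be converted word-by-word into the sparse encoding. Both regimes are therefore easy, and the work lies in the middle range.

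\textbf{Step 1: packed sparse sequences.} We fix a variable-length encoding of an increasing integer sequence by its gaps. Typical gaps of size $\Oh(\tau)$ take $\Oh(\log\tau)$ bits each. Rare large gaps, which occur in periodic regions and at level boundaries, use escape codes that may take a whole word. We then prove a toolkit over this encoding. Each of the following operations must run in time linear in the number of words of input plus output: merging two sequences, shifting every element by a constant, filtering by a predicate evaluated on packed data, and mapping through a monotone function given in packed form. The key device is lookup tables over $\frac12\log n$-bit chunks, precomputed in $o(n)$ time, which process $\Theta(\log n/\log\tau)$ elements per step. Carries across chunk boundaries are handled by a constant amount of state.

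\textbf{Step 2: run the synchronizing-set selection in packed form.} The selection rule behind \cref{thm:ss-explicit-simple} takes phrase boundaries at the chosen recompression level, removes those inside long runs of period $\le\tau/3$, and adds the run-boundary positions. We reformulate this rule as a constant number of Step 1 operations. We store the level's boundaries during preprocessing already in the sparse encoding; the total over geometrically growing levels is still $\Oh(n/\log_\sigma n)$ words. The runs with period $\le \tau/3$ are handled as a sparse interval list. The filtering becomes a merge-and-mask pass with the interval list, and the output has $\Oh(\frac{n\log\tau}{\tau\log n})$ words.

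\textbf{Step 3: query support.} For select, we sample every $\Theta(\log n/\log\tau)$-th element together with its word offset, and finish inside a word by table lookup; this gives $\Oh(1)$ time. For rank, we build the Pătraşcu--Thorup optimal predecessor structure over the $\Oh(\frac{n\log\tau}{\tau\log n})$ sampled keys from the universe $[0\dd n)$. This yields $\Oh(\log\frac{\log\tau}{\log\log n})$ query time, again followed by an in-word table lookup. The main obstacle is building this structure deterministically in linear time. Its standard construction uses hashing, so I would first try replacing the hash tables by tries over $\frac12\log n$-bit digits with shared lookup tables. The fusion-style nodes would be built from the sorted input in one pass, since sortedness removes the need for any dictionary-building randomness.
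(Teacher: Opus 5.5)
Your architecture (packed gap encodings processed by lookup tables, precomputed level boundaries, a deterministic P\u{a}tra\c{s}cu--Thorup structure) matches the paper's. However, your case split is wrong at the small end, and this hides the hardest part of the proof. Any query that materializes a length-$n$ bitmask, as \cref{thm:ss-bitmask} does, costs $\Omega(n/\lg n)$. That exceeds the target $\Oh(\frac{n\lg\tau}{\tau\lg n})$ by a factor $\Theta(\tau/\lg\tau)$: for $\sigma=2$ and $\tau=\Theta(\lg n)$ you spend $n/\lg n$ instead of $n\lg\lg n/\lg^2 n$. So the bitmask route only covers $\tau=\Oh(1)$, and the whole range $\omega(1)\le\tau\le\log_\sigma n$ must go through Step~2, which is missing two ingredients there.

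First, for $\tau<16\lambda_K=\Theta(\log_\sigma n)$, $k(\tau)$ is one of the $\Theta(\log\log_\sigma n)$ lowest levels. Their sparse encodings do total $\Oh(n)$ bits, but any per-level pass over $T$ (e.g., \cref{lem:bk-bitmask}) costs $\Theta(n/\log_\sigma n)$ per level. That is $\Theta(\frac{n\log\log_\sigma n}{\log_\sigma n})$ overall, over the preprocessing budget. The missing idea is the paper's single array $\mathcal B_0[i]=\max\{k+1 : i\in\B_k\}$. Its $\Oh(n)$-bit sparse encoding comes from one table-driven pass over $T$ plus explicit lists above level $K$ (\cref{lem:faster_recomp_intermediate}). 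Then $\senc{\mathcal B_{j+1}}$ is derived from $\senc{\mathcal B_j}$ by a decrement transducer, in time proportional to geometrically decaying sizes (\cref{lem:faster_recomp}).

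Second, you never say how the $\tau$-runs become available in packed form. \cref{prp:runs-explicit} costs $\Oh(n/\tau)$ per query, and runs of period below $\log_\sigma n$ can number $\Theta(n)$, so they cannot be listed during preprocessing either. \cref{lem:faster_runs} handles the two regimes separately. For large $\tau$, it stores sparse start/period/length arrays per geometric class $[1.1^j\dd 1.1^{j+1})$. For periods below $P=\Theta(\log_\sigma n)$, it stores an $\Oh(n)$-bit array of all such runs (lengths truncated to $\Oh(P)$), built by block lookups, together with nested filtered copies. A query is then a single filtering pass.

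Steps~1 and~3 also gloss over places where real work is needed. When combining two streams, the residual of a partially consumed long gap is unbounded, so ``a constant amount of state'' does not suffice. The paper caps this residual at $M$ in the table index, using the fact that larger values change the entry only by an additive shift, and it decodes long tokens separately (\cref{lem:fast_pair_zip}).

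For select and rank, sampling every $\Theta(\lg n/\lg\tau)$-th element does not leave an $\Oh(1)$-word block to finish by lookup. In long periodic stretches all gaps can be large, so such a block may span $\Theta(\lg n/\lg\tau)$ words. The paper samples by position in the encoding instead, using pieces of at most $\lg N$ bits or single long zero-runs (\cref{lem:decompose_for_rank_select}). It answers select via rank/select on two auxiliary bitmasks of length $|\senc{A}|$, and rank via a predecessor structure on the piece starts.

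Finally, your deterministic-construction idea does not work as stated. Tries over $\frac12\lg n$-bit digits need $\Theta(\sqrt n)$ words per node under direct addressing, and sortedness alone does not give constant-time dictionaries deterministically in linear time. The paper uses Ru\v{z}i\'{c}'s deterministic dictionaries, whose construction is superlinear. It builds the van Emde Boas structure only on every $w^2$-th key and uses dynamic fusion nodes within the blocks (\cref{lem:vEB_final_space_and_time}).
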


As discussed in \cref{lem:opt}, our representation size is asymptotically optimal. 
Hence, the preprocessing time and the construction time are also optimal for the machine word size of $\Theta(\log n)$ bits.
The time complexity of rank queries reduces to constant time for $\tau = \log^{\Oh(1)} n$ and matches the unconditional lower bound of Pătraşcu and Thorup~\cite{DBLP:conf/stoc/PatrascuT06} for $\tau \le n^{1-\Omega(1)}$.

\subparagraph*{Our Techniques.}
The algorithm behind \cref{thm:ss-explicit-simple} builds upon the construction of \cite{DBLP:journals/siamcomp/KociumakaRRW24}.
For this, in \cref{sec:recompression}, we show how to implement restricted recompression in $\Oh(n/\log_{\sigma} n)$ time. 
During the first $K=\Theta(\log \log_{\sigma} n)$ rounds, we simulate restricted recompression on every possible length-$\Oh(\log_{\sigma} n)$ context and keep track of how many times each context occurs in the input string.
In the remaining rounds, we process each phrase in constant time.
What enables such an approach is that the context size in the $k$-th round of restricted recompression is bounded by $\Oh(\lambda_k)$ and the number of phrases is $\Oh(n/\lambda_k)$, where $\lambda_k=2^{\Theta(k)}$.

Before we can derive \cref{thm:ss-explicit-simple} in \cref{sec:sync}, we also need a data structure for reporting runs (maximal periodic fragments) in the input string, which we present in \cref{sec:runs}.

Upon a transition from \cref{thm:ss-explicit-simple} to \cref{thm:sss_sparse_with_support_simple}, several seemingly simple steps become difficult to implement.
The prevailing challenge is to process sparse integer sequences in $o(1)$ time per non-zero entry.
Our strategy is to introduce a variable-length encoding of such sequences, implement operations on them using \emph{transducers} (finite automata with output tapes), and provide a general method for speeding up transducer execution using precomputed tables.
We believe that our approach, outlined in \cref{sec:faster}, will be useful in many other contexts to efficiently manipulate compact data representations.

In particular, in order to provide rank and select support in \cref{thm:sss_sparse_with_support_simple}, we provide a new compact variant of van Emde Boas trees~\cite{DBLP:conf/focs/Boas75} matching the optimal query time bounds of Pătraşcu and Thorup~\cite{DBLP:conf/stoc/PatrascuT06} and, unlike existing variants, constructible deterministically in linear time.

\begin{restatable}{theorem}{restatevEB}\label{lem:vEB_final_space_and_time}
    Let $S \subseteq [0\dd 2^\ell)$ of size $\absolute{S} = n$ with $\ell \geq 2$ and $n, 2^\ell \in 2^{\Oh(w)}$ be given as an array of $\ell$-bit integers in increasing order. For $m \geq n$, let $a = \lg(m / n) + \lg w$.
    A~deterministic data structure that answers rank and predecessor queries in $\Oh(\lg \frac{\ell - \lg m}{a})$ time can be built in $\Oh(m)$ time and words of space.
\end{restatable}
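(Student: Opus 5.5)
We follow the Pătraşcu--Thorup blueprint and make every step deterministic and linear-time. Let $h=\lg(m/n)\ge 0$ and $a=h+\lg w$. The idea is to shrink the universe to length $\ell-\lg m$ by bucketing, and then run a van Emde Boas style recursion that stops once the key length is $\Oh(a)$.

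\emph{Reduction of the universe.} Partition $S$ by the top $\lg m$ bits into at most $m$ buckets. Store a bitvector of length $m$ marking the nonempty buckets, with constant-time rank/select, together with a cumulative-count array. Both are built in $\Oh(m)$ time by a single left-to-right scan of the sorted input. Each bucket is now a set over a universe of $\ell'=\ell-\lg m$ bits, so a query reduces in $\Oh(1)$ time to a rank/predecessor query inside one bucket (if the bucket is empty, we instead take the last element of the preceding nonempty bucket). If $\ell'\le a$, each bucket has size $\Oh(2^{\ell'}) \le \Oh(w\, m/n)$. We then answer queries with a fusion-node-like structure over packed $\ell'$-bit keys, or with a bitmask of length $2^{\ell'}$ per bucket, whose total size is $\Oh(m)$ words. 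This gives $\Oh(1)$ time and settles the base case.

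\emph{Recursion on key length.} Otherwise, in each bucket we apply the vEB splitting step but halve $\ell'-a$ rather than $\ell'$. The high half indexes clusters, the low half is recursed on, and the recursion stops at key length $a$. The depth is therefore $\Oh(\lg\frac{\ell'}{a})$, which is the claimed query time. At the leaves, key length $\le a=h+\lg w$ means $2^{a}=w\cdot m/n$. Hence we can afford, per leaf cluster, packed bitmasks whose total size is amortized against the $m/n$ words budgeted per element, and use word-level rank/select for $\Oh(1)$ leaf queries.

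\emph{Deterministic hashing and construction.} Randomization normally enters through the hash tables that map cluster high-parts to substructures. We avoid it by exploiting sortedness. Within each recursion level, the elements of a cluster are contiguous in the sorted array, so each cluster's summary set of high parts is obtained in sorted order by a linear scan. For the cluster dictionaries we use deterministic static perfect hashing built from sorted input in linear time (e.g., Ružić-style constructions, or a two-level radix table of size $\Oh(2^{\ell/2})$ where space permits). Alternatively, we store each summary recursively as a vEB-like structure, so that membership of a high part is itself a rank query. To keep space linear, we store cluster minima and maxima separately and recurse only on nonsingleton clusters, as in the standard analysis. Every level then costs time linear in the number of elements it touches, and since the depth is logarithmic, this geometric-in-nature bound adds up to $\Oh(m)$ when combined with the $m/n$ slack per element.

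\emph{Main obstacle.} The hardest part is the deterministic linear-time dictionary for mapping high parts to clusters with $\Oh(1)$ lookup. Deterministic perfect hashing in linear time is not known in general. My first attempt will exploit that we only need predecessor-type lookups among sorted keys: I will replace hashing with rank in an indexed bitvector over the high-part universe whenever that universe has size $\Oh(m)$, which holds at the top level. At deeper levels, I will rely on the recursion itself to supply the membership test, at $\Oh(1)$ extra cost per level.
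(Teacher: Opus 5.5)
There is a genuine gap, and it goes beyond the dictionary issue you flag.

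\textbf{The construction cost is not $\Oh(m)$.} The level sizes of a van Emde Boas recursion do not decay geometrically. With the standard charging, each key contributes exactly one key to the next level: its low part to its cluster or, for one key per cluster, its high part to the summary. So every level carries exactly $n$ keys. Storing cluster minima and maxima separately does not help in the worst case. If the high parts within a set are pairwise distinct, all clusters are singletons and the summary inherits all keys. This can persist until the key length drops to about $a$, for instance with buckets of size about $2^{a/2}$. Hence, even with $\Oh(1)$-per-key dictionaries, your construction costs $\Theta(n\lg\frac{\ell-\lg m}{a})$ time and space in the worst case. The $m/n$ slack absorbs this only if $m/n=\Omega(\lg\frac{\ell-\lg m}{a})$, which fails, e.g., for $m=\Theta(n)$ and $\ell-\lg m\gg\lg w$.

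\textbf{None of your dictionary options gives $\Oh(1)$ lookups within budget.} Ru\v{z}i\'c's deterministic dictionary is built in $\Oh(n\lg^2\lg n)$ time, not linear time. Direct tables of size $2^{\ell_j/2}$ per set, at a level with key length $\ell_j$, fit into $\Oh(m)$ words only near the bottom of the recursion. ``Letting the recursion supply the membership test'' means issuing a full query on the summary in addition to the query on the cluster. The recurrence then becomes $T(\ell)=2T(\ell/2)+\Oh(1)$, and the query time degrades to $\Oh(\frac{\ell-\lg m}{a})$. The whole point of the vEB step is that an $\Oh(1)$ lookup of the cluster's minimum and maximum decides which \emph{single} branch to recurse into.

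\textbf{The missing idea is indirection by sampling.} It removes both obstacles at once and lets you simply accept Ru\v{z}i\'c's superlinear dictionaries.
If $n=\Oh(\poly(w))$, one dynamic fusion node suffices.
Otherwise, build the full deterministic structure only on the sample $S'=\{S[iw^2]\}$ of size $n'=\ceil{n/w^2}$. This structure consists of vEB splitting with Ru\v{z}i\'c dictionaries, a tabulated top level on $\floor{\lg m}$ bits, and early termination at key length $\Theta(a)$. Your top-level bitvector and leaf bitmasks are fine for the latter two parts.
This costs $\Oh(m+n'\lg^2\lg n\cdot\lg\ell)=\Oh(m)$, because $\lg^2\lg n\cdot\lg\ell=\Oh(w^2)$.
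The query time remains $\Oh(\lg\frac{\ell-\lg m}{a})$, because replacing $n$ by $n'$ only increases $a$.
Every block of $w^2$ consecutive keys of $S$ gets its own dynamic fusion node, with linear construction and $\Oh(1)$ queries. A query locates the relevant block via the structure on $S'$ and then finishes inside the block in constant time.
This is how the paper proves the theorem.
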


If the input set is provided using our sparse representation, we even achieve construction time proportional to the number of machine words in the compact encoding of the set (and, in many cases, sublinear in the number of set elements); see \cref{sec:rankselect}.

\subparagraph*{Applications of Synchronizing Sets.}
Since their introduction in 2019, string synchronizing sets have found numerous applications across a variety of settings.
For strings over small alphabets, they enabled $o(n)$-time algorithms for fundamental decades-old tasks such as Burrows--Wheeler transform~\cite{KK19}, Lempel--Ziv factorization~\cite{DBLP:conf/spire/Ellert23,DBLP:conf/focs/KempaK24}, and the longest common factor problem~\cite{DBLP:conf/esa/Charalampopoulos21}.
They are also behind the only $o(n)$-time constructions of compact suffix array and suffix tree representations~\cite{DBLP:conf/soda/KempaK23}
and data structures for longest common extension (LCE)~\cite{KK19,DFHKK20} and internal pattern matching (IPM)~\cite{DBLP:journals/siamcomp/KociumakaRRW24} queries.
Further applications include detecting regularities in strings, such as palindromes~\cite{DBLP:conf/cpm/Charalampopoulos22}, squares~\cite{DBLP:conf/mfcs/Charalampopoulos25}, and covers~\cite{DBLP:conf/spire/RadoszewskiZ24}.

In some applications, including IPM queries~\cite{DBLP:journals/siamcomp/KociumakaRRW24} and the longest common factor with mismatches problem~\cite{DBLP:conf/esa/Charalampopoulos21}, string synchronizing sets also enabled speed-ups for large alphabets.
In certain cases, such as for the longest common circular factor problem~\cite{ACIKRRSWZ19}, the obtained speed-up is as large as from $\Oh(n^2)$ to $\Oh(n \log^{\Oh(1)} n)$. 
Polynomial-factor speed-ups also arise beyond the classic setting, thanks to adaptations of synchronizing sets (with custom constructions) to the dynamic~\cite{DBLP:conf/stoc/KempaK22}, quantum~\cite{DBLP:journals/algorithmica/AkmalJ23,DBLP:journals/talg/JinN24}, and fully-compressed~\cite{DBLP:journals/cacm/KempaK22,DBLP:conf/focs/KempaK24} settings.

\subparagraph*{Other Local Consistency Mechanisms.}
There are dozens of local consistency mechanisms with different features and use cases.
Beyond those mentioned above, theoretical methods include sample assignments~\cite{KRRW15} and the partitioning sets~\cite{BGP20}, which predate synchronizing sets.
In the algorithm engineering community, beyond minimizers (see \cite[Chapter 6]{GK25} for a very recent literature overview), notable alternatives include locally consistent anchors~\cite{Loukides2021,Ayad2025} and the prefix-free parsing~\cite{DBLP:journals/almob/BoucherGKLMM19}, which is particularly useful in the context of suffix sorting.

\section{Preliminaries}

For $i,j \in \Z$, we write $[i\dd j] = [i\dd j + 1) = (i-1\dd j] = (i-1\dd j + 1)$ to denote $\{ h \in \Z \mid i \leq h \leq j \}$. 
We also write $\lg i$ to denote $\log_2 \max(1,i)$.
A string $T$ of length $\absolute{T} = n$ over alphabet $[0 \dd \sigma)$ is a sequence of $n$ symbols from $[0\dd \sigma)$.
For $i, j \in [0 \dd n)$, the $i$-th symbol is $T[i]$, and the sequence $T[i]T[i + 1] \cdots T[j]$ is denoted by $T[i \dd j] = T[i \dd j + 1)$ (which is the empty string if $i > j$).
For $i', j' \in [0 \dd n)$, consider $T[i'\dd j']$. We may interpret $T[i \dd j]$ and $T[i'\dd j']$ as \emph{substrings}. 
We then write $T[i \dd j] \eqmatch T[i'\dd j']$ and say that the substrings match if and only if $j - i = j' - i'$ and $\forall_{h \in [0\dd j - i]}\; T[i + h] = T[i' + h]$. 
We can choose to interpret them as \emph{fragments} instead, in which case we write $T[i \dd j] = T[i'\dd j']$ and say that the fragments are equal if and only if $i = i'$ and $j = j'$ (or if both fragments are empty).
For fragments $T[i \dd j]$ and $T[i'\dd j']$, we denote their intersection as $T[i \dd j] \cap T[i'\dd j'] = T[\max(i, i')\dd \min(j, j')]$.

The concatenation of two strings $T[0\dd n)$ and $S[0\dd m)$ is defined as $T \cdot S := T[0] \cdots T[n - 1]\allowbreak S[0]\cdots S[m - 1]$, and the $k$-times concatenation of $T$ with itself is written as $T^k$, where $k \in \Zz$. Note that $T^0$ is the empty string.
The unique primitive root $R$ of a non-empty string $T$ is the shortest prefix of $T$ such that there is $k \in \Zp$ with $T = R^k$.

\subparagraph{Model of Computation.} We assume the word RAM model (see, e.g., \cite{DBLP:conf/stacs/Hagerup98}) with words of $w = \Theta(\lg n)$ bits when processing a text $T \in [0\dd \sigma)^n$ with $\sigma \in n^{\Oh(1)}$. The text and all other strings encountered throughout the paper are assumed to be in \emph{$\ceil{\lg \sigma}$-bit representation}, i.e., each symbol uses exactly $\lceil \lg \sigma\rceil$ bits, the entire text occupies $\Theta(n \lg \sigma)$ consecutive bits of memory, and each memory word fits $\Theta(\log_\sigma n)$ symbols. Using arithmetic and bitwise operations, we can extract any substring of length up to $\floor{\log_\sigma n}$ in a single word~in~$\Oh(1)$~time.

\subparagraph{Adding sentinel symbols to the text.}
In the analysis, $\sigma$ will only appear when the claimed (pre-)processing time is $\Oh(n / \log_\sigma n)$. For obtaining these results, we assume that $\lg \sigma \in \Z$, that the symbol $\texttt\textdollar := \sigma - 1$ does not occur in $T[0\dd n)$, and that we have access to the ${\lg \sigma}$-bit representation of $T[-n \dd 2n) := \texttt\textdollar^{n} \cdot T[0\dd n) \cdot \texttt\textdollar^n$. 
Now we show that this assumption is without loss of generality.

Let $\hat\sigma := 2^{\ceil{\lg (\sigma + 1)}}$ and note that $\hat\sigma \in (\sigma\dd 2\sigma]$.
We show how to compute the $\lg \hat\sigma$-bit representation of the string $\hat T[-n \dd 2n) := \hat{\texttt\textdollar}{}^{n} \cdot T[0\dd n) \cdot \hat{\texttt\textdollar}{}^n$ with $\hat{\texttt\textdollar} := \hat\sigma - 1 \geq \sigma$. 
This is done with a simple lookup table for translating short strings from $\ceil{\lg \sigma}$- to $\lg\hat\sigma$-bit representation. (This step is not needed if $\ceil{\lg \sigma} = \lg\hat\sigma$, i.e., if $\sigma$ is not a power of two.) The table stores, for each $S \in [0\dd \sigma)^{\floor{\frac12\log_\sigma n}}$, the string $S$ in $\lg \hat\sigma$-bit representation. It has $\Oh(\sqrt{n})$ entries, and each entry can be computed naively in $\Oh(\log_\sigma n)$ time. 
Hence, the table can be computed in $\tilde\Oh(\sqrt{n}) \subset \Oh(n / \lg n)$ time.
Finally, we use the table to convert $T[0\dd n)$ into $\lg\hat\sigma$-bit representation in a word-wise manner in $\Oh(n / \log_\sigma n)$ time.
Then, it is easy to obtain $\hat T[-n \dd 2n)$ by prepending and appending $\hat{\texttt\textdollar}{}^{n}$, again in a word-wise manner.

Recall that we only use this reduction whenever our aim is to achieve $\Oh(n / \log_\sigma n)$ \mbox{(pre-)processing} time. Due to $\Oh(n / \log_{\hat\sigma} n) = \Oh(n / \log_{2\sigma} n) = \Oh(n / \log_\sigma n)$, the reduction does not asymptotically increase the time complexity. Hence, we can indeed assume without loss of generality that $\sigma$ is a power of two, that $\texttt\textdollar = \sigma - 1$ does not occur in $T[0\dd n)$, and that we have access to the $\ceil{\lg \sigma}$-bit representation of $T[-n \dd 2n) := \texttt\textdollar^{n} \cdot T[0\dd n) \cdot \texttt\textdollar^n$.

\subparagraph{Accessing lookup tables with substrings.}
We will use substrings of length at most $(\log_\sigma n) / 4$ to access lookup tables.
For a string $S \in [0\dd \sigma)^*$ of length $0 \leq \absolute{S} \leq (\log_\sigma n) / 4$, we define its integer representation $\int(S)$ as follows:
\smallskip%
\begin{itemize}
    \item $\int(S)$ consists of $2 \cdot \floor{(\lg n) / 4}$ bits, i.e., $\int(S) \in [0 \dd 2^{2 \cdot \floor{(\lg n) / 4}}) \subseteq [0\dd \floor{\sqrt{n}})$.
    \item The upper half of $\int(S)$ contains the binary representation of $S$ (consisting of $\absolute{S} \cdot \lg \sigma \leq \floor{(\lg n) / 4}$ bits), padded with 0-bits.
    \item The lower half of $\int(S)$ stores the length $\absolute{S}$. This value can indeed be stored in $\ceil{\lg (1 + (\log_\sigma n) / 4)} < \floor{(\lg n) / 4}$ bits. (The inequality holds if $n$ exceeds some constant.)
\end{itemize}

\smallskip%
As mentioned earlier, extracting a substring $S = T[i\dd j)$ of length $0 \leq j - i \leq (\log_\sigma n) / 4$ takes constant time. Mapping $S$ to $\int(S)$ and vice versa also takes constant time.

\subsection{Revisiting Restricted Recompression}
\label{sec:linear_algorithm}

Following \cite{DBLP:journals/siamcomp/KociumakaRRW24}, we define integer sequences $(\lambda_k)_{k=0}^\infty$ so that  $\lambda_k = {\left(\frac87\right)}^{\floor{k/2}}$
and $(\alpha_k)_{k=0}^\infty$ so that $\alpha_0 = 1$ and recursively $\alpha_k = \alpha_{k - 1} + \floor{\lambda_{k - 1}}$.
These sequences control, respectively, the phrase lengths and context sizes at each recompression round.
As observed in \cite{DBLP:journals/siamcomp/KociumakaRRW24}, $\alpha_{k + 1} \leq 16 \lambda_k$ holds for every $k\in \Zz$.
Our goal is to compute the following sets:

\begin{proposition}[{\cite[Propositions 3.4 and 4.7]{DBLP:journals/siamcomp/KociumakaRRW24}}]
    \label{prop:bk_properties}
    For every length-$n$ text $T$, there exists a descending chain $[1\dd n)=\B_0 \supseteq \B_1 \supseteq \cdots \supseteq \B_q = \emptyset$ with $q = \Oh(\lg n)$ such that, for each $k\in \Zz$, the set $\B_{k}$ satisfies the following:
    \smallskip%
    \begin{enumerate}[(a)]
        \item\label{prop:bk_properties:set_size} $\absolute{\B_k} \leq \frac{4n}{\lambda_k}$
        \item\label{prop:bk_properties:synchronize} For $i,j \in [\alpha_k\dd n - \alpha_k]$, if $i \in \B_{k}$ and ${T[i - \alpha_k\dd i+\alpha_k)} \eqmatch {T[j - \alpha_k\dd j+\alpha_k)}$, then $j \in \B_k$.
        \item\label{prop:bk_properties:phrase_len} If $i,j$ are consecutive positions in $\B_k \cup \{0,n\}$, then $T[i\dd j)$ has length at most $\frac{7}{4}\lambda_k$, or its primitive root has length at most $\lambda_k$.
    \end{enumerate}%
    \smallskip%
    \noindent If $T\in [0\dd n^{\Oh(1)})^n$, then one can construct $\B_0 \supseteq \cdots \supseteq \B_{q-1} \supsetneq \B_q = \emptyset$ in $\Oh(n)$ time.
\end{proposition}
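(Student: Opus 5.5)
This proposition is imported from \cite{DBLP:journals/siamcomp/KociumakaRRW24}, so the natural plan is to recall the restricted recompression construction and check the three properties round by round. Throughout, each round $k$ acts on a current parsing of $T$ into phrases, and $\B_k$ is the set of phrase boundaries.

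\textbf{Setup.} Start from the trivial parsing into single symbols, so $\B_0=[1\dd n)$. Maintain a string $T_k$ over the alphabet of phrase identifiers, where phrases are named consistently by their content via dictionary/hashing or radix sort; $T_0=T$.

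\textbf{Rounds.} Rounds alternate between two operations, both restricted to phrases of length at most $\lambda_k$.
\begin{itemize}
\item In an \emph{odd round} (block compression), maximal runs of equal short symbols are merged into single phrases. Long symbols are never merged.
\item In an \emph{even round} (pair compression), we fix a partition of the alphabet of short symbols into a left set $L$ and a right set $R$. Every adjacent pair $ab$ with $a\in L$, $b\in R$, both short, is merged. The partition is chosen greedily, as in Jeż's argument, so that a constant fraction of the adjacent short pairs in $T_k$ get merged.
\end{itemize}
The factor $\frac87$ per two rounds comes from this guarantee.

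\textbf{Property (a).} This is a potential argument. Count the boundaries whose adjacent phrases are both short. Block compression eliminates all boundaries between equal short symbols. The subsequent pair compression then removes at least a $\frac18$-ish fraction of the remaining short--short boundaries. Boundaries adjacent to a long phrase number at most $2n/\lambda_k$, since a long phrase has length above $\lambda_k$. Combining these with induction gives $|\B_k|\le 4n/\lambda_k$.

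\textbf{Property (c).} Phrases created by pair compression have length at most $2\lambda_{k}$, tightened to $\frac74\lambda_k$ using the $\frac87$ growth. Phrases created by block compression are powers of a short phrase, so their primitive root has length at most $\lambda_k$.

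\textbf{Property (b), consistency.} This is the main obstacle. Prove by induction on $k$ that whether $i\in\B_k$ is determined by $T[i-\alpha_k\dd i+\alpha_k)$.
\begin{itemize}
\item A pair-compression decision at a boundary depends only on the two adjacent phrases, provided the $L/R$ partition is a fixed function of the symbol names. This requires the greedy choice to be global but content-based, which is fine.
\item The extent of the two adjacent phrases is determined within $\lambda_{k-1}$ extra characters by the previous round's context.
\item Block compression is the delicate case: a run can be arbitrarily long. This is exactly why runs are restricted to short symbols and why the rule only asks whether the neighbouring phrases are equal short symbols. That check looks at most one short phrase to each side.
\end{itemize}
Together these give the recurrence $\alpha_k=\alpha_{k-1}+\floor{\lambda_{k-1}}$. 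The fact that $\B_q=\emptyset$ with $q=\Oh(\lg n)$ follows from (a) once $4n/\lambda_q<1$.

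\textbf{Running time.} Each round runs in time linear in $|\B_{k-1}|$: radix-sort phrase names (alphabet $n^{\Oh(1)}$), scan for runs, and compute the greedy partition by counting pairs. By (a) these costs sum geometrically to $\Oh(n)$.
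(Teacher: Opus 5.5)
Your outline is the restricted-recompression construction that the paper recalls in \cref{sec:linear_algorithm} and imports from \cite{DBLP:journals/siamcomp/KociumakaRRW24}: merge runs of equal short phrases, then merge $L$--$R$ pairs of short phrases for a content-based partition. Your treatment of (b), of $q=\Oh(\lg n)$, and of the running time is the standard one. The step that fails as written is the size bound (a). Pair compression must remove at least a $\frac14$ fraction of the short--short boundaries, not a ``$\frac18$-ish'' one. This constant is exactly what gives the rate $\frac87$ with the constant $4$.

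To see this, take the step from odd $k$ to $k+1$, where $\lambda_{k+1}=\frac87\lambda_k$. Suppose $\ell$ phrases are longer than $\lambda_k$. Then $\ell<n/\lambda_k$, and at most $2\ell$ boundaries of $\B_k$ touch them. If a fraction $f$ of the other $|\B_k|-2\ell$ boundaries is removed, then
\[|\B_{k+1}|\le (1-f)\,|\B_k|+2f\ell<(4-2f)\cdot\frac{n}{\lambda_k}.\]
This is at most $\frac{4n}{\lambda_{k+1}}=\frac72\cdot\frac{n}{\lambda_k}$ precisely when $f\ge\frac14$. With $f=\frac18$ the induction cannot close for any constant $C$ in place of $4$. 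The bound you obtain is $(\frac78C+\frac14)\frac{n}{\lambda_k}$, which always exceeds the target $C\frac{n}{\lambda_{k+1}}=\frac78C\frac{n}{\lambda_k}$.

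The fraction $\frac14$ is the directed max-cut guarantee. A uniformly random partition puts each edge $u\to v$ with $u\neq v$ from $L$ to $R$ with probability $\frac14$, and this can be derandomized in linear time. The guarantee holds only because the preceding block-compression step removed every boundary between two equal short phrases, i.e., every self-loop. That is where block compression enters (a). The step from even $k$ to $k+1$ is then trivial, since $\B_{k+1}\subseteq\B_k$ and $\lambda_{k+1}=\lambda_k$.

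A smaller point on indexing in (c): when producing $\B_{k+1}$, the threshold must be $\lambda_k$, which is consistent with your recurrence $\alpha_{k+1}=\alpha_k+\floor{\lambda_k}$. A merged pair then has length at most $2\lambda_k=\frac74\lambda_{k+1}$ exactly, so nothing needs to be ``tightened''. Using $\lambda_{k+1}$ as the pair-compression threshold would break (c).
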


\begin{remark}
    While $q=\Oh(\lg n)$ is not stated in \cite[Propositions 3.4 and 4.7]{DBLP:journals/siamcomp/KociumakaRRW24}, it readily follows from \cref{prop:bk_properties}\eqref{prop:bk_properties:set_size}, in which we have $\B_k=\emptyset$ if $\lambda_k > 4n$ (see also \cite[p.~1542]{DBLP:journals/siamcomp/KociumakaRRW24}).
\end{remark}

Our sublinear-time solution for the initial rounds of restricted recompression closely follows the structure of the linear-time algorithm in the proof of \cref{prop:bk_properties} (see \cite[Proposition~4.7]{DBLP:journals/siamcomp/KociumakaRRW24}).
We now give a conceptual description of this algorithm (ignoring implementation details).
If we can show that the new solution is functionally identical to this algorithm, then it correctly computes the sets satisfying \cref{prop:bk_properties}.

We start with the set $\B_0 = [1\dd n)$.
Assume that, for some $k \in [0\dd q)$, we have computed the set $\B_k = \{f_1, \dots, f_m\}$ with $f_1 < f_2 < \dots < f_m$. 
Let $f_0 = 0$ and $f_{m+1} = n$, and define $F_i = T[f_i \dd f_{i+1})$ for $i \in [0\dd m]$.
Now we compute the set $\B_{k + 1}$.

\subparagraph{Case 1: \boldmath$k$ \unboldmath is even.} For each $i \in [1\dd m]$, we add $f_i$ to $\B_{k + 1}$ if and only if $\max(\absolute{F_{i - 1}}, \absolute{F_i}) > \lambda_k$ or $F_{i - 1} \neqmatch F_i$ (or both).
This can be viewed as merging each run of identical phrases (of length at most $\lambda_k$) into a single new phrase (see \cref{fig:evenround}).

\colorlet{palegray}{black!25!white}
\begin{figure}
\subcaptionbox{Construction of $\B_{k+1}$ for even $k$. Equal labels indicate matching substrings. In this example, $\textsf F$ is the only phrase of length more than $\lambda_k$ before constructing $\B_{k+1}$.\label{fig:evenround}}[\textwidth]{\centering\small
\definecolor{colA}{HTML}{8dd3c7}
\definecolor{colB}{HTML}{ffffb3}
\definecolor{colC}{HTML}{bebada}
\definecolor{colD}{HTML}{fb8072}
\definecolor{colE}{HTML}{80b1d3}
\definecolor{colF}{HTML}{fdb462}
\definecolor{colG}{HTML}{b3de69}
\definecolor{colH}{HTML}{fccde5}
\begin{tikzpicture}[x=.5em, y=1.2em,every node/.style={nosep}]

    \def\xstart{0}
    \foreach[
        evaluate=\x as \xend using int(\xstart + \x),
        remember=\xend as \xstart,
        count=\i from 0,
    ] \x in {3,2,3,3,3,3,2,4,3,5,5,5,3,2,2,3} {
        \node (s) at (\xstart, 0) {};
        \node (e) at (\xend, 1) {};
        \node[fit=(e.center)(s.center)] (F\i) {};
    }

    \foreach[count=\i from 0] \fname in {A,B,C,C,C,D,B,E,C,F,F,F,C,B,B,A} {
        \node at (F\i) {\textsf{\fname}};
        \node[draw, fill=col\fname, fit=(F\i)] {};
        \node at (F\i) {\textsf{\fname}};
    }

    \node[left=0 of F0] (tlab) {$T\ =\enskip$};
    
    \def\xstart{0}
    \foreach[
        evaluate=\x as \xend using int(\xstart + \x),
        remember=\xend as \xstart,
        count=\i from 0,
    ] \x in {3,2,9,3,2,4,3,5,5,5,3,4,3} {
        \node (s) at (\xstart, -1) {};
        \node (e) at (\xend, -0) {};
        \node[fit=(e.center)(s.center)] (F\i) {};
    }

    \foreach[count=\i from 0] \fname in {A,B,G,D,B,E,C,F,F,F,C,H,A} {
        \node at (F\i) {\textsf{\fname}};
        \node[draw, fill=col\fname, fit=(F\i)] {};
        \node at (F\i) {\textsf{\fname}};
        \node (tmp) at (F\i.east) {};
    }

    \node[left=0 of F0] (llab) {$T\ =\enskip$};
    
    \node[right=-\textwidth of tmp.east] (llab) {{factorization for $\B_{k + 1}$:}};
    \node[right=-\textwidth of tmp.east |- tlab] {{factorization for $\B_{k}$:}};
\end{tikzpicture}}

\medskip\smallskip

{\color{black!20!white}\hrule}

\medskip\smallskip

\subcaptionbox{Construction of $\B_{k+1}$ for odd $k$. Equal labels indicate matching substrings. In this example, $\textsf F$ is the only phrase of length more than $\lambda_k$ before constructing $\B_{k + 1}$. White, gray, and green blocks respectively correspond to phrases in $L$, phrases in $R$, and merged phrase pairs.\label{fig:oddround}}[\textwidth]{\centering\small
\colorlet{palegreen}{green!25!white}
\colorlet{palered}{red!25!white}
\colorlet{colA}{white}
\colorlet{colB}{palegray}
\colorlet{colC}{palegray}
\colorlet{colD}{white}
\colorlet{colE}{white}
\colorlet{colF}{palered}
\colorlet{colAB}{palegreen}
\colorlet{colAC}{palegreen}
\colorlet{colDC}{palegreen}
\colorlet{colEB}{palegreen}
\begin{tikzpicture}[x=.5em, y=1.2em,every node/.style={nosep}]

    \def\xstart{0}
    \foreach[
        evaluate=\x as \xend using int(\xstart + \x),
        remember=\xend as \xstart,
        count=\i from 0,
    ] \x in {3,2,3,3,2,3,4,3,2,6,6,3,4,3,2} {
        \node (s) at (\xstart, 0) {};
        \node (e) at (\xend, 1) {};
        \node[fit=(e.center)(s.center)] (F\i) {};
    }

    \foreach[count=\i from 0] \fname in {A,B,A,C,D,C,E,A,B,F,F,C,E,B,C} {
        \node at (F\i) {\textsf{\fname}};
        \node[draw, fill=col\fname, fit=(F\i)] {};
        \node at (F\i) {\textsf{\fname}};
    }

    \node[left=0 of F0] (tlab) {$T\ =\enskip$};
    
    \def\xstart{0}
    \foreach[
        evaluate=\x as \xend using int(\xstart + \x),
        remember=\xend as \xstart,
        count=\i from 0,
    ] \x in {5,6,5,4,5,6,6,3,7,2} {
        \node (s) at (\xstart, -1) {};
        \node (e) at (\xend, -0) {};
        \node[fit=(e.center)(s.center)] (F\i) {};
    }

    \foreach[count=\i from 0] \fname in {AB,AC,DC,E,AB,F,F,C,EB,C} {
        \node at (F\i) {\textsf{\fname}};
        \node[draw, fill=col\fname, fit=(F\i)] {};
        \node at (F\i) {\textsf{\fname}};
        \node (tmp) at (F\i.east) {};
    }

    \node[left=0 of F0] (llab) {$T\ =\enskip$};
    
    \node[right=-\textwidth of tmp.east] (llab) {{factorization for $\B_{k + 1}$:}};
    \node[right=-\textwidth of tmp.east |- tlab] {{factorization for $\B_{k}$:}};
\end{tikzpicture}}

\medskip\smallskip

{\color{black!20!white}\hrule}

\medskip\smallskip

\subcaptionbox{Multigraph (left) and weighted graph (right) for the factorization from \cref{fig:oddround}, where $\absolute{E} = 11$ and the sum of weights from $L$ to $R$ is $5$.\label{fig:graph}}[\textwidth]{\centering\small
\begin{tikzpicture}[y=2em, every node/.style={inner sep=1pt}]
    \def\theshift{3.5cm}
    \node[circle, draw, xshift=-\theshift] (A) at (1.0, 0.0) {\textsf A};
    \node[circle, draw, xshift=-\theshift, fill=palegray] (B) at (0.309, 0.951) {\textsf B};
    \node[circle, draw, xshift=-\theshift, fill=palegray] (C) at (-0.809, 0.588) {\textsf C};
    \node[circle, draw, xshift=-\theshift] (D) at (-0.809, -0.588) {\textsf D};
    \node[circle, draw, xshift=-\theshift] (E) at (0.309, -0.951) {\textsf E};

    \draw[-Latex] (A) to (B);
    \draw[-Latex] (A) to[bend right] (B);
    \draw[-Latex] (A.200) to[out=200, in=0] (C);
    \draw[-Latex] (B) to[bend right] (A);
    \draw[-Latex] (B) to[bend right] (C);
    \draw[-Latex] (C) to (D);
    \draw[-Latex] (C) to[bend right] (D);
    \draw[-Latex] (C) to[out=-40, in=150] (E);
    \draw[-Latex] (D) to[bend right] (C);
    \draw[-Latex] (E) to[bend right] (A);
    \draw[-Latex] (E) to[bend left] (B);

    \node[circle, draw] (A) at (1.0, 0.0) {\textsf A};
    \node[circle, draw, fill=palegray] (B) at (0.309, 0.951) {\textsf B};
    \node[circle, draw, fill=palegray] (C) at (-0.809, 0.588) {\textsf C};
    \node[circle, draw] (D) at (-0.809, -0.588) {\textsf D};
    \node[circle, draw] (E) at (0.309, -0.951) {\textsf E};

    \draw[-{Latex[length=5pt]}, very thick, densely dotted] (A) to[bend right] (B);
    
    \draw[-Latex] (A.200) to[out=200, in=0] (C);
    \draw[-Latex] (B) to[bend right] (A);
    \draw[-Latex] (B) to[bend right] (C);
    
    \draw[-{Latex[length=5pt]}, very thick, densely dotted] (C) to[bend right] (D);
    
    \draw[-Latex] (C) to[out=-40, in=150] (E);
    \draw[-Latex] (D) to[bend right] (C);
    \draw[-Latex] (E) to[bend right] (A);
    \draw[-Latex] (E) to[bend left] (B);

    \node[above right=.25em and 3em of A] (u) {};
    \node[below right=.25em and 3em of A] (d) {};

    \draw[-Latex] (u) to node[pos=1, right=1em] {edge of weight $1$} ++(1, 0) {};
    \draw[-{Latex[length=5pt]}, very thick, densely dotted] (d) to node[pos=1, right=1em] {edge of weight $2$} ++(1, 0) {};
    
\end{tikzpicture}}
\caption{Even and odd rounds of restricted recompression.}
\end{figure}

\subparagraph{Case 2: \boldmath$k$ \unboldmath is odd.}
Let $\mathcal F = \{ F_i \mid i \in [0\dd m]\textnormal{\ and\ } \absolute{F_i} \leq \lambda_k \}$ be the set of distinct phrases of length at most $\lambda_k$, viewed as strings rather than fragments (so matching phrases represent the same element of $\mathcal F$). We partition $\mathcal F$ into sets $L$ and $R$.
For each $i \in [1\dd m]$, we add $f_i$ to $\B_{k + 1}$ unless both $F_{i - 1} \in L$ and $F_{i} \in R$. This amounts to merging adjacent pairs of phrases whenever the left phrase is in $L$ and the right phrase is in~$R$ (see \cref{fig:oddround}). 

The sets $L, R$ are computed by modeling the factorization as a directed multigraph whose nodes are the elements of $\mathcal F$. For every $i \in [1\dd m]$, we add an edge from $F_{i - 1}$ to $F_i$ if 
$\max(\absolute{F_{i - 1}}, \absolute{F_i}) \leq \lambda_k$.
(Due to the preceding even round, it is clear that $F_{i - 1} \neqmatch F_i$.)
Let $E$ be the multiset of all edges.
To obtain the properties in \cref{prop:bk_properties}, it suffices to partition $\mathcal F$ into $L$ and $R$ so that at least $\frac14\cdot\absolute{E}$ edges go from $L$ to $R$.
This can be achieved by approximating a maximum directed cut in the multigraph (e.g., using \cite[Lemma A.1]{DBLP:journals/siamcomp/KociumakaRRW24}).

Equivalently, we can use a weighted directed graph instead of a multigraph.
The set of nodes is still $\mathcal F$. For each pair of nodes $F', F''$, the weight of the edge from $F'$ to $F''$ is the number of edges from $F'$ to $F''$ in the multigraph. Then, we partition $\mathcal F$ into $L$ and $R$ so that the sum of the weights of edges from $L$ to $R$ is at least $\frac14 \cdot \absolute{E}$ (see \cref{fig:graph}).

\section{Restricted Recompression in Sublinear Time}\label{sec:recompression}

In this section, we show how to implement restricted recompression in $\Oh(n / \log_\sigma n)$ time. If $\log_\sigma n$ is constant, then we can afford linear time and use the solution from \cite{DBLP:journals/siamcomp/KociumakaRRW24}. Hence, assume $\log_\sigma n \geq 256$.
We fix $K := 2 \cdot \floor{\log_{8/7}\left(256^{-1} \cdot {\log_{\sigma} n}\right)}$. Recalling that $\lambda_K = (\frac87)^{\floor{K/2}}$, we observe that ${\frac78 \cdot 256^{-1} \cdot \log_\sigma n} \leq \lambda_K \leq 256^{-1} \cdot \log_\sigma n$.
During rounds $0, \dots, K$ of restricted recompression, by \cref{prop:bk_properties}\eqref{prop:bk_properties:synchronize}, the phrase boundaries are chosen based on short contexts of length at most $2\alpha_K \leq 32\lambda_K \leq 8^{-1} \cdot \log_\sigma n$, and hence this process can be accelerated using word-packing techniques. After round $K$, the number of boundary positions is at most $\frac{4n}{\lambda_K} = \Oh(n / \log_\sigma n)$ by \cref{prop:bk_properties}\eqref{prop:bk_properties:set_size}, and we can implement the remaining rounds in $\Oh(n / \log_\sigma n)$ time using the linear-time solution presented in \cite{DBLP:journals/siamcomp/KociumakaRRW24}. We only need the following straightforward corollaries.

\begin{definition}
Let $\B_k = \{ f_1, \dots, f_m \} \subseteq [1\dd n)$ with $f_1 < f_2 < \dots < f_m$ be one of the sets computed in \cref{prop:bk_properties}, and let $f_0 = 0$ and $f_{m + 1} = n$.
We define
\begin{itemize}
    \item the (not necessarily unique) string of phrase names $I_k[0\dd m]$ over alphabet $[0\dd m]$, where, for every $i, j \in [0\dd m]$, it holds $I_k[i] = I_k[j]$ if and only if $T[f_i\dd f_{i + 1}) \eqmatch T[f_{j}\dd f_{j + 1})$,
    \item the array $\len_k[0\dd m]$ of phrase lengths with $\len_k[i] = f_{i + 1} - f_{i}$ for all $i \in [0\dd m]$.
\end{itemize}

\end{definition}

\begin{corollary}[{of \cite[Proposition 4.7]{DBLP:journals/siamcomp/KociumakaRRW24}}]
    \label{cor:large_rounds:from_Ik}
    Let $k \in [0\dd q)$. Given $\B_k$, $I_k$, and $\len_k$, one can compute 
    $\B_{k + 1} \supseteq \B_{k + 2} \supseteq \dots \supseteq \B_{q - 1} \supsetneq\B_q = \emptyset$
    in $\Oh(n / \lambda_k)$ time.
\end{corollary}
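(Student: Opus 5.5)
The plan is to treat the rounds after $k$ as the linear-time algorithm of \cite[Proposition~4.7]{DBLP:journals/siamcomp/KociumakaRRW24} run on a shorter string. Let $m = |\B_k|$. That algorithm never inspects the characters of $T$ beyond equality tests on phrases and phrase lengths, so we feed it the data $I_k[0\dd m]$ and $\len_k[0\dd m]$ instead. Then round $k$ is fully determined: a phrase is a symbol $I_k[i]$ of weight $\len_k[i]$. By \cref{prop:bk_properties}\eqref{prop:bk_properties:set_size}, $m+1 \le 4n/\lambda_k + 1$, so $\Oh(n/\lambda_k)$ time suffices if each later round costs time linear in its current number of phrases. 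This holds because the phrase counts shrink geometrically, as the sizes $4n/\lambda_j$ do.

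The simulation of round $j \ge k$ proceeds by the parity of $j$.

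\textbf{Even $j$.} Scan the phrase sequence. Merge maximal runs of equal names whose lengths are at most $\lambda_j$; compare names directly and compare lengths against $\lambda_j$ in $\Oh(1)$ time. This yields $\B_{j+1}$ as a subset of the current boundaries. Each merged phrase gets a new name: a run of $e$ copies of name $x$ receives the pair $(x,e)$, and unmerged phrases keep $(x,1)$.

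\textbf{Odd $j$.} Build the weighted directed graph on the names of phrases of length at most $\lambda_j$. Collect the pairs $(I[i-1], I[i])$ and aggregate equal pairs by radix sort. The names are integers bounded by the current phrase count, so this takes linear time. Then apply the linear-time directed max-cut approximation \cite[Lemma~A.1]{DBLP:journals/siamcomp/KociumakaRRW24} to obtain $L$ and $R$, merge $L$--$R$ pairs, and name each merged phrase by the pair of its constituent names.

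In both cases we renormalize the pair-names to $[0\dd m']$ by radix sort. Then, inductively, two phrases get equal names exactly when they match as strings. This follows because a merged phrase is the concatenation determined by its constituents: equal constituent names and equal multiplicities give matching strings. Conversely, since the parsing of a phrase into sub-phrases is unique under restricted recompression, matching strings arise from equal constituents. This gives a valid $I_{j+1}$, and summing lengths gives $\len_{j+1}$.

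Boundary positions in $T$ are kept as $f$-values. A prefix sum over $\len_k$ recovers $f_i$ in $\Oh(m)$ time, and each $\B_{j+1}$ is output as a filtered subsequence. The cost of round $j$ is $\Oh(|\B_j|+1) = \Oh(n/\lambda_j + 1)$. Summing over $j \ge k$ gives $\Oh(n/\lambda_k)$, since $\lambda_j$ grows by a factor $8/7$ every two rounds. The additive $+1$ terms contribute only up to $q$; they are absorbed because the process halts once $\B_j = \emptyset$, and each round with a nonempty $\B_j$ has at least one phrase to pay for it.

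The main obstacle is arguing that the output is identical to what the original algorithm produces on $T$, or at least that it satisfies \cref{prop:bk_properties}. The cleanest route is to note that the proof of \cite[Proposition~4.7]{DBLP:journals/siamcomp/KociumakaRRW24} itself proceeds by maintaining exactly such names and lengths after each round. So the corollary amounts to starting that procedure at round $k$ rather than at round $0$. Radix sorting needs names bounded by $\Oh(n/\lambda_k)$, and the validity of $I_k$ in the hypothesis guarantees this.
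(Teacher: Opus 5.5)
Your proposal is correct and is essentially the paper's proof: the paper also uses the per-round step of the algorithm behind \cite[Proposition~4.7]{DBLP:journals/siamcomp/KociumakaRRW24} to obtain $\B_{j+1}$, $I_{j+1}$, $\len_{j+1}$ from $\B_j$, $I_j$, $\len_j$ in $\Oh(\absolute{\B_j})$ time, and then bounds $\sum_{j=k}^{q-1}\absolute{\B_j}$ geometrically using $\absolute{\B_j}\le 4n/\lambda_j$. Your ``unique parsing'' claim in the renaming step is stated without proof, but it holds: every merge decision depends only on the two adjacent phrases as strings, so an induction over levels shows that matching phrases delimited by boundaries of $\B_{j+1}$ have identical lower-level decompositions. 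In any case, you defer to the cited algorithm exactly as the paper does.
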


\begin{proof}
    The algorithm from the proof of \cite[Proposition 4.7]{DBLP:journals/siamcomp/KociumakaRRW24} uses $\B_k, I_k$, and $\len_k$ to compute $\B_{k+1}, I_{k+1}$, and $\len_{k+1}$ in $\Oh(\absolute{\B_k})$ time. Hence, the remaining sets $\B_{k + 1} \supseteq \B_{k + 2} \supseteq \dots \supseteq \B_q = \emptyset$ can be computed in $\Oh(\sum_{k' = k}^{q-1} \absolute{\B_{k'}})$ time.
    By \cref{prop:bk_properties}\eqref{prop:bk_properties:set_size} and the definition of $\lambda_{k'}$, the sum is bounded by $\Oh(n / \lambda_k)$.
\end{proof}

\begin{corollary}\label{cor:large_rounds:from_Bk}
    Given the elements of $\B_K$ in increasing order, one can compute the sets
    $\B_{K + 1} \supseteq \B_{K + 2} \supseteq \dots \supseteq \B_{q-1} \supsetneq \B_q = \emptyset$
    in $\Oh(n / \log_\sigma n)$ time.
\end{corollary}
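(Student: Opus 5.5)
The plan is to reduce to \cref{cor:large_rounds:from_Ik} with $k = K$. Once we have $\B_K$, $I_K$, and $\len_K$, that corollary yields all remaining sets in $\Oh(n/\lambda_K)$ time. Since $\lambda_K \geq \frac78 \cdot 256^{-1}\cdot \log_\sigma n$, this is $\Oh(n/\log_\sigma n)$. Write $\B_K = \{f_1 < \dots < f_m\}$. By \cref{prop:bk_properties}\eqref{prop:bk_properties:set_size}, $m \le 4n/\lambda_K = \Oh(n/\log_\sigma n)$. It therefore remains to compute $\len_K$ and a valid naming $I_K$ in $\Oh(m + n/\log_\sigma n)$ time.

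\textbf{Computing $\len_K$.} Set $f_0 = 0$ and $f_{m+1} = n$, and put $\len_K[i] = f_{i+1} - f_i$. This takes $\Oh(m)$ time with one pass over the sorted list.

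\textbf{Computing $I_K$.} Phrases can be long, so we cannot read them symbol by symbol. We use \cref{prop:bk_properties}\eqref{prop:bk_properties:phrase_len}: each phrase has length at most $\frac74\lambda_K$, or its primitive root has length at most $\lambda_K$. Both bounds are below $(\log_\sigma n)/4$.

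For a short phrase $F_i$ (length at most $\frac74\lambda_K$), extract it in $\Oh(1)$ time as one word and map it to $\int(F_i) \in [0\dd \floor{\sqrt n})$. Equal strings have equal codes, and the length is encoded, so distinct strings get distinct codes.

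For a long periodic phrase, we need to determine its primitive root $R$ and exponent $e$. The phrase is then identified by the pair $(\int(R), e)$.
\begin{itemize}
\item The root is a prefix of length at most $\lambda_K$. Look up the prefix of length $\floor{(\log_\sigma n)/4}$ in a precomputed table, built in $\tilde\Oh(\sqrt n)$ time, that stores the smallest period of each short string.
\item If the table reports a period $p$ that does not divide $|F_i|$, we have a contradiction. The phrase then cannot be periodic with a root of length at most $\lambda_K$. This is because, by the Fine--Wilf periodicity lemma, the short prefix already fixes the root, given that $2\lambda_K \le (\log_\sigma n)/4$.
\item Otherwise $|F_i| = e\cdot p$, and the prefix $F_i[0\dd p)$ is the candidate root.
\end{itemize}

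\textbf{Main obstacle.} The hard step is ensuring that matching long phrases get the same name without verifying periodicity across the whole phrase. Verification would cost $\Oh(|F_i|/\log_\sigma n)$ word comparisons, and summed over all phrases this is $\Oh(n/\log_\sigma n)$, which is acceptable. The real subtlety is different: a long phrase might satisfy only the short-length branch, or the two branches might overlap. We can avoid this by always using the short code when $|F_i| \le \frac74\lambda_K$ and the periodic code otherwise. Condition \eqref{prop:bk_properties:phrase_len} guarantees that the periodic code then applies, so the periodicity check is unnecessary.

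\textbf{Renaming.} The codes are pairs drawn from a polynomial-size universe: $\int$ values below $\sqrt n$ and exponents below $n$. Radix sort in $\Oh(m + \sqrt n)$ time renames them to $[0\dd m]$, which produces $I_K$. Applying \cref{cor:large_rounds:from_Ik} completes the proof.
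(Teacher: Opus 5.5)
Your proof is correct and has the same skeleton as the paper's. You reduce to \cref{cor:large_rounds:from_Ik} with $k=K$, get $\len_K$ in one pass, give each phrase an $\Oh(1)$-time integer code that is injective on distinct phrase strings by \cref{prop:bk_properties}\eqref{prop:bk_properties:phrase_len}, and rename with radix sort. The only difference is the code itself. The paper uses the single code $(\int(S_i),\len_K[i])$ with $S_i=T[f_i\dd \min(f_{i+1},f_i+2\lambda_K))$. A phrase longer than $2\lambda_K$ has primitive root of length at most $\lambda_K$, so its length together with its length-$2\lambda_K$ prefix already determines it; no case split and no smallest-period table are needed. You instead compute the primitive root explicitly and name long phrases by $(\int(R),e)$. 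This costs an extra $\tOh(\sqrt n)$-time table and a case distinction, and the canonical roots it produces are not needed later.

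One step needs to be stated precisely. A long phrase is only guaranteed to have length above $\frac74\lambda_K$, which can be far below $\floor{(\log_\sigma n)/4}$. So you must query the table with $T[f_i\dd \min(f_{i+1}, f_i+\floor{(\log_\sigma n)/4}))$, not with a fixed-length window. A window that crosses $f_{i+1}$ can report a period unrelated to the phrase, and then both your ``contradiction'' step and the extracted root are wrong.

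With this truncation, the reported smallest period equals $|R|$. If the window is the whole phrase, then $e\ge 2$ because $|F_i|>\frac74\lambda_K\ge\frac74|R|$. If the window is a proper prefix, its length is at least $2\lambda_K\ge 2|R|$. In either case, Fine--Wilf together with the primitivity of $R$ rules out a smaller period. Finally, give short codes a second coordinate, e.g.\ $(\int(F_i),1)$, so they cannot collide with long codes, which have $e\ge 2$.
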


\begin{proof}
    Let $\B_K = \{f_1, \dots, f_m\}$ with $f_i < f_{i + 1}$ for all $i \in [1\dd m)$, and define $f_0 = 0$ and $f_{m + 1} = n$. Note that $m = \Oh(n / \lambda_K) = \Oh(n / \log_\sigma n)$ by \cref{prop:bk_properties}\eqref{prop:bk_properties:set_size} and the definitions of $K$ and $\lambda_K$. By \cref{cor:large_rounds:from_Ik}, it suffices to compute $I_K$ and $\len_K$ in $\Oh(n / \lambda_K)$ time, which is trivial for $\len_K$. To compute $I_K$, we first produce a string $I'_K$ over a slightly larger alphabet. For each $i \in [0\dd m]$, we encode phrase $T[f_{i}\dd f_{i + 1})$ as $I'_K[i] = (\int(S_i), \len_K[i])$, where $S_i$ is the truncated phrase ${T[f_{i}\dd \min(f_{i + 1}, f_{i} +2\lambda_K))}$.
    Trivially, two phrases are identical if and only if they have the same length and primitive root. By \cref{prop:bk_properties}\eqref{prop:bk_properties:phrase_len}, the primitive root of any phrase is of length at most $\frac{7}{4}\lambda_K$. Since $S_i$ is either the entire phrase or a length-$2\lambda_K$ prefix of the phrase, it is also a (possibly fractional) power of the primitive root of the phrase. Hence, it is easy to see that we indeed encode two phrases identically if and only if they are identical.
    Clearly, $I'_K$ can be computed in $\Oh(m)$ time.
    We obtain $I_K$ by reducing the alphabet to $[0\dd m]$ using radix sort in $\Oh(n / \log_\sigma n)$ time.
\end{proof}

\subsection{Performing the Initial \texorpdfstring{\boldmath$K$\unboldmath}{K} Rounds}

In the initial rounds of recompression, the weights in the graph of adjacent phrases depend on the abundance of short substrings. Hence, we use the following simple index for counting short substrings. (This kind of result is well-known; we provide a proof merely for completeness.)

\begin{restatable}{lemma}{restateindexshortsubstrings} \label{lem:short_substrings}
    For any string $T \in [0\dd \sigma)^n$, in $\Oh(n/ \log_\sigma n)$ time, we can construct a data structure that answers the following type of query in constant time. Given a string $S \in [0\dd \sigma)^*$ of length at most $(\log_\sigma n) / 8$, return the 
        number of occurrences of $S$ in $T$.
\end{restatable}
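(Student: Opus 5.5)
The idea is to build a table indexed by $\int(S)$ that stores the occurrence count of every string $S$ with $\absolute{S}\le \ell := \floor{(\log_\sigma n)/8}$. Queries are then a single lookup. Since $\int(S)\in[0\dd\floor{\sqrt n})$, the table has $\Oh(\sqrt n)$ entries, and initializing it to zero costs $\Oh(\sqrt n)\subseteq \Oh(n/\log_\sigma n)$ time. The difficulty is filling it. Doing so naively means enumerating all $\Theta(n)$ positions, which is too slow, so we must process the text in blocks.

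Set $b:=\ell$ and partition the positions $[0\dd n)$ into $\ceil{n/b}$ blocks of $b$ consecutive positions. For a block starting at $i$, consider the window $W_i = T[i\dd i+2b)$, which has length at most $(\log_\sigma n)/4$; we read it from the padded text $T[-n\dd 2n)$, clipped at $n$ so that the window is shorter near the end. We count how many times each window string occurs, in an auxiliary table $C$ indexed by $\int(W)$: one constant-time extraction and increment per block, $\Oh(n/\log_\sigma n)$ in total. Every occurrence of $S$ in $T$, with $\absolute{S}\le b$, starts in exactly one block, and it lies entirely within that block's window. The count for $S$ is therefore
\[
\sum_W C[\int(W)]\cdot \mathrm{occ}_b(S,W),
\]
where $\mathrm{occ}_b(S,W)$ is the number of occurrences of $S$ in $W$ starting at offsets in $[0\dd b)$. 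Clipping the window at $n$ keeps us from counting occurrences that run past the end of $T$. Nonempty windows are distinct non-sentinel strings, so they remain distinguishable through $\int$, which encodes the length.

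The second phase distributes these window counts to substrings. For each window string $W$ with $C[\int(W)]>0$, possibly iterating over all $\Oh(\sqrt n)$ possible window strings, we loop over the $\Oh(b)$ starting offsets and the $\Oh(b)$ lengths. For each pair we add $C[\int(W)]$ to the answer table at $\int(W[o\dd o+\ell'))$. This costs $\Oh(\sqrt n\cdot \log^2 n)\subseteq\Oh(n/\log n)$ time overall. The query for $S$ is then $\mathrm{Ans}[\int(S)]$. The empty string needs a convention; we simply store $n+1$ or $n$ depending on the convention used.

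The main obstacle is the bookkeeping that ensures each occurrence is counted exactly once. Three points need care: the boundary blocks near $n$, where windows must be truncated rather than padded with sentinels, since $\texttt\textdollar$ does not occur in $T$; restricting start offsets to $[0\dd b)$; and using $\int(\cdot)$ consistently for strings of different lengths. The timing argument itself is routine, since both the block pass and the table pass fit into $\Oh(n/\log_\sigma n)$ because the window length is at most $(\log_\sigma n)/4$.
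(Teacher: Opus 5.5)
Your proposal is correct and matches the paper's proof essentially step for step: both count the clipped windows $T[ib\dd \min(n, ib+2b))$ in a table indexed by $\int(\cdot)$, then push each window's count to every substring of length at most $b$ starting at an offset in $[0\dd b)$, for $\Oh(\sqrt{n}\cdot b^2)$ total time. The only detail the paper spells out that you leave implicit is the trivial case $\log_\sigma n < 8$, where $b=0$ and no nonempty query can occur.
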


\begin{proof}
	We first construct an auxiliary lookup table $L[0\dd \floor{\sqrt{n}})$ that is initially all-zero.
	If $\log_\sigma n<8$, then no queries are possible; hence, assume $\log_\sigma n \geq 8$. Let $b = \floor{(\log_\sigma n) / 8}$.
    For every $i \in [0\dd \ceil{n / b})$, we extract $S_i := T[ib\dd \min(n, ib + 2b))$ and increment $L[\int(S_i)]$ in constant time.
	Afterwards, for every $S \in [0\dd \sigma)^{2b}$, we may interpret the entry $L[\int(S)]$ as follows. There are $L[\int(S)]$ occurrences of~$S$ in~$T$ that start at positions that are multiples of~$b$.
	The time for computing $L$ is $\Oh(n / \log_\sigma n)$.
	
	A second table $L'[0\dd \floor{\sqrt{n}})$, also initialized with zeros, will serve as the actual index. For every string $S' \in [0\dd \sigma)^*$ of length at most $b$, entry $L'[\int(S')]$ will store the number of occurrences of $S'$ in $T$.
	The table is constructed as follows.
	We consider every $S \in [0\dd \sigma)^*$ of length at most $2b$ and obtain the value $s := L[\int(S)]$. (If the length of $S$ is less than $2b$, then necessarily $s = 0$, unless $S$ is one of the final two blocks.)
	For every $x, \ell \in [0 \dd b)$ that satisfy $
    x + \ell \leq \absolute{S}$, we increase $L'[\int(S[x \dd x + \ell])]$ by $s$, which takes constant time.
	The time for computing $L'$ is $\Oh(\sqrt{n} \cdot b^2) \subset \Oh(n / \lg n)$.
		
	It is easy to see that the procedure works as intended. Particularly, every fragment of length at most $b$ will be considered by exactly one of the aligned fragments of length~$2b$. This is because, for every length-$2b$ block starting at a position that is a multiple of $b$, we only consider fragments starting within the first $b$ positions of the block.	
\end{proof}

\subparagraph{Defining phrase boundaries via substrings.} 
Instead of directly computing $\B_k$, we compute the intermediate representation%
\footnote{Recall that $T$ is padded so that $T[-n \dd 2n) = \texttt\textdollar^{n} \cdot T[0\dd n) \cdot \texttt\textdollar^n$. Hence, $T[i-\alpha_k\dd i+\alpha_k)$ with $i \in [0\dd n]$ is always defined. Also, if ${i \in [0\dd n] \setminus [\alpha_k\dd n-\alpha_k]}$, then $T[i-\alpha_k\dd i + \alpha_k)$ is unique in $T[-\alpha_k\dd n+\alpha_k)$.}
%
$%
\tB_k = \{T[i - \alpha_k\dd i + \alpha_k) \mid  i \in \B_k \cup \{0, n \}\},
$ %
%
i.e., rather than explicitly listing the boundary positions, we instead list the set of distinct contexts that cause a boundary.
From now on, for $k \in [0\dd q]$ and $i \in \B_k$, we say that a string $T[i-\alpha_k\dd i+\alpha_k)\in \tB_k$ is a \emph{boundary context of $\B_k$}.
For every $i \in [0\dd n]$, it follows from \cref{prop:bk_properties}\eqref{prop:bk_properties:synchronize} that $i \in \B_k \cup \{0, n\}$ if and only if $T[i - \alpha_k\dd i+\alpha_k) \in \tB_k$.

\begin{lemma}\label{lem:ck}
    The sets $\tB_1, \dots, \tB_{K-1},\tB_{K}$ can be computed in $\Oh(n / \log_\sigma n)$ time,
    with each set $\tB_k$ encoded as a bitmask of length $\floor{\sqrt{n}}$ whose set bits are $\{\int(S) : S\in \tB_k\}$.
\end{lemma}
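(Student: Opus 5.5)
We compute $\tB_1,\dots,\tB_K$ round by round, simulating the conceptual algorithm of \cref{sec:linear_algorithm} on every possible short context rather than on positions of $T$. Each round costs $\Oh(\sqrt{n}\,\polylog n)$ time, and there are $K=\Oh(\lg\lg n)$ rounds, so the total is $o(n/\log_\sigma n)$. On top of this we spend $\Oh(n/\log_\sigma n)$ once, to build the index of \cref{lem:short_substrings}.

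\textbf{Decision from a context.} The key invariant is that whether $i\in\B_{k+1}\cup\{0,n\}$ is decided by the context $C=T[i-\alpha_{k+1}\dd i+\alpha_{k+1})$, of length at most $2\alpha_K\le(\log_\sigma n)/8$. Given the bitmask for $\tB_k$, we can decode, in $\Oh(\lambda_k)$ time, the phrase boundaries of $\B_k$ near the centre of $C$. A position $j$ near the centre is a boundary iff the length-$2\alpha_k$ substring of $C$ around $j$ lies in $\tB_k$. Since $\alpha_{k+1}=\alpha_k+\floor{\lambda_k}$, the window contains the full contexts of all positions within $\lambda_k$ of $i$. This is enough to decide whether the two phrases $F_{i-1},F_i$ adjacent to $i$ both have length at most $\lambda_k$, and if so to read them off.

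\textbf{Even and odd rounds.} For even $k$, $C\in\tB_{k+1}$ iff the centre is a boundary and ($\max(|F_{i-1}|,|F_i|)>\lambda_k$ or $F_{i-1}\ne F_i$). The first condition is detected by failing to find the next boundary within distance $\lambda_k$ of the centre. For odd $k$ we first need the partition $L,R$.
\begin{itemize}
\item \emph{Weights.} The nodes are strings of length at most $\lambda_k$. The weight of the edge $(F',F'')$ equals the number of positions $i$ whose context yields adjacent short phrases $F',F''$. We enumerate every candidate $C\in[0\dd\sigma)^{2\alpha_{k+1}}$ (there are at most $\sqrt n$ of them), query its number of occurrences in $T$ with \cref{lem:short_substrings}, and, if the centre of $C$ is a boundary with two short neighbouring phrases, add that count to the weight of the corresponding edge.
\item \emph{Boundary contexts.} Boundaries at positions $i<\alpha_{k+1}$ or $i>n-\alpha_{k+1}$ are handled directly from the padded text. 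By the footnote, their contexts are unique and contain \texttt\textdollar.
\item \emph{Cut.} The graph has $\Oh(\sqrt n)$ nodes and edges. We run the deterministic directed-cut approximation of \cite[Lemma A.1]{DBLP:journals/siamcomp/KociumakaRRW24} on it in time linear in its size, which yields $L,R$.
\item \emph{Marking.} We mark $C$ in $\tB_{k+1}$ unless $F_{i-1}\in L$ and $F_i\in R$.
\end{itemize}

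\textbf{Consistency with \cref{prop:bk_properties}.} The weighted graph we build equals the one from the linear-time algorithm, because weights are only multiplicities. Using the same deterministic cut routine therefore gives functionally identical sets $\B_k$. If the routine's output depends on the representation, it is enough that the guarantee of at least $\frac14|E|$ cut weight holds, since \cref{prop:bk_properties} relies only on that guarantee.

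\textbf{Main obstacle.} The hardest step is the odd round. Enumerating contexts produces exactly the true multiplicities only if each position $i$ is counted once, under the context of length exactly $2\alpha_{k+1}$, and this context must determine $F_{i-1},F_i$ when both are short. We also need the counts to restrict to $i\in[1\dd n)$. Contexts that overlap the padding are unique, so I would handle them by counting them explicitly.

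For the computation itself, I would store the phrase pair as $(\int(F'),\int(F''))$, radix-sort these keys in $\Oh(\sqrt n)$ time, and so aggregate the edge weights.
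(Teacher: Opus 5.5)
Your proposal is correct and follows the paper's proof essentially step for step. In each round you enumerate all $\Oh(\sqrt{n})$ candidate contexts of length $2\alpha_{k+1}$ and keep those that occur in the padded text and whose central part lies in $\tB_k$. You then locate the neighbouring boundaries by $\Oh(\lambda_k)$ probes of $\tB_k$ and apply either the even-round rule, or build the weighted graph with occurrence counts from \cref{lem:short_substrings} and run the $\frac14$-cut approximation, for $\Oh(K\sqrt{n}\lg n)$ time plus the index construction. The only piece you leave implicit is the base case $\tB_0$ (all length-$2$ substrings of the padded text other than $\texttt\textdollar^2$), which the paper initializes explicitly.
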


\begin{proof} 
The sets $\tB_1, \dots, \tB_{K-1},\tB_{K}$ consist of strings of length up to $2\alpha_K \leq 8^{-1} \cdot \log_\sigma n$. 
We can encode any $S\in \tB_k$ as $\int(S) \in [0\dd \floor{\sqrt{n}})$. 
We construct the data structure from \cref{lem:short_substrings} for the padded string $\texttt\textdollar^{2\alpha_K} \cdot T \cdot \texttt\textdollar^{2\alpha_K}$, which allows querying for $S$.
We can also query for strings of the form $\texttt\textdollar^{j_1} \cdot T[0\dd j_2)$ and $T[n - j_1\dd n) \cdot \texttt\textdollar^{j_2}$ with $j_1 + j_2 \leq 2\alpha_K$.

\subparagraph{Initialization.} We initialize each of $\tB_1, \dots, \tB_K$ as an all-zero bitmask of length $\floor{\sqrt{n}}$. Recall that $\alpha_0 = 1$. For computing $\tB_0$, we enumerate all possible length-two strings over $[0\dd \sigma)$, except for $\texttt\textdollar^2$. For each such string~$S$, we query the data structure from \cref{lem:short_substrings}. If $S$ occurs in $T$, we set the $\int(S)$-th bit of $\tB_0$ to one. 
Recall that $2 \leq 2\alpha_K \leq 8^{-1} \cdot \log_\sigma n$; hence, we can indeed afford to enumerate all strings of length up to $2\alpha_K$ (e.g., the $\sigma^2$ strings of length two during initialization) in $\Oh(2^{2\alpha_K}) \subset \Oh(\sqrt{n})$ time.

\subparagraph{Preparing round \boldmath$k + 1$\unboldmath.} Assume that we have already computed $\tB_k$ for some $k \in [0\dd K)$, and our goal is to compute $\tB_{k + 1}$.
The main computational challenge is the following. For each length-$2\alpha_{k + 1}$ string $S$ over alphabet $[0\dd \sigma)$, except for $\texttt\textdollar^{2\alpha_{k + 1}}$, we have to decide if $S\in \tB_{k+1}$.
We enumerate all the $\Oh(\sqrt{n})$ possible strings of length $2\alpha_{k + 1}$. For each such string~$S$, we first check if it is indeed a substring of the padded string $T$ using the data structure from \cref{lem:short_substrings}.
If $S$ is not a substring, then it is not a boundary context of $\B_{k+1}$ and can be skipped.
Otherwise, we consider its central part $S' = S[\alpha_{k + 1} - \alpha_k\dd \alpha_{k + 1} + \alpha_k)$.
If $S'$ is not a boundary context of $\B_k$, i.e., if $S' \notin \tB_k$ (a check performed in constant time), then $S$ cannot be a boundary context of $\B_{k + 1}$ due to $\B_{k + 1} \subseteq \B_k$. Hence, if $S' \notin \tB_k$, we do not have to process $S$ any further.

If, however, it holds $S' \in \tB_k$, then we have to decide whether we will make $S$ a boundary context of $\B_{k + 1}$ by adding $S$ to $\tB_{k + 1}$. 
For the sake of explanation, consider any position $i \in [1\dd n)$ such that $T[i-\alpha_{k + 1}\dd i+\alpha_{k + 1}) \eqmatch S$.
We now explain how to compute the minimal values $\ell,  r \in [1\dd \floor{\lambda_k}]$ such that 
$i - \ell \in \B_k\cup\{0,n\}$ and $i + r \in \B_k\cup\{0,n\}$.
If both values exist, then
the factorization induced by $\B_k$ contains fragments $T[i-\ell\dd i)$ and $T[i\dd i+r)$ as phrases.
If, however, $\ell$ or $r$ does not exist, then we know that the phrase ending at position $i - 1$ or the phrase starting at position $i$ is of length more than $\floor{\lambda_k}$ (possibly both). When creating $\B_{k + 1}$, the algorithm from \cite[Proposition~4.7(a)]{DBLP:journals/siamcomp/KociumakaRRW24} (see also \cref{sec:linear_algorithm}) merges two adjacent phrases only if both of them are of length at most $\lambda_k$. 
Hence, if $\ell$ or $r$ does not exist, then the phrases around boundary $i$ cannot be merged, and $i$ is a boundary position in $\B_{k + 1}$. 

We now explain how to compute $\ell$ or show that it does not exist; the computation for $r$ is symmetric.
Consider any $\ell \in [1\dd \floor{\lambda_k}]$. Due to $\alpha_{k + 1} = \alpha_k + \floor{\lambda_k}$ and $S \eqmatch T[i- \alpha_{k + 1}\dd i+\alpha_{k + 1})$, it holds
\[
T[i-\alpha_k - \ell\dd i+\alpha_k - \ell) \eqmatch 
S[\floor{\lambda_k} - \ell\dd \floor{\lambda_k} + 2\alpha_k - \ell).
\]
Hence, we can check if $i - \ell$ is in $\B_k\cup\{0,n\}$ by probing $\tB_k$ with $S[\floor{\lambda_k} - \ell\dd \floor{\lambda_k} + 2\alpha_k - \ell)$ in constant time.
By trying all possible values, finding the minimal suitable $\ell$ takes $\Oh(\lambda_k)$ time.
If both $\ell$ and $r$ exist, then we add a tuple $\angles{S, \ell, r}$ to a list $\mathcal L$. Otherwise, as explained above, we have to make $S$ a boundary context of $\B_{k+1}$ by adding $S$ to $\tB_{k + 1}$.

After processing all possible strings $S$, we have the following situation.
If a boundary position in $\B_k\cup\{0,n\}$ is adjacent to a phrase of length over $\floor{\lambda_k}$, then the corresponding boundary context has been added to $\tB_{k + 1}$. (This is always the case for the contexts $\texttt\textdollar^{\alpha_{k + 1}} \cdot T[0\dd \alpha_{k + 1})$ and $T[n - \alpha_{k + 1}\dd n) \cdot \texttt\textdollar^{\alpha_{k + 1}}$ of positions $0$ and $n$, respectively).
If a boundary position in~$\B_k$ has context~$S$ and is adjacent to phrases of respective lengths $\ell, r \leq \floor{\lambda_k}$, then $\angles{S, \ell, r}$ has been added to~$\mathcal L$. Since each context has been added to $\mathcal L$ at most once, the elements of $\mathcal L$ are distinct.

\subparagraph{Performing an even round.}
Consider a boundary position in $\B_k$ for which the two adjacent phrases are identical and of length at most $\floor{\lambda_k}$. The run of identical phrases will be merged in $\B_{k + 1}$, and the boundary position will no longer exist. Hence, we proceed as follows. We consider each element $\angles{S, \ell, r}$ of the list $\mathcal L$. If $S[\alpha_{k + 1} - \ell\dd \alpha_{k + 1}) \neqmatch S[\alpha_{k + 1}\dd \alpha_{k + 1} + r)$, then we add $S$ to $\tB_{k + 1}$; otherwise, we do nothing (skip the element of $\mathcal L$).

\subparagraph{Performing an odd round.}
We have to produce the weighted directed graph described in \cref{sec:linear_algorithm}.
The set of nodes is %
$%
\mathcal F = \bigcup_{\angles{S, \ell, r} \in \mathcal L} \{ S[\alpha_{k + 1} - \ell\dd \alpha_{k + 1}), S[\alpha_{k + 1}\dd \alpha_{k + 1} + r )\}.%
$%

For each $\angles{S, \ell, r} \in \mathcal L$, we obtain the number $s$ of occurrences of $S$ in the padded~$T$ using the data structure from \cref{lem:short_substrings}. We increase the weight of the edge from $S[\alpha_{k + 1} - \ell\dd \alpha_{k + 1})$ to $S[\alpha_{k + 1}\dd \alpha_{k + 1} + r)$ by $s$. 
(This results in no self-loops, as phrases in~$\mathcal F$ are of length $\leq \floor{\lambda_k}$, and runs of phrases of length $\leq \floor{\lambda_k}$ have been eliminated in the preceding even round.)
Observe that $\absolute{\mathcal L} = \Oh(\sqrt{n})$, and each element of $\mathcal L$ contributes weight to one edge. Therefore, the number of edges is $\Oh(\sqrt{n})$, and we can approximate the maximum cut in $\Oh(\sqrt{n})$ time (see, e.g., \cite[Lemma A.1]{DBLP:journals/siamcomp/KociumakaRRW24}, which immediately works for weighted graphs). This reveals the two parts $L$ and $R$ used for computing $\B_{k + 1}$.
We once more consider each $\angles{S, \ell, r} \in \mathcal L$ and check if $S[\alpha_{k + 1} - \ell\dd \alpha_{k + 1}) \in L$ and $S[\alpha_{k + 1}\dd \alpha_{k + 1} + r) \in R$.
Whenever this is the case, we do nothing (skip the element of $\mathcal L$). Otherwise, we add $S$ to $\tB_{k + 1}$.

\subparagraph{Time complexity and correctness.} 
In each round, we have to consider all the $\Oh(\sqrt{n})$ possible strings of length $2\alpha_{k + 1}$. We process each string in $\Oh(\lambda_k) \subseteq \Oh(\lg n)$ time, dominated by the time needed to compute $\ell$ and $r$. We spend additional $\Oh(\sqrt{n})$ time to approximate the maximum cut.
Recalling that $K = \Oh(\log \log n)$, the overall time is $\Oh(K \cdot \sqrt{n} \cdot \lg n) \subset o(n / \lg n)$, plus $\Oh(n / \log_\sigma n)$ time for the preprocessing of \cref{lem:short_substrings}. The correctness follows from the fact that the algorithm directly implements the steps described in \cref{sec:linear_algorithm}. 
\end{proof}

\subsection{Reporting the Phrase Boundaries}

The number of boundaries in the initial $K$ rounds may significantly exceed $\Oh(n / \log_\sigma n)$, and thus we cannot afford to report them explicitly. Instead, we report a bitmask of length $n$ that marks the boundaries.

\begingroup\def\strset{\mathcal C}
\begin{lemma}\label{lem:bitmask}
    For an integer $1\le \ell \le (\log_\sigma n)/8$, consider a set $\strset\subseteq [0\dd \sigma)^\ell$.
    Given a text $T\in [0\dd\sigma)^n$ and an $\Oh(1)$-time membership oracle to $\strset$, a bitmask representing $\{i\in [0\dd n-\ell] : T[i\dd i+\ell)\in \strset\}$ can be constructed in $\Oh(n/\log_\sigma n)$ time.
\end{lemma}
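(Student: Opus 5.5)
We prove \cref{lem:bitmask} with a lookup table indexed by short blocks of $T$. Each table entry stores, as a precomputed bitmask, the answers for all length-$\ell$ windows starting in the first half of a block. Set $b := \floor{(\log_\sigma n)/8}$; we may assume $\log_\sigma n \ge 8$, since otherwise $n/\log_\sigma n = \Theta(n)$ and a naive scan suffices. Then $\ell \le b$. We build a table $M$ indexed by $\int(S)$ for all strings $S\in[0\dd\sigma)^*$ of length at most $2b$; note $2b \le (\log_\sigma n)/4$, so $\int(S) < \floor{\sqrt n}$. The entry $M[\int(S)]$ is a $b$-bit word whose $x$-th bit, for $x\in[0\dd b)$, is set if and only if $x+\ell\le |S|$ and $S[x\dd x+\ell)\in\mathcal C$.

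There are $\Oh(\sqrt n)$ candidate strings, and each entry is filled with $b$ oracle calls. Each call needs the substring $S[x\dd x+\ell)$, which is extracted in $\Oh(1)$ time by shifts and masks on the packed representation of $S$. So building $M$ costs $\Oh(\sqrt n\cdot b)\subset\Oh(n/\log n)$ time.

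Next we scan $T$ block by block. For each $j\in[0\dd\ceil{n/b})$, we extract $S_j := T[jb\dd\min(n,jb+2b))$ in constant time, which is possible since $2b\le\log_\sigma n$ symbols fit in a word. We compute $\int(S_j)$ and read $M[\int(S_j)]$. This word holds exactly the answers for positions $i\in[jb\dd jb+b)$. Any window starting at such an $i$ ends by $i+\ell\le jb+2b$, so it lies inside $S_j$ whenever it lies inside $T$. Positions $i>n-\ell$ are automatically zero because of the condition $x+\ell\le|S|$ for truncated final blocks. We then write these $b$ bits into the output bitmask at offset $jb$. A $b$-bit chunk at an arbitrary bit offset can be OR-ed into at most two consecutive words in $\Oh(1)$ time, with the bit order chosen to match the bitmask convention. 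The output is initialized to zero in $\Oh(n/\log n)$ word operations.

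The total time is $\Oh(n/b + \sqrt n\, b + n/\log n)=\Oh(n/\log_\sigma n)$. The only delicate point is correctness at block boundaries. Windows starting in the second half of $S_j$ are reported by block $j+1$ instead, so we ignore them in block $j$, and this is exactly why $M$ records only offsets $x<b$. Each window has length $\ell \le b$, so the $2b$-length block always contains every window starting in its first half. This mirrors the counting argument of \cref{lem:short_substrings}, and the rest of the proof is routine.
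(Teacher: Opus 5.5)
Your proof is correct. It uses the same tabulation idea as the paper: precompute window answers for every short string, cover $T$ by overlapping blocks, and concatenate the stored bitmasks. The difference is how you pick the block granularity, and that difference matters.

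\textbf{The paper's version.} The paper ties the blocks to $\ell$. Its table stores, for each $U$ with $\ell\le|U|\le 2\ell$, the full $(|U|-\ell+1)$-bit answer mask. It then reads $T$ in blocks $T[j\ell\dd\min(j\ell+2\ell-1,n))$ for $j\in[0\dd\floor{n/\ell})$. Consecutive masks are therefore exactly $\ell$ bits long and concatenate with no truncation or first-half bookkeeping. The cost, as written, is $\floor{n/\ell}$ lookups, i.e., $\Theta(n/\ell)$ time. This matches the claimed $\Oh(n/\log_\sigma n)$ only when $\ell=\Omega(\log_\sigma n)$.

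\textbf{Your version.} You fix the stride at $b=\floor{(\log_\sigma n)/8}$, independent of $\ell$. Because $\ell\le b$, each $2b$-long block still contains every window starting in its first half. Recording only offsets $x<b$, with the guard $x+\ell\le|S|$, yields exactly the correct $b$ bits per block.

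\textbf{What this buys.} Your stride gives $\Oh(n/\log_\sigma n)$ uniformly for all $1\le\ell\le(\log_\sigma n)/8$. That uniform bound is what the later applications need, since they invoke the lemma with small $\ell$. Examples are $\ell=2\alpha_0=2$ in \cref{lem:bk-bitmask} and $\ell\in\{\tau,2\tau\}$ for small $\tau$ via \cref{prp:runs-bitmask}.

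The only detail left to you is to discard, or allocate room for, the zero bits your final block writes past position $n-\ell$.
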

\begin{proof}
    First, we construct an array $B[0\dd \floor{\sqrt{n}})$ so that, for each string $U$ of length ${\ell \le |U| \le 2\ell \leq (\log_\sigma n)/4}$, the entry $B[\int(U)]$ is a bitmask of length $|U|-\ell+1$ representing $\{i\in [0\dd |U|-\ell] : U[i\dd i+\ell)\in \strset\}$.
    Due to constant-time oracle access to $\strset$, the construction of $B$ takes $\Oh(\sqrt{n}\cdot \ell)=\tOh(\sqrt{n})$ time.

    Next, for each $j\in [0\dd \floor{n/\ell})$, we extract a string $T_j = T[j\ell \dd \min(j\ell+2\ell-1,n))$. 
    In other words, this is a decomposition of $T$ into substrings of length $2\ell-1$ (with the last one of length between $\ell$ and $2\ell-1$) overlapping by exactly $\ell-1$ positions.
    Finally, we construct the resulting bitmask by concatenating the bitmasks $B[\int(T_j)]$ for subsequent indices $[0\dd \floor{n/\ell})$. 
    This is valid because, for each $i\in [0\dd n-\ell]$, we have $T[i\dd i+\ell)=T_{\floor{i/\ell}}[i\bmod \ell \dd i\bmod \ell+\ell)$ and the concatenated bitmasks are of length exactly $\ell$ (except for the last one, whose length is between $1$ and $\ell$).
\end{proof}
\endgroup

The well-known folklore \lcnamecref{lem:bitmasktoexplicit} below efficiently transforms a bitmask into an explicit set; a proof is provided for completeness.

\begin{lemma}\label{lem:bitmasktoexplicit}
    Given a bitmask $M$ of length $n$, the set $\{ i \in [0\dd n) \mid M[i] = 1 \}$ can be output in increasing order in $\Oh(n / \lg n + \sum_{i = 0}^{n - 1} M[i])$ time.
\end{lemma}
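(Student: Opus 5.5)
The plan is the usual lookup-table argument. Set $b := \floor{(\lg n)/2}$ and split $M$ into $\ceil{n/b}$ chunks of $b$ bits each; the last chunk may be shorter, and I pad it with zeros. Reading one chunk as an integer takes constant time with shifts and masks.

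\textbf{Step 1: precompute a table.} For every $x \in [0\dd 2^b)$, store the list of positions of the set bits of $x$, in increasing order.
\begin{itemize}
  \item There are $2^b \le \sqrt{n}$ entries.
  \item Each entry is computed naively in $\Oh(b)$ time and occupies $\Oh(b)$ words.
  \item The whole table therefore takes $\Oh(\sqrt{n}\lg n) \subset \Oh(n/\lg n)$ time and space.
  \item Each list is stored as a contiguous array with its length, so it can be traversed in time proportional to its number of elements.
\end{itemize}

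\textbf{Step 2: scan the chunks.} Process the chunks in order $j = 0, 1, \dots$.
\begin{itemize}
  \item Extract chunk $j$ as an integer $x_j$.
  \item If $x_j = 0$, skip it in $\Oh(1)$ time.
  \item Otherwise, walk the table list for $x_j$ and output $jb + p$ for each stored position $p$.
  \item Outputs from earlier chunks precede those from later ones, and each list is sorted, so the output is in increasing order.
  \item Padding bits are zero, so no spurious indices are reported.
\end{itemize}

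\textbf{Running time.} The scan spends $\Oh(1)$ per chunk, which is $\Oh(n/\lg n)$ in total. It also spends $\Oh(1)$ per reported element, which totals $\Oh(\sum_i M[i])$. Together with Step 1, this gives the claimed bound.

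The only delicate point is the bit-order convention. Within a word, the order of the chunk's bits must match the table's indexing, so that position $p$ in the table corresponds to $M[jb+p]$. This is a bookkeeping issue, not a real obstacle. As an alternative that avoids tables altogether, one could use constant-time most- or least-significant-bit extraction, which is available on the word RAM via known techniques, and repeatedly clear the lowest set bit. The table approach is simpler to justify, so I would present that.
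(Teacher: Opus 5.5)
Your proposal is correct and is essentially the paper's proof: a lookup table indexed by all $\floor{(\lg n)/2}$-bit values, storing their one-bit positions in order and built in $\tilde\Oh(\sqrt{n})$ time, followed by a chunk-wise scan of $M$ (with zero-padding) that reports each stored position shifted by the chunk offset. Your explicit skipping of all-zero chunks is only cosmetic, since the $\Oh(1)$ per-chunk cost already accounts for it.
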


\begin{proof}
    We use a lookup table $L[0\dd2^{\floor{\lg n /2}})$, where entry $L[x]$ contains a list of the one-bits in the binary representation of $x$ in left to right order. 
    Clearly, the table can be computed in $\tilde\Oh(\sqrt{n})$ time. Then, we can process the mask $M$ in chunks of size $\floor{\lg n / 2}$ (padding the final chunk with zeros). For each chunk, we look up the list of one-bits and report them after applying the appropriate offset. This takes $\Oh(n / \lg n + \sum_{i = 0}^{n - 1} M[i])$ time.
\end{proof}

\begin{lemma}\label{lem:bk-bitmask}
    Given $T\in [0\dd \sigma)^n$ and a parameter $k\in \Zz$, a set $\B_k$ (represented as a bitmask) satisfying the conditions of \cref{prop:bk_properties} can be constructed in $\Oh(n/\log_\sigma n)$ time.
\end{lemma}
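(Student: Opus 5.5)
We split into two cases according to whether $k\le K$ or $k>K$; if $\log_\sigma n<256$ we simply run the linear-time algorithm of \cref{prop:bk_properties}, which takes $\Oh(n)=\Oh(n/\log_\sigma n)$ time, and convert its explicit output into a bitmask in $\Oh(n)$ time. So assume $\log_\sigma n\ge 256$, so that $K$ is defined.

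\textbf{Case $k\le K$.} We first compute $\tB_k$ as a bitmask over $[0\dd\floor{\sqrt n})$ using \cref{lem:ck}; for $k=0$ this is the initialization step in that proof. This bitmask yields an $\Oh(1)$-time membership oracle for $\tB_k$, whose elements are strings of length $\ell=2\alpha_k\le 2\alpha_K\le (\log_\sigma n)/8$. We then apply \cref{lem:bitmask} to the padded text $T[-\alpha_k\dd n+\alpha_k)$ of length $n+2\alpha_k$. This produces a bitmask of all $i'$ with $T[i'-\alpha_k\dd i'+\alpha_k)\in\tB_k$, indexed so that $i'\in[0\dd n]$. By \cref{prop:bk_properties}\eqref{prop:bk_properties:synchronize} and the footnote on the padding, this set is exactly $\B_k\cup\{0,n\}$. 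We clear the bits for $0$ and $n$ and shift to $[0\dd n)$, all in $\Oh(n/\log_\sigma n)$ time.

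\textbf{Case $k>K$.} Using the previous case, we obtain $\B_K$ as a bitmask. We convert it to a sorted list via \cref{lem:bitmasktoexplicit} in $\Oh(n/\lg n+|\B_K|)$ time, which is $\Oh(n/\log_\sigma n)$ because $|\B_K|\le 4n/\lambda_K=\Oh(n/\log_\sigma n)$. \Cref{cor:large_rounds:from_Bk} then gives all of $\B_{K+1},\dots,\B_q$ in $\Oh(n/\log_\sigma n)$ total time; if $k\ge q$ the set is empty. Finally, we write $\B_k$ into an all-zero bitmask of $\lceil n/w\rceil$ words, which costs $\Oh(n/\lg n+|\B_k|)$ time.

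The main subtlety is the index shift when applying \cref{lem:bitmask} to the padded text: the boundary contexts of positions $0$ and $n$ are unique and lie in $\tB_k$, so they must be removed. The output also has to be consistent with the sets produced by \cref{cor:large_rounds:from_Bk}. This holds because \cref{lem:ck} faithfully simulates the same algorithm that the corollary continues from.
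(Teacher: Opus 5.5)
Your proposal is correct and follows the paper's proof essentially step for step. For $k\le K$ you use $\tB_k$ from \cref{lem:ck} as a membership oracle for \cref{lem:bitmask} on the $\texttt\textdollar^{\alpha_k}$-padded text and discard the entries for positions $0$ and $n$. For $k>K$ you go through \cref{lem:bitmasktoexplicit} and \cref{cor:large_rounds:from_Bk} and write the resulting list back into a bitmask.

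Your additions (the constant-$\log_\sigma n$ fallback, $\B_k=\emptyset$ for $k\ge q$, and the consistency remark) are harmless. One small wording point: the output of \cref{lem:bitmask} on the padded text is already indexed by $i\in[0\dd n]$, so you only need to truncate it, not shift it.
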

\begin{proof}
    If $k \le K$, the set $\tB_k$ constructed in the proof of \cref{lem:ck} serves as an $\Oh(1)$-time membership oracle to the set of boundary contexts. We then convert $\tB_k$ to $\B_k$ using \cref{lem:bitmask} on the text $\texttt\textdollar^{\alpha_{k}} \cdot T \cdot \texttt\textdollar^{\alpha_{k}}$, discarding the initial and final value (corresponding to positions $0$ and $n$). (\cref{lem:bitmask} marks the leftmost position of each boundary context occurrence, and contexts are of length $2\alpha_k$; since position $i$ in $\texttt\textdollar^{\alpha_{k}} \cdot T \cdot \texttt\textdollar^{\alpha_{k}}$ corresponds to position $i - \alpha_k$ in $T$, the reported positions are the central positions of boundary contexts in $T$, as required.)
    If $k > K$, we first construct the bitmask for $\B_K$ and then convert it into an explicit list of positions with \cref{lem:bitmasktoexplicit}. 
    Then, we use \cref{cor:large_rounds:from_Bk} to obtain $\B_k$ in explicit representation in $\Oh({n / \log_\sigma n})$ time, which we convert back into a bitmask in $\Oh(n / \log n + \absolute{\B_k})$ time.
    Due to $\absolute{\B_k} \leq \absolute{\B_K} = \Oh(n / \log_\sigma n)$, the time is as claimed.
\end{proof}

\begin{lemma}\label{lem:bk-explicit}
A text $T\in [0\dd \sigma)^n$ can be preprocessed in $\Oh(n/\log_\sigma n)$ time so that, given $k\in \Zz$, a set $\B_k$ satisfying the conditions of \cref{prop:bk_properties} can be constructed in $\Oh(1+n/\lambda_k)$ time. The elements of $\B_k$ are reported from left to right.
\end{lemma}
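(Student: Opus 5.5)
The plan is to split on whether $k\le K$ or $k>K$. For $k>K$, the explicit sets can simply be precomputed. For $k\le K$, we will store, at preprocessing time, only structures that let us list $\B_k$ in $\Oh(1+n/\lambda_k)$ time. Note that $n/\lambda_k$ may be far below $n/\log_\sigma n$ when $k$ is large, and far above it when $k$ is small. So the query algorithm must be output-sensitive, and it must not scan the whole bitmask whenever $\lambda_k\gg\log_\sigma n$.

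\textbf{Preprocessing.} First, we run the algorithm of \cref{lem:ck} to obtain the bitmask encodings of $\tB_1,\dots,\tB_K$. Next, we build the bitmask of $\B_K$ via \cref{lem:bitmask}, as in \cref{lem:bk-bitmask}, and convert it to a sorted list with \cref{lem:bitmasktoexplicit}. Then \cref{cor:large_rounds:from_Bk} yields all sets $\B_{K+1}\supseteq\dots\supseteq\B_{q-1}$ explicitly in $\Oh(n/\log_\sigma n)$ time. Their total size is $\Oh(n/\lambda_K)=\Oh(n/\log_\sigma n)$, because $\sum_{k'\ge K}\absolute{\B_{k'}}$ is geometric by \cref{prop:bk_properties}\eqref{prop:bk_properties:set_size}. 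We store all these lists. A query with $k\ge K$ (including $k\ge q$, where $\B_k=\emptyset$) then just outputs the stored sorted list in $\Oh(1+\absolute{\B_k})=\Oh(1+n/\lambda_k)$ time.

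\textbf{Queries with $k<K$.} Here $\lambda_k\le\lambda_K=\Oh(\log_\sigma n)$. We cannot afford to store bitmasks for all $K=\Theta(\log\log_\sigma n)$ rounds, since that would take $\Oh(Kn/\log_\sigma n)$ preprocessing time. Instead, at query time we rebuild the bitmask of $\B_k$ word-packed with \cref{lem:bitmask}, using $\tB_k$ as the oracle. The lookup table $B$ from that lemma costs only $\tOh(\sqrt n)$ time. The scan over $T$, however, takes $\Oh(n/\log_\sigma n)$ time, which is too slow when $\lambda_k<\log_\sigma n$ by less than a constant factor. To fix this, I would choose the block length in \cref{lem:bitmask} as $\Theta(\log_\sigma n)$ symbols independently of $\ell=2\alpha_k$. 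This works because $2\alpha_k\le 8^{-1}\log_\sigma n$, so each table entry for a $\Theta(\log_\sigma n)$-symbol window still fits in a word. The table can also be precomputed once per $k\le K$ during preprocessing, in $\tOh(K\sqrt n)$ total time. The scan then costs $\Oh(n/\log_\sigma n)\subseteq\Oh(n/\lambda_k)$, since $\lambda_k\le\lambda_K\le 256^{-1}\log_\sigma n$. Finally, \cref{lem:bitmasktoexplicit} converts the bitmask to the sorted list in $\Oh(n/\log n+\absolute{\B_k})\subseteq\Oh(n/\lambda_k)$ time.

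\textbf{Main obstacle.} I expect the main difficulty to be confirming that the $k\le K$ branch really stays within $\Oh(n/\lambda_k)$, i.e., that every step is bounded by $\Oh(n/\log_\sigma n)$ plus output size with no $\sqrt n\cdot\poly\log$ term left over at query time. This is why the tables for each $k\le K$, together with $\tB_k$, should be built in preprocessing, with total cost $\Oh(K\sqrt n\log n)\subset o(n/\log n)$. The case $n=\Oh(1)$ or $\log_\sigma n<256$ is handled by precomputing all $\B_k$ with \cref{prop:bk_properties} in $\Oh(n)=\Oh(n/\log_\sigma n)$ time and storing the lists, whose total size is $\Oh(n)$.
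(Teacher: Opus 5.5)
Your proposal is correct and follows the paper's proof. For small $k$, you build the bitmask of $\B_k$ word-packed, as in \cref{lem:bk-bitmask}, in $\Oh(n/\log_\sigma n)\subseteq\Oh(n/\lambda_k)$ time and convert it with \cref{lem:bitmasktoexplicit}; for large $k$, you precompute the explicit lists from $\B_K$ via \cref{cor:large_rounds:from_Bk}.

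One correction: your stated motivation for the block-length tweak and for moving $\tB_k$ into preprocessing is mistaken. Since $\lambda_k\le\lambda_K\le 256^{-1}\log_\sigma n$, a query-time cost of $\Oh(n/\log_\sigma n)$ is never too slow here. Even a scan with step $2\alpha_k$ costs only $\Oh(n/\alpha_k)\subseteq\Oh(n/\lambda_k)$, because $\alpha_k\ge\frac78\lambda_k$. So these changes are harmless but unnecessary.
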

\begin{proof}
    For $k \leq K$, we can simply use \cref{lem:bk-bitmask,lem:bitmasktoexplicit}, which takes $\Oh(n / \log_\sigma n + \absolute{\B_k}) = \Oh(n / \lambda_k)$ time.
    For all $k > K$, the sets can be computed during preprocessing as follows. We compute $\B_K$ in explicit representation using \cref{lem:bk-bitmask,lem:bitmasktoexplicit}. Then, we obtain all the remaining sets using \cref{cor:large_rounds:from_Bk}.
\end{proof}

\section{Computing \texorpdfstring{\boldmath$\tau$\unboldmath}{tau}-Runs in Sublinear Time}\label{sec:runs}

\newcommand{\per}{\mathsf{per}}
\newcommand{\run}{\mathsf{run}}
\newcommand{\RUNS}{\mathsf{RUNS}}
\newcommand{\R}{\mathsf{R}}

An integer $p\in [1\dd |S|]$ is a \emph{period} of a string $S$ if $S[i] = S[i+p]$ holds for each $i\in [0\dd |S|-p)$.
We denote the smallest period of a non-empty string $S$ by $\per(S)$, and we call $S$ \emph{periodic} whenever $\per(S)\le \frac12 |S|$.

A \emph{run} (a \emph{maximal repetition}) in a string $T$ is a periodic fragment $\gamma=T[i\dd j)$ of $T$ that can be extended neither to the left nor to the right without increasing the smallest period $p = \per(\gamma)$, i.e., $i=0$ or $T[i - 1] \ne T[i + p - 1]$, and $j=n$ or $T[j] \ne T[j - p]$. 
The set of all runs in $T$ is denoted by $\RUNS(T)$.
Crucially, the periodicity lemma~\cite{FineWilf1965} implies that distinct runs cannot overlap too much.

\begin{fact}[{
\cite[Lemma~1(ii)]{DBLP:conf/focs/KolpakovK99}}]\label{fct:overlap}
    Let $\gamma,\gamma'\in \RUNS(T)$ be distinct yet overlapping runs in a string $T$.
    If $p=\per(\gamma)$ and $p'=\per(\gamma')$, then $|\gamma \cap \gamma'| < p+p'-\gcd(p,p')$.
\end{fact}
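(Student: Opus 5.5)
The plan is to argue by contradiction with the periodicity lemma of Fine and Wilf~\cite{FineWilf1965}: a string with periods $p$ and $p'$ and length at least $p+p'-\gcd(p,p')$ also has period $\gcd(p,p')$. So suppose $\gamma=T[i\dd j)$ and $\gamma'=T[i'\dd j')$ are distinct overlapping runs with $p=\per(\gamma)$, $p'=\per(\gamma')$. Assume their intersection $\delta=\gamma\cap\gamma'$ satisfies $|\delta|\ge p+p'-g$, where $g=\gcd(p,p')$.

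\textbf{Step 1: $p=p'$.} The fragment $\delta$ lies in both $\gamma$ and $\gamma'$, so it inherits both periods $p$ and $p'$. By the periodicity lemma, $\delta$ therefore has period $g$. Since $g\le p'$, we have $|\delta|\ge p$, so $\delta$ contains a full length-$p$ window of $\gamma$. Because $\gamma$ has period $p$, every symbol of $\gamma$ is determined by this window via $\gamma[x]=\gamma[x \bmod p \text{ shifted into the window}]$. The window has period $g$ and $g$ divides $p$, so the whole of $\gamma$ has period $g$. Minimality of $p=\per(\gamma)$ then forces $g=p$. Symmetrically, $g=p'$, hence $p=p'$.

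\textbf{Step 2: the runs coincide.} Now both runs have the same period $p$ and overlap in at least $p+p'-g=p$ positions. I would show that the union fragment $T[\min(i,i')\dd \max(j,j'))$ has period $p$. Take any pair of positions $x$ and $x+p$ in the union. If both lie in $\gamma$, or both lie in $\gamma'$, then $T[x]=T[x+p]$ directly. Otherwise one lies in each, and the overlap, being an interval of length at least $p$, lets us route the equality through an intermediate position at distance $p$ inside the overlap.

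\textbf{Step 3: contradiction with maximality.} If, say, $i'<i$, then $T[i-1]=T[i-1+p]$, because both positions lie in the union, which has period $p$. This contradicts the left-maximality of $\gamma$. Hence $i=i'$, and symmetrically $j=j'$, contradicting the assumption that $\gamma\ne\gamma'$.

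The only delicate point is Step 2: the gluing argument that two $p$-periodic fragments overlapping on at least $p$ positions have a $p$-periodic union. Some care is needed with the case where one fragment contains the other. In that case the union is just the larger fragment, and Step 3 applies directly.
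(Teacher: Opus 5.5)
Your argument is correct. The paper gives no proof of this fact and cites Kolpakov and Kucherov instead, noting only that it follows from the periodicity lemma of Fine and Wilf. Your proof follows exactly that standard route: Fine--Wilf on the overlap forces $p=p'$, then you glue the two runs and contradict maximality.

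One small tightening: the ``otherwise'' case in Step~2 never occurs. Take $\gamma=T[i\dd j)$ and $\gamma'=T[i'\dd j')$ with $i<i'$, $j<j'$ and $j-i'\ge p$. For a pair $x,x+p$ in the union, if $x\ge i'$ both positions lie in $\gamma'$. If $x<i'$, then $x+p<i'+p\le j$, so both lie in $\gamma$. No routing through the overlap is needed.
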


Among many consequences of \cref{fct:overlap} is that every periodic fragment $T[i\dd j)$ can be uniquely extended to a run $T[i'\dd j')\in \RUNS(T)$ such that $i'\le i \le j \le j'$ and $\per(T[i'\dd j'))=\per(T[i\dd j))$. We denote this unique extension by $\run(T[i\dd j))$. If $T[i\dd j)$ is not periodic, we write $\run(T[i\dd j))=\bot$ to indicate that the run extension is undefined.

\begin{theorem}[{\cite[Theorem 1.5]{DBLP:journals/siamcomp/KociumakaRRW24}}]\label{thm:run_ext}
    A text $T\in [0\dd \sigma)^{n}$ can be preprocessed in $\Oh(n/\log_\sigma n)$ time so that, given a fragment $x$ of $T$, the run extension $\gamma = \run(x)$ can be computed in $\Oh(1)$ time. 
    If $\gamma\ne \bot$, then the query algorithm also returns the shortest period $\per(\gamma)=\per(x)$.
\end{theorem}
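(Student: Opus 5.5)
The statement is cited as \cite[Theorem 1.5]{DBLP:journals/siamcomp/KociumakaRRW24}, so in the paper it needs no proof; still, here is how I would establish it from the tools above. The idea is to anchor every query at synchronizing boundaries of restricted recompression, and to precompute the runs that can contain such anchors.

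\textbf{Reduction to a bounded number of levels.} Let $x=T[i\dd j)$ with $|x|=m$. I would pick the level $k$ with $\lambda_k=\Theta(m)$ and build the sets $\B_k$ by \cref{lem:bk-explicit}. If $x$ is periodic with period $p\le m/2$, then inside $x$ the positions of $\B_k$ behave $p$-periodically, with a margin of $\alpha_k\le 16\lambda_k$ at each end. This follows from \cref{prop:bk_properties}\eqref{prop:bk_properties:synchronize}. Two outcomes are possible.
\begin{itemize}
\item $x$ contains a boundary of $\B_k$ in its interior. Then $\run(x)$ is determined by the first such boundary together with the period.
\item $x$ contains no boundary in its interior. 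By \cref{prop:bk_properties}\eqref{prop:bk_properties:phrase_len}, $x$ then lies inside a long phrase whose primitive root has length at most $\lambda_k$. That phrase is itself a periodic fragment, and its run can be stored when the phrase is created.
\end{itemize}
The difficulty is that $m$ can be small, even below $\log_\sigma n$. For those queries I would not use $\B_k$ directly but precomputed tables.

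\textbf{Short fragments.} For $m\le(\log_\sigma n)/8$, I would use a lookup table indexed by $\int(x)$. It returns $\per(x)$, or reports that $x$ is not periodic, in $\Oh(1)$ time, and building it costs $\tOh(\sqrt n)$. Extending the period to maximality is then a left and right longest-common-extension query of $T$ against its shift by $p$. I would answer these with an LCE structure built in $\Oh(n/\log_\sigma n)$ time: word-packed comparisons for short extensions, plus synchronizing anchors via $\B_K$ (\cref{cor:large_rounds:from_Bk}, \cref{lem:bitmasktoexplicit}) for long ones.

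\textbf{Long fragments.} For $m>(\log_\sigma n)/8$, I would instead precompute, for every level $k\ge K$ and every boundary $b\in\B_k$, the runs with period at most $\lambda_k$ that contain $b$ well inside them. By \cref{fct:overlap} there are only $\Oh(1)$ such runs per boundary. Their total number is $\sum_{k\ge K}|\B_k|=\Oh(n/\log_\sigma n)$. I would store them in a hash table or perfect-hash dictionary keyed by $(k,b)$. A query then proceeds in three steps.
\begin{enumerate}
\item Find a boundary of $\B_k$ inside $x$ in $\Oh(1)$ time, using rank/select over a bitmask of $\B_K$ and predecessor tables for the higher levels.
\item Look up the candidate runs stored for that boundary.
\item Check each candidate for containment of $x$.
\end{enumerate}
The main obstacle is computing these runs within $\Oh(n/\log_\sigma n)$ time. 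I would compute them level by level: reuse the periodic phrases produced by restricted recompression, and extend each one with LCE queries, which costs $\Oh(1)$ per boundary.
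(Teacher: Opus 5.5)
The paper gives no proof of this statement; it imports it from \cite{DBLP:journals/siamcomp/KociumakaRRW24}. There, as far as I know, $\mathsf{per}(x)$ comes from an $\Oh(1)$-time internal-pattern-matching (2-period) query, and the run is obtained by extending that period with two LCE queries. Those LCE queries take $\Oh(1)$ time after $\Oh(n/\log_\sigma n)$ preprocessing by \cite{KK19}. Your proposal replaces the period computation with anchors from restricted recompression, and that replacement has a genuine gap. The preprocessing it relies on is also unjustified.

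\textbf{The long case cannot be parameterized consistently.} Your fallback says that if no boundary lies in the interior of $x$, then $x$ lies in a long phrase whose primitive root has length $\le\lambda_k$. For this, the boundary-free part of $x$ must exceed $\frac74\lambda_k$ after discarding margins $\alpha_k\approx16\lambda_k$ at both ends. So $\lambda_k$ must be a small constant fraction of $m$ (roughly $m>2\alpha_k+\frac74\lambda_k$). A periodic $x$, however, can have any period up to $m/2$. Take $x=yy$ with $|y|=m/2$ and $y$ generic: $x$ contains many level-$k$ boundaries and $\mathsf{per}(x)=m/2\gg\lambda_k$. Then $\mathsf{run}(x)$ is not among the runs of period $\le\lambda_k$ stored at any boundary, and your query returns $\bot$. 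Raising the stored period bound to some $P\ge m/2$ does not help. Your ``$\Oh(1)$ candidates per boundary'' argument via \cref{fct:overlap} needs stored runs to share at least $p+p'\le 2P$ positions, i.e., a margin of about $P$ on both sides of $b$. For $\mathsf{run}(x)$ to be stored at a boundary inside $x$, that boundary would then have to be at distance $\ge m/2$ from both ends of $x$, which is impossible. Your remark that $\mathsf{run}(x)$ ``is determined by the first boundary together with the period'' presupposes that $\mathsf{per}(x)$ is known. The proposal never says how to compute it for long $x$, and that computation is the real content of the theorem.

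\textbf{The preprocessing and the LCE structure are not justified.} A run of period $p\le\lambda_k$ through a boundary need not appear as a periodic phrase. \cref{prop:bk_properties}\eqref{prop:bk_properties:phrase_len} only says that long phrases are periodic, not that runs collapse into single phrases; a run can be cut into many short level-$k$ phrases. This is why \cref{prp:sync} adds $\tau$-runs separately, and why the paper computes them in \cref{prp:runs-explicit} using this very theorem. Enumerating your candidate runs therefore requires detecting periodicity around each boundary, which is the primitive you are trying to build.

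Both your short case and the extension step assume $\Oh(1)$-time LCE queries after $\Oh(n/\log_\sigma n)$ preprocessing. Your sketch (``word-packed comparisons plus anchors from $\B_K$'') does not deliver this. $\B_K$ has no density guarantee inside periodic regions, and the hard part, sorting the anchored suffixes and computing their LCPs in sublinear time, is not addressed. This structure must be cited from \cite{KK19}. With that citation your short case is fine. What remains missing is an $\Oh(1)$-time way to compute $\mathsf{per}(x)$, or to certify that $x$ is not periodic, for long fragments.
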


For $\ell,p\in [0\dd n]$, we write
$%
\RUNS_{\ell,p}(T) = \{\gamma \in \RUNS(T) : |\gamma|\ge \ell \text{ and }\per(\gamma)\le p\}.
$
The construction of string synchronizing sets relies on \emph{$\tau$-runs} defined for $\tau\in [1\dd n]$ as \[\RUNS_\tau(T)= \RUNS_{\tau,\floor{\tau/3}}(T).\]

\begin{restatable}{proposition}{restaterunsexplicit}\label{prp:runs-explicit}
    One can preprocess a text $T\in [0\dd \sigma)^n$ in $\Oh(n/ \log_\sigma n)$ time so that, given integers $\ell,p\in [0\dd n]$ with $\ell \ge 2p$, one can output the set $\RUNS_{\ell,p}(T)$ in $\Oh(n/(\ell+1-2p))$ time.
    The runs are reported together with their periods and simultaneously ordered by their start and end positions.
\end{restatable}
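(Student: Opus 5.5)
Preprocess with \cref{thm:run_ext}: run-extension queries in $\Oh(1)$ time after $\Oh(n/\log_\sigma n)$ preprocessing. Given $\ell\ge 2p$, set $d := \ell+1-2p \ge 1$. Probe the text at the sampled positions $j\in\{0,d,2d,\dots\}$, which is $\Oh(n/d)$ probes. The aim is that every run in $\RUNS_{\ell,p}(T)$ is detected from a window anchored at one of these positions.

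Let $\gamma=T[a\dd b)$ be a run with $|\gamma|\ge \ell$ and period $q\le p$. Any window $T[j\dd j+2p)$ of length $2p$ lying inside $\gamma$ has period $q\le p=\frac12\cdot 2p$. Such a window is therefore periodic with $\per=q$, and by uniqueness of run extension (\cref{fct:overlap}), $\run(T[j\dd j+2p))=\gamma$.

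The admissible starting points $j$ satisfy $a\le j\le b-2p$. There are $|\gamma|-2p+1\ge d$ of them, so at least one sampled multiple of $d$ is admissible.

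The algorithm is:
\begin{itemize}
\item For each multiple $j$ of $d$ with $j+2p\le n$, query $\run(T[j\dd j+2p))$ in $\Oh(1)$ time.
\item Keep the result $\gamma$ if $\gamma\ne\bot$, $\per(\gamma)\le p$ and $|\gamma|\ge\ell$.
\end{itemize}
The case $p=0$ is trivial, since then the set is empty. Every reported run is genuinely in $\RUNS_{\ell,p}(T)$, and every member is found, so the output is correct.

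The same run may be reported by many consecutive samples, which is the main obstacle. I expect this to be harmless because consecutive samples are scanned left to right, and the set of samples hitting a fixed run is a contiguous range of $j$'s. Could distinct runs interleave within that range? A later sample $j'$ hitting $\gamma'$ and an even later $j''$ hitting $\gamma$ again would force $\gamma'$ to lie strictly within the window range of $\gamma$. Then $\gamma\cap\gamma'$ would contain a window of length $2p\ge q+q'$, contradicting \cref{fct:overlap}. Hence duplicates are consecutive, and comparing each result with the previously reported one removes them in $\Oh(1)$ time each.

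For the ordering, note that any two distinct runs in the output are not nested. If $\gamma'$ were nested in $\gamma$, their overlap would have length at least $\ell\ge 2p\ge q+q'$, again contradicting \cref{fct:overlap}. So among these runs, ordering by start position coincides with ordering by end position. The runs appear in the order of their first hitting sample, which is by start position, because a run starting later has admissible $j$'s that are shifted right. The reported periods come directly from \cref{thm:run_ext}.

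The total query time is $\Oh(1+n/d)=\Oh(n/(\ell+1-2p))$, since $d\le n+1$.
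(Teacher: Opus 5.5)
Your proof is correct and follows the paper's argument essentially step for step: sample length-$2p$ windows at multiples of $\Delta=\ell+1-2p$, take run extensions via \cref{thm:run_ext}, keep those of length at least $\ell$, and suppress consecutive duplicates, with \cref{fct:overlap} giving completeness, no repetitions, and the simultaneous start/end ordering. The differences are cosmetic and equally valid: you deduplicate against the last \emph{reported} run rather than the previous sample's result, and you derive the ordering from non-nesting plus first-hitting samples instead of the paper's direct containment contradiction.
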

\begin{proof}
    In the preprocessing phase, we construct the data structure of \cref{thm:run_ext}.
    At query time, if $p = 0$, then we trivially return the empty set. Otherwise, we pick $\Delta = \ell+1-2p$ and, for  $i \in [0\dd \floor{(n-2p)/\Delta}]$, consider fragments $T_i = T[i\Delta \dd i\Delta+2p)$.
    For each $T_i$, we compute $\gamma_i = \run(T_i)$ and output $\gamma_i$ as long as $\gamma_i\ne \bot$, $|\gamma_i|\ge \ell$, and $\gamma_{i}\ne \gamma_{i-1}$ (or $i=0$).

    It is easy to see that the query time is $\Oh(n/\Delta)=\Oh(n/(\ell+1-2p))$ as claimed, so it remains to show that the query algorithm is correct. We start with a simple claim:
    \begin{claim}\label{clm:runs-explicit}
    If a fragment $T_i$ is contained in $\gamma\in \RUNS_{\ell,p}(T)$, then $\gamma_i=\gamma$.
    \end{claim}
    \begin{proof}
        Observe that $\per(T_i)\le \per(\gamma)\le p = \frac12|T_i|$, so $T_i$ is periodic and $\gamma_i \ne \bot$.
        The periods of $\gamma$ and $\gamma_i$ are both at most $p$ and $|\gamma\cap \gamma_i| \ge 2p$, so \cref{fct:overlap} implies $\gamma_i=\gamma$.
    \end{proof}
    
    The correctness of our query algorithm consists of a few simple statements.

    \begin{description}
        \item[Every reported fragment belongs to $\RUNS_{\ell,p}(T)$.] We only report $\gamma_i=\run(T_i)$ if $\gamma_i\ne \bot$. In particular, $T_i$ is periodic, and thus $\per(\gamma_i)=\per(T_i)\le \frac12|T_i|=p$.
        We additionally check $|\gamma_i|\ge \ell$; these two conditions together guarantee that $\gamma_i \in \RUNS_{\ell,p}(T)$.
        \item[Every run $\gamma\in \RUNS_{\ell,p}(T)$ is reported.]
        Consider a run $\gamma = T[b\dd e)\in \RUNS_{\ell,p}(T)$ and define $i = \ceil{b/\Delta}$ so that $b\in ((i-1)\Delta \dd i\Delta]$.
        Observe that $e = b+|\gamma|\ge b+\ell \ge 1 + (i-1)\Delta + \ell = i\Delta +2p$.
        In particular, $i\Delta+2p \le e \le n$ implies $i \in [0\dd \floor{(n-2p)/\Delta}]$.
        Moreover, $b \le i\Delta < i\Delta+2p \le e$ means that $T_i$ is contained in $\gamma$,
        and thus $\gamma=\gamma_i$ follows by \cref{clm:runs-explicit}.
        Furthermore, $|\gamma|\ge \ell$ since  $\gamma\in \RUNS_{\ell,p}(T)$ and $\gamma\ne \gamma_{i-1}$ since $\gamma$ starts at a position $b > (i-1)\Delta$.
        Thus, we conclude that the algorithm reports $\gamma_i=\gamma$. 
        \item[No run $\gamma\in \RUNS_{\ell,p}(T)$ is reported more than once.] For a proof by contradiction, suppose that $\gamma$ is reported both as $\gamma_{j}$ and $\gamma_{i}$ for some $i<j$. 
        This means that $T_{j}$ and $T_i$ are both contained in $\gamma$ and, in particular, $T_{i-1}$ is also contained in $\gamma$. 
        By \cref{clm:runs-explicit}, we conclude that $\gamma_{i-1}=\gamma = \gamma_{i}$, which is a contradiction since we only report $\gamma_i$ if $\gamma_{i}\ne \gamma_{i-1}$.
        \item[The runs are simultaneously ordered by their start and end positions.] 
        Suppose that two distinct reported runs $\gamma_j=T[b_j\dd e_j)$ and $\gamma_i=T[b_i \dd e_i)$ for $j<i$ violate the condition, i.e., $b_i \le b_j$ or $e_i \le e_j$. 
        In the former case, $T_j$ is contained in $\gamma_i$, whereas in the latter case, $T_i$ is contained in $\gamma_j$.
        In both cases, \cref{clm:runs-explicit} implies $\gamma_i=\gamma_j$, which is a contradiction.\qedhere
    \end{description}
\end{proof}

\begin{lemma}\label{lem:runs-vs-r}
    For a text $T$ of length $n$ and integers $\ell,p\in [1\dd n]$, define 
    \[\R_{\ell,p}(T) = \{i\in [0\dd n-\ell]: \per(T[i\dd i+\ell))\le p\}.\]
    If $\ell \ge 2p$, then
    \[\RUNS_{\ell,p}(T) =\{ T[b\dd e) : [b\dd e-\ell]\text{ is a maximal interval contained in }\R_{\ell,p}(T)\}.\]
\end{lemma}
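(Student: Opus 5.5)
We prove the two inclusions separately, working with the map $\gamma=T[b\dd e)\mapsto[b\dd e-\ell]$. The tools are the definition of runs, \cref{fct:overlap}, and the remark after it that every periodic fragment has a unique run extension $\run(\cdot)$ with the same shortest period.

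\emph{Runs give maximal intervals.} Let $\gamma=T[b\dd e)\in\RUNS_{\ell,p}(T)$ with $q=\per(\gamma)\le p$. Since $|\gamma|\ge\ell$, the interval $[b\dd e-\ell]$ is non-empty. For every $i\in[b\dd e-\ell]$, the window $T[i\dd i+\ell)$ lies inside $\gamma$ and so has period $q\le p$; hence $i\in\R_{\ell,p}(T)$.

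For maximality, suppose $b-1\in\R_{\ell,p}(T)$ (the case $e-\ell+1\in\R_{\ell,p}(T)$ is symmetric). Then $x=T[b-1\dd b-1+\ell)$ has some period $q'\le p\le\ell/2$, so $x$ is periodic and $\run(x)$ exists. Both $\run(x)$ and $\gamma$ contain $T[b\dd b-1+\ell)$, so their overlap has length at least $\ell-1$. \cref{fct:overlap} forbids distinct runs from overlapping by $q+q'-\gcd(q,q')\le 2p-1\le\ell-1$, so $\run(x)=\gamma$. But $\run(x)$ starts at or before $b-1$, while $\gamma$ starts at $b$, a contradiction. This overlap calculation is tight: it is exactly where the hypothesis $\ell\ge 2p$ is needed, because the overlap bound must reach $2p-1$.

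\emph{Maximal intervals give runs.} Let $[b\dd e-\ell]$ be a maximal interval in $\R_{\ell,p}(T)$. For consecutive $i,i+1$ in the interval, the windows $T[i\dd i+\ell)$ and $T[i+1\dd i+1+\ell)$ are periodic with shortest periods at most $p$. Their run extensions overlap in at least $\ell-1\ge 2p-1$ positions, so by \cref{fct:overlap} they are the same run. Inducting along the interval, all windows in $[b\dd e-\ell]$ extend to a single run $\gamma'=T[b'\dd e')$ with $\per(\gamma')\le p$ and $b'\le b$, $e\le e'$.

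By the first part, $[b'\dd e'-\ell]$ is contained in $\R_{\ell,p}(T)$. It contains $[b\dd e-\ell]$, so maximality gives $b'=b$ and $e'=e$. Hence $T[b\dd e)=\gamma'\in\RUNS_{\ell,p}(T)$, because $|\gamma'|\ge\ell$ and $\per(\gamma')\le p$.

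The main obstacle is the maximality argument in the first direction, specifically pinning the overlap bound $2p-1\le \ell-1$. Everything else is routine.
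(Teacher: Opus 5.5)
Your proof is correct and follows essentially the same route as the paper: containment and maximality via \cref{fct:overlap} in the first direction, then run extension plus maximality in the second. The one difference is in the second direction. The paper skips your induction along the interval: it takes $\run(T[b\dd b+\ell))$ and notes that the first part already shows its interval $[b'\dd e'-\ell]$ is \emph{maximal} in $\R_{\ell,p}(T)$, so two maximal intervals sharing the point $b$ must coincide.
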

\begin{proof}
    Consider $\gamma = T[b\dd e)\in \RUNS_{\ell,p}(T)$.
    For each $i\in [b\dd e-\ell]$, the fragment $T[i\dd i+\ell)$ is contained in $\gamma$, and thus $\per(T[i\dd i+\ell))\le \per(\gamma)\le p$.
    Consequently, $[b\dd e-\ell]\subseteq \R_{\ell,p}(T)$.

    Next, for a proof by contradiction, suppose that $b-1\in \R_{\ell,p}(T)$.
    This means that $\per(T[b-1\dd b+\ell-1))\le p$ and thus $\gamma'=\run(T[b-1\dd b+\ell-1))\in \RUNS_{\ell,p}(T)$. 
    The fragment $T[b\dd b+\ell-1)$ is contained in both $\gamma$ and $\gamma'$, so the intersection of these two runs consists of at least $\ell-1 \ge 2p-1 \ge \per(\gamma)+\per(\gamma')-\gcd(\per(\gamma),\per(\gamma'))$ positions. 
    This contradicts \cref{fct:overlap}, and thus $b-1\notin \R_{\ell,p}(T)$.
    A symmetric argument shows that $e-\ell+1\notin \R_{\ell,p}(T)$.

    It remains to prove that every maximal interval $[i\dd j]\subseteq \R_{\ell,p}(T)$ corresponds to a run $T[i\dd j+\ell)\in \RUNS_{\ell,p}(T)$.
    Since $i\in \R_{\ell,p}(T)$, we have $\per(T[i\dd i+\ell))\le p$, and thus $\gamma = \run(T[i\dd i+\ell))\in \RUNS_{\ell,p}(T)$. 
    Denote $\gamma = T[b\dd e)$; as previously shown, $[b\dd e-\ell]$ is a maximal interval contained in $\R_{\ell,p}(T)$.
    Since $i\in [b\dd e-\ell]$ holds due to $b\le i $ and $e \ge i+\ell$ and since $[i\dd j]$ is also a maximal interval contained in $\R_{\ell,p}(T)$, we conclude that $[i\dd j]=[b\dd e-\ell]$, i.e., $T[b\dd e)=T[i\dd j+\ell)$ holds as claimed.    
\end{proof}

\begin{proposition}\label{prp:runs-bitmask}
    Given a string $T\in [0\dd \sigma)^n$ and parameters $\ell,p \in [0\dd \floor{(\log_\sigma n)/8}]$,
    the bitmask $\R_{\ell,p}(T)$ can be constructed in $\Oh(n/\log_\sigma n)$ time. 
\end{proposition}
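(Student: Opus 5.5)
This is a direct application of \cref{lem:bitmask}. Let $\strset = \{S \in [0\dd\sigma)^\ell : \per(S)\le p\}$. Then $\R_{\ell,p}(T) = \{i\in[0\dd n-\ell] : T[i\dd i+\ell)\in\strset\}$ by definition. The length satisfies $\ell\le(\log_\sigma n)/8$, as \cref{lem:bitmask} requires, so the whole task reduces to supplying an $\Oh(1)$-time membership oracle for $\strset$ within the time budget.

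\textbf{Degenerate cases.} First we dispose of these. If $\ell=0$, every position $i\in[0\dd n]$ would qualify under any sensible convention for the empty string, and we output the all-ones (or appropriately defined) bitmask of length $n+1$ in $\Oh(n/\log n)$ time. If $p=0$, the set is empty, since periods are at least $1$, and we output an all-zero bitmask word by word. If $\ell > n$, the set is empty as well.

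\textbf{Building the oracle.} We build a lookup table $P[0\dd\floor{\sqrt n})$ indexed by $\int(S)$ for all strings $S$ of length exactly $\ell$; there are $\sigma^\ell\le n^{1/8}$ of them. For each such $S$ we compute $\per(S)$ naively by testing each candidate $q\in[1\dd\ell]$ in $\Oh(\ell)$ time, or in $\Oh(1)$ time with word operations comparing $S[0\dd\ell-q)$ with $S[q\dd\ell)$ as packed integers. We then store the bit $[\per(S)\le p]$. The total time is $\Oh(n^{1/8}\ell^2)=o(n/\log n)$. Membership of $T[i\dd i+\ell)$ in $\strset$ is then answered by extracting the substring, mapping it to $\int(\cdot)$, and probing $P$, all in constant time.

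\textbf{Conclusion.} Plugging this oracle into \cref{lem:bitmask} yields the bitmask in $\Oh(n/\log_\sigma n)$ time. No real obstacle is expected; the only care needed is in the edge cases ($\ell=0$, $p=0$, or $n<\ell$) and in confirming that the table size and its construction stay within $\Oh(\sqrt n\,\polylog n)$.
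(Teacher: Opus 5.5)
Your proposal is correct and is the paper's proof: the paper also tabulates, in $\tOh(\sqrt{n})$ time, a bitmask over $\int(R)$ marking the strings $R\in[0\dd\sigma)^\ell$ whose smallest period is at most $p$, and passes this constant-time membership oracle to \cref{lem:bitmask}. Your separate handling of the degenerate cases $\ell=0$, $p=0$, and $\ell>n$ is extra care that the paper leaves implicit.
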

\begin{proof}
    Observe that, in $\tOh(\sqrt{n})$ time, we can construct a bitmask of length $\floor{\sqrt{n}}$ whose set bits are $\{\int(R) : R\in [0\dd \sigma)^\ell \text{ and } \per(R) \le p\}$.
    This gives us constant-time oracle access to the set $\{R\in [0\dd \sigma)^\ell : \per(R) \le p\}$.
    Now, \cref{lem:bitmask} lets us construct in  $\Oh(n/\log_\sigma n)$ time a bitmask representing $\R_{\ell,p}(T)=\{i \in [0\dd n-\ell] : \per(T[i\dd i+\ell))\le p\}$.
    The overall running time is $\tOh(\sqrt{n})+\Oh(n/\log_\sigma n)=\Oh(n/\log_\sigma n)$.
\end{proof}

\section{String Synchronizing Sets in Sublinear Time}\label{sec:sync}

\begin{definition}[Synchronizing set~\cite{KK19}]\label{def:sync}
    For a string $T[0\dd n)$ and parameter $\tau\in [1\dd \floor{n/2}]$, a set $\Sync \subseteq [0\dd n-2\tau]$ is a \emph{$\tau$-synchronizing set} of $T$ if it satisfies the following two conditions:
    \begin{description}
        \item[Consistency:] $\ $
        
        For $i,j\in [0\dd n-2\tau]$, if $i\in \Sync$ and $T[i\dd i+2\tau)\eqmatch T[j\dd j+2\tau)$, then $j\in \Sync$.
        \item[Density:] $\ $
        
        For $i\in [0\dd n-3\tau+1]$, we have $[i\dd i+\tau)\cap \Sync = \emptyset$ if and only if $\per(T[i\dd i+3\tau-1)) \le \frac13\tau$. 
    \end{description}
\end{definition}

\begin{proposition}[{\cite[Construction 3.5 and Lemma 5.2]{DBLP:journals/siamcomp/KociumakaRRW24}}]\label{prp:sync}
For a string $T[0\dd n)$, let $(\B_k)_{k\in \Zz}$ be a descending chain satisfying \cref{prop:bk_properties}.
Consider a parameter $\tau\in [1\dd \floor{n/2}]$.

Define a set $\Sync\subseteq [0\dd n-2\tau]$ so that a position $i\in [0\dd n-2\tau]$ belongs to $\Sync$ if and only if $\per(T[i\dd i+2\tau))>\tfrac13\tau$ and at least one of the following conditions holds:
\smallskip
\begin{itemize}
    \item $i+\tau \in \B_{k(\tau)}$, where $k(\tau) = \max\{ j \in \mathbb Z_{\geq 0} \mid j = 0 \textnormal{ or } 16\lambda_{j - 1} \leq \tau \}$;
    \item there exists a $\tau$-run $T[b\dd e)\in \RUNS_\tau(T)$ such that $b = i+1$; or
    \item there exists a $\tau$-run $T[b\dd e)\in \RUNS_\tau(T)$ such that $e = i+2\tau-1$.
\end{itemize}
\smallskip
Then, $\Sync$ is a $\tau$-synchronizing set of size $|\Sync| < \frac{70n}{\tau}$.
\end{proposition}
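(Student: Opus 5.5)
The plan is to verify the three claims separately: consistency, density, and the size bound. Throughout, write $k=k(\tau)$. By maximality of $k$ we have $16\lambda_k>\tau$, and, if $k>0$, also $16\lambda_{k-1}\le\tau$, hence $\alpha_k\le 16\lambda_{k-1}\le\tau$; for $k=0$, $\alpha_0=1\le\tau$ trivially.

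\textbf{Consistency.} I would show that every condition defining $\Sync$ depends only on the window $W=T[i\dd i+2\tau)$.
\begin{itemize}
\item The condition $\per(W)>\frac13\tau$ is clearly a function of $W$.
\item For the condition $i+\tau\in\B_k$, \cref{prop:bk_properties}\eqref{prop:bk_properties:synchronize} says membership is determined by $T[i+\tau-\alpha_k\dd i+\tau+\alpha_k)$. Since $\alpha_k\le\tau$, this context lies inside $W$, and $i+\tau\in[\alpha_k\dd n-\alpha_k]$ holds automatically.
\item For a $\tau$-run starting at $b=i+1$, I would argue it is equivalent to the following. The fragment $T[i+1\dd i+1+\tau)$ (contained in $W$) has some period $p\le\frac13\tau$, and $T[i]\neq T[i+p]$. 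By \cref{fct:overlap}, such a run is $\run(T[i+1\dd i+1+\tau))$, so its period is $\per(T[i+1\dd i+1+\tau))$; hence the condition is again read off from $W$.
\item The condition involving the run end $e=i+2\tau-1$ is symmetric.
\end{itemize}
Two windows that match therefore receive the same decision.

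\textbf{Density, ``if'' direction.} Suppose $\per(T[i\dd i+3\tau-1))\le\frac13\tau$. Then every window $T[i'\dd i'+2\tau)$ with $i'\in[i\dd i+\tau)$ lies inside this fragment. Its period is therefore at most $\frac13\tau$, so it fails the first requirement and no such $i'$ is in $\Sync$.

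\textbf{Density, ``only if'' direction (main obstacle).} Assume $[i\dd i+\tau)\cap\Sync=\emptyset$. I would look at the positions $j\in[i+\tau\dd i+2\tau)$ that are candidates for $j\in\B_k\cup\{0,n\}$.

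By \cref{prop:bk_properties}\eqref{prop:bk_properties:phrase_len}, consider a phrase covering part of this interval. Either it has length at most $\frac74\lambda_k$, or its primitive root has length at most $\lambda_k$. The difficulty is that $\lambda_k$ is only guaranteed to exceed $\tau/16$ and may be comparable to $\tau$. So I would combine two facts:
\begin{itemize}
\item Boundaries occur every $O(\lambda_k)$ positions, except inside long phrases with short primitive root. Those phrases generate periodic fragments, which extend to $\tau$-runs.
\item Whenever a window $T[i'\dd i'+2\tau)$ is periodic with period at most $\frac13\tau$, it lies in a $\tau$-run. The first non-periodic window after that run (respectively, the last one before it) is exactly a position of the form $b-1$ (respectively, $e-2\tau+1$), and this position satisfies the second (respectively, third) bullet.
\end{itemize}
The case analysis then runs as follows. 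Either all windows starting in $[i\dd i+\tau)$ lie in a single $\tau$-run, which yields the periodicity of $T[i\dd i+3\tau-1)$. Or some run boundary falls inside $[i\dd i+\tau)$ and produces a synchronizing position. Or no $\tau$-run touches the block, and then a $\B_k$-boundary must fall into $[i+\tau\dd i+2\tau)$. The delicate quantitative point in this last case is the phrase-length bound relative to $\tau$; I expect to have to use $k$ maximal carefully, possibly via the neighbouring index $k-1$ and the bound $\alpha_{k+1}\le16\lambda_k$.

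\textbf{Size.} By \cref{prop:bk_properties}\eqref{prop:bk_properties:set_size} and $16\lambda_k>\tau$, the first condition contributes $|\B_k|\le 4n/\lambda_k<64n/\tau$ positions. For the two run conditions, two distinct $\tau$-runs have periods at most $\frac13\tau$, so by \cref{fct:overlap} they overlap on fewer than $\frac23\tau$ positions. Their starting positions (and likewise their ending positions) are thus more than $\frac13\tau$ apart, giving fewer than $3n/\tau$ runs. Each of the two run conditions therefore contributes fewer than $3n/\tau$ positions, for a total below $70n/\tau$.
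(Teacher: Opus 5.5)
Your consistency argument, the ``if'' direction of density, and the size bound are fine. (The paper does not prove this proposition itself; it imports it from \cite{DBLP:journals/siamcomp/KociumakaRRW24}, so I assess your attempt on its own merits.) The gap is the ``only if'' direction of density, which is the substance of the statement and which you leave as a plan.

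As stated, your three-way case split is not exhaustive. Your third case assumes that no $\tau$-run touches the block. But a $\tau$-run $T[b\dd e)$ with $b\le i<e<i+2\tau-1$ touches it without triggering your first two cases. It contains none of the windows $T[i'\dd i'+2\tau)$, $i'\in[i\dd i+\tau)$, and both of its candidate positions $b-1$ and $e-2\tau+1$ lie to the left of $i$. So the $\B_k$ argument must be made without that assumption.

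More importantly, you misdiagnose the ``delicate quantitative point''. Maximality of $k$ only gives the \emph{lower} bound $16\lambda_k>\tau$, which is what the size bound uses. Density needs an \emph{upper} bound, and that comes from $k$ belonging to the defining set. For $k>0$, $16\lambda_{k-1}\le\tau$ and $\lambda_k\le\frac87\lambda_{k-1}$ give $\lambda_k\le\tau/14$. So every phrase of $\B_k$ has length at most $\frac74\lambda_k\le\tau/8$ or period at most $\tau/14$; for $k=0$, $\B_0=[1\dd n)$. The bound $\alpha_{k+1}\le16\lambda_k$ plays no role here.

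With this bound, the direction closes in two steps.

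(1) Suppose the windows starting in $[i\dd i+\tau)$ are neither all periodic nor all aperiodic (with respect to period $\le\frac13\tau$). Pick adjacent $x,x+1$ in $[i\dd i+\tau)$ of different types. By \cref{lem:runs-vs-r} with $\ell=2\tau$ and $p=\floor{\tau/3}$, the aperiodic one is $b-1$ or $e-2\tau+1$ for some $T[b\dd e)\in\RUNS_{2\tau,\floor{\tau/3}}(T)\subseteq\RUNS_\tau(T)$, hence it lies in $\Sync$. If all windows are periodic, they lie in one run, so $\per(T[i\dd i+3\tau-1))\le\frac13\tau$.

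(2) Suppose all windows are aperiodic. If some $j\in[i+\tau\dd i+2\tau)$ is in $\B_k$, then $j-\tau\in\Sync$. Otherwise $k>0$, and the phrase $T[a\dd c)$ spanning $[i+\tau\dd i+2\tau)$ has $a\le i+\tau-1$ and $c\ge i+2\tau$. Its length therefore exceeds $\tau>\frac74\lambda_k$, so its period is at most $\tau/14$. It thus extends to a $\tau$-run $T[b'\dd e')$ with $b'\le i+\tau-1$ and $e'\ge i+2\tau$. If $b'\le i$, this run contains $T[i\dd i+2\tau)$, contradicting aperiodicity. Hence $b'-1\in[i\dd i+\tau)$, and $b'-1\in\Sync$ by the second bullet.

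A minor slip: in your second bullet the roles are swapped. $b-1$ is the last aperiodic window \emph{before} the run, and $e-2\tau+1$ is the first one \emph{after} it.
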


\begin{theorem}[{Compare \cite[Theorem 1.13]{DBLP:journals/siamcomp/KociumakaRRW24}}]\label{thm:ss-explicit}
    A string $T\in [0\dd \sigma)^n$ can be preprocessed in $\Oh(n/\log_\sigma n)$ time so that, given $\tau \in [1\dd \floor{n/2}]$, a $\tau$-synchronizing set $\Sync$ of $T$ of size $|\Sync| < \frac{70n}{\tau}$ can be constructed in $\Oh(\frac{n}{\tau})$ time.
    Moreover, $\per(T[i\dd i+2\tau))>\frac13\tau$ holds for every $i\in \Sync$.
\end{theorem}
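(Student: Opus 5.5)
The plan is to implement \cref{prp:sync} directly, using the earlier data structures. In the preprocessing phase, we run the preprocessing of \cref{lem:bk-explicit} (for the sets $\B_k$), of \cref{prp:runs-explicit} (for reporting runs), and of \cref{thm:run_ext} (for run extensions). Each takes $\Oh(n/\log_\sigma n)$ time. At query time, given $\tau$, we compute $k(\tau)$ in $\Oh(\log n)\subseteq \Oh(n/\tau)$ time; we may assume $\tau=\Oh(n)$. We then generate a candidate list of positions in $[0\dd n-2\tau]$, which will be a superset of $\Sync$ of size $\Oh(n/\tau)$. Finally, we filter this list by the periodicity condition.

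The candidates come from two sources.
\begin{itemize}
\item \textbf{Recompression boundaries.} We query \cref{lem:bk-explicit} with $k=k(\tau)$. This reports $\B_{k(\tau)}$ in sorted order in $\Oh(1+n/\lambda_{k(\tau)})$ time. By the maximality of $k(\tau)$, we have $16\lambda_{k(\tau)}>\tau$, so $\lambda_{k(\tau)}=\Omega(\tau)$ and the time is $\Oh(n/\tau)$. The candidate positions are $j-\tau$ for $j\in\B_{k(\tau)}$, restricted to $[0\dd n-2\tau]$.
\item \textbf{Run endpoints.} We query \cref{prp:runs-explicit} with $\ell=\tau$ and $p=\floor{\tau/3}$. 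Then $\ell\ge 2p$ and $\ell+1-2p\ge \tau/3$, so $\RUNS_\tau(T)$ is output in $\Oh(n/\tau)$ time. The runs come simultaneously sorted by start and by end. Each run $T[b\dd e)$ contributes the candidates $b-1$ and $e-2\tau+1$, whenever these lie in $[0\dd n-2\tau]$.
\end{itemize}
These give three sorted lists: the boundary candidates, the run-start candidates (sorted because the runs are sorted by start), and the run-end candidates (sorted by end). A three-way merge with duplicate removal yields one sorted list of $\Oh(n/\tau)$ distinct candidates.

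For the filtering step, we must test $\per(T[i\dd i+2\tau))>\frac13\tau$ for each candidate in $\Oh(1)$ time. We query \cref{thm:run_ext} on $x=T[i\dd i+2\tau)$. If the answer is $\bot$, then $\per(x)>\tau\ge\frac13\tau$ and we keep $i$. Otherwise, the query also returns $\per(x)$, which we compare with $\frac13\tau$. The surviving candidates are exactly the set $\Sync$ of \cref{prp:sync}. That proposition then gives the $\tau$-synchronizing property and $|\Sync|<\frac{70n}{\tau}$, and the final ``moreover'' claim holds by the filter itself.

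The main obstacle I expect is bookkeeping rather than anything conceptual. First, the list must come out sorted, which the merging handles. Second, the small-$\tau$ regime must be checked: here the bound $\Oh(1+n/\lambda_k)$ and the $\Oh(\log n)$ computation of $k(\tau)$ must still fit within $\Oh(n/\tau)$, which holds since $\tau\le n/2$. Third, the text used by the substructures must be the original $T$ rather than its padded version, so that the runs and periods are those of $T$.
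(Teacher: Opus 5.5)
\paragraph{The gap: computing $k(\tau)$.} Your time analysis for this step is wrong. You compute $k(\tau)$ in $\Oh(\lg n)$ time and claim $\Oh(\lg n)\subseteq\Oh(n/\tau)$ ``since $\tau\le n/2$''. That inclusion is false. For $\tau=\Theta(n)$ the required query time is $\Oh(n/\tau)=\Oh(1)$, while $\lg n$ is unbounded.

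You also place the danger in the small-$\tau$ regime, where it is harmless. The problem is large $\tau$, i.e., $n/\tau=o(\lg n)$.

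The paper avoids this as follows. During preprocessing it tabulates the intervals $I[k]=\{\tau : k(\tau)=k\}$, i.e., the thresholds $\ceil{16\lambda_k}$, for $k\in[0\dd k(n)]$. At query time it scans them from $k(n)$ downwards. Because $\lambda_k$ grows geometrically, this costs $1+k(n)-k(\tau)=\Oh(1+\lg\frac n\tau)=\Oh(\frac n\tau)$. Alternatively, $\floor{\lg\tau}$ narrows $k(\tau)$ down to $\Oh(1)$ candidates.

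The other bounds you use hold in every regime: $\Oh(1+n/\lambda_{k(\tau)})$ from \cref{lem:bk-explicit} and $\Oh(n/(\ell+1-2p))$ from \cref{prp:runs-explicit}.

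\paragraph{Comparison with the paper.} With that fix, your argument is correct and differs from the paper only in the final step. The paper obtains $\B_{k(\tau)}$ and the doubly sorted $\RUNS_\tau(T)$, then simply invokes the $\Oh(n/\tau)$-time derivation from the proof of \cite[Theorem 1.13]{DBLP:journals/siamcomp/KociumakaRRW24}.

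You instead carry out that derivation explicitly:
\begin{itemize}
\item Merging your three sorted candidate lists yields exactly the positions satisfying one of the three bullets of \cref{prp:sync}.
\item You then check the periodicity condition per candidate with an $\Oh(1)$-time query of \cref{thm:run_ext}. Its preprocessing is already part of \cref{prp:runs-explicit}, so this costs nothing extra.
\item The $\bot$ case is handled correctly, since there $\per(T[i\dd i+2\tau))>\tau$.
\end{itemize}
This makes the proof self-contained rather than relying on the cited construction.
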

\begin{proof}
We build $\Sync$ based on the construction specified in \cref{prp:sync}.
At preprocessing time, we build an array of intervals $I[0\dd k(n)]$ with $I[k] = \{\tau : k(\tau)=k\}$.
Note that $I[0]=[0\dd \ceil{16\lambda_0})$ and $I[k]=[\ceil{16\lambda_{k-1}}\dd \ceil{16\lambda_k})$ for $k\in [1\dd k(n)]$.
Since $k(n)=\Oh(\lg n)$ and the values $\lambda_j$ can be computed in $\Oh(1)$ time each (using constant-time arithmetic operations on $\Oh(\lg n)$-bit integers), this preprocessing takes $\Oh(\lg n)$ time.
We also apply the $\Oh(n/\log_\sigma n)$-time preprocessing of \cref{lem:bk-explicit,prp:runs-explicit}.

Given $\tau$, we first compute $k(\tau)$ by scanning $I[0\dd k(n)]$ from the top down to find the largest $k$ with $\tau \in I[k]$.
This takes $1+k(n)-k(\tau) = 1 + \Oh(\lg\frac{n}{\tau})= \Oh(\frac{n}{\tau})$ time.
Next, we apply \cref{lem:bk-explicit} to generate $\B_{k(\tau)}$ in $\Oh(1+n/\lambda_{k(\tau)})=\Oh(n/\tau)$ time, where the definition of $k(\tau)$ implies $16\lambda_{k(\tau)}>\tau$.
Then, we use \cref{prp:runs-explicit} to compute $\RUNS_{\tau}(T)$, with runs simultaneously ordered by their start and end positions; this also takes $\Oh(n/\tau)$ time.
Finally, we proceed exactly as in the proof of \cite[Theorem 1.13]{DBLP:journals/siamcomp/KociumakaRRW24}, where it is shown how to derive $\Sync$ from $\B_{k(\tau)}$ and $\RUNS_{\tau}(T)$ in~$\Oh(n/\tau)$~time.
\end{proof}

\begin{theorem}\label{thm:ss-bitmask}
    Given a string $T\in [0\dd \sigma)^n$ and a parameter $\tau \in [1\dd \floor{n/2}]$, a $\tau$-synchronizing set $\Sync$ of $T$ of size $|\Sync| < \frac{70n}{\tau}$, represented as a bitmask, can be constructed in $\Oh(n/\log_\sigma n)$ time.
    Moreover, $\per(T[i\dd i+2\tau))>\frac13\tau$ holds for every $i\in \Sync$.
\end{theorem}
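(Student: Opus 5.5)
The plan is to build the set $\Sync$ of \cref{prp:sync} directly as a bitmask, by combining a constant number of bitmasks with word-parallel operations. Its three ingredients are: the filter $\per(T[i\dd i+2\tau))>\frac13\tau$, the shifted set $\B_{k(\tau)}$, and the start/end positions of $\tau$-runs. We split into two regimes, \emph{small} $\tau$ (say $2\tau \le (\log_\sigma n)/8$) and \emph{large} $\tau$.

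For large $\tau$, \cref{thm:ss-explicit} already suffices. It produces the sorted list $\Sync$ after $\Oh(n/\log_\sigma n)$ preprocessing and in $\Oh(n/\tau)=\Oh(n/\log_\sigma n)$ query time. Since $\tau=\Omega(\log_\sigma n)$, we can then write the list into an all-zero bitmask of length $n-2\tau+1$ in $\Oh(n/\lg n+|\Sync|)$ time. The property $\per(T[i\dd i+2\tau))>\frac13\tau$ is inherited from \cref{thm:ss-explicit}.

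For small $\tau$, we use three components.
\begin{enumerate}
\item Compute $k(\tau)$ and apply \cref{lem:bk-bitmask} to obtain $\B_{k(\tau)}$ as a bitmask in $\Oh(n/\log_\sigma n)$ time. Shifting it left by $\tau$ bits, which is word-parallel, gives the bitmask of $\{i : i+\tau\in\B_{k(\tau)}\}$.
\item Use \cref{prp:runs-bitmask} with $\ell=2\tau$ and $p=\floor{\tau/3}$ to get $\R_{2\tau,\floor{\tau/3}}(T)$; complementing it gives the filter of positions with $\per(T[i\dd i+2\tau))>\frac13\tau$. The parameters are admissible because $2\tau\le(\log_\sigma n)/8$ and $2\tau\ge 2\floor{\tau/3}$.
\item Use \cref{prp:runs-bitmask} with $\ell=\tau$ and $p=\floor{\tau/3}$ to get $R:=\R_{\tau,\floor{\tau/3}}(T)$. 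By \cref{lem:runs-vs-r}, the $\tau$-runs $T[b\dd e)$ correspond exactly to the maximal blocks $[b\dd e-\tau]$ of ones in $R$.
\end{enumerate}
From $R$, the start positions $b$ form the bitmask $R\wedge\neg(R\ll 1)$ (a one with no one immediately before it), and the last positions $e-\tau$ form $R\wedge\neg(R\gg 1)$ (a one with no one immediately after it). We then shift these so that bit $i$ is set iff $b=i+1$, respectively iff $e-\tau = i+\tau-1$. The result is the OR of the first and these two shifted bitmasks, ANDed with the filter and truncated to $[0\dd n-2\tau]$. All these operations act on $\Oh(n/\lg n)$ words, with care taken at the boundaries of the bitmasks.

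The main obstacle is the boundary cases. When $\tau$ is tiny, $\RUNS_{\tau}$ with $p=\floor{\tau/3}=0$ is empty, so $R$ must be empty; if $p=0$ we simply take the empty bitmask. We must also check that shifts crossing the ends of $[0\dd n)$ introduce no spurious bits. The other point to verify is that the threshold between the two regimes is consistent: any $\tau>(\log_\sigma n)/16$ gives $n/\tau=\Oh(n/\log_\sigma n)$. Correctness and the size bound $|\Sync|<\frac{70n}{\tau}$ then follow directly from \cref{prp:sync}, since we implement exactly its definition.
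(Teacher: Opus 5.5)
Your proposal is correct and follows essentially the same route as the paper. It uses the same threshold near $\tau\approx(\log_\sigma n)/16$, with \cref{thm:ss-explicit} plus a list-to-bitmask conversion for large $\tau$; for small $\tau$ it combines, via shifts and Boolean operations, the bitmask of $\B_{k(\tau)}$ from \cref{lem:bk-bitmask} with $\mathsf{R}_{2\tau,\floor{\tau/3}}(T)$ and $\mathsf{R}_{\tau,\floor{\tau/3}}(T)$ from \cref{prp:runs-bitmask}, and your ``left/right ends of maximal blocks of ones'' is exactly the paper's use of \cref{lem:runs-vs-r} (bit $i$ set iff $i\notin\mathsf{R}$ and $i+1\in\mathsf{R}$, resp.\ $i+\tau-1\in\mathsf{R}$ and $i+\tau\notin\mathsf{R}$).
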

\begin{proof}
If $\tau \ge (\log_\sigma n)/16$, we apply \cref{thm:ss-explicit} with preprocessing followed by a single query.
In $\Oh(n/\log_\sigma n + n/\tau)=\Oh(n/\log_\sigma n)$ time, this yields the explicit representation of the set $\Sync$ satisfying the desired conditions.
We convert this set into a bitmask by starting with an all-0 bitmask and setting the $i$-th bit for every $i\in \Sync$. 
This conversion also takes $\Oh(n/\log_\sigma n + n/\tau)=\Oh(n/\log_\sigma n)$ time.

In the complementary case of $\tau < (\log_\sigma n)/16$, we first compute $k(\tau)$ in $\Oh(\lg n)$ time by naively iterating over all possibilities.
Then, we apply \cref{lem:bk-bitmask} to compute a bitmask representing $\B_{k(\tau)}$
and \cref{prp:runs-bitmask} to compute bitmasks representing $\R_{2\tau,\floor{\tau/3}}(T)$ and $\R_{\tau,\floor{\tau/3}}(T)$; both subroutines take $\Oh(n/\log_\sigma n)$ time.
According to \cref{prp:sync}, a synchronizing set satisfying the desired conditions can be obtained by including a position $i$ if and only if $\per(T[i\dd i+2\tau))> \tau/3$ (which is equivalent to $i\notin \R_{2\tau,\floor{\tau/3}}(T)$)
and at least one of the following conditions hold:
\smallskip
\begin{itemize}
    \item $i+\tau \in \B_{k(\tau)}$,
    \item there exists a $\tau$-run $T[b\dd e)\in \RUNS_\tau(T)$ such that $b = i+1$ (which, by \cref{lem:runs-vs-r}, is equivalent to $i\notin \R_{\tau,\floor{\tau/3}}(T)$ and $i+1\in \R_{\tau,\floor{\tau/3}}(T)$),
    \item there exists a $\tau$-run $T[b\dd e)\in \RUNS_\tau(T)$ such that $e = i+2\tau-1$ (which, by \cref{lem:runs-vs-r}, is equivalent to $i+\tau\notin \R_{\tau,\floor{\tau/3}}(T)$ and $i+\tau-1\in \R_{\tau,\floor{\tau/3}}(T)$).
\end{itemize}
\smallskip
Consequently, the bitmask representing $\Sync$ can be computed using $\Oh(1)$ bitwise operations (AND, OR, NOT) applied to the following components: $\R_{2\tau,\lfloor \tau/3 \rfloor}$; the mask $\B_{k(\tau)}$ shifted by $\tau$ positions and padded with $0$-bits; and four copies of $\R_{\tau,\lfloor \tau/3 \rfloor}$, shifted and padded as follows—by $0$ positions with $1$-bits, by $1$ position with $0$-bits, by $\tau-1$ positions with $0$-bits, and by $\tau$ positions with $1$-bits.
All these bit-wise operations, including shifts and padding, can be implemented in $\Oh(n/\lg n)$ time, giving a total time of $\Oh(n/\log_\sigma n)$.
\end{proof}

\section{Improving the Query Time: Overview}\label{sec:faster}

In the remainder of the paper, we further improve the query time for synchronizing set construction from $\Oh(\frac n\tau)$ to $\Oh(\frac {n \lg \tau}{\tau \lg n})$, while retaining the $\Oh(n / \log_\sigma n)$ preprocessing time. 
We only outline the main algorithmic ideas here, with proofs of all claims in Appendices~\ref{appendix:parsing}--\ref{sec:appendixsynch}.

The synchronizing set will be output in a representation of size $\Oh(\frac{n \lg \tau}\tau)$ bits. 
We show that this is optimal in the lemma below, which
models the query algorithm as a function mapping an input string $T$ into an encoding $\mathsf{E}(T)$ of a $\tau$-synchronizing set $\Sync$ of $T$ such that $\Sync$ can be recovered from $\mathsf{E}(T)$ alone using an accompanying decoding function $\mathsf{D}$.

\begin{lemma}\label{lem:opt}
    Consider integers $n,\tau \in \Zp$ such that $n \ge 3\tau$, as well as a pair of functions $\mathsf{E}: \{0,1\}^n \to \{0,1\}^*$ and $\mathsf{D} : \{0,1\}^* \to 2^{[0\dd n)}$ jointly satisfying the following property:
    \begin{center}
        For every $T\in \{0,1\}^n$, the set $\mathsf{D}(\mathsf{E}(T))$ is a $\tau$-synchronizing set of $T$.
    \end{center}
    Then, there exists $T\in \{0,1\}^n$ such that $|\mathsf{E}(T)| =\Omega(\frac{n}\tau \lg \tau)$.
\end{lemma}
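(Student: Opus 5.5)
We argue by counting. Since $\mathsf{D}$ is a function, the number of distinct sets $\mathsf{D}(\mathsf{E}(T))$ over all $T\in\{0,1\}^n$ is at most the number of distinct encodings. If every encoding has length less than $L$, there are fewer than $2^{L}$ of them. It therefore suffices to exhibit a family of $2^{\Omega(\frac n\tau\lg\tau)}$ texts whose $\tau$-synchronizing sets are pairwise distinct, whichever synchronizing sets are chosen. For $\tau$ below a suitable constant the bound is $\Omega(n/\tau)$ with $\lg \tau = \Oh(1)$. Counting $2^{\Omega(n/\tau)}$ distinct outputs suffices there, and the construction below, run with blocks of constant $\lg\tau$-content, already gives it. So assume $\tau$ is large.

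\textbf{Construction.} Split $[0\dd n)$ into $m=\floor{n/(c\tau)}=\Omega(n/\tau)$ blocks of length $c\tau$ for a suitable constant $c$; pad the tail with zeros. Each block $j$ is the string $0^{x_j}\,1\,0^{c\tau-1-x_j}$, where the offset $x_j$ ranges over a window of $\Theta(\tau)$ values in the middle of the block. This yields $\Theta(\tau)^{m}=2^{\Omega(\frac n\tau\lg\tau)}$ texts. The intended invariant is that the offsets can be decoded from any valid $\tau$-synchronizing set $\Sync$ of $T$.

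\textbf{Decoding the offsets.} Take $c$ large, say $c\ge 10$. Then the long zero stretches between consecutive ones are runs of period $1\le\frac13\tau$ and length far more than $3\tau$. By density, a window $[i\dd i+\tau)$ misses $\Sync$ exactly when $T[i\dd i+3\tau-1)$ has period at most $\tau/3$. Inside the zero stretch around the single one at position $y_j=jc\tau+x_j$, a window has period at most $\tau/3$ precisely when the fragment $T[i\dd i+3\tau-1)$ avoids $y_j$; otherwise it contains a lone one among zeros, so its period is at least its length minus $1$, hence larger than $\tau/3$. Consequently the minimal and maximal elements of $\Sync$ near the $j$-th block are pinned down exactly. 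Density forces an element of $\Sync$ in $[i\dd i+\tau)$ for $i=y_j-3\tau+2$, yet none in $[i\dd i+\tau)$ for $i=y_j-3\tau+1$. Hence $y_j-2\tau+1\in\Sync$, and this is the smallest element of $\Sync$ in the region between the previous one and $y_j$. Thus $y_j$ is recovered as $2\tau-1$ plus the first element of $\Sync$ after the previous zero region. The windows are well defined because $c\tau$ exceeds $6\tau$ and $x_j$ sits in the block middle.

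\textbf{Main obstacle.} The delicate step is making the density characterization precise at the boundary. We must confirm that the first window touching the one is non-periodic, which holds because a string $0^a10^b$ with $a+b+1=3\tau-1$ has smallest period at least $\min$-stretch considerations giving a period greater than $\tau/3$. We must also ensure the blocks are far enough apart that windows never contain two ones, and handle the ends of $T$. With $\Theta(\tau)$ admissible offsets per block and all $y_j$ recoverable from $\Sync$, distinct texts force distinct sets. The counting then gives $\max_T|\mathsf{E}(T)|\ge \lg(\Theta(\tau)^m)-1=\Omega(\frac n\tau\lg\tau)$.
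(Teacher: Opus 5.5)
Your proof has the same skeleton as the paper's: you count encodings against $\Theta(\tau)^m$ texts, each block carries one isolated $1$ at position $y$, and $y$ is read off as $2\tau-1$ plus the first element of $\Sync$ after the previous $1$. The spacing and the justification differ, and the difference matters.

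The paper uses blocks $0^{2\tau+S[i]-1}\,1\,0^{\tau-S[i]}$ of length exactly $3\tau$. It justifies the decoding via \cref{prp:sync}, which describes one particular synchronizing set. For an arbitrary $\Sync$ only the density condition is available. It forces $\Sync\cap(y'\dd y-2\tau]=\emptyset$ and $y-2\tau+1\in\Sync$ (with $y'$ the previous $1$) only if some window start lies in $[y'+1\dd y-3\tau+1]$. That is, the $1$ must be preceded by at least $3\tau-1$ zeros.

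With $3\tau$-blocks this fails for the first block and whenever $S[i]<S[i-1]$. For instance, at $n=3\tau$ both density windows contain the $1$, so $[0\dd \tau]$ is a valid $\tau$-synchronizing set for every choice of $S[0]$, and injectivity breaks. Your choice $c\ge 10$, with offsets confined to a middle window, puts far more than $3\tau-1$ zeros before every $1$, including the first one. So your decoding follows from the definition for every valid $\Sync$; your version is the one that actually proves the lemma in its intended range.

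A few details need correcting:
\begin{itemize}
\item The smallest period of $0^a10^b$ is $\max(a,b)+1$, about half the length, not ``length minus $1$''. It still exceeds $\frac13\tau$, so nothing breaks.
\item Your small-$\tau$ remark is false for $\tau=2$. No string has period at most $\frac23$, so $[0\dd n-4]$ is a $2$-synchronizing set of every text. A constant-length encoding then suffices, contradicting the claimed $\Omega(n)$.
\item $m=\floor{n/(c\tau)}$ vanishes for $3\tau\le n<c\tau$. Moreover, at $n=3\tau$ two bits (recording which of the windows at $i=0,1$ is periodic) already determine a valid synchronizing set, contradicting $\Omega(\lg\tau)$.
\end{itemize}
The last two points are defects of the statement rather than of your method. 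You should still state the assumptions $\tau\ge 3$ and $n\ge c\tau$ explicitly instead of claiming full generality.
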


\begin{proof}
    Let $k = \floor{\frac{n}{3\tau}}$. For an arbitrary string $S \in [0\dd \tau)^k$, we construct $T \in \{0, 1\}^n$ defined by $\forall_{i \in [0\dd k)}\; T[3\tau i\dd 3\tau i + 3\tau) := \texttt{0}^{2\tau + S[i] - 1}\cdot\texttt1\cdot\texttt{0}^{\tau - S[i]}$, with an arbitrary suffix $T[3\tau k\dd n)$. For any $\tau$-synchronizing set, the first synchronizing position in $[3\tau i\dd 3\tau i + 3\tau)$ is $3\tau i + S[i]$. This holds due to \cref{prp:sync} and the fact that $T[3\tau i\dd 3\tau i + 2\tau + S[i] - 1)$ is a suffix of a $\tau$-run. Hence, given any $\tau$-synchronizing set of $T$, we can restore~$S$. Consequently, over all the $S \in [0\dd \tau)^k$, the average size of the encoded set is at least $k \lg \tau = \Omega(\frac n\tau \lg \tau)$ bits.
\end{proof}

\subparagraph{Sparse encodings.} Our representation of the constructed $\tau$-synchronizing set is its bitmask in a special \emph{sparse encoding}. During the construction, we also encode intermediate integer arrays, and hence we give a more general definition. The encoding is based on Elias-$\gamma$ codes:

\begin{restatable}[Elias-$\gamma$ code~\cite{Eli75}]{definition}{restatedefgamma}\label{def:elias}
The \emph{Elias-$\gamma$ code} of a positive integer $x\in \mathbb Z_+$ is a bitmask $\gamma(x):= \mathtt{0}^{\ell} \cdot X[0\dd \ell]$, where $\ell = \floor{\lg x}$ and $X[0\dd \ell]$ is the $(\ell + 1)$-bit binary representation of~$x$ (with most significant bit $X[0]=\mathtt{1}$).
\end{restatable}

\begin{restatable}[Sparse encoding]{definition}{restatedefsparse}\label{def:sparse}
    The \emph{sparse encoding} of a string $A \in \Zz^n$ is a bitmask $\senc{A}$ that encodes $A$ from left to right as follows. Each symbol $u \in \Zp$ is stored as a \emph{literal token} $\mathtt{1} \cdot \gamma(u)$.
    Each inclusion-wise maximal fragment of the form $0^x$ for $x\in \mathbb Z_+$ is stored as a \emph{zero-run token} $\mathtt{0} \cdot \gamma(x)$.
        \end{restatable}

\begin{restatable}{example}{restateexamplesparse}%
\edef\encexample{0/3,1/3,0/2,1/5,0/7,1/9,1/1,0/2}%
Consider string $A = \texttt{%
\foreach \indicatorbit/\x in \encexample {%
    \ifnum\indicatorbit>0 \x\else
    \foreach \ctr in {1,...,\x} {0}\fi
}%
}%
=%
\foreach[count=\i from 0] \indicatorbit/\x in \encexample {%
    \ifnum\i>0 \cdot\fi
    \ifnum\indicatorbit>0 \texttt{\x}\else
    \texttt0^{\x}\fi
}$ and its sparse encoding shown below. The first bit of each token indicates whether it is a literal~$x$ or a zero-run $\texttt0^x$.
The rest of the token consists of $\floor{\lg x}$ zeros (to the left of each dotted line) and the binary representation of $x$ (to the right of each dotted line).

\medskip\centering

\noindent\begin{tikzpicture}[x=1em, y=1em]
    \tikzset{bitnode/.style={inner sep=0pt, minimum width=.5em, minimum height=1.25em, right=0 of tmp, draw=white}}
    \tikzset{indicatorbitnode/.style={bitnode, minimum width=.8em, draw=gray}}
    \tikzset{markernode/.style={inner sep=0pt, minimum width=.5em, minimum height=.5em}}
    \tikzset{tokenspacer/.style={bitnode, minimum width=0pt, inner sep=0pt, right=.4em of tmp}}
    \tikzset{indicatorspacer/.style={tokenspacer, right=.1em of tmp}}
    \tikzset{splitspacer/.style={tokenspacer, right=.3em of tmp}}

    \node (tmp) {};
    \node[left=-1em of tmp] {$\senc{A}\ =\ $};
    
    \foreach[
        evaluate=\x as \floorlogx using {int(floor(ln(\x)/ln(2)))}, 
        evaluate=\floorlogx as \bitlen using {int(2*\floorlogx + 1)},
    ] \indicatorbit/\x in \encexample {
        \node[tokenspacer] (tmp) {};
        \node[indicatorbitnode] (tmp) {\clap{\texttt\indicatorbit}};
        \node[markernode] (l) at (tmp.south west) {};
        
        \ifnum\floorlogx>0 
        \node[indicatorspacer] (tmp) {};
        \foreach \ctr in {1,...,\floorlogx} {
            \node[bitnode] (tmp) {\clap{\texttt0}};
        }
        \fi

        \node (tmp2) at (tmp.east) {};
        \node[splitspacer] (tmp) {};
        \path (tmp2.center) to node[midway, minimum height=1.5em] (tmp2) {} (tmp.center);
        \draw[densely dotted, thick, gray] (tmp2.north) to (tmp2.south);  
        
        \node[bitnode] (tmp) {\clap{\texttt1}};

        \ifnum\floorlogx>0
        \foreach[evaluate=\ctr as \nextbit using {int(mod(floor(\x / pow(2, \ctr - 1)), 2))}] \ctr in {\floorlogx,...,1} {
            \node[bitnode] (tmp) {\clap{\texttt\nextbit}};
        }
        \fi
        
        \node[markernode] (r) at (tmp.south east) {};
        \draw[thick, decorate, decoration={brace, amplitude=.25em}] (r.south west) -- node[midway, below=.25em] {%
            \ifnum\indicatorbit>0 \texttt{\x}\else
            $\texttt0^{\x}$\fi$\strut$
        } (l.south east);
    }
\end{tikzpicture}
\vspace{-.2cm}
\end{restatable}

We note that a zero-run $\texttt0^x$ or a literal $x > 0$ contributes $2\floor{\lg x} + 2$ bits to the encoding. 
By applying Jensen's inequality to the concave log-function, we get the following bounds.

\begin{restatable}{observation}{restateobssparsesize}\label{obs:sparse_size}
    For $A\in\Zz^n$, let $i_1 < i_2 < \dots < i_a$ be the elements of $\{ i \in [0\dd n) \mid A[i] \neq 0 \}$, and let $i_0 = -1$ and $i_{a + 1} = n$.
    Regarding $\absolute{\senc{A}}$, we observe:
    
    \begin{itemize}
        \item The number of bits contributed by zero-run tokens is at most
        \[\textstyle 2a + 2 + \sum_{j = 0}^{a}2\floor{\lg(i_{j+1} - i_j)} \in \Oh((a + 1) \cdot \lg \frac{n + 1}{a + 1})
        \subseteq \Oh(n).\]
        \item The number of bits contributed by literal tokens is exactly
        \[\textstyle 2a + \sum_{j = 1}^{a}2\floor{\lg A[i_j]} \in \Oh(\sum_{j=0}^{n - 1} \lg(1 + A[j]))\cap \Oh(a \cdot \lg (\frac{1}{a}\sum_{j=1}^{a} A[i_j])).\]
        \item If $\sum_{i = 0}^{n - 1}A[i] \in \Oh(n)$, then $\absolute{\senc{A}} \in \Oh((a + 1) \cdot \lg \frac{n + 1}{a + 1})$.
    \end{itemize}
    
\end{restatable}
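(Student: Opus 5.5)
We prove the three bullets of \cref{obs:sparse_size} in order, using two facts. A literal token for $u$, or a zero-run token for $\texttt0^x$, has exactly $2\floor{\lg u}+2$ (resp.\ $2\floor{\lg x}+2$) bits: one indicator bit plus the $2\floor{\lg x}+1$ bits of $\gamma(x)$. Moreover, $\lg$ is concave, so Jensen's inequality gives $\sum_{j=1}^{m} \lg x_j \le m \lg(\frac1m\sum_j x_j)$ for positive reals $x_j$.

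\textbf{Zero-run tokens.} Each maximal zero run lies strictly between two consecutive elements of $i_0<i_1<\dots<i_{a+1}$, so there are at most $a+1$ of them. The run between $i_j$ and $i_{j+1}$ has length $x=i_{j+1}-i_j-1\le i_{j+1}-i_j$, so it costs at most $2+2\floor{\lg(i_{j+1}-i_j)}$ bits. Summing over the at most $a+1$ gaps gives the stated expression $2a+2+\sum_{j=0}^{a}2\floor{\lg(i_{j+1}-i_j)}$. The gaps sum to $i_{a+1}-i_0=n+1$, so Jensen's inequality bounds $\sum_j\lg(i_{j+1}-i_j)$ by $(a+1)\lg\frac{n+1}{a+1}$. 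The additive $2a+2$ term is absorbed as well, provided we read the $\Oh$ with the convention $\lg y=\log_2\max(1,y)$, or equivalently as $\Oh((a+1)(1+\lg\frac{n+1}{a+1}))$. The inclusion in $\Oh(n)$ follows because $y\lg(c/y)\le c$-type bounds hold: $(a+1)\lg\frac{n+1}{a+1}\le n+1$ since $\lg z\le z$, and $a+1\le n+1$.

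\textbf{Literal tokens.} There is exactly one literal token per nonzero entry, costing $2\floor{\lg A[i_j]}+2$ bits, so the exact formula is immediate. For the first bound, $2+2\floor{\lg A[i_j]}\le 4\lg(1+A[i_j])$ whenever $A[i_j]\ge1$, and zero entries contribute nothing. For the second bound, apply Jensen to the $a$ values $A[i_j]$, again reading $\lg$ with the $\max(1,\cdot)$ convention so that the additive $2a$ is absorbed.

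\textbf{Third bullet.} Suppose $\sum_i A[i]\le cn$. By the second bullet, literals cost $\Oh(a+a\lg\frac{cn}{a})$. Since $a\le a+1$ and $\lg\frac{n}{a}\le 1+\lg\frac{n+1}{a+1}$, this is $\Oh((a+1)(1+\lg\frac{n+1}{a+1}))$. Adding the bound for zero-run tokens from the first bullet gives the claim.

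The main obstacle is this absorption of additive constants: when $a\approx n$, the expression $\lg\frac{n+1}{a+1}$ is near $0$, while the bound must still cover $\Theta(a)$ bits. I will handle this explicitly by the $1+\lg$ convention (equivalently the paper's $\lg$ as $\log_2\max(1,\cdot)$, understood with a $+1$ inside the $\Oh$). Everything else is routine arithmetic.
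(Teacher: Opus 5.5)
Your argument is the paper's own one-line justification. Each literal $u$ or zero-run $\texttt0^x$ costs $2\floor{\lg u}+2$ (resp.\ $2\floor{\lg x}+2$) bits, the at most $a+1$ zero-runs lie in the gaps $(i_j\dd i_{j+1})$, and Jensen's inequality for the concave logarithm does the rest. Your individual estimates are correct, e.g.\ $2+2\floor{\lg u}\le 4\lg(1+u)$ and $\lg\frac{n}{a}\le 1+\lg\frac{n+1}{a+1}$.

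One claim needs correcting. The paper's convention $\lg y=\log_2\max(1,y)$ still gives $\lg 1=0$. So it does \emph{not} absorb the additive $2a+2$ and $2a$ terms, and it is not equivalent to your reading $\Oh((a+1)(1+\lg\frac{n+1}{a+1}))$. Take $A=\texttt{1}^n$: the encoding has $2n$ bits and the displayed zero-run expression equals $2n+2$, while $(a+1)\lg\frac{n+1}{a+1}=a\lg(\frac1a\sum_j A[i_j])=0$. Hence all three bullets fail as literally stated, and no proof of the verbatim statement exists. What you actually prove is the corrected version with an extra $+1$ inside the logarithmic factors. That is the right fix, but present it as an amendment of the statement, not as an interpretation of the paper's notation.

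The $+1$ is genuinely needed only for the literal tokens and for the $2a+2$ slack in the displayed expression. The actual zero-run bits satisfy the uncorrected bound. When $a+1\le\frac{n+1}{2}$, the factor $\lg\frac{n+1}{a+1}\ge 1$ absorbs the additive term. When $a+1>\frac{n+1}{2}$, the $m=n-a$ zeros cost $\Oh(m)$ bits, and $(a+1)\lg\frac{n+1}{a+1}=(a+1)\log_2(1+\frac{m}{a+1})\ge m$.
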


\smallskip

If we encode the bitmask of a synchronizing set from \cref{thm:ss-bitmask}, then the encoding consists of
$\Oh(\absolute{\Sync} \cdot \lg \frac{n}{\absolute{\Sync}}) = \Oh(\frac n\tau \lg \tau)$ bits. It is easy to see that the size of the encoding is minimized when the string $A$ is all-zero, in which case it consists of $2\floor{\lg n} + 2$ bits.

\begin{restatable}{observation}{restateobssparsesizelower}\label{obs:sparse_size_lower}
    For every $n \in \mathbb Z_{+}$ and $A\in \mathbb Z_{\geq_0}^n$, it holds $2\floor{\lg n} < \absoluteenc{A}$.
    \end{restatable}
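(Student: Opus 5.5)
The plan is to partition the positions $[0\dd n)$ of $A$ into maximal blocks, one per token of $\senc{A}$, and then lower-bound the total token length by comparing it with the total block length $n$. The tokens are of two kinds: a literal token for each nonzero entry, and a zero-run token for each maximal run $\texttt0^x$. Both kinds of token for a value or run length $x$ take $2\floor{\lg x}+2$ bits. Let the blocks have lengths $x_1,\dots,x_t$. A literal contributes a block of length $1$, while a zero-run of length $x$ contributes a block of length $x$. This gives $\sum_j x_j = n$ with each $x_j\ge 1$ and $t\ge 1$, because $n\ge 1$.

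For a literal token the cost is $2\floor{\lg u}+2\ge 2 = 2\floor{\lg 1}+2$, so every token costs at least $2\floor{\lg x_j}+2$, where $x_j$ is the length of its block. It therefore suffices to show that $\sum_{j=1}^t (2\floor{\lg x_j}+2) > 2\floor{\lg n}$ whenever $\sum_j x_j=n$ and $x_j\ge1$. The key inequality is subadditivity of $\floor{\lg\cdot}+1$:
\[
\floor{\lg(a+b)} + 1 \le \floor{\lg a}+1+\floor{\lg b}+1 .
\]
To prove it, write $a<2^{\floor{\lg a}+1}$ and $b<2^{\floor{\lg b}+1}$. Then $a+b<2^{\floor{\lg a}+1}+2^{\floor{\lg b}+1}\le 2^{\floor{\lg a}+\floor{\lg b}+2}$, which gives $\floor{\lg(a+b)}\le \floor{\lg a}+\floor{\lg b}+1$. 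By induction on $t$, this yields $\sum_j(\floor{\lg x_j}+1)\ge \floor{\lg n}+1$.

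Multiplying by two, $\absoluteenc{A}\ge 2\floor{\lg n}+2>2\floor{\lg n}$, which proves the claim. It also matches the remark that the minimum is attained by the all-zero string, whose encoding is a single zero-run token $\texttt0\cdot\gamma(n)$ of exactly $2\floor{\lg n}+2$ bits. No step looks like a real obstacle. The only point needing care is the subadditivity inequality, and in particular the bound $2^{p}+2^{q}\le 2^{p+q+1}$ for $p,q\ge0$, which holds since $2^p+2^q\le 2\cdot 2^{\max(p,q)}\le 2^{p+q+1}$.
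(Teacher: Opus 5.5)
Your proof is correct and is essentially the paper's argument: the paper simply asserts that $|\mathsf{sp\_enc}(A)|$ is minimized by the all-zero string, whose single zero-run token has exactly $2\lfloor \lg n\rfloor + 2$ bits. Your per-token lower bound, combined with the subadditivity $\lfloor\lg(a+b)\rfloor+1\le(\lfloor\lg a\rfloor+1)+(\lfloor\lg b\rfloor+1)$, is precisely the justification the paper leaves as ``easy to see''.
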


If a string $A$ is a prefix of another string $A'$, then the sparse encoding of $A$ is also a prefix of the sparse encoding of $A'$, unless $A$ ends with a zero-run that can be extended further in $A'$. In the latter case, the zero-run is encoded differently in $A$ and $A'$. Since the set of Elias-$\gamma$ codes (and hence the set of possible tokens) is prefix-free, this implies that the encoding of $A$ cannot be a prefix of the encoding of $A'$. 

\begin{restatable}{observation}{restateEncPrefOfOther}\label{obs:encoding_prefix_of_other}
The following holds for every two distinct non-empty strings $A,A'\in \Zz^+$.
The encoding $\senc{A}$ is a prefix of the encoding $\senc{A'}$ if and only if $A$ is a prefix of $A'$ and $A'[\absolute{A}-1\dd \absolute{A}]\ne 00$.
\end{restatable}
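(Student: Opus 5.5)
Both directions can be checked at the level of tokens. The encoding $\senc{A}$ splits uniquely into tokens, one per nonzero symbol and one per maximal zero-run of $A$, read left to right. Each token is a flag bit followed by an Elias-$\gamma$ code. Elias-$\gamma$ codes form a prefix-free set, so the set of tokens is prefix-free as well. Hence any bitmask that is a concatenation of tokens parses in exactly one way. The argument below repeatedly uses the following fact: if a token sequence $t_1\cdots t_p$ is a prefix of a token sequence $t'_1\cdots t'_{p'}$, then $t_j=t'_j$ for all $j\le p$, and in particular $p\le p'$. This is proved by induction on $j$. Two tokens starting at the same bit offset, one a prefix of the other's remaining bits, must be equal by prefix-freeness.

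\emph{The ``if'' direction.} Assume $A$ is a prefix of $A'$ and $A'[|A|-1\dd |A|]\ne 00$. Cut $A'$ right after position $|A|-1$. The condition says $A[|A|-1]$ and $A'[|A|]$ are not both zero. So the cut never falls strictly inside a maximal zero-run of $A'$. Therefore the maximal nonzero symbols and zero-runs of $A'$ lying within the first $|A|$ positions are exactly those of $A$. The token sequence of $A'$ then starts with the token sequence of $A$, and $\senc{A}$ is a prefix of $\senc{A'}$. If $|A|=|A'|$, then $A=A'$, which the hypothesis excludes; so we may take $|A|<|A'|$ and the index $|A|$ is valid.

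\emph{The ``only if'' direction.} Assume $\senc{A}$ is a prefix of $\senc{A'}$. By the fact above, the tokens of $A$ equal the first $p$ tokens of $A'$. Each token determines the symbols it encodes, so $A$ is a prefix of $A'$. Suppose for contradiction that $A'[|A|-1\dd |A|]=00$. Then $A$ ends with a zero-run $0^x$, and in $A'$ that run extends to $0^{x'}$ with $x'>x$. The last token of $A$ is $\mathtt0\cdot\gamma(x)$. The token of $A'$ at the same position, namely the $p$-th token, is $\mathtt0\cdot\gamma(x')$. These must be equal by the fact above, but $\gamma$ is injective and $x\ne x'$, a contradiction. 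If instead $|A|=|A'|$, then $A=A'$, contradicting distinctness. So $|A|<|A'|$ and the condition holds.

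The only delicate step is the ``only if'' direction: one must justify that the tokens are aligned between the two encodings. This follows from prefix-freeness of the token set, via the induction above. The edge cases are handled by the distinctness hypothesis and the non-emptiness of $A$.
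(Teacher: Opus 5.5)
Your proof is correct and follows the same token-level argument as the paper. Prefix-freeness of the tokens, inherited from Elias-$\gamma$ codes, aligns the two encodings, and a trailing zero-run of $A$ that continues in $A'$ yields differing tokens $\mathtt{0}\cdot\gamma(x)\neq\mathtt{0}\cdot\gamma(x')$. Your explicit alignment induction also rigorously justifies a step the paper's brief justification leaves implicit: that $\senc{A}$ being a prefix of $\senc{A'}$ forces $A$ to be a prefix of $A'$.
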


\subparagraph{Parsing sparse encodings.}

\newcommand{\crefbullet}[1]{\cref{#1_host}(\ref{#1_bullet})}

We introduce basic tools for sparse encodings.
Computing Elias-$\gamma$ codes can be accelerated with precomputed lookup tables.
Then, we can encode and decode strings one token at a time, resulting in \cref{lem:encode_sparse_naive_host} below. 
For \crefbullet{lem:encode_ds}, we merely split the computation time of \crefbullet{lem:encode_sparse_naive} into preprocessing and query time.

\begin{restatable}{lemma}{restateeliasencodedecode}\label{lem:encode_elias}\label{lem:decode_elias}
    For every $N \in [2\dd 2^w]$, after an $\Oh(N)$-time preprocessing, the following holds
    for every $u\in {{\Oh(w)}}$ and $x\in [1\dd 2^u)$.
    Computing $B:=\gamma(x)$ from the $u$-bit representation of $x$ and computing the $u$-bit representation of $x$ from a bitmask with prefix $B$ can be done in 
    $\Oh(1 + \lg x / \lg N)$ time. The size $\absolute{B} = 2\floor{\lg x} + 1$ of the code is also reported.
\end{restatable}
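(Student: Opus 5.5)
The plan is to process $x$ in chunks of $b := \floor{\lg N / 2}$ bits (assume $b \ge 1$; for $N$ below a constant, chunks of one bit cost $\Oh(1+\lg x)=\Oh(1+\lg x/\lg N)$ time anyway). During preprocessing I build two tables, each of size $2^{b}\le\sqrt N$ and filled naively in $\Oh(\sqrt N\,\lg N)\subseteq\Oh(N)$ time. The first, a most-significant-bit table, maps every $y\in[1\dd 2^b)$ to $\floor{\lg y}$. The second, a leading-zero table, maps every $b$-bit string to the number of leading zeros, with value $b$ for the all-zero string.

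\textbf{Encoding.} Given the $u$-bit representation of $x$, I first compute $\ell=\floor{\lg x}$. To do so, I scan the words of $x$ from the top, skipping zero words, and within a word scan $b$-bit chunks from the top until the first nonzero chunk, which is resolved with the first table. To keep the cost at $\Oh(1+\lg x/\lg N)$ rather than $\Oh(u/\lg N)$, I scan from the least significant end instead, tracking the highest nonzero chunk seen so far. Alternatively, I could binary search over chunk boundaries, but this is not clearly within budget. The cleanest option is the following: since $u=\Oh(w)$, $x$ occupies $\Oh(1)$ words, so finding the highest nonzero word takes $\Oh(1)$ time. Within that word, I locate the highest nonzero chunk by repeatedly shifting right by $b$ until the value is below $2^b$. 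This costs $\Oh(1+\ell/b)$ steps, since the number of chunks scanned equals the number of chunks below the top one. With $\ell$ known, $\gamma(x)=\mathtt0^\ell\cdot X$ is written as $\ell$ zero bits followed by the $(\ell+1)$ low bits of $x$. Both pieces are written word-wise: each word holds $w$ bits, so the total is $\Oh(1+\ell/w)$ word operations using shifts and masks. The reported length is $2\ell+1$.

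\textbf{Decoding.} Given a bitmask starting with $B$, I count the leading zeros $\ell$ by reading $b$-bit chunks from the front and consulting the second table. I stop at the first chunk that is not all zero, for $\Oh(1+\ell/b)$ lookups in total. Then I extract the next $\ell+1$ bits word-by-word in $\Oh(1+\ell/w)$ time and place them into the $u$-bit output. Since $\ell=\floor{\lg x}$ and $b=\Theta(\lg N)$, both directions run in $\Oh(1+\lg x/\lg N)$ time.

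\textbf{Main obstacle.} The main difficulty is making the cost of finding $\floor{\lg x}$ and of counting the zeros in the code depend on $\lg x$, not on $u$ or $w$. The fix is to scan only as many $b$-bit chunks as $x$ actually occupies. In the encoder, the scan starts from the top nonzero word and moves downward by shifting. In the decoder, it is bounded by the run of zeros, whose length is $\ell$.
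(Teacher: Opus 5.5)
Your final method is correct and essentially matches the paper's proof. The paper also uses $\Theta(\lg N)$-bit blocks. It locates the leading one-bit of $x$ by moving through blocks from the least significant end (decrementing $\ell$ from $\floor{u/h}$ while $X[0\dd \ell h)$ is non-zero), which is your right-shift loop. It locates the leading one-bit of $B$ by scanning blocks from the front, pins down the exact bit with a $2^h$-entry lookup table, and does everything else in $\Oh(1)$ word operations because $u=\Oh(w)$. Drop the top-down scan and the unbounded bottom-up scan you try first: neither meets the $\Oh(1+\lg x/\lg N)$ bound.
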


\restatespacefix

\begin{restatable}{lemma}{restateEncodeDecodeDS}\label{lem:encode_sparse_naive_host}\label{lem:decode_sparse_naive_host}\label{lem:encode_ds_host}
    For every $N \in [2\dd 2^w]$, after an $\Oh(N)$-time preprocessing, the following holds for every $u \in \Oh(w)$.
    Let $A\in[0\dd 2^u)^n$ with $n \in 2^{\Oh(w)}$.
    \begin{enumerate}[(i)]
        \item Computing the $u$-bit representation of $A$ from $\senc{A}$ and vice versa can be done in $\Oh(n + \absolute{\senc{A}} / \lg N)$ time.\label{lem:encode_sparse_naive_bullet}\label{lem:decode_sparse_naive_bullet}
        \item If $A$ is given in $u$-bit representation, then in $\Oh(n)$ time and space, one can compute a data structure that returns $\senc{A}$ in $\Oh(1 + \absolute{\senc{A}} / \lg N)$ time.\label{lem:encode_ds_bullet}
    \end{enumerate}
    
\end{restatable}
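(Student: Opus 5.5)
Both directions reduce to processing $A$ one token at a time with \cref{lem:encode_elias} (same preprocessing parameter $N$, i.e., $\Oh(N)$ time). The output is written in a bit-packed way, appending $\Oh(w)$ bits per word operation.

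\textbf{Encoding, part (i).} We scan $A$ left to right while maintaining the length of the current zero-run. If $A[i]=0$, we increment the counter. If $A[i]=u>0$, we first flush the pending run $\texttt0^x$ (when $x>0$) by emitting the bit $\texttt0$ followed by $\gamma(x)$. We then emit the bit $\texttt1$ followed by $\gamma(u)$. At the end, we flush once more. Each $\gamma$ computation costs $\Oh(1+\lg x/\lg N)$ by \cref{lem:encode_elias}, and appending a code of length $\Oh(\lg x)$ to the packed output costs $\Oh(1+\lg x/w)$. Note that $x\le n\in 2^{\Oh(w)}$ and $u<2^u$ with $u\in\Oh(w)$, so all values fit in $\Oh(1)$ words. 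Summing over tokens gives $\Oh(\#\text{tokens}+\sum\lg x/\lg N)$. There are at most $n$ tokens, and $\sum\lg x\le|\senc{A}|$ because every token has length $\ge 2\floor{\lg x}+2$. This yields $\Oh(n+|\senc{A}|/\lg N)$.

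\textbf{Decoding, part (i).} We read the encoding one token at a time. The first bit gives the token type, and \cref{lem:decode_elias} recovers $x$ (with the code length $2\floor{\lg x}+1$, which tells us how far to advance) in $\Oh(1+\lg x/\lg N)$ time. For a literal, we write $x$ as one $u$-bit entry. For a zero-run, we write $x$ zero entries of $u$ bits each, which takes $\Oh(1+xu/w)\subseteq\Oh(1+x)$ time by filling whole words. The number of tokens is at most $n$, and the written entries total $n$, so the bound is the same as before.

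\textbf{Part (ii).} We simply run the encoder of part (i) during preprocessing. This takes $\Oh(n+|\senc{A}|/\lg N)$ time. We have $|\senc{A}|\in\Oh(n\cdot w)$, since each token has $\Oh(w)$ bits, so this is $\Oh(n)$ time. The resulting bitmask occupies $\Oh(|\senc{A}|/w+1)\subseteq\Oh(n)$ words. A query then copies the stored encoding in $\Oh(1+|\senc{A}|/w)\subseteq\Oh(1+|\senc{A}|/\lg N)$ time, using $\lg N\le w$; alternatively, it returns a pointer to the encoding.

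\textbf{Main obstacle.} The delicate part is the accounting in part (i). The per-token $\Oh(1)$ overhead has to be charged to $n$, and the $\lg x/\lg N$ terms to the encoding length. This works because every token is at least as long as the $\gamma$-code it contains. Beyond that, we only need to confirm that bit-level appends and reads at arbitrary offsets cost $\Oh(1)$ per word, which follows from standard shift and mask operations.
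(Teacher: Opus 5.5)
\textbf{Part (i).} This matches the paper's argument: you encode or decode one token at a time with \cref{lem:encode_elias}, charge the $\Oh(\lg x/\lg N)$ terms to the bits of $\senc{A}$, and charge the per-token overhead and the writing of zero-runs to $n$. One small fix: you use $u$ both for the bit width and for a literal value, so rename one of them.

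\textbf{Part (ii) has a genuine gap.} You run the whole encoder during construction and argue that its cost $\Oh(n+\absolute{\senc{A}}/\lg N)$ is $\Oh(n)$ because $\absolute{\senc{A}}\in\Oh(nw)$. That only gives $\Oh(n+nw/\lg N)$, which is $\Oh(n)$ only when $\lg N=\Omega(w)$. The lemma, however, must hold for every $N\in[2\dd 2^w]$.

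For example, take $N=2$, $u=w$, and $A[i]=2^{w-1}$ for all $i$. Then $\absolute{\senc{A}}=2nw$, and \cref{lem:encode_elias} only guarantees $\Oh(1+\lg x)=\Oh(w)$ time per code. Your construction therefore takes $\Theta(nw)$ time, not $\Oh(n)$. In such cases most of the work has to be deferred to query time, where the budget $\Oh(1+\absolute{\senc{A}}/\lg N)$ is large enough to absorb it.

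\textbf{The fix (this is how the paper proves (ii)).} Let $c_1 n + c_2\absolute{\senc{A}}/\lg N + c_3$ bound the running time of your part-(i) encoder.
\begin{enumerate}
\item At construction time, store a copy of $A$ in $\Oh(n)$ words.
\item Execute only the first $c_1 n$ steps of the encoder and save its state. The partially written output then has $\Oh(n)$ words, so both time and space stay $\Oh(n)$.
\item At query time, resume and finish the remaining at most $c_2\absolute{\senc{A}}/\lg N + c_3=\Oh(1+\absolute{\senc{A}}/\lg N)$ steps. If the encoder already finished during construction, simply return the stored result.
\end{enumerate}
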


We can encode and decode sparse arrays (in list representation) in constant time per non-zero entry, using \cref{lem:encode_elias} to process one token at a time.

\begin{corollary}\label{lem:encode_list}\label{lem:decode_list}
    After an $\Oh(\sqrt{n})$-time preprocessing, given any array $A \in [0\dd n^{\Oh(1)}]^n$ as a list of its $a \in [0\dd n]$ non-zero entries as position-value pairs in increasing order of position, we can output $\senc{A}$ and vice versa in $\Oh(1 + a)$ time.
\end{corollary}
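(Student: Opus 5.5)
We apply \cref{lem:encode_elias} with $N := \floor{\sqrt{n}}$. Its preprocessing costs $\Oh(\sqrt{n})$ time, and afterwards each Elias-$\gamma$ code of a value $x \in [1\dd n^{\Oh(1)}]$ can be produced or parsed in $\Oh(1 + \lg x/\lg N) = \Oh(1 + \Oh(\lg n)/(\tfrac12\lg n)) = \Oh(1)$ time. Here $u = \Oh(\lg n) = \Oh(w)$, so the lemma applies. A token is an indicator bit followed by one Elias-$\gamma$ code. Both literal values and zero-run lengths are bounded by $n^{\Oh(1)}$. Hence every token can be written or read in $\Oh(1)$ time, with its length also reported.

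\textbf{Encoding.} Let the list be $(i_1,A[i_1]),\dots,(i_a,A[i_a])$, and set $i_0 := -1$. Scan the list from left to right. For the $j$-th entry, first check whether $i_j - i_{j-1} - 1 > 0$. If so, append the zero-run token $\mathtt{0}\cdot\gamma(i_j - i_{j-1} - 1)$. Then append the literal token $\mathtt{1}\cdot\gamma(A[i_j])$. Finally, if $n - i_a - 1 > 0$, append the trailing zero-run token. The gaps between consecutive non-zero entries are exactly the maximal zero-runs, so this produces $\senc{A}$ as in \cref{def:sparse}. We keep the output in a packed bit buffer. Each append writes a piece of $\Oh(\lg n) = \Oh(w)$ bits, which takes $\Oh(1)$ word operations (shift, OR, and possibly spilling into the next word). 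In total we emit at most $2a+1$ tokens, giving $\Oh(1+a)$ time.

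\textbf{Decoding.} Maintain a bit pointer into $\senc{A}$ and a current position $p$, initially $0$. Repeat while $p < n$. Read the indicator bit. Then extract the next $\Oh(w)$ bits as a word and decode $\gamma(x)$ in $\Oh(1)$ time, which also yields its length so the pointer can advance. If the indicator is $\mathtt{1}$, output the pair $(p,x)$ and set $p := p+1$. Otherwise, set $p := p+x$. By \cref{def:sparse}, zero-run tokens never appear consecutively, so the number of tokens is at most $2a+1$. The decoding time is therefore $\Oh(1+a)$ and does not depend on the bit length of the encoding.

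\textbf{Main obstacle.} The only delicate point is doing each token in $\Oh(1)$ time rather than $\Oh(1 + \lg x/\lg N)$ when $N$ is only $\sqrt{n}$. This works because $x \le n^{\Oh(1)}$ makes the quotient $\lg x/\lg N$ a constant. We also need the codes to be read and written via unaligned word extraction and insertion on the packed bit representation. This is standard in the word RAM with $w = \Theta(\lg n)$.
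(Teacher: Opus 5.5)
Your proof is correct and follows the same route as the paper, which justifies this corollary in one line: apply \cref{lem:encode_elias} with preprocessing parameter $N=\Theta(\sqrt{n})$ and process one token at a time. Since all values are at most $n^{\Oh(1)}$, each token costs $\Oh(1)$, and there are at most $2a+1$ tokens. Your write-up spells out the token bookkeeping and word-level packing that the paper leaves implicit.
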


\subsection{Processing Sparse Encodings With Transducers}

We will repeatedly process bitmasks, strings, and arrays in the sparse encoding.
As a general tool for this task, we propose a preprocessing scheme for \emph{deterministic finite-state transducers}, henceforth simply called transducers. 
\def\transducerdef{A transducer consists of a finite set $Q$ of states, an initial state $s_0 \in Q$, a finite alphabet $\Sigma$, and a transition function $\delta : Q \times \Sigma^{t} \rightarrow Q \times \Sigma$.
Given input strings $S_1, \dots, S_t$, the transducer produces output string $T$, where all strings are of common length $n$ and are over alphabet $\Sigma$. The computation is performed in a sequence of $n$ steps. Before the $i$-th step (with $i \in [0\dd n)$), the transducer is in state $s_i$, and it has already written $T[0\dd i)$. The $i$-th step is performed by evaluating $\delta(s_i, S_1[i], \dots, S_t[i]) = (s_{i+1},T[i])$, resulting in the new state and the next symbol of the output string.
Since the input strings are processed from left to right, we say that the transducer has $t$ \emph{input streams}, and we call it a \emph{single-stream transducer} if $t = 1$.}%
\transducerdef

In the \lcnamecref{lem:accelerate_single_stream} below, we show how to preprocess a single-stream transducer so that it can efficiently work directly on sparse encodings.
For every state $s$ and every string $S'$ for which $\absolute{\senc{S'}}$ is at most a small fraction of $\lg N$, we precompute the entire chain of transitions performed when reading $S'$ in state $s$. 
The result of this precomputation is the new state reached after reading $S'$, as well as the sparse encoding of the produced output. This way, for a longer input encoding, we can process up to $\Omega(\lg N)$ bits in constant time. 
\begin{restatable}{theorem}{restatesingletape}\label{lem:accelerate_single_stream}
    Consider a single-stream transducer over alphabet $[0\dd \sigma)$ with states $[0\dd q)$, where ${\sigma, q \in 2^{\Oh(w)}}$.
    For every $N \in [2 \dd 2^w]$, if evaluating the transition function with input symbol $x$ (at any state) takes $\Oh(1 + \lg(1 + x) / \lg N)$ time, then after an $\Oh(qN)$-time preprocessing, the following holds.
    
    If $S$ is an input string of length $n \in 2^{\Oh(w)}$ for which the transducer produces output~$T$, then $\senc{T}$ can be computed from $\senc{S}$ in ${\Oh(1 + {(\absoluteenc{S} + \absoluteenc{T}) / \lg N})}$~time.
\end{restatable}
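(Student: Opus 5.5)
We set $b := \floor{\frac{1}{c}\lg N}$ for a suitably large constant $c$. We process $\senc{S}$ token by token, but batch consecutive \emph{short} tokens. Both the batches and the long tokens are handled with precomputed tables.

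\textbf{Preprocessing.} For every state $s\in[0\dd q)$ and every bit string $P$ of length at most $b$ that is a concatenation of complete tokens, let $S'$ be the unique string with $\senc{S'}=P$. We store three things:
\begin{itemize}
\item the state reached after reading $S'$ from $s$;
\item the output string $T'$, stored as $\senc{T'}$ provided $\absoluteenc{T'}\le 2b+\Oh(1)$, and otherwise as a flag;
\item the length and value of the last output token, so that a trailing zero-run can later be merged with the next one.
\end{itemize}
There are $\Oh(q\cdot 2^b)=\Oh(q N^{1/c})$ entries. Each is computed by simulating at most $\absolute{S'}$ transitions. The catch is that $\absolute{S'}$ can be exponential in $b$, since a single zero-run token $\mathtt0\gamma(x)$ encodes $x$ zeros.

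We handle this first obstacle with a cycle argument. Reading $\mathtt{0}^x$ from state $s$ traces a path in the functional graph of the $0$-transitions. We precompute, for every state, the transition under input $0$ together with its output symbol; by assumption, each such transition costs $\Oh(1)$ time. The path enters a cycle after at most $q$ steps, and the output along it is eventually periodic. It is therefore cleaner to avoid the fully general table and simply require $2^b\cdot q\cdot(\text{work per entry})=\Oh(qN)$. Each entry has $\Oh(b)$ tokens, and each zero-run is resolved via tail/period bookkeeping over the $q$-state functional graph. Because the precomputation for all states can be shared (for instance, with binary-lifting-style doubling tables over $[0\dd q)$ for powers of two up to $2^b$), we stay within $\Oh(qN)$. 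I expect this part to be the most delicate, since $q$ may be large.

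\textbf{Query.} We read the next $b$ bits of $\senc{S}$ and take the longest prefix consisting of complete tokens.
\begin{itemize}
\item If this prefix is nonempty, we apply the table and append the stored output encoding. Appending uses constant-time word operations, with one merge step when the new output begins with a zero-run that extends the previous one. Observation \cref{obs:encoding_prefix_of_other} is what makes the merge local: only the last token may change.
\item If the prefix is empty, the next token is longer than $b$ bits, i.e., a literal or zero-run with value $x\ge 2^{\Omega(b)}$. A literal is decoded via \cref{lem:decode_elias} in $\Oh(1+\lg x/\lg N)$ time, and the transition is applied at the cost allowed by the hypothesis. A long zero-run $\mathtt0^x$ is handled with the doubling tables: $\Oh(\lg x/\lg N)$ jumps of length $N^{\Theta(1)}$, each producing output whose encoding we assemble from table entries. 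Alternatively, we exploit the eventual periodicity: the output after the tail is a periodic repetition of a string of length at most $q$. This is the second point where care is needed, because $q$ may exceed $N$.
\end{itemize}

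\textbf{Accounting.} Every table step consumes $\Theta(b)$ input bits, except possibly the last step, or it is immediately followed by a long token. Hence the number of steps is $\Oh(1+\absoluteenc{S}/\lg N)$. The cost of output assembly is $\Oh(1+\absoluteenc{T'}/\lg N)$ per step. The flagged case, where the output of a batch is too long to store, is resolved by splitting the batch into halves recursively, with cost charged to $\absoluteenc{T}$. Summing gives the claimed bound $\Oh(1+(\absoluteenc{S}+\absoluteenc{T})/\lg N)$.
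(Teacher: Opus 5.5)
Your table-driven batching of short tokens and your handling of long literal tokens are sound and match the paper's Cases~1 and~2a. The genuine gap is the long input zero-run token $\mathtt{0}\cdot\gamma(x)$. It contributes only $\Theta(\lg x)$ bits to $\senc{S}$. Any work on its $x$ transitions beyond $\Oh(1+\lg x/\lg N)$ must therefore be charged to bits of $\senc{T}$, and your accounting, which counts input bits per table step, does not address this.

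Your concrete proposals do not close the gap:
\begin{itemize}
\item Covering $x$ symbols by jumps of length $N^{\Theta(1)}$ takes $x/N^{\Theta(1)}$ jumps, not $\Oh(\lg x/\lg N)$. This already fails for a state with a self-loop reading $0$ and writing $0$ when $x=n$: the output is $\senc{0^n}$ and the budget is $\Oh(1+\lg n/\lg N)$.
\item Geometrically growing jumps cannot carry their outputs, which are arbitrary state-dependent strings and cannot be tabulated.
\item Binary lifting reaches a distance $d$ only in $\Theta(\lg d)$ time, while the corresponding output zero-run pays for just $\Oh(1+\lg d/\lg N)$.
\item The periodicity route stalls exactly where you flag it. The tail can have length up to $q\gg N$, and its output can be, e.g., $(0^{p-1}1)^k$ with $p\gg N$. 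Each zero stretch of length $p$ must then be crossed in $\Oh(1+\lg p/\lg N)$ time, and you have no tool for that.
\end{itemize}

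The missing idea is a constant-time \emph{skip to the next non-zero output}, combined with an output-side charging argument. The transitions that read $0$ and write $0$ form a pseudoforest on $[0\dd q)$. After $\Oh(q)$-time level-ancestor preprocessing (\cref{lem:pseudoforest_levelancestor}), one can answer two questions in $\Oh(1)$ time: how many such steps are possible from $s$ (possibly $\infty$), and which state is reached after exactly $\ell$ of them.

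The paper processes $\mathtt{0}^y$ by alternating two kinds of micro-step:
\begin{itemize}
\item \emph{flexible} jumps, which only increase a counter of pending output zeros;
\item \emph{fixed} steps, which consume $\min(y,\floor{2^{b/4}})$ zeros through a table (the paper's $M$ is your $2^b$).
\end{itemize}
Every fixed step starts where a flexible step stopped in front of a non-zero output symbol. Hence, in any window flexible--fixed--flexible--fixed, the output from the start of the second step through the first symbol of the fourth begins and ends with non-zero symbols and has length at least $\floor{2^{b/4}}$. By \cref{obs:sparse_size_lower}, its encoding, which is a substring of $\senc{T}$, has $\Omega(b)=\Omega(\lg N)$ bits. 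This bounds the number of micro-steps by $\Oh(1+(\absoluteenc{S}+\absoluteenc{T})/\lg N)$.

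Two smaller points:
\begin{itemize}
\item Your ``first obstacle'' is not one. By \cref{obs:sparse_size_lower}, a $b$-bit encoding decodes to fewer than $2^{b/2+1}$ symbols, so naive simulation costs $\Oh(q2^{3b/2})\subseteq\Oh(qN)$ for $c\ge 2$. The full output of each entry can then be stored, e.g., lazily via \cref{lem:encode_ds}, which makes the flag and the recursive halving unnecessary.
\item Rewriting the last zero-run token on every append costs $\Theta(1+\lg z/\lg N)$ even for batches that output only zeros. Instead, keep trailing output zeros in a counter and flush them just before the next non-zero output, as justified by \cref{obs:encoding_prefix_of_other}.
\end{itemize}
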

\begin{proofsketch}
We follow the full proof in~\cref{sec:lem:accelerate_single_stream}. Fix a small constant $\eps>0$ and set $M=\Theta(N^\eps)$ to be a power of two (so $\lg M=\Theta(\lg N)$).
The proof has two parts: (i) tabulate how the transducer behaves on \emph{all} sparse encodings whose bit-length is $\le \lg M$, and (ii) process the remaining long tokens (especially long zero-runs) with a separate jumping mechanism.

\subparagraph*{Preprocessing.}
We precompute information that lets us process $\Omega(\lg M)$ bits of the input encoding in $\Oh(1)$ time.
\begin{enumerate}
    \item \emph{Tabulating short prefixes of sparse encodings.}
    For every state $s\in[0\dd q)$ we build a lookup table $L_s$ indexed by all bitmasks $B\in\{0,1\}^{\lg M}$.
    For each $B$, we find the \emph{largest} prefix length $b\le \lg M$ such that $B[0\dd b)$ is a sparse encoding.
        Let $A$ be the decoded length-$a$ string with $B[0\dd b)=\senc{A}$.
    We simulate the transducer for $a$ steps on input $A$, obtaining the new state $s'$ and the produced output string $A'$.
    We then store in $L_s[\int(B)]$ (besides $b,a,s'$) a representation of $\senc{A'}$ that separates \emph{leading} and \emph{trailing} zeros:
    we keep the counts $z_1,z_2$ of leading/trailing zeros of $A'$, and build the data structure $D$ from \crefbullet{lem:encode_ds} for the middle part $A'[z_1\dd a-z_2)$.
    This way, later we can append $\senc{A'}$ to the global output in time $\Oh(1+\absolute{\senc{A'}}/\lg M)$, without ever materializing it bit-by-bit.
    Computing a single entry takes $\Oh(1+a)$ time (decode $A$, simulate $a$ transitions, and build $D$). 
    Since $\absolute{\senc{A}}\le \lg M$ implies $a=\Oh(M)$ by \cref{obs:sparse_size_lower}, the total for all $qM$ entries is $\Oh(qM^2)$.
    \item \emph{Short runs of input zeros.}
    For each state $s$, we also build a table that stores, for $y\in [0\dd M^{1/4}]$, the value $L'_s[y]$ obtained by looking up $L_s$ on the $\lg M$-bit word that starts with the sparse encoding of the zero-run token $\senc{0^y}$.
    This allows us to advance through $y$ consecutive input zeros in one constant-time lookup.
    The key point is that, for $y\le M^{1/4}$, the token $\senc{0^y}$ has length $<\lg M$, so it fits into one lookup word together with padding.
    If the tables $L_s$ are given, then the tables $L'_s$ can be easily constructed in $\Oh(qM)$ time.
    \item \emph{Long stretches where input/output are both zero.}
    Finally, for handling \emph{very long} zero-run tokens, we build the directed graph on states in which there is an edge $s\to s'$ iff the transducer transitions from $s$ to $s'$ on input $0$ \emph{and} outputs $0$.
    This is a pseudoforest (outdegree $\le 1$), so we can preprocess it with a simple adaptation of level ancestor queries~\cite{DBLP:journals/tcs/BenderF04} such that we can “jump” in $\Oh(1)$ time through long chains of transitions that keep outputting zero.
    Computing the graph and auxiliary data structure takes $\Oh(q)$ time.
\end{enumerate}

\subparagraph*{Running the transducer on \boldmath$\senc{S}$.\unboldmath}
Let $X:=\senc{S}$.
As an invariant, we maintain: offsets $x \in [0\dd \absolute{X}]$ and $n \in [0\dd \absolute{S}]$ such that $X[0\dd x)=\senc{S[0\dd n)}$, the current state $s$ reached by the transducer after processing $S[0\dd n)$, and an output representation consisting of
(i) a counter $z$ that indicates the number of trailing zeros of $T[0\dd n)$ and
(ii) the sparse encoding $Y=\senc{T[0\dd n-z)}$. 
Here, $Y$ is always a prefix of $\senc{T}$ due to \cref{obs:encoding_prefix_of_other}.
We process the remaining suffix $X[x\dd]$ in \emph{macro-steps}:
\begin{itemize}
    \item If $L_s[X[x\dd x+\lg M)]$ returns $b>0$, then we have a whole sparse encoding $X[x\dd x+b)=\senc{A}$ fitting in $\lg M$ bits.
    We take the precomputed entry $(b,a,s',z_1,z_2,D)$ and update $(x,n,s)$ by $(x+b,n+a,s')$.
    If the corresponding output $A'$ is all-zero, we just increase $z$.
    Otherwise we first flush the pending zeros by appending $\senc{0^{z+z_1}}$ (if needed), then append the middle part via $D$, and finally set $z\gets z_2$.
    Near the end of $X$, we pad the $\lg M$-bit window with an incomplete token to prevent decoding beyond $\absolute{X}$.
    \item Otherwise $b=0$, i.e., the next token of $X[x\dd]$ is \emph{long} (more than $\lg M$ bits).
    We decode this token (and its bit length $b'$) using \cref{lem:decode_elias}.
    If it is a literal, we decode the symbol, evaluate one transducer transition (cost $\Oh(1+\lg(1+x)/\lg N)$ by assumption), and append the encoded output symbol using \cref{lem:encode_elias}.
    If, however, the decoded token is a zero-run token $\senc{0^y}$, then we handle it in a series of \emph{micro-steps}.
    We alternate between (a) \emph{fixed} micro-steps that advance by $M'=\min(y,M^{1/4})$ using $L'_s[M']$ (so we can also handle prefixes whose \emph{output} is not all-zero), and (b) \emph{flexible} micro-steps that use the precomputed graph and auxiliary data structure to skip a maximal prefix of transitions that read $0$ and output $0$ (advancing $n$ and increasing $z$).
    This continues until all $y$ input zeros are consumed.
    Finally, regardless of whether the decoded token is a literal or a zero-run token, we advance $x$ by $b'$ bits.
\end{itemize}
Once we have processed the entire $X$, we append $\senc{0^z}$ to $Y$, obtaining $\senc{T}$.

\subparagraph*{Correctness.}
By construction, each lookup-table entry faithfully simulates the transducer on the decoded prefix it represents, so using the tables updates the current state exactly as in the real run and produces exactly the corresponding output substring.
The maintained invariant ensures that after every step, the processed part of $X$ corresponds to the processed prefix of the plain input, and $Y$ represents the output written so far up to the buffered trailing zeros.
The only subtlety is concatenating outputs while preserving a valid sparse encoding; this is precisely why we keep trailing zeros in a counter $z$ and only flush them when the next chunk produces a non-zero output, using the prefix characterization in \cref{obs:encoding_prefix_of_other}.

\subparagraph*{Time bound.}
Each macro-step with $b>0$ advances the input encoding by a \emph{maximal} sparse-encoding prefix inside a $\lg M$-bit window; hence, any two consecutive macro-steps advance the input encoding by more than $\lg M$ bits (as otherwise, the steps could have been merged), and there are $\Oh(\absoluteenc{S}/\lg M)$ macro-steps.

Micro-steps only occur inside long zero-run tokens. Every flexible micro-step advances the output to the next non-zero symbol.
Therefore, for four consecutive micro-steps of the form (flexible, fixed, flexible, fixed), 
the first output symbol produced by each of the two fixed steps is a non-zero symbol.
Hence, the encoding of the output produced by the central two steps is a substring of the encoding appended to $Y$ during the four steps.
Also, by the definition of a fixed step, the (plain text) output corresponding to this substring is of length at least $M^{1/4}$.
By \cref{obs:sparse_size_lower}, its sparse encoding contains $\Omega(\lg M)$ bits, so (ignoring $\Oh(1)$ initial/final micro-steps per macro-step) we get $\Oh(\absoluteenc{T}/\lg M)$ micro-steps overall. 
The number of initial/final micro-steps is limited by the number $\Oh(\absoluteenc{S}/\lg M)$ of macro-steps.

Finally, whenever a step takes more than constant time, it is because, for some value $r$, we (i) use \crefbullet{lem:encode_ds} to obtain an encoding consisting of $r$ bits and append it to the output, (ii) use \cref{lem:encode_elias,lem:decode_elias} to decode (resp.\ encode) a token consisting of $r$ bits and advance the input (resp.\ output) by $r$ bits, or (iii) evaluate a transition on a large literal whose encoding consists of $r$ bits.
Each such cost is $\Oh(1+r/\lg M)$, and it can be charged to the $\Theta(r)$ bits advanced in $X$ or appended to $Y$, yielding total overhead $\Oh((\absoluteenc{S}+\absoluteenc{T})/\lg M)$.
This gives total time $\Oh(1+(\absoluteenc{S}+\absoluteenc{T})/\lg M)=\Oh(1+(\absoluteenc{S}+\absoluteenc{T})/\lg N)$.
\end{proofsketch}

To handle multi-stream transducers, we use a reduction of multi-stream transducers to single-stream ones. For this purpose, we define a zipped string.

\begin{restatable}{definition}{restatezippeddefinition}
Let $A_1, \dots, A_t$ with $t \geq 1$ be strings in $\Zz^n$. The string $\zip(A_1, \dots, A_t)$ of length $n$ has, for each $i \in [0\dd n)$, its $i$-th symbol defined by $\zip(A_1, \dots, A_t)[i] = 0$ if $\forall_{j \in [1 \dd t]}\; A_j[i] = 0$, and $\zip(A_1, \dots, A_t)[i] = \senc{A_1[i]A_2[i]\dots A_t[i]}$ otherwise.
\end{restatable}

We stress that this zipped string is not, in itself, a sparse encoding -- it is a string in which each non-zero symbol is the sparsely encoded concatenation of the corresponding symbols of the original strings. Crucially, when sparsely encoding the zipped sequence, there is only a constant factor overhead over the sparsely encoded original strings.

\begin{restatable}{lemma}{restatezippedsize}\label{lem:zipped_encoding_size}
Given a constant number of equal-length strings $A_1, \dots, A_t\in \Zz^n$, it holds $\textstyle \absoluteenc{\zip(A_1, \dots, A_t)} = \Oh(\sum_{j = 1}^t \absoluteenc{A_j})$.
\end{restatable}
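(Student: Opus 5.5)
We bound the encoding of $Z:=\zip(A_1,\dots,A_t)$ token by token, by charging each token to tokens of the encodings $\senc{A_j}$. Let $P=\{i\in[0\dd n) : \exists_j\, A_j[i]\neq 0\}$ be the set of non-zero positions of $Z$. By definition, $\senc{Z}$ consists of one literal token for each $i\in P$, plus zero-run tokens for the maximal gaps between consecutive elements of $P\cup\{-1,n\}$. We bound these two contributions separately.

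\emph{Literal tokens.} Fix $i\in P$. The literal $Z[i]$ is the sparse encoding of the length-$t$ string $A_1[i]\cdots A_t[i]$, viewed as a binary number with leading bit $1$. Hence $\lg Z[i]\le |\senc{A_1[i]\cdots A_t[i]}|$. The token for $Z[i]$ then has $\Oh(1+\lg Z[i])$ bits. This is $\Oh(t+\sum_{j:A_j[i]\ne 0}\lg A_j[i])$, because each zero entry contributes a zero-run token of $\Oh(1)$ bits (runs of length at most $t$), and each non-zero entry contributes $2\floor{\lg A_j[i]}+2$ bits. Since $t=\Oh(1)$ and $i\in P$, there is some $j$ with $A_j[i]\neq0$, so the $\Oh(t)$ term is charged to the literal token of $A_j[i]$ in $\senc{A_j}$, which has at least $2$ bits. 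The term $\lg A_j[i]$ is charged to that same token. Each literal token of each $\senc{A_j}$ is charged $\Oh(1)$ times its size, so summing over $i\in P$ gives $\Oh(\sum_j\absoluteenc{A_j})$.

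\emph{Zero-run tokens.} Consider a maximal zero-run $Z[b\dd e)$, which contributes $\Oh(1+\lg(e-b))$ bits. The $\Oh(1)$ part is charged either to the literal at $e$ (or at $b-1$), or, if the run covers all of $Z$, to the at least $2\floor{\lg n}+2$ bits of $\senc{A_1}$. For the logarithmic part, note that $A_1[b\dd e)$ is all-zero. Thus $A_1[b\dd e)$ lies inside a maximal zero-run $A_1[b'\dd e')$ of $A_1$, which contributes $2\floor{\lg(e'-b')}+2\ge \lg(e-b)$ bits to $\senc{A_1}$. The main obstacle is that one zero-run of $A_1$ may contain many disjoint zero-runs of $Z$, separated by non-zeros of the other $A_j$, so charging only to $A_1$ can overcount.

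\emph{Resolving the overcounting.} We use concavity instead. Suppose the zero-runs of $Z$ inside one zero-run of $A_1$ of length $L$ are $m+1$ runs of lengths $g_0,\dots,g_m$. They are separated by $m$ positions of $P$, each of which has a literal in some other $A_j$. By Jensen's inequality, $\sum_{k=0}^{m}\lg(1+g_k)\le (m+1)\lg(1+L/(m+1))\le \lg(1+L)+m\lg(1+L)$. The first term is paid by the zero-run token of $A_1$. The remaining $m\lg(1+L)$ term is too large, so a cleaner approach is needed.

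\emph{Cleaner approach.} Apply \cref{obs:sparse_size} directly. The zero-run contribution of $Z$ is $\Oh((a+1)\lg\frac{n+1}{a+1})$ with $a=|P|$. Let $a_j$ be the number of non-zeros of $A_j$, so $a\le\sum_j a_j$. Because $x\mapsto (x+1)\lg\frac{n+1}{x+1}$ is concave and increasing up to about $n/e$, and subadditive, we get $(a+1)\lg\frac{n+1}{a+1}=\Oh(\sum_j (a_j+1)\lg\frac{n+1}{a_j+1})$. For large $a$ the bound $\Oh(n)$ is only useful when some $a_j=\Theta(n)$, in which case $\absoluteenc{A_j}=\Omega(n)$ anyway.

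It remains to check that each $\absoluteenc{A_j}=\Omega((a_j+1)\lg\frac{n+1}{a_j+1})$. This lower bound on zero-run tokens does not hold term by term, but it holds in aggregate. Among $a_j+1$ gaps summing to $n-a_j$, at least half of the total length lies in gaps of length at least $(n-a_j)/(2(a_j+1))$. A counting argument on this dichotomy shows that the contribution is $\Omega((a_j+1)\lg\frac{n+1}{a_j+1})$ whenever $a_j\le n/2$; otherwise both sides are $\Theta(n)$. This lower-bound step is the one needing the most care, and I would verify it first.
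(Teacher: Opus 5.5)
Your treatment of the literal tokens is fine and matches the paper. The zero-run part, however, has a genuine gap.

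\textbf{Why the ``cleaner approach'' fails.} It passes through the worst-case bound of \cref{obs:sparse_size}. It then needs the lower bound $\absoluteenc{A_j}=\Omega((a_j+1)\lg\frac{n+1}{a_j+1})$, and that lower bound is false. Take $A_j=\mathtt{1}^{a}\mathtt{0}^{n-a}$ with $a=\floor{\sqrt n}$. Then $\absoluteenc{A_j}=2a+2\floor{\lg(n-a)}+2=\Theta(\sqrt n)$, whereas $(a+1)\lg\frac{n+1}{a+1}=\Theta(\sqrt n\lg n)$. Your dichotomy (half of the total gap length lies in long gaps) is true but does not help: all of that length may sit in a \emph{single} gap, which pays only $\Oh(\lg n)$ bits.

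This cannot be fixed by adjusting constants. With $t=2$, $A_1$ as above and $A_2=\mathtt{0}^n$, the quantity $(a+1)\lg\frac{n+1}{a+1}$ exceeds $\sum_j\absoluteenc{A_j}=\Theta(\sqrt n)$ by a $\Theta(\lg n)$ factor. So no argument that first replaces the zero-run cost of $\senc{\zip(A_1,\dots,A_t)}$ by this worst-case bound can succeed. You have to compare the actual gap structures.

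\textbf{The missing idea: injective charging by end position.} Do not fix $A_1$ as the target of the charge. Let $Z[b\dd e)$ be a maximal zero-run of $Z=\zip(A_1,\dots,A_t)$.

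If $e<n$, then $Z[e]\neq 0$, so some $A_h$ has $A_h[e]\neq 0$ while $A_h[b\dd e)$ is all-zero. The maximal zero-run of $A_h$ containing $[b\dd e)$ therefore ends exactly at $e$ and is at least as long. Its token is thus at least as long as the token of $Z[b\dd e)$.

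If $e=n$, charge instead to the final zero-run of $A_1$, which also ends at $n$.

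Distinct zero-runs of $Z$ have distinct end positions, and a zero-run of $A_h$ is determined by $h$ together with its end position. Hence this map is injective. The zero-run tokens of $\senc{Z}$ therefore cost at most the zero-run tokens of all the $\senc{A_h}$ combined.

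This resolves exactly the overcounting you identified in your first attempt. Each run of $Z$ is charged not to $A_1$ but to whichever $A_h$ terminates it, so two runs of $Z$ lying inside one long zero-run of $A_1$ are charged to different targets.
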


Given the sparse encodings of the input strings, we first produce the sparse encoding of their zipped string. For the result below, we use similar techniques as in the proof of \cref{lem:accelerate_single_stream}, processing up to $\Omega(\lg N)$ bits of the input encodings in constant time by exploiting precomputed information. 
Unlike in the proof of \cref{lem:accelerate_single_stream}, a long synchronized run of zeros in the input strings \emph{always} leads to a run of zeros in the output, simplifying the algorithm to some extent. 
However, non-synchronized runs of zeros pose an additional challenge, as described in detail in \cref{sec:lem:fast_multi_zip}.

\begin{restatable}{theorem}{restatemultizip}\label{lem:fast_multi_zip}
For every $N \in [2 \dd 2^w]$, after an $\Oh(N)$-time preprocessing, the following holds.
If $A_1, \dots, A_t \in [0\dd \sigma)^n$ with $t = \Oh(1)$ and $n,\sigma \in {2^{\Oh(w)}}$,
then $\senc{\zip(A_1, \dots, A_t)}$ can be computed from $\senc{A_1}, \dots, \senc{A_t}$ in $\Oh(1 + \sum_{j = 1}^t \absoluteenc{A_j} / \lg N)$ time.
\end{restatable}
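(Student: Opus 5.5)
The plan is to imitate the proof of \cref{lem:accelerate_single_stream}. We process the $t$ encodings $X_j=\senc{A_j}$ with one cursor each, all positioned at a common plain-text index $i$, and use a small set of tabulated moves. Fix $\eps>0$ small and $M=\Theta(N^{\eps/t})$, a power of two, so that $\lg M=\Theta(\lg N)$ since $t=\Oh(1)$.

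\textbf{Preprocessing.} For every tuple of windows $(B_1,\dots,B_t)\in(\{0,1\}^{\lg M})^t$ we tabulate the longest common plain-text length $a$ that all $t$ windows decode completely. A zero-run token that extends past $a$ is allowed to be consumed only partially, so we store its residual length. The table entry also records the bit offsets consumed in each window, the residual zero-run counts, and an instance of the data structure of \crefbullet{lem:encode_ds} for the zipped output of length $a$, split into leading zeros, middle part and trailing zeros as before. Each zipped symbol is $\senc{\cdot}$ of a length-$t$ string with entries below $2^{\lg M}$, so it is an integer of $\Oh(\lg M)$ bits. There are $M^t=\Oh(N^\eps)$ entries, each computed in $\Oh(\poly(M))$ time, so the total is $\Oh(N)$. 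We also precompute \cref{lem:encode_elias} tables with parameter $N$.

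\textbf{Main loop.} The state consists of the cursors, the pending residual zero-run lengths $r_j$ (possibly $0$), and a buffered trailing-zero counter $z$ for the output, exactly as in the single-stream proof; output is flushed via \cref{obs:encoding_prefix_of_other}. We distinguish three kinds of step.
\begin{itemize}
\item If every stream is currently inside a zero-run ($r_j>0$ for all $j$), we skip $\min_j r_j$ positions in $\Oh(1)$ time and add that amount to $z$. This is the synchronized-zero case, which always produces output zeros.
\item If every stream's next token is short and no residual is large, we use the table. As a window we feed each stream's residual as a re-encoded zero-run token $\senc{0^{r_j}}$, computed via \cref{lem:encode_elias} and prepended to the next bits.
\item If some token is long (more than $\lg M$ bits), it is either a large literal or a huge zero-run. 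A large literal is handled by a single position step: we build the zipped symbol $\senc{A_1[i]\cdots A_t[i]}$ with \cref{lem:encode_elias} and then $\gamma$-encode it, in time $\Oh(1+r/\lg N)$ charged to its $r$ input bits. A huge zero-run is decoded into its residual counter.
\end{itemize}

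\textbf{Main obstacle: non-synchronized zero-runs.} Suppose one stream sits in a very long zero-run while another stream has many short tokens. Re-encoding the long residual into the window wastes $\Theta(\lg r_j)$ bits of the window each time. To avoid this, when $r_j\ge M^{1/4}$ we do not put it in the window. Instead we substitute a synthetic token $\senc{0^{M^{1/4}}}$, or the remainder if smaller, whose encoding occupies only $\Oh(\lg M)$ bits, well below the window width, leaving room for the other streams.

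\textbf{Charging.} Each table step either consumes $\Omega(\lg M)$ genuine input bits from some stream, because the windows were maximal, or it ends a zero-run of some stream. A run of $M^{1/4}$ zeros in stream $j$ forces $\Omega(\lg M)$ bits of other streams, unless all streams are zero, in which case the skip case applies. So the number of table steps is $\Oh(1+\sum_j\absoluteenc{A_j}/\lg M)$. The residual bookkeeping for a zero-run token of $r$ bits is charged to those $r$ bits.

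Finally, the output encoding length is bounded by \cref{lem:zipped_encoding_size}, so the appending cost $\Oh(\absoluteenc{\zip}/\lg N)$ is within the claimed bound. Making this charging rigorous, in particular ensuring that every table step is either maximal or terminates some zero-run token, is the step I expect to require the most care.
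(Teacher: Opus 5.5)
Your route differs from the paper's. The paper never tabulates $t$ windows at once. It proves a two-stream version (\cref{lem:fast_pair_zip}), whose table is indexed by two $m$-bit windows plus an explicit pending-zero pair $(z_1,z_2)$, and whose entries are chosen to maximize progress. It then recurses on $t$: it zips $\zip(A_1,\dots,A_{t-1})$ with $A_t$ and flattens the nested symbols with a single-state transducer via \cref{lem:accelerate_single_stream}. Your direct table with $M=\Theta(N^{\eps/t})$ and residuals re-encoded into the windows avoids the recursion and the flattening step, and it is a workable design. The preprocessing bound and your handling of long tokens, skips and output buffering are fine.

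The gap is the step count, which is the whole content of the theorem; your charging, as stated, does not prove it. First, a single table step need not consume $\Omega(\lg M)$ bits. The minimizing stream's window may be a $2$-bit literal followed by a literal token of $\lg M$ bits, giving $a=1$ and possibly only $\Oh(1)$ bits consumed overall. Second, ``it ends a zero-run of some stream'' cannot be charged to anything. A stream such as $A_j=(1\,0\,0)^{n/3}$ has $\Theta(\absoluteenc{A_j})$ zero-run tokens of $\Oh(1)$ bits each, so counting such events only gives $\Oh(\sum_j\absoluteenc{A_j})$ steps and loses the $\lg M$ factor.

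The missing argument pairs consecutive steps; it is the analogue of the paper's claim that two consecutive lookups advance some stream by at least $m$ bits. In a table step, let $j^*$ attain $a=\min_j a_j$.
\begin{itemize}
\item If $j^*$'s window is synthetic, then $a=M^{1/4}$. Since this is not a skip, some stream with $r_{j'}=0$ covered $M^{1/4}$ positions with tokens lying entirely in its window. By \cref{obs:sparse_size_lower} these tokens have $\Omega(\lg M)$ bits. Your remark about runs of $M^{1/4}$ zeros is exactly this case.
\item Otherwise $j^*$ ends at a token boundary, and its next genuine token, of $b_\tau$ bits, did not fit. Writing $g$ for the genuine bits of $j^*$ consumed, this gives $\absolute{\senc{0^{r_{j^*}}}}+g+b_\tau>\lg M$. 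The cap gives $\absolute{\senc{0^{r_{j^*}}}}\le\frac12\lg M+2$.
\item In the second case $r_{j^*}=0$ afterwards, so the next step is not a skip. It is either a long-token step, which consumes more than $\lg M$ bits, or a table step with $a\ge 1$, which consumes all $b_\tau$ bits.
\end{itemize}
Hence, up to $\Oh(1)$ end effects, every two consecutive non-skip steps consume $\Omega(\lg M)$ fresh bits. Skip steps are never consecutive, since after a skip some stream sits at a literal.

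This also shows what the $M^{1/4}$ cap is really for. Windows are per stream, so it does not ``leave room for the other streams''. It guarantees that at least half of the minimizing stream's \emph{own} window holds genuine tokens.

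One correctness detail: after a truncated synthetic token, pad with an incomplete token instead of appending the stream's next genuine bits. Otherwise the following literal is decoded at the wrong position.
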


Finally, there is an obvious reduction from a transducer with $t$ input strings to a single-stream transducer that receives the zipped version of the $t$ input strings.
By applying \cref{lem:accelerate_single_stream} to this single-stream transducer, we obtain the following result.

\begin{restatable}{corollary}{restateacceleratemulti}\label{lem:accelerate_multi_stream}
	Consider a transducer over alphabet $[0\dd \sigma)$ with states $[0\dd q)$, where ${\sigma, q \in 2^{\Oh(w)}}$, and
	$t = \Oh(1)$ input streams.
    If the transition function can be evaluated in $\Oh(1)$ time, then, for any $N \in [2 \dd 2^w]$, after an $\Oh(qN)$-time preprocessing, the following holds.

    Let $S_1, \dots, S_{t}$ be input strings of common length $n \in 2^{\Oh(w)}$ for which the transducer produces output $T$. 
    Then $\senc{T}$ can be computed from $\senc{S_1}, \dots, \senc{S_{t}}$ in ${\Oh((\absoluteenc{T} + \sum_{i = 1}^t \absoluteenc{S_i}) / \lg N)}$ time.
\end{restatable}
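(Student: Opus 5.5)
The plan is the reduction announced just before the statement: zip the $t$ input encodings into one stream, then run a single-stream transducer on it. In the first step I apply \cref{lem:fast_multi_zip} to $\senc{S_1},\dots,\senc{S_t}$, which yields $Z:=\zip(S_1,\dots,S_t)$ in sparse encoding. This costs $\Oh(1+\sum_i\absoluteenc{S_i}/\lg N)$ time after $\Oh(N)$ preprocessing. By \cref{lem:zipped_encoding_size}, $\absoluteenc{Z}=\Oh(\sum_i\absoluteenc{S_i})$.

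Next I define a single-stream transducer with the same state set $[0\dd q)$ that reads $Z$. On input symbol $0$ it behaves as the original $\delta(s,0,\dots,0)$. On a nonzero symbol $z$ it first decodes $z=\senc{A_1[i]\cdots A_t[i]}$ into the $t$ symbols and then applies $\delta$.

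To apply \cref{lem:accelerate_single_stream}, I need each transition on input $z$ to take $\Oh(1+\lg(1+z)/\lg N)$ time. The symbol $z$ is a bitstring of $\Oh(\lg z)$ bits encoding only $t=\Oh(1)$ tokens, so by \cref{lem:decode_sparse_naive_host}(i) (or token-wise via \cref{lem:decode_elias}) it decodes in $\Oh(t+\lg z/\lg N)$ time, and $\delta$ is $\Oh(1)$. One detail is that $z$ must be interpreted as an integer whose bit length is $\Theta(|\senc{\cdot}|)$. Prefixing a leading $1$-bit (or relying on the leading literal/zero-run indicator convention) makes the encoding unambiguous. I would match whatever convention the definition of $\zip$ uses.

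The alphabet bound also needs checking. Each nonzero zipped symbol has $\Oh(t\lg\sigma)=\Oh(w)$ bits, so the zipped alphabet has size $2^{\Oh(w)}$ and the hypotheses hold. The preprocessing of \cref{lem:accelerate_single_stream} takes $\Oh(qN)$ time, and together with the $\Oh(N)$ preprocessing for zipping this gives $\Oh(qN)$.

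The theorem then outputs $\senc{T}$ from $\senc{Z}$ in $\Oh(1+(\absoluteenc{Z}+\absoluteenc{T})/\lg N)$ time. Substituting the size bound on $Z$ gives the claimed total. The additive $\Oh(1)$ is absorbed because $\absoluteenc{T}>2\floor{\lg n}$ by \cref{obs:sparse_size_lower} and $N\le 2^w$ with $n\in 2^{\Oh(w)}$. Strictly speaking this absorption needs $\lg n=\Omega(\lg N)$, so I would either keep the $1+$ term or note this convention.

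The main obstacle is the per-transition cost condition for large zipped symbols, i.e., that decoding a zipped literal is charged proportionally to its bit length. This is where I would carefully fix the integer interpretation of $\zip$ symbols.
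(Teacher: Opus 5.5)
Your proposal is correct and follows the paper's route exactly. The paper obtains this corollary from \cref{lem:fast_multi_zip} together with \cref{lem:accelerate_multi_stream_on_zipped}, and the latter is precisely your single-stream transducer that decodes each zipped symbol (\cref{lem:reduce_multi_to_single_nonsparse}), accelerated via \cref{lem:accelerate_single_stream} and bounded via \cref{lem:zipped_encoding_size}. Your caveat about the additive $\Oh(1)$ term is also well taken: the paper's intermediate \cref{lem:accelerate_multi_stream_on_zipped} keeps the $1+$ term, and the statement silently drops it.
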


\subsection{Faster Synchronizing Set Queries}
Using \cref{lem:accelerate_multi_stream}, we obtain sparsely encoded versions of the sets $\B_k$, of $\RUNS_{\tau, \floor{\tau/3}}(T)$ and $\RUNS_{2\tau, \floor{\tau/3}}(T)$, and finally of a $\tau$-synchronizing set. 
Ultimately, we derive the following.

\begin{restatable}{theorem}{restatefastersss}\label{thm:sss_sparse}
    A string $T\in [0\dd \sigma)^n$ can be preprocessed in $\Oh(n/\log_\sigma n)$ time so that, given $\tau \in [1\dd \floor{n/2}]$, a $\tau$-synchronizing set $\Sync$ of $T$ of size $|\Sync| < \frac{70n}{\tau}$ can be constructed in $\Oh(\frac{n \lg \tau}{\tau \lg n})$ time and $\Oh(\frac{n \lg \tau}{\tau})$ bits of space.
    The set is reported as $\senc{M}$ for $M\in \{0,1\}^n$ such that $M[i]=1 \Leftrightarrow i\in \Sync$.
    Moreover, $\per(T[i\dd i+2\tau))>\frac13\tau$ holds for every $i\in \Sync$.
\end{restatable}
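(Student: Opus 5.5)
We will mirror the bitmask construction of \cref{thm:ss-bitmask}, but replace every plain bitmask with its sparse encoding and every bitwise operation with a transducer run, accelerated via \cref{lem:accelerate_multi_stream}. Throughout, we fix $N=n^{\eps}$ for a small constant $\eps>0$. Then $\lg N=\Theta(\lg n)$, and all transducer preprocessing costs $\Oh(qN)\subseteq\Oh(n/\log_\sigma n)$ whenever $q=\Oh(1)$ or $q=n^{o(1)}$.

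If $\tau\le\log_\sigma n$, we can afford $\Oh(n/\log_\sigma n)$ query time. We use \cref{thm:ss-bitmask} and then encode the bitmask word-wise with a lookup table, as in \crefbullet{lem:encode_sparse_naive}. The target $\Oh(\frac{n\lg\tau}{\tau\lg n})$ exceeds $n/\log_\sigma n$ only by a constant factor in this regime. Otherwise ($\tau$ large), the encodings we handle should all have size $\Oh(\frac{n}{\tau}\lg\tau)$ bits by \cref{obs:sparse_size}, since each relevant set has $\Oh(n/\tau)$ elements.

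The plan is to produce four sparse encodings and combine them.
\begin{enumerate}
\item We produce $\senc{\B_{k(\tau)}}$ as a bitmask. For $k(\tau)>K$ the sets are small, with $\Oh(n/\lambda_k)$ elements. During preprocessing we store them explicitly, and at query time we convert the list into a sparse encoding in $\Oh(1+|\B_k|)$ time via \cref{lem:encode_list}.

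That is too slow by a $\lg n/\lg\tau$ factor. Instead, during preprocessing we store, for each $k>K$, the sparse encoding $\senc{\B_k}$ itself. The total over all $k$ is $\sum_k \Oh(\frac{n}{\lambda_k}\lg\lambda_k)$ bits, which is $\Oh(n/\log_\sigma n)$ words, so this fits the preprocessing budget. Since $\lambda_{k(\tau)}=\Theta(\tau)$, the query simply returns a pointer.

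\item Similarly, we precompute $\RUNS_{\ell,p}$ bitmasks. These depend on $\tau$, so we cannot store them for every $\tau$. Instead, we store run boundaries for dyadic classes: for each $j$, the runs with period in $(2^{j-1}\dd 2^j]$ and length $\ge 2^{j+1}$. From these lists we extract, per query, the relevant runs. A filtering transducer keeps runs with $\per\le\floor{\tau/3}$ and length $\ge\tau$ (or $\ge 2\tau$). It works on the sparse encodings of start-marker and end-marker sequences carrying periods and lengths as literal values, with time governed by \cref{lem:accelerate_single_stream}.

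By \cref{lem:runs-vs-r}, $\R_{\tau,\floor{\tau/3}}$ and $\R_{2\tau,\floor{\tau/3}}$ are unions of intervals determined by these runs. So we produce sparse encodings of interval-start and interval-end indicators and convert them to the interval bitmask with a two-state transducer. There is one caveat: the runs with period $\le\tau/3$ and length $\ge\tau$ number only $\Oh(n/\tau)$. This follows from \cref{fct:overlap}, since two such runs overlap in fewer than $2\tau/3$ positions. That count keeps the encodings small.

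\item We combine these masks exactly as in \cref{thm:ss-bitmask}, using $\Oh(1)$ AND/OR/NOT operations with shifts by $0$, $1$, $\tau-1$, and $\tau$. Each Boolean combination is a constant-state multi-stream transducer, so \cref{lem:accelerate_multi_stream} applies. A shift by $s$ is realised by prepending a zero-run of length $s$ and truncating, which only edits $\Oh(1)$ tokens at the ends.

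Complement, however, turns sparse masks into dense ones. So we never materialise $\overline{\R}$. Each transducer instead computes the whole formula for $\Sync$ at once on the streams $\B_{k(\tau)}$ shifted, the interval-start and interval-end indicators, and their shifts. The states track whether we are currently inside an interval of $\R_{\tau,\floor{\tau/3}}$ or $\R_{2\tau,\floor{\tau/3}}$. Every stream and the output then have $\Oh(n/\tau)$ ones, hence $\Oh(\frac n\tau\lg\tau)$ bits, and the total time is $\Oh(\frac{n\lg\tau}{\tau\lg n})$. Correctness, the size bound, and the property $\per(T[i\dd i+2\tau))>\frac13\tau$ follow from \cref{prp:sync}.
\end{enumerate}

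The main obstacle is step 2. We must obtain the $\tau$-dependent run sets in time sublinear in their cardinality, so we cannot enumerate via \cref{prp:runs-explicit}. We would first try precomputing, for each dyadic scale, sparse encodings of marker strings over runs sorted by start, storing period and length as literals. A per-query single-stream filtering transducer with $\Oh(1)$ states, whose transitions compare a literal against $\tau$ in time proportional to its encoding length, then satisfies the hypothesis of \cref{lem:accelerate_single_stream}. Two points need care. Runs of one scale class can overlap, so overlapping interval indicators must be merged, for which a counter state suffices since overlap depth is $\Oh(1)$ by \cref{fct:overlap}. And the threshold $\tau$ must be baked in without per-$\tau$ table preprocessing; here we would encode comparisons relative to the class scale so that the transducer is independent of $\tau$.
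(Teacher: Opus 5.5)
Your combining step (step 3) is essentially the paper's final two-state transducer, which tracks membership in intervals of $\mathsf{RUNS}_{2\tau,\floor{\tau/3}}$ and reads start/end markers of $\tau$-runs. The problem is that its inputs cannot be produced within the bound as you describe. The first gap is small $\tau$. For $\tau\le\log_\sigma n$, the ratio of $n/\log_\sigma n$ to the target $\frac{n\lg\tau}{\tau\lg n}$ is $\frac{\tau\lg\sigma}{\lg\tau}$, which is unbounded: for $\sigma=2$ and $\tau=\lg n$ it is $\lg n/\lg\lg n$. So falling back to \cref{thm:ss-bitmask} is too slow, and you also need $\senc{\B_k}$ precomputed for every $k\le K$. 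Neither obvious route fits the $\Oh(n/\log_\sigma n)$ preprocessing budget. Running \cref{lem:bk-bitmask} once per level costs a factor $K=\Theta(\lg\log_\sigma n)$ too much. Explicit lists are impossible, since $|\B_k|$ can be $\Theta(n/\lambda_k)\gg n/\log_\sigma n$.

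The missing idea is to store all levels in one array $\mathcal B_0[i]=\max(\{k+1 : i\in\B_k\}\cup\{0\})$. Its $\Oh(n)$-bit sparse encoding is written block by block from a lookup table indexed by text substrings of length $\Theta(\alpha_K)$. The table uses the boundary-context oracles $\tB_1,\dots,\tB_K$ of \cref{lem:ck}, and a two-stream transducer merges in the explicitly computed levels above $K$. Then the one-state transducer $x\mapsto\max(0,x-1)$ turns $\senc{\mathcal B_j}$ into $\senc{\mathcal B_{j+1}}$, whose positive entries are exactly $\B_{j+1}$. Each step takes $\Oh(\frac{n\lg(1+\lambda_j)}{\lambda_j\lg n})$ time, and these costs sum to $\Oh(n/\lg n)$. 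Runs of period below $\Theta(\log_\sigma n)$ raise the same issue, since there can be $\Theta(n)$ of them. The paper handles them analogously: a text-block lookup table writes an all-literal array of (period, truncated length) lists, which length-filtering transducers then thin out.

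The second gap is in your run classes: grouping by period does not bound the per-query work. A $\tau$-run can have period $1$, so every query must filter your class $j=0$. That class may contain $\Theta(n)$ runs, for example all period-$1$ runs of length $2$ in $(\mathtt{aab})^{n/3}$. Its encoding has $\Theta(n)$ bits, so filtering costs $\Theta(n/\lg n)$ for every $\tau$. The fact that only $\Oh(n/\tau)$ runs survive the filter does not help.

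The classes must instead be indexed by a lower bound on the length, with a proportional cap on the period. The paper uses $\mathsf{RUNS}_{\floor{1.1^j},\floor{0.4\cdot 1.1^j}}(T)$. This set contains $\mathsf{RUNS}_{\ell,\floor{\tau/3}}(T)$ for all $\tau\in[\floor{1.1^j}\dd\floor{1.1^{j+1}})$ and $\ell\ge\tau$, yet has only $\Oh(n/1.1^j)$ elements. Its period and length arrays, indexed by start position, therefore have $\Oh(nj/1.1^j)$-bit encodings. They can be built with \cref{prp:runs-explicit} in $\Oh(n/1.1^j)$ time, which sums to $\Oh(n/\log_\sigma n)$ over all $j$ with $1.1^j=\Omega(\log_\sigma n)$.

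Finally, you need not make the filtering transducer independent of $\tau$. For $\tau\le\sqrt n/\lg n$, redoing the $\Oh(\sqrt n)$ transducer preprocessing at query time fits the budget. For larger $\tau$, $\lg\tau=\Theta(\lg n)$, so \cref{thm:ss-explicit} followed by \cref{lem:encode_list} suffices.
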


\subsection{Adding Rank and Select Support}\label{sec:rankselect}

\cref{thm:sss_sparse} returns the synchronizing set in sparse encoding, which by itself does not allow fast random access. Hence, we develop support data structures that augment the encoding for fast rank and select queries.
For any set $S \subseteq \mathbb Z$ and $x \in \mathbb Z_{\geq 0}$, we define $\rank_{S}(x) = \absolute{\{ y \in S \mid y < x\}}$ and $\pred_{S}(x) = \max(\{ y \in S \mid y \leq x\} \cup \{ -\infty \})$.
For every $i \in [1\dd \absolute{S}]$, we define $\select_{S}(i) = \max\{ y \in S \mid \rank_{S}(y) < i\}$.
For $S \subseteq [0\dd U)$ and its characteristic bitmask $A[0\dd U)$, we may equivalently use subscript $A$ rather than $S$.

We can use a precomputed lookup table to greedily parse any encoding into pieces of size around $\lg N$ bits, and then use another table to answer rank and select queries with respect to any piece in constant time. This is formalized below.

\begin{restatable}{lemma}{restatedecomposeencoding}\label{lem:decompose_for_rank_select}
    For every $N \in [5\dd2^w]$ and every bitmask $A[0\dd n)$, there is a sequence $(i, p_i, e_i, r_i)_{i = 0}^{h}$ with $h = \Oh(\absolute{\senc{A}} / \lg N)$ satisfying the following properties. The entries are defined by $p_0 = e_0 = 0$, $p_h = n$, $e_h = \absolute{\senc{A}}$, and, for $i \in [0\dd h)$,
    \smallskip
    \begin{itemize}
        \item $\senc{A}[e_i\dd e_{i + 1}) = \senc{A[p_i \dd p_{i + 1})}$ and $r_i = \rank_A(p_i)$, and
                \item either $A[p_i \dd p_{i + 1})$ is all-zero, or $r_{i + 1} - r_{i} \leq e_{i + 1} - e_{i} \leq \lg N$.
    \end{itemize}
    \smallskip
     After a $\tilde\Oh(N)$ time preprocessing, the following holds.
     Given $\senc{A}$, the sequence can be computed in $\Oh(\absolute{\senc{A}} / \lg N)$ time and $\Oh(\absolute{\senc{A}} \cdot \lg n/ \lg N)$ bits of space. In the same time and space, we can compute a data structure that,
    \smallskip
    \begin{itemize}
        \item given $i \in [0\dd h]$ and $j \in [p_i\dd p_{i + 1})$, returns $\rank_A(j)$ in constant time, and
        \item given $i \in [0\dd h]$ and $j \in (r_i\dd r_{i + 1}]$, returns $\select_A(j)$ in constant time.
    \end{itemize}
    
\end{restatable}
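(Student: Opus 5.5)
We will construct the decomposition greedily, one piece at a time, choosing each piece as long as possible. Set $b := \floor{\lg N / 2}$; any $b$-bit window can index a table of size $\Oh(\sqrt{N})$. Suppose a boundary $(p_i, e_i, r_i)$ has been fixed, with $e_i$ a token boundary of $\senc{A}$ by induction.

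We read the next token of $\senc{A}$ starting at bit $e_i$ and split into two cases. \emph{Zero-run token of length more than $b$ bits.} Let it encode $\mathtt0^y$. We make it a single all-zero piece: $p_{i+1} = p_i + y$, $e_{i+1}$ is the end of the token, and $r_{i+1} = r_i$. The token is decoded with \cref{lem:decode_elias} in $\Oh(1 + (\text{token length})/\lg N)$ time, which we charge to its bits. \emph{Otherwise.} We look up the $b$-bit window $\senc{A}[e_i \dd e_i+b)$, padded with an incomplete token near the end. A table precomputed in $\tilde\Oh(N)$ time returns the longest prefix of the window that is a concatenation of whole tokens, together with the decoded length and the number of ones. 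The window contains no literal $u \ge 2$, since $A$ is a bitmask, so every literal token is the 2-bit token $\mathtt1\gamma(1)$ and the count of ones is at most half the prefix length. This gives $r_{i+1}-r_i \le e_{i+1}-e_i \le b \le \lg N$.

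Splitting at token boundaries is almost what we need, but there is one subtlety: a zero run cut at a piece boundary would be encoded differently. This cannot happen here, because pieces end at token boundaries and every token is a maximal zero run or a literal. Hence $\senc{A}[e_i\dd e_{i+1}) = \senc{A[p_i\dd p_{i+1})}$, as required.

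For the bound on $h$, note that a short-window step takes a maximal prefix. Therefore either it consumes more than $b/2$ bits, or the next token does not fit within the remaining window. In the latter case the next step either consumes a long token or shares its window with this one, and greedy maximality forbids the shared case. So every two consecutive steps advance by $\Omega(\lg N)$ bits, which gives $h = \Oh(\absolute{\senc{A}}/\lg N)$. Each step takes $\Oh(1)$ time plus the charged decoding cost, so the total time is $\Oh(\absolute{\senc{A}}/\lg N)$ (using $\absolute{\senc{A}} > 2\floor{\lg n}$ for the additive constant). Storing the tuples takes $\Oh(h\lg n)$ bits.

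For queries we store the arrays $p, e, r$. \emph{Rank} for $j \in [p_i\dd p_{i+1})$: if the piece is all-zero, the answer is $r_i$. Otherwise we read the piece's at most $\lg N$ bits, split into $\Oh(1)$ windows of $b$ bits, and look up a table indexed by (window, offset $j - p_i$) that returns the number of ones before that offset. Since the piece encodes at most $\Oh(\lg N)$ positions by \cref{obs:sparse_size_lower}, the offset fits in $\Oh(\lg\lg N)$ bits, and the table stays of size $\tilde\Oh(N)$ if we use windows of $b/2$ bits. For $\Oh(1)$ windows we also store per-window prefix counts of decoded positions and ones, obtained during the same lookup. \emph{Select} for $j \in (r_i\dd r_{i+1}]$ is symmetric: a table indexed by (window, $j-r_i$) returns the decoded position of that one, and we add $p_i$.

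The main obstacle is the token straddling a window boundary inside a piece. We handle it by decomposing each piece into windows at token boundaries as well, which is the same greedy parse applied at a finer granularity, and storing those $\Oh(1)$ split points per piece.
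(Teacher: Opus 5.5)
Your construction of the sequence matches the paper's. Both take, via a precomputed table, the longest prefix of a window that is a sparse encoding. Both split off long zero-run tokens as all-zero pieces decoded with \cref{lem:decode_elias}. Both cut only at token boundaries, so maximal zero runs are preserved and $\senc{A}[e_i\dd e_{i+1})=\senc{A[p_i\dd p_{i+1})}$. Both bound $h$ by showing that two consecutive steps advance by more than one window.

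The difference is in the in-piece queries. The paper looks up the piece, of at most $\ceil{\lg N}$ bits, in the table of \cref{lem:sparse_longest_prefix_in_word}, which stores the whole rank and select arrays of the decoded window. You instead use $b=\floor{\lg N/2}$-bit windows and tables indexed by window and offset. Your choice has a real advantage. A $c$-bit piece can span $\Theta(2^{c/2})$ positions, so for $\ceil{\lg N}$-bit windows the rank array can have $\Theta(\sqrt N)$ entries. It then does not fit in the $k$ words assumed in the proof of \cref{lem:sparse_longest_prefix_in_word}: the bound $a<k$ stated there fails once the window contains a long zero-run token. Offset-indexed tables over half-length windows, by contrast, have only $\Oh(2^{3b/2})=\Oh(N^{3/4})$ entries.

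However, your own sizing argument makes the same mistake: a piece does not encode $\Oh(\lg N)$ positions. \cref{obs:sparse_size_lower} only gives fewer than $2^{(c+1)/2}$ positions for a $c$-bit encoding. For example, $\senc{0^x\,1}=\senc{0^x}\cdot\mathtt{11}$ with $x=2^{\floor{b/2}-2}$ is a non-all-zero piece of $2\floor{b/2}\le b$ bits. It spans $x+1=\Theta(2^{b/2})$ positions. So the offset $j-p_i$ needs about $b/2$ bits, not $\Oh(\lg\lg N)$, and a table built only for short offsets answers rank queries inside such a zero run incorrectly. The fix is to cover all offsets below $2^{(b+1)/2}$. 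This yields the $\Oh(N^{3/4})$ bound above, so no further halving of windows is needed.

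Two smaller points. For $5\le N<16$ you get $b=1$; since every token has at least two bits, the lookup returns the empty prefix on a literal token and the parse stalls. Take $b\ge 2$, which is compatible with $e_{i+1}-e_i\le\lg N$ because $\lg 5>2$; this is exactly why the statement assumes $N\ge 5$. Also, the straddling issue in your last paragraph never arises. Every non-all-zero piece you create has at most $b$ bits, so a single padded lookup answers rank and select within it.
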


Now, to answer a query $\select_A(j)$, we only have to find the unique tuple of the decomposition with $r_i < j \leq r_{i + 1}$. This can be done via two auxiliary bitmasks of length $\absolute{\senc{A}}$ that respectively mark the starting positions of literal tokens in $\senc{A}$ and the positions $e_i$ of all the tuples. Then, we can find the correct tuples via rank and select queries on the auxiliary bitmasks. Crucially,
we can use existing data structures for constant time rank and select support on these bitmasks \cite{DBLP:conf/soda/BabenkoGKS15}.

\begin{restatable}{lemma}{restatelemmasparseselect}\label{lem:sparse_select}
    For every $N \in [5\dd2^w]$, after a $\tilde\Oh(N)$ time preprocessing, the following holds.
    Given a sparse encoding $\senc{A}$ of a bitmask $A[0\dd n)$ with $n \in 2^{\Oh(w)}$ and $\absolute{\senc{A}} = \Oh(\poly(N))$, we can compute a data structure for $\Oh(1)$ time select queries in $\Oh(\absolute{\senc{A}} / \lg N)$ time and $\Oh(\absolute{\senc{A}}\cdot  ( 1 +\lg n / \lg N))$ bits of space.
\end{restatable}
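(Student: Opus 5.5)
We build on the decomposition of \cref{lem:decompose_for_rank_select}. First, we compute the sequence $(i,p_i,e_i,r_i)_{i=0}^{h}$ with $h=\Oh(\absolute{\senc{A}}/\lg N)$, together with its constant-time in-piece select structure. This takes $\Oh(\absolute{\senc{A}}/\lg N)$ time and $\Oh(\absolute{\senc{A}}\lg n/\lg N)$ bits. A query $\select_A(j)$ then reduces to finding the unique piece index $i$ with $r_i<j\le r_{i+1}$. Pieces that are all-zero satisfy $r_i=r_{i+1}$, so they never contain the answer. We therefore only have to locate the non-zero piece that contains the $j$-th one.

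To find $i$, we use two auxiliary bitmasks of length $\absolute{\senc{A}}$. The bitmask $P$ marks the bit positions $e_0,\dots,e_{h-1}$ where pieces start. The bitmask $Q$ marks the starting positions of literal tokens in $\senc{A}$; since $A$ is a bitmask, every literal is the token $\mathtt{1}\cdot\gamma(1)=\mathtt{11}$. The $j$-th one of $A$ corresponds to the $j$-th literal token, which starts at bit $q=\select_Q(j)$. The desired piece index is then $i=\rank_P(q+1)-1$, the last piece starting at or before $q$. Correctness follows because pieces are unions of consecutive tokens, by $\senc{A}[e_i\dd e_{i+1})=\senc{A[p_i\dd p_{i+1})}$. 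In particular, no token straddles a boundary $e_i$, and the $j$-th literal lies in piece $i$ exactly when $r_i<j\le r_{i+1}$. With $i$ in hand, the in-piece structure returns $\select_A(j)$ in $\Oh(1)$ time.

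Constructing $P$ is straightforward: we start from an all-zero bitmask and set $h=\Oh(\absolute{\senc{A}}/\lg N)$ bits. Constructing $Q$ word-wise is the step I expect to be the main obstacle, since we cannot afford a scan of one token at a time. My plan is to derive $Q$ piece by piece from a lookup table. Each non-zero piece has encoding length at most $\lg N$, so a table indexed by $\Oh(\lg N)$-bit strings can store the literal-start bitmask of that piece. Such a table has $\tilde\Oh(N)$ entries (or $N^{\eps}$ after halving the window), fitting the preprocessing budget. An all-zero piece consists of a single zero-run token and contributes only zeros to $Q$. We write each piece's bitmask into $Q$ at offset $e_i$ using shifts in $\Oh(1+(e_{i+1}-e_i)/w)$ time. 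Over all pieces this totals $\Oh(h+\absolute{\senc{A}}/w)=\Oh(\absolute{\senc{A}}/\lg N)$.

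Finally, we build constant-time rank/select structures on $P$ and $Q$ using \cite{DBLP:conf/soda/BabenkoGKS15}. These take $o(\absolute{\senc{A}})$ extra bits and can be built in $\Oh(\absolute{\senc{A}}/w)$ time from the word-packed bitmasks, because that construction runs in time linear in the number of words. The assumption $\absolute{\senc{A}}=\Oh(\poly(N))$ guarantees that indices fit in $\Oh(\lg N)$ bits, so those structures' lookup tables are covered by the $\tilde\Oh(N)$ preprocessing. Summing up, the bits of space are $\Oh(\absolute{\senc{A}})$ for $P$, $Q$ and their support structures, plus $\Oh(\absolute{\senc{A}}\lg n/\lg N)$ for the decomposition. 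This matches the claimed $\Oh(\absolute{\senc{A}}(1+\lg n/\lg N))$ bits and $\Oh(\absolute{\senc{A}}/\lg N)$ construction time.
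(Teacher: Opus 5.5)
Your proposal is correct and follows the paper's proof essentially step for step. Both use two auxiliary bitmasks, one marking the piece starts $e_i$ and one marking the literal-token starts; the latter is filled in $\Oh(1)$ time per short piece via a table lookup (the paper takes the mask $B'$ from \cref{lem:sparse_longest_prefix_in_word}). Both add rank/select support from \cite{DBLP:conf/soda/BabenkoGKS15} and answer a query by a select, a rank, and an in-piece select.

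One correction: building the rank/select support with simulated $\Theta(\lg N)$-bit words is what actually yields $\Oh(\absolute{\senc{A}}/\lg N)$ time, not the $\Oh(\absolute{\senc{A}}/w)$ you claim, though the bound you need still holds. Your index formula $i=\rank_P(q+1)-1$ is the careful version of the paper's $\rank_B(\select_C(j))$: under the convention $\rank_S(x)=\absolute{\{y\in S \mid y<x\}}$, the paper's formula returns $i+1$ whenever the $j$-th literal token starts strictly after $e_i$.
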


For implementing rank support, we rely on an improved version of van Emde Boas trees~\cite{DBLP:conf/focs/Boas75} given in the \lcnamecref{lem:vEB_final_space_and_time} below. It is well known that the claimed complexities can be achieved with \emph{expected} construction time and space \cite{DBLP:conf/stoc/PatrascuT06,DBLP:journals/corr/abs-cs-0603043}. Our deterministic solution can be obtained by essentially replacing the hash tables of van Emde Boas trees with deterministic dictionaries \cite{DBLP:conf/icalp/Ruzic08}. For completeness, we give a detailed description in \cref{sec:vEB}.

\restatevEB*

We construct this data structure for the sequence $p_0, \dots, p_h$ from \cref{lem:decompose_for_rank_select}, simulating words of width $w = \Theta(\lg n)$. This way, for a query $\rank_A(j)$, we can find the unique $i$ such that $j \in [p_{i} \dd p_{i + 1})$, allowing us to answer the query with \cref{lem:decompose_for_rank_select}.

\begin{restatable}{lemma}{restatesparserank}\label{lem:sparse_rank}
    For every $N \in [5\dd2^w]$, after a $\tilde\Oh(N)$ time preprocessing, the following holds.
    Given a sparse encoding $\senc{A}$ of a bitmask $A[0\dd n)$ with $n \in 2^{\Oh(w)}$ and a parameter $m \geq \absolute{\senc{A}} / \lg {N}$, we can compute a data structure for $\Oh(\lg \frac{\lg n - \lg m}{\lg \lg n})$ time rank and predecessor queries in $\Oh(m)$ time and $\Oh(m  \lg n)$ bits of space.
\end{restatable}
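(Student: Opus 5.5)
We plan to combine \cref{lem:decompose_for_rank_select} with the van Emde Boas structure of \cref{lem:vEB_final_space_and_time}. After the $\tilde\Oh(N)$-time preprocessing of \cref{lem:decompose_for_rank_select}, we compute from $\senc{A}$ the sequence $(i,p_i,e_i,r_i)_{i=0}^h$ with $h=\Oh(\absolute{\senc{A}}/\lg N)=\Oh(m)$, together with the auxiliary structure for rank within a piece. This takes $\Oh(\absolute{\senc{A}}/\lg N)\subseteq\Oh(m)$ time and $\Oh(\absolute{\senc{A}}\lg n/\lg N)\subseteq \Oh(m\lg n)$ bits.

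A query $\rank_A(j)$ then reduces to finding the unique $i$ with $j\in[p_i\dd p_{i+1})$, i.e.\ $\pred_S(j)$ for $S=\{p_0,\dots,p_{h-1}\}$, after which the constant-time piece query returns $\rank_A(j)$. Predecessor queries on $A$ itself reduce the same way. If $\rank_A(j+1)=r_i$ holds and the piece contains no $1$ up to $j$, we need $\select_A(\rank_A(j+1))$, which lies in an earlier piece. To handle this, we additionally store, for each $i$, the last set position before $p_i$; these values are computable in one left-to-right pass in $\Oh(h)$ time. This also resolves pieces that are all-zero.

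For the predecessor structure on $S$, the $p_i$ are distinct for nonempty pieces, so we take only those. We then apply \cref{lem:vEB_final_space_and_time} with universe $2^\ell$, $\ell=\ceil{\lg (n+1)}$, set size $h'\le h+1$, and parameter $m':=\max(m,h')$, which is $\Theta(m)$ up to the trivial adjustment $m\ge h'$ via the constant hidden in $h$. The words have width $\Theta(\lg n)$, simulated if necessary; this is fine since $n\in 2^{\Oh(w)}$. Construction takes $\Oh(m)$ time and $\Oh(m)$ words, which is $\Oh(m\lg n)$ bits.

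The query time is $\Oh(\lg\frac{\ell-\lg m}{a})$ with $a=\lg(m/h')+\lg w\ge \lg w=\Theta(\lg\lg n)$. Since $\ell=\lg n+\Oh(1)$, this is $\Oh(\lg\frac{\lg n-\lg m}{\lg\lg n})$, where we use the convention $\lg x=\log_2\max(1,x)$ so that the additive constant is absorbed. The main subtlety is the case where $m$ is tiny or $\ell\le \lg m$; the bound then degenerates to $\Oh(1)$ and is consistent with the convention. A second subtlety is making sure that ``words'' in \cref{lem:vEB_final_space_and_time} means $\Theta(\lg n)$-bit words for the space accounting, so that $\Oh(m)$ words is indeed $\Oh(m\lg n)$ bits.
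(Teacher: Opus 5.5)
Your proposal is correct and follows the paper's proof. You build the decomposition of \cref{lem:decompose_for_rank_select}, insert the piece starts $p_i$ into the deterministic van Emde Boas structure of \cref{lem:vEB_final_space_and_time} with simulated $\Theta(\lg n)$-bit words and a space parameter of order $m$ (the paper uses $m+h+1$), locate the piece containing $j$, and finish in constant time within that piece. Your explicit handling of predecessor queries on $A$, by storing the last set position before each $p_i$, is a valid detail that the paper's proof leaves implicit.
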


Finally, by combining \cref{thm:sss_sparse} with \cref{lem:sparse_select} and \cref{lem:sparse_rank} with parameter $m = \Theta(\frac{n \lg \tau}{\tau \lg n})$, we obtain the following main result.

\begin{restatable}{corollary}{restatesparsewithsupport}\label{thm:sss_sparse_with_support}
    A string $T\in [0\dd \sigma)^n$ can be preprocessed in $\Oh(n/\log_\sigma n)$ time so that, given $\tau \in [1\dd \floor{n/2}]$, a $\tau$-synchronizing set $\Sync$ of $T$ of size $|\Sync| < \frac{70n}{\tau}$ can be constructed in $\Oh(\frac{n \lg \tau}{\tau \lg n})$ time.
    The set is reported in a representation of size $\Oh(\frac{n \lg \tau}{\tau})$ bits that supports select queries in constant time, and rank queries in $\Oh(\lg\frac{\lg \tau}{\lg \lg n})$ time.
\end{restatable}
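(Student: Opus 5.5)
The plan is to compose \cref{thm:sss_sparse} with \cref{lem:sparse_select} and \cref{lem:sparse_rank}, using one global choice $N = \Theta(n^{\eps})$ for a small constant $\eps>0$. With this choice the $\tilde\Oh(N)$ preprocessing of both lemmas is $o(n/\lg n)$ and can be absorbed into the $\Oh(n/\log_\sigma n)$ preprocessing of \cref{thm:sss_sparse}. It also gives $\lg N = \Theta(\lg n)$ and $w=\Theta(\lg n)$. At query time, given $\tau$, I first run \cref{thm:sss_sparse} to obtain $\senc{M}$ for the characteristic bitmask $M$ of $\Sync$. This takes $\Oh(\frac{n\lg\tau}{\tau\lg n})$ time. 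By \cref{obs:sparse_size}, and because $M$ is a bitmask with $a=|\Sync|<70n/\tau$ ones, the encoding has length $|\senc{M}| = \Oh(\frac{n}{\tau}\lg\tau + \lg n)$ bits.

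I then build the select structure of \cref{lem:sparse_select} on $\senc{M}$. It requires $|\senc{M}|=\Oh(\poly(N))$, which holds since $|\senc{M}|=\Oh(n)$. Its construction time is $\Oh(|\senc{M}|/\lg N)=\Oh(1+\frac{n\lg\tau}{\tau\lg n})$, and it uses $\Oh(|\senc{M}|)$ bits because $\lg n/\lg N=\Oh(1)$. For rank, I apply \cref{lem:sparse_rank} with $m = \max(1,\lceil |\senc{M}|/\lg N\rceil)$, so that $m=\Theta(\frac{n\lg\tau}{\tau\lg n})$ up to additive constants. This costs $\Oh(m)$ time and $\Oh(m\lg n)=\Oh(\frac{n\lg\tau}{\tau})$ bits. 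The total representation, namely the encoding together with both support structures, therefore has $\Oh(\frac{n\lg\tau}{\tau})$ bits, and the size bound $|\Sync|<70n/\tau$ is inherited from \cref{thm:sss_sparse}.

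The remaining step is the query-time calculation. \Cref{lem:sparse_rank} gives $\Oh(\lg\frac{\lg n-\lg m}{\lg\lg n})$, and I need to rewrite this as $\Oh(\lg\frac{\lg\tau}{\lg\lg n})$. In the main regime, $\lg m = \lg n - \lg\tau + \lg\lg\tau - \lg\lg n \pm \Oh(1)$, so $\lg n-\lg m \le \lg\tau+\lg\lg n+\Oh(1)$. Dividing by $\lg\lg n$ gives at most $\frac{\lg\tau}{\lg\lg n}+\Oh(1)$, and taking logarithms (with $\lg$ clipped at $\max(1,\cdot)$) yields the claim. I expect this step to be the main obstacle, because of the boundary cases. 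When $\tau$ is close to $n$, $m$ is a constant, so $\lg n - \lg m \approx \lg n\approx\lg\tau$; this is still fine. When $\tau=\Oh(1)$, the bound degenerates to constant time; this is also fine. I will also check that the additive $\lg n$ term in $|\senc{M}|$ (coming from \cref{obs:sparse_size_lower}) only adds $\Oh(1)$ to the time when divided by $\lg N$, and that the additive $\Oh(1)$ terms everywhere are dominated since $\frac{n\lg\tau}{\tau\lg n}$ may be below $1$ only if we interpret time bounds as $\Oh(1+\cdot)$, consistent with the paper's conventions.
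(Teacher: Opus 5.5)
Your route is the paper's: run \cref{thm:sss_sparse}, then attach the structures of \cref{lem:sparse_select,lem:sparse_rank}. Choosing $N=n^{\eps}$ instead of $N=\Theta(\sqrt n)$ makes no difference. The gap is in your choice of the space parameter for rank support. You set $m=\max(1,\ceil{\absolute{\senc{M}}/\lg N})$ and then assert $\lg m = \lg n-\lg\tau+\lg\lg\tau-\lg\lg n\pm\Oh(1)$. That needs the lower bound $\absolute{\senc{M}}=\Omega(\frac{n}{\tau}\lg\tau)$. However, \cref{obs:sparse_size} together with $\absolute{\Sync}<70n/\tau$ gives only an upper bound. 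The density condition lets $\Sync$ be empty throughout periodic regions, so the output can be arbitrarily sparse. For example, if $T=\texttt{a}^n$ and $\tau\ge 3$, then $\Sync=\emptyset$, $\absolute{\senc{M}}=2\floor{\lg n}+2$, and your $m$ is $\Oh(1)$. Then \cref{lem:sparse_rank} only guarantees $\Oh(\lg\frac{\lg n}{\lg\lg n})$ rank time, whereas the claim for, say, $\tau=\lg^2 n$ is $\Oh(1)$. A text that is periodic except for about $\sqrt n$ short aperiodic blocks fails the same way with a nontrivial set: there $m=\Theta(\sqrt n)$ up to logarithmic factors and $\lg n-\lg m\approx\frac12\lg n$. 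The boundary cases you flagged are not the issue, and $\tau=\Oh(1)$ with a periodic $T$ is in fact also affected. The issue is that the bound of \cref{lem:sparse_rank} worsens as $m$ shrinks, and you tied $m$ to the actual output size rather than to the time budget.

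The fix is the paper's choice. \Cref{lem:sparse_rank} only requires $m\ge\absolute{\senc{M}}/\lg N$, and you may spend $\Oh(\frac{n\lg\tau}{\tau\lg n})$ time anyway. So take $m=\ceil{\absolute{\senc{M}}/\lg N}+\ceil{\frac{n\lg\tau}{\tau\lg n}}$. Because $\absolute{\senc{M}}=\Oh(\frac{n}{\tau}\lg\tau+\lg n)$, this is still $\Oh(1+\frac{n\lg\tau}{\tau\lg n})$. Construction time and the $\Oh(m\lg n)=\Oh(\frac{n\lg\tau}{\tau})$-bit space are therefore unchanged. Now $\lg n-\lg m\le\lg\tau+\lg\lg n-\lg\lg\tau$ holds for every text, and your final computation goes through. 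It gives $\Oh(1)$ for $\tau=\lg^{\Oh(1)}n$ and $\Oh(\lg\frac{\lg\tau}{\lg\lg n})$ otherwise.
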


\appendix
\crefalias{subsection}{appendix}

\section{Parsing Sparse Encodings}\label{appendix:parsing}

Recall that the representation of the constructed $\tau$-synchronizing set is its bitmask in sparse encoding. In this section, we provide algorithms for efficiently constructing, decoding, and processing sparse encodings. For convenience, we repeat the definition and some key properties of sparse encodings below.

\restatedefgamma*

\restatedefsparse*

\restateexamplesparse*

\restateobssparsesize*

\restateobssparsesizelower*

\bigskip 
If a string $A$ is a prefix of another string $A'$, then the sparse encoding of $A$ is also a prefix of the sparse encoding of $A'$, unless $A$ ends with a zero-run that can be extended further in $A'$. In the latter case, the zero-run is encoded differently in $A$ and $A'$. Since the set of Elias-$\gamma$ codes (and hence the set of possible tokens) is prefix-free, this implies that the encoding of $A$ cannot be a prefix of the encoding of $A'$. 

\restateEncPrefOfOther*

\subsection{Basic Tools}

We start with auxiliary results for encoding and decoding the compressed representation.
Naive algorithms for this task, broadly speaking, require time linear in the number of encoded or decoded bits. 
Our more advanced algorithms are a factor $\lg N$ faster than the naive ones, at the cost of requiring an $\Oh(N)$ time preprocessing, for a parameter $N$ of our choice. 

\label{sec:sec_that_contains_elias}
\restateeliasencodedecode*

\begin{proof}
	Let $X[0\dd u)$ be the $u$-bit binary representation of $x$ (where $X[0]$ is the most significant bit). 
	Let $z_B$ and $z_X$ be the respective number of leading zeros in $B$ and $X$. 
		If $z_X$ (resp.\ $z_B$) is known, then $B = \mathtt{0}^{u - z_X - 1} \cdot X[z_X\dd u)$ (resp.\ $X = \mathtt{0}^{u - z_B - 1} \cdot B[z_B\dd 2z_B]$) can be computed in $\Oh(1)$ time. It remains to compute $z_B$ and $z_X$.

    We precompute $h = \ceil{(\lg N) / 2 }$.
    For computing $z_B$, we start with $\ell = 0$. As long as $B[0\dd \ell h + h)$ is all-zero, we increment $\ell$. For computing $z_X$, we instead start with $\ell = \floor{u /h}$ and, as long as $X[0\dd \ell h)$ is not all-zero, decrement $\ell$.
    In either case, this takes $\Oh(1 + \lg x / \lg N)$ time and reveals the $h$-bit block of $B$ or $X$ that contains the leftmost one-bit. The exact position of the bit can be isolated using a standard lookup table for bitmasks of length $h$, which can be computed in $\Oh(2^h) \subset \Oh(N)$ time.
\end{proof}

\restateEncodeDecodeDS*

\begin{proof}
    For (\ref{lem:encode_sparse_naive_bullet}), we either scan $A$ from left to right and encode each token separately, or we scan $\senc{A}$ from left to right and decode each token separately. 
	Computing the $u$-bit representation of $x > 0$ from $\gamma(x)$ or vice versa takes $\Oh(1 + \lg x / \lg N)$ time with \cref{lem:encode_elias,lem:decode_elias}. 	
	The term $\Oh(\lg x / \lg N)$ amortizes to $\Oh(1 / \lg N)$ per encoded or decoded bit of $\senc{A}$. 
	Hence, it sums to $\Oh(\absolute{\senc{A}} / \lg N)$.
	Whenever we encode or decode a run $0^x$, we need additional $\Oh(x)$ time to read or write $0^x$, which sums to $\Oh(n)$.

    For (\ref{lem:encode_ds_bullet}), we copy $A$ and then start running the algorithm from (\ref{lem:encode_sparse_naive_bullet}). We split the computation into $\Oh(n)$ time when $A$ is given, and (if necessary) $\Oh(1 + \absolute{\senc{A}} / \lg N)$ query time. Hence, the data structure consists of the copy of $A$ and the state of the algorithm after the initial $\Oh(n)$ computation time. At query time, we merely finish running the algorithm.
\end{proof}

\begin{lemma}\label{lem:encode_sparse_naive}\label{lem:decode_sparse_naive}
	For every $N \in [2\dd 2^w]$, after an $\Oh(N)$-time preprocessing, the following holds for every $u \in \Oh(w)$.
    Let $A\in[0\dd 2^u)^n$ with $n \in 2^{\Oh(w)}$. Computing the $u$-bit representation of $A$ from $\senc{A}$ and vice versa can be done in $\Oh(n + \absolute{\senc{A}} / \lg N)$ time.
\end{lemma}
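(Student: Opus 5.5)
This statement restates item~(\ref{lem:encode_sparse_naive_bullet}) of \cref{lem:encode_sparse_naive_host}, so the plan is to prove it directly by a token-by-token scan, exactly as sketched there. We first perform the $\Oh(N)$-time preprocessing of \cref{lem:encode_elias}. This makes both encoding and decoding of a single Elias-$\gamma$ code $\gamma(x)$ cost $\Oh(1+\lg x/\lg N)$ time, and it also reports the code length $2\floor{\lg x}+1$, so we know where the next token starts.

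\emph{Encoding.} We scan the $u$-bit representation of $A$ from left to right and maintain a counter of the current run of zeros. When we meet a nonzero entry $A[i]=y$, we proceed in two steps. First, if the counter equals $x>0$, we append the zero-run token $\mathtt{0}\cdot\gamma(x)$ and reset the counter. Second, we append the literal token $\mathtt{1}\cdot\gamma(y)$. At the end of the scan we flush a pending zero-run in the same way. Appending a code of $r$ bits to a bitmask takes $\Oh(1+r/w)$ time using word-level shifts, so this is dominated by the cost of computing the code.

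\emph{Decoding.} We scan $\senc{A}$ and read one indicator bit per token. We then decode the following $\gamma$-code with \cref{lem:decode_elias}, which also tells us its length. For a literal token, we write its $u$-bit value. For a zero-run token $\mathtt{0}^x$, we write $x$ zero fields of $u$ bits each. This can be done naively in $\Oh(x)$ time, or faster with word-wise writes, though the faster version is not needed.

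\emph{Accounting.} There are at most $n$ tokens, since every token represents at least one entry of $A$. Each token therefore contributes an $\Oh(1)$ term, and these sum to $\Oh(n)$. Writing or skipping zero entries costs $\Oh(1)$ per entry, which is again $\Oh(n)$ in total. The remaining cost is $\Oh(\lg x/\lg N)$ per token. A token's code has $\Theta(1+\lg x)$ bits, so these terms sum to $\Oh(\absolute{\senc{A}}/\lg N)$, giving the total bound $\Oh(n+\absolute{\senc{A}}/\lg N)$.

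The only point needing care, rather than a real obstacle, is that values $x$ up to $2^{\Oh(w)}$ fit in $\Oh(1)$ words. This holds both for entries, where $u=\Oh(w)$, and for run lengths, where $n\in 2^{\Oh(w)}$. Consequently, \cref{lem:encode_elias} applies to every token, and moving a code of length $\Oh(w)$ bits between bitmasks costs $\Oh(1)$ time.
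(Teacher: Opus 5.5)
Your proposal is correct and follows the paper's proof: a token-by-token scan that encodes or decodes each Elias-$\gamma$ code with \cref{lem:encode_elias}, amortizes the $\Oh(\lg x/\lg N)$ terms over the bits of $\senc{A}$, and charges the $\Oh(x)$ cost of reading or writing each zero-run $\mathtt{0}^x$ to $\Oh(n)$ in total. You also add a detail the paper leaves implicit: zero-run lengths can exceed $2^u$, so the Elias lemma must be applied with a separate $u'=\Oh(w)$, which is possible because $n\in 2^{\Oh(w)}$.
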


\begin{proof}
    We either scan $A$ from left to right and encode each token separately, or we scan $\senc{A}$ from left to right and decode each token separately. 
	Computing the $u$-bit representation of $x > 0$ from $\gamma(x)$ or vice versa takes $\Oh(1 + \lg x / \lg N)$ time with \cref{lem:encode_elias,lem:decode_elias}. 	
	The term $\Oh(\lg x / \lg N)$ amortizes to $\Oh(1 / \lg N)$ per encoded or decoded bit of $\senc{A}$. 
	Hence, it sums to $\Oh(\absolute{\senc{A}} / \lg N)$.
	Whenever we encode or decode a run $0^x$, we need additional $\Oh(x)$ time to read or write $0^x$, which sums to $\Oh(n)$.
\end{proof}

\begin{lemma} \label{lem:encode_ds}
For every $N \in [2\dd 2^w]$, after an $\Oh(N)$-time preprocessing, the following holds for every $u \in \Oh(w)$.
Let $A \in [0\dd 2^u)^n$ with $n \in 2^{\Oh(w)}$ be given in $u$-bit representation. In $\Oh(n)$ time and space, one can compute a data structure that returns $\senc{A}$ in $\Oh(1 + \absolute{\senc{A}} / \lg N)$ time.
\end{lemma}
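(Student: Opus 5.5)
The proof splits the computation of \cref{lem:encode_sparse_naive} into a preprocessing part that depends only on $A$ and a query part that depends on the encoding length. We assume the $\Oh(N)$-time preprocessing of \cref{lem:encode_elias}, and in the construction phase we copy the $u$-bit representation of $A$ in $\Oh(n)$ time. We then scan $A$ from left to right and split it into tokens: maximal zero-runs $0^x$ and literals $u>0$. For each token we record its type and its value ($x$ or $u$) in an array of tokens. This scan takes $\Oh(n)$ time, because each position of $A$ is inspected once and at most $n$ tokens are recorded.

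At query time we iterate over the stored token list. For each token we write the indicator bit and then compute the Elias-$\gamma$ code of its value using \cref{lem:encode_elias}, which takes $\Oh(1+\lg x/\lg N)$ time. We append the code to the output bitmask word-wise in $\Oh(1+\lg x/\lg N)$ time as well, since a code of $2\floor{\lg x}+1$ bits spans $\Oh(1+\lg x/\lg N)$ chunks of $\Theta(\lg N)$ bits. The $\lg x/\lg N$ terms sum to $\Oh(\absoluteenc{A}/\lg N)$, because every token of value $x$ contributes $\Theta(1+\lg x)$ bits to $\senc{A}$.

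The main obstacle is the additive $\Oh(1)$ per token. The number of tokens can be as large as $\Theta(\absoluteenc{A})$, for example when $A$ consists only of ones, which breaks the target bound of $\Oh(1+\absoluteenc{A}/\lg N)$. The fix is to do more work in the construction phase. Since $\Oh(n)$ time is allowed there and $\absoluteenc{A}=\Oh(n)$ by \cref{obs:sparse_size}, we can simply run the full encoder of \cref{lem:encode_sparse_naive} during construction, in $\Oh(n+\absoluteenc{A}/\lg N)=\Oh(n)$ time, and store the resulting bitmask $\senc{A}$ in $\Oh(n)$ space. A query then returns a pointer to this bitmask together with its length, in $\Oh(1)$ time. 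If the output must be written into caller-provided memory, it is copied word by word in $\Oh(1+\absoluteenc{A}/w)\subseteq\Oh(1+\absoluteenc{A}/\lg N)$ time, using $N\le 2^w$.

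With the encoding precomputed, the time bound and correctness follow directly from \cref{lem:encode_sparse_naive}. This is exactly the split of its running time into $\Oh(n)$ preprocessing and $\Oh(1+\absoluteenc{A}/\lg N)$ query time that the paper describes for \crefbullet{lem:encode_ds}.
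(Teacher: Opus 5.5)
Your proposal has a genuine gap. You are right that deferring all token encoding to the query fails because of the $\Oh(1)$ per-token overhead. But your fix relies on $\absoluteenc{A}=\Oh(n)$, and this does not follow from \cref{obs:sparse_size}. That observation bounds only the zero-run tokens by $\Oh(n)$ bits. The literal tokens contribute $\Theta(\sum_j \lg(1+A[j]))$ bits, and its third bullet needs $\sum_i A[i]=\Oh(n)$, which is not assumed here. For $A=(2^u-1)^n$ with $u=\Theta(w)$, we get $\absoluteenc{A}=\Theta(nu)$. Running the full encoder of \cref{lem:encode_sparse_naive} at construction time then costs $\Oh(n+\absoluteenc{A}/\lg N)=\Theta(nu/\lg N)$, which is $\omega(n)$ for, e.g., constant $N$: the Elias-$\gamma$ routine of \cref{lem:encode_elias} has $\lg N$, not $w$, in the denominator. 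This case does occur: in \cref{lem:accelerate_single_stream} the structure is built for transducer outputs over an alphabet of size $2^{\Oh(w)}$ with a small parameter $M$, and the $\Oh(a)$ construction time per table entry is needed there.

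The missing idea is to split the encoder's running time rather than move all of it into one phase. The encoder of \cref{lem:encode_sparse_naive} halts after at most $c\,(n+\absoluteenc{A}/\lg N)$ steps for some constant $c$. During construction, run exactly its first $cn$ steps. Store the copy of $A$ together with the full state of the computation, including the partially written output; this fits in $\Oh(n)$ words because $\absoluteenc{A}=\Oh(n(1+u))$ bits and $u=\Oh(w)$. At query time, resume: at most $c\cdot\absoluteenc{A}/\lg N$ steps remain, which gives the $\Oh(1+\absoluteenc{A}/\lg N)$ bound. This is the paper's proof.

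Your full precomputation could be rescued another way. With a table-free $\Oh(1)$-time most-significant-bit computation and word-wise appending, the whole encoding would cost $\Oh(n+\absoluteenc{A}/w)=\Oh(n)$. That needs a tool beyond \cref{lem:encode_elias}, and your proposal does not invoke one.
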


\begin{proof}
We copy $A$ and then use the algorithm from \cref{lem:encode_sparse_naive}. We split the computation into $\Oh(n)$ time when $A$ is given, and (if necessary) $\Oh(1 + \absolute{\senc{A}} / \lg N)$ query time. Hence, the data structure consists of the copy of $A$ and the state of the algorithm after the initial $\Oh(n)$ computation time.
\end{proof}

\begin{restatable}{lemma}{restatesparselongestprefix}\label{lem:sparse_longest_prefix_in_word}
    For every $N \in [2\dd 2^w]$, after an $\tilde\Oh(N)$-time preprocessing, the following holds.
    Given a bitmask~$B$ and an integer $\ell \in [0\dd \ceil{\lg N}]$, one can find in $\Oh(1)$ time the largest ${b \in [0\dd \min(\ell, \absolute{B})]}$ such that $B[0\dd b)=\senc{A}$ for some $A\in \Zz^*$.
    If $b > 0$, then each of the following values can be output in $\Oh(1)$ time:

    \begin{itemize}
        \item $a:=\absolute{A}$, $a_+ := \absolute{\{i \in [0\dd a) \mid A[i] > 0 \}}$, and ${x := \max\{ A[i] \mid i \in [0\dd a) \}}$,
        \item the bitmask $A'[0\dd a)$ marking the non-zero symbols in $A$,
        \item the bitmask $B'[0\dd b)$ marking the starting positions of literal tokens in $B[0\dd b)$,
        \item for any $j \in [0\dd a)$, the value $\rank_{A'}(j)$, 
        \item for any $j \in [1\dd a_+]$, the value $\select_{A'}(j)$.
    \end{itemize}
    
    If an optional parameter $u \in {\Oh(w)}$ with $u > \lg x$ is provided, then the $u$-bit representation of $A[0\dd a)$ can be returned in $\Oh(1 + au / \lg N)$ time.
\end{restatable}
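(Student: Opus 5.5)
The plan is pure tabulation. Fix $h := \ceil{\lg N}$. Every query only looks at the window $W := B[0\dd \min(h,\absolute{B}))$, since $b \le \ell \le h$. We extract $W$ in constant time and pad it to exactly $h$ bits. Both the padding and the actual length $\min(h,\absolute{B})$ are needed to index a table. To make the table size $\tilde\Oh(N)$, we index by the pair $(W, \ell)$ together with the true window length. This gives $\Oh(2^h \cdot h^2) = \Oh(N \lg^2 N)$ entries.

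For each entry we precompute the following. First, we set $\ell' := \min(\ell, \text{window length})$. Second, we find the largest $b \le \ell'$ such that $W[0\dd b)$ is a concatenation of whole tokens and decodes to some $A$. Here we use the following observation. A prefix of a token sequence is a valid sparse encoding if and only if it ends at a token boundary and no two consecutive tokens are zero-run tokens. The second condition is forced by maximality of zero-runs, so a boundary between two zero-run tokens is not allowed.

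Since Elias-$\gamma$ codes are prefix-free, the token boundaries in $W$ are uniquely determined by parsing greedily from the left. We parse $W$ naively, token by token, and record every boundary that gives a valid encoding. Then $b$ is the largest recorded boundary that is at most $\ell'$. Note that $b=0$ corresponds to the empty encoding. In the same pass we decode $A$, which yields $a$, $a_+$, $x$, the mask $A'$, and the mask $B'$ of literal-token starts. We store all of these in the entry.

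Each such quantity takes $\Oh(h)$ bits. In particular $a \le 2^{h}$, because zero-runs inside $h$ bits have length less than $2^{h}$. So these values fit in $\Oh(1)$ words as long as $a$ is small. The catch is that $A'$ has length $a$, which can be up to about $2^{h/2}$ and need not fit in a word. I would handle this by not storing $A'$ explicitly. Instead we store the compact list of positions of the non-zero symbols. There are at most $h$ of them, since each literal token costs at least 2 bits. Each position has $\Oh(h)$ bits, so the list takes $\Oh(h^2)$ bits.

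Rank and select on $A'$ are then served by secondary tables. Select$(j)$ is just the $j$-th stored position, which we can keep in an array of $\Oh(h)$ words per entry. Rank$(j)$ for $j < a$ needs a different index. The table then needs a third index $j$, which is too large when $a$ is big. The fix is to exploit the structure of $A$: rank is piecewise constant between non-zero positions. So rank$(j)$ is the number of stored positions below $j$, and we can compute it with a predecessor search over at most $h$ sorted keys. In constant time, I would do this with a fusion-node-style sketch, or more simply as follows. Because all keys lie in $[0\dd 2^{h})$, we split $j$ into its high and low $h/2$ bits and use a two-level table indexed by (entry, high half) and (entry, low half), both of size $\tilde\Oh(N)$. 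This is the step I expect to be fiddliest, and I would double-check the table-size accounting here.

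If the bitmask $A'$ itself must be output, it is fine when $a = \Oh(\lg N)$. Otherwise I would output it as a word-packed bitmask in $\Oh(1 + a/\lg N)$ time. This matches the style of the optional last bullet.

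For the optional $u$-bit output of $A[0\dd a)$, we write $a$ fields of $u$ bits. Zero fields form long blocks, and we fill them word-wise in $\Oh(au/\lg N + 1)$ time, up to word size $w \ge \lg N$. Literal values are bounded by $x < 2^u$, number at most $h$, and each is written with $\Oh(1)$ shifts. The cost of the zero stretches is bounded by the total output length $au$ divided by the word length, so the $\Oh(1 + au/\lg N)$ bound follows.

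Preprocessing time is the number of entries times $\Oh(h)$ parsing work, which is $\tilde\Oh(N)$. The main obstacle is answering constant-time rank queries on the possibly long decoded string $A'$ within the $\tilde\Oh(N)$ table budget. I plan to resolve it with the sparse position list and the split-index lookup described above.
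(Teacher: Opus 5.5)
\textbf{Assessment.} Your framework matches the paper's: you tabulate every $\ceil{\lg N}$-bit window together with $\ell$. Your parsing criterion and the select array are fine. So is indexing by the true window length; the paper instead pads with an incomplete literal token, which is an inessential difference. You are also right that $a$ can be about $2^{h/2}$, namely when a single zero-run token fills the window. The paper's proof glosses over this: it stores $\mathcal R[0\dd a)$, $A'$ and $A_u$ in $\Oh(k)$ words per entry, justified by the claim $a<k$, which fails for such windows.

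However, your two replacements each have a genuine gap.

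\emph{Rank.} A table indexed by (entry, high half of $j$) has $\tilde\Theta(N)\cdot 2^{h/2}=\tilde\Theta(N^{3/2})$ cells as you describe it. Even using $a<2^{h/2+1}$, it still has $\tilde\Theta(N^{5/4})$ cells. Moreover, a table indexed by (entry, low half) alone does not know which bucket $j$ lies in, so the two lookups do not combine into a rank. A single fusion node does not help either. It holds only $w^{\varepsilon}$ keys, while you may have up to $h/2=\Theta(\lg N)$ keys, which can be $\Theta(w)$. You would therefore need a constant-height fusion tree, and your sketch does not justify one.

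\emph{$u$-bit output.} Writing the literals one by one costs $\Theta(a_+)$. Take $A=1^{\floor{h/2}}$, where every literal token is $\mathtt{11}$, and $u=1$. Then your cost is $\Theta(\lg N)$, whereas the target $\Oh(1+au/\lg N)$ is $\Oh(1)$. The paper instead tabulates $A_u$ for each $u\in[1\dd k]$ and copies it word-wise. It converts symbol by symbol only when $u>k$, where $\Oh(a)\subseteq\Oh(au/\lg N)$. Separately, returning $A'$ in $\Oh(1+a/\lg N)$ time weakens the claimed $\Oh(1)$.

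\textbf{The missing idea.} Long decoded strings are rare: summed over all $h$-bit windows $W$, one has $\sum_W a(W)=\Oh(h^2 2^h)=\tilde\Oh(N)$. This fixes all of the above, and it is also the correct justification for the paper's per-entry storage. The argument:
\begin{itemize}
\item Literals contribute at most $h/2$ to $a$ per window.
\item A zero-run token of $t$ bits encodes some $x<2^{t/2}$, and there are $2^{(t-2)/2}$ such tokens.
\item So for a fixed bit offset $p$, at most $2^{(t-2)/2}\cdot 2^{h-t}$ windows contain such a token at $p$. Together they contribute less than $2^{h}$.
\item Summing over the $\Oh(h^2)$ pairs $(p,t)$ gives the bound.
\end{itemize}

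Hence every entry can afford variable-length storage of three things:
\begin{itemize}
\item the whole array $\rank_{A'}(0),\dots,\rank_{A'}(a-1)$;
\item the packed bitmask $A'$, returned by reference;
\item $A_u$ for every $u\le h$.
\end{itemize}
All of this fits within $\tilde\Oh(N)$ preprocessing, since the extra indices $\ell$ and $u$ cost only $\poly(h)$ factors. Each query is then a plain lookup, or a word-wise copy in $\Oh(1+au/w)$ time, and no predecessor search is needed.
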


\newcommand{\zero}{\mathtt{0}}
\newcommand{\one}{\mathtt{1}}

\begin{proof}
    Let $k = \ceil{\lg N}$.
    We store a subset of query answers in a lookup table $L$.
    For every $X \in \{\zero,\one\}^k$ and $\ell, u \in [1\dd k]$, we store 
    $L[X, \ell, u] = (b, a, a_+, x, A', \mathcal R, \mathcal S, A_u)$ 
    such that $X[0\dd b)$ with $b \in [0\dd \ell]$ is the encoding of a length-$a$ string $A$, 
    and there is no $b' \in (b\dd \ell]$ such that $X[0\dd b')$ is a sparse encoding. 
    Furthermore, if $b > 0$, then the values $a_+$, $x$, $A'$, and $B'$ 
    are defined like in the statement of the \lcnamecref{lem:sparse_longest_prefix_in_word}.
    Array $\mathcal R[0\dd a)$ contains the values $\rank_{A'}(j)$ in increasing order, while $\mathcal S[0\dd a_+)$ contains the values $\select_{A'}(j)$ in increasing order.
    If $x < 2^u$, then $A_u$ is the $u$-bit representation of $A$.

    Each of $b, a, a_+, x, A', B'$ is stored in a single word of memory. Each of $\mathcal R, \mathcal S$ is stored in (a prefix of) $k$ words of memory, using one word per array entry (which is possible due to $a < k$).
    Finally, $A_u$ is also stored in (a prefix of) $k$ words of memory, but in $u$-bit representation.
    Given a query $(B, \ell, u)$ with $\absolute{B} \geq k$, we can report $b, a, a_+, x, A'$ and any array entry of $\mathcal R, \mathcal S$ in constant time by merely looking up $L[B[0\dd k), \ell, u]$.
    We can report $A_u$ in a word-wise manner in $\Oh(1 + au/w)$ time.
    
    The Elias-$\gamma$ code $\gamma(x)$ is a substring of $X[0\dd b)$, and thus it consists of $2\floor{\lg x} + 1 \leq k$ bits. Hence, $\lg x < \floor{\lg x} + 1 \leq (k + 1)/2 \leq \lg N$, where the last inequality holds for any integer $N \geq 2$. We have shown $x < N \leq 2^k$. 
        For queries with $u > k$, we obtain $A_k$ in $\Oh(1 + ak/w)$ time by looking up $L[B[0\dd k), \ell, k]$. We convert $A_k$ into $u$-bit representation one symbol at a time, which takes $\Oh(a) \subseteq \Oh(au / \lg N)$ time (where $u / \lg N = \Omega(1)$ due to $u > k \geq \lg N$).
    For queries with $\absolute{B} < k$, we access the table with $B \cdot \one \cdot \zero^{k - \absolute{B} - 1}$ instead. Padding with an incomplete literal token does not affect the result.

	The table has $2^k\cdot k^2$ entries. Each entry consists of $\Oh(k)$ words, and computing an entry is straightforward in $\Oh(\poly(k))$ time. Hence, the preprocessing time is $\Oh(2^k \cdot \poly(k)) = \tilde\Oh(N)$, as required.
\end{proof}

\section{Adding Rank and Select Support to Sparse Encodings}

We will return a synchronizing set as the sparse encoding of its bitmask, which by itself does not allow fast random access. Hence, we develop support data structures that augment the encoding for fast rank and select queries.
For any set $S \subseteq \mathbb Z$ and $x \in \mathbb Z_{\geq 0}$, we define $\rank_{S}(x) = \absolute{\{ y \in S \mid y < x\}}$ and $\pred_{S}(x) = \max(\{ y \in S \mid y \leq x\} \cup \{ -\infty \})$.
For every $i \in [1\dd \absolute{S}]$, we define $\select_{S}(i) = \max\{ y \in S \mid \rank_{S}(y) < i\}$.
For $S \subseteq [0\dd U)$ and its characteristic bitmask $A[0\dd U)$, we may equivalently use subscript $A$ rather than $S$.

In \cref{sec:vEB}, we provide a deterministic implementation of van Emde Boas Trees, which we use in \cref{sec:sparserankselect} to obtain the support data structures for sparsely encoded bitmasks. 
\subsection{Deterministic van Emde Boas Trees}
\label{sec:vEB}

For a set of $n$ integers from range $[0\dd U)$, there is a predecessor data structure of size $\Oh(n \lg \lg U)$ bits with query time $\Oh(\lg(\lg \frac{{U}}{n}) / \lg \lg {U})$ and expected construction time $\Oh(n)$ \cite{DBLP:conf/stoc/PatrascuT06,DBLP:journals/corr/abs-cs-0603043}. The construction is based on van Emde Boas trees \cite{DBLP:conf/focs/Boas75}. 
By replacing the hash tables of the van Emde Boas tree with deterministic dictionaries \cite{DBLP:conf/icalp/Ruzic08}, the data structure can be modified to achieve $\Oh(n)$ worst-case construction time without affecting the query time or space. 
For completeness, we describe this modification in detail below.
The core of the data structure is the following van Emde Boas reduction of the universe.

\begin{lemma}[{see, e.g., \cite[Section 5.4.3]{DBLP:journals/corr/abs-cs-0603043}}]\label{lem:vEB_reduce}
    Let $S \subseteq [0\dd 2^\ell)$ of size $\absolute{S} = n$ with $\ell \geq 2$ and $n, 2^\ell \in 2^{\Oh(w)}$ be given as an array of $\ell$-bit integers. There exist $k \leq n + 1$ sets $S_1, \dots, S_k \subseteq [0\dd 2^{\ceil{\ell / 2}})$ with $n = \sum_{j = 1}^k \absolute{S_j}$ such that a predecessor or rank query in $S$ can be reduced to answering at most one predecessor or rank query in one of the sets.
    In $\Oh(n \lg^2 \lg n)$ time and words of space, we can construct the sets and an auxiliary data structure that performs the query reduction in constant time.
\end{lemma}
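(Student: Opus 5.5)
We follow the standard van Emde Boas split of each key $x\in S$ into a high part $\mathrm{hi}(x)=\floor{x/2^{\floor{\ell/2}}}$ and a low part $\mathrm{lo}(x)=x \bmod 2^{\floor{\ell/2}}$; both fit in $\ceil{\ell/2}$ bits.

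\textbf{Choice of sets.} Let $H=\{\mathrm{hi}(x): x\in S\}$. For each $h\in H$, the cluster $C_h=\{\mathrm{lo}(x): \mathrm{hi}(x)=h\}$ has minimum $m_h$. The sets will be $S_1=H$ (the summary) and, for each $h\in H$, the set $C_h\setminus\{m_h\}$. Removing cluster minima is what keeps the sizes summing to $n$: the clusters contribute $n-|H|$ elements and the summary contributes $|H|$. We get $k=|H|+1\le n+1$ sets, some possibly empty.

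\textbf{Auxiliary data.} Alongside the sets we store:
\begin{itemize}
\item a deterministic static dictionary (\cite{DBLP:conf/icalp/Ruzic08}) mapping each $h\in H$ to the index of its set;
\item $m_h$, together with $\max C_h$;
\item the rank in $S$ of the element $h\cdot 2^{\floor{\ell/2}}+m_h$;
\item for each summary element, the rank and maximum of its cluster, so that we can return the overall predecessor.
\end{itemize}
Since $S$ is sorted, one linear scan groups $S$ into clusters in increasing order of $h$. This produces every $C_h$ (sorted) and all of the stored values in $\Oh(n)$ time, and also gives $H$ as a sorted array. Ružić's construction builds a static dictionary on $n$ keys in $\Oh(n\lg^2\lg n)$ deterministic time and $\Oh(n)$ words with $\Oh(1)$ lookups, which accounts for the claimed time bound.

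\textbf{Query reduction.} A query $x$ is handled as follows.
\begin{enumerate}
\item Compute $h=\mathrm{hi}(x)$ and look up $h$ in the dictionary.
\item If $h\in H$ and $\mathrm{lo}(x)\ge m_h$: if $\mathrm{lo}(x)\ge \max C_h$, the answer is the cluster maximum, obtained directly from stored data. Otherwise we issue a single predecessor/rank query to $C_h\setminus\{m_h\}$ and add the stored cluster offset rank. If that query returns $-\infty$, the predecessor is $m_h$ itself.
\item Otherwise (either $h\notin H$ or $\mathrm{lo}(x)<m_h$), the answer lies in an earlier cluster. We issue one query to the summary $H$ for $h-1$ (or for $h$ when $h\notin H$), and read the stored maximum and rank of the returned cluster in $\Oh(1)$ time.
\end{enumerate}
In every case at most one recursive query is made. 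Rank answers combine additively with stored prefix counts; the rank of the predecessor plus one is handled via the stored cluster cardinalities.

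\textbf{Main obstacle.} The main obstacle is guaranteeing that each branch needs only one subquery, in particular the boundary cases: an empty residual cluster, and $x$ falling before the cluster minimum. These are resolved by storing both minimum and maximum per cluster, as in classic vEB trees.
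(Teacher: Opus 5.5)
Your proof is correct and follows the paper's own argument: the same summary set of high parts, clusters with one element removed so that the sizes sum to $n$, stored per-cluster min/max and ranks, at most one subquery per branch, and Ružić's deterministic dictionary giving the $\Oh(n\lg^2\lg n)$ bound. The paper removes each cluster's maximum rather than its minimum, which only changes the rank bookkeeping (in your version you need a $+1$ when $\mathrm{lo}(x)>m_h$). Also, the lemma does not assume $S$ is given sorted, so you must first sort it, e.g.\ in $\Oh(n\lg\lg n)$ deterministic time by Han's algorithm as the paper does, which fits the time budget.
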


\begin{proof}
    We interpret each element as a length-$\ell$ bitmask.
    We first sort~$S$ in increasing order. 
    The initial set is $S_1 = \{ \floor{s / 2^{\floor{\ell / 2}}} \mid s \in S\}$, containing all the upper halves of elements in~$S$.
    Let $s_2 < \dots < s_{k}$ be the elements of $S_1$ in increasing order.
    For $i \in [2\dd k]$, we define $S'_{i} = \{ s - s_i \cdot 2^{\floor{\ell / 2}} \mid s \in S \textnormal{\ and\ } \floor{s / 2^{\floor{\ell / 2}}} = s_i \}$, i.e., each set $S'_{i}$ contains the lower $\ceil{\ell / 2}$ bits of all the elements with bit-prefix $s_i$.
    Finally, we define $S_{i} = S'_{i} \setminus \{\max(S_i')\}$.
    In a dictionary data structure, we associate each $s_i$ with the tuple $(i, m_i, M_i, \rank_{S}(m_i), \rank_{S}(M_i))$, where $m_i = \min(S_i') + s_i \cdot 2^{\floor{\ell / 2}}$ and $M_i = \max(S_i') + s_i \cdot 2^{\floor{\ell / 2}}$, i.e., $m_i$ and $M_i$ are the respectively minimal and maximal elements of $S$ that have bit-prefix $s_i$.
    Each element of $S$ with bit-prefix~$s_i$ contributes exactly one element to $S_i$, except for one element (because we excluded $\max(S_i')$), to which we charge the element $s_i$ of $S_1$. Hence, the total size of all sets is $n$.

    A predecessor or rank query $j \in [0\dd 2^{\ell})$ can be answered as follows. 
    We explicitly store $\min(S)$, and first test if $j < \min(S)$. Hence, we can easily detect $\pred_S(j) = -\infty$. From now on, assume $\pred_S(j) \geq 0$.    
    We extract the bit-prefix $j' = \floor{j / 2^{\floor{\ell / 2}}}$. If $j' \notin S_1$ (tested with the dictionary), then no element in $S$ has bit-prefix $j'$, and $\pred_S(j)$ is the largest element with a bit-prefix smaller than $j'$. We compute $s_i = \pred_{S_1}(j')$. By accessing the dictionary with $s_i$, we retrieve the query answers $\pred_S(j) = M_i$ and $\rank_S(j) = \rank_S(M_i) + 1$.
    
    It remains the case where $j' \in S_1$, which implies $j' = s_i$ for some $i \in [2\dd k]$. We obtain $(i, m_i, M_i, \rank_{S}(m_i), \rank_{S}(M_i))$ with the dictionary. If $M_i \leq j$, then we return $\pred_S(j) = M_i$ and $\rank_S(j) = \rank_S(M_i)$ if $M_i = j$ and otherwise $\rank_S(j) = \rank_S(M_i) + 1$. If $m_i > j$, then we proceed similarly to the case $j' \notin S_1$, but looking up $\pred_{S_1}(j' - 1)$ instead of $\pred_{S_1}(j')$. Finally, we are in the case $m_i \leq j < M_i$, i.e., we know that the predecessor is among the elements with bit-prefix $j' = s_i$.
    Hence, focus on the lower bits of $j$ given by $j \bmod 2^{\floor{\ell / 2}}$ and answer the query relative to $S_i$.
    For a predecessor query, we compute $p = \pred_{S_i}(j \bmod 2^{\floor{\ell / 2}})$ and return $\pred_S(j) = p + j' \cdot 2^{\floor{\ell / 2}}$. Similarly, for a rank query, we compute $r = \rank_{S_i}(j \bmod 2^{\floor{\ell / 2}})$ and return $\rank_S(j) = r + \rank_S(m_i)$.

    It is clear that the sets can be computed in $\Oh(n \lg \lg n)$ time by sorting and scanning~$S$~\cite{DBLP:journals/jal/Han04}. A dictionary data structure with constant query time can then be built in $\Oh(n \lg^2 \lg n)$ time and space \cite{DBLP:conf/icalp/Ruzic08}.
\end{proof}

\begin{corollary}\label{lem:vEB_logl}
    Let $S \subseteq [0\dd 2^\ell)$ of size $\absolute{S} = n$ with $\ell \geq 2$ and $n, 2^\ell \in 2^{\Oh(w)}$ be given as an array of $\ell$-bit integers.
    A deterministic data structure that answers rank and predecessor queries in $\Oh(\lg \ell)$ time can be built in $\Oh(n \lg^2 \lg n\cdot \lg \ell)$ time and words of space.
\end{corollary}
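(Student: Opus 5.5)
We apply \cref{lem:vEB_reduce} recursively. Build a tree of instances: the root is $S$ over universe $[0\dd 2^\ell)$. Each instance with universe exponent $\ell' \ge 2$ invokes \cref{lem:vEB_reduce}, producing child instances $S_1,\dots,S_k$ over universe exponent $\ceil{\ell'/2}$ together with the constant-time reduction structure. Instances with $\ell' < 2$, or of constant size, are answered directly: a universe of size at most $2$ can be handled with a constant-size table. At depth $d$ the universe exponent is at most $\ell/2^d + 2$, so after $\Oh(\lg \ell)$ levels the exponent is constant and the recursion stops.

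A query walks down a single root-to-leaf path. By \cref{lem:vEB_reduce}, each level reduces the current query to at most one query in one child, in constant time. The leaf query is also constant time. Hence the query time is $\Oh(\lg \ell)$.

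For construction, the key invariant from \cref{lem:vEB_reduce} is that the children's sizes sum to exactly the parent's size. So at every depth, the instances have total size at most $n$. Each instance of size $n'$ costs $\Oh(n' \lg^2 \lg n') \subseteq \Oh(n' \lg^2\lg n)$ time and words, so each level costs $\Oh(n\lg^2\lg n)$ in total.

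There is one wrinkle: empty or tiny instances could contribute $\Oh(1)$ overhead each. The number of instances per level is at most the number of elements plus one per parent (since $k\le n'+1$). I would charge this overhead to the parent, giving total overhead $\Oh(n)$ per level. Alternatively, empty sets can simply be skipped, since a query never descends into them (such cases are handled by the $M_i$/$m_i$ checks). Summing over $\Oh(\lg \ell)$ levels yields $\Oh(n\lg^2\lg n\cdot \lg\ell)$ time and space.

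The main point needing care is the sorted-input requirement and the word-size bookkeeping when the child universes shrink. Elements of a child set have $\ceil{\ell'/2}$ bits and are still stored as words, which is fine. The reduction lemma sorts its input internally, so no extra ordering is needed. One must also confirm that $\ell' \geq 2$ holds whenever \cref{lem:vEB_reduce} is applied, and that the lemma's hypothesis $n', 2^{\ell'} \in 2^{\Oh(w)}$ is inherited by children. Both are immediate, so the bound follows.
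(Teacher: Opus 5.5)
Your proposal is correct and follows the paper's proof. You apply \cref{lem:vEB_reduce} recursively for $\Oh(\lg \ell)$ levels until the universe is constant, and you use that the children's sizes sum exactly to the parent's size, so each level costs $\Oh(n \lg^2 \lg n)$. Your extra care about empty instances is a detail the paper leaves implicit here and states only in the proof of \cref{lem:vEB_logl_over_a}: treating them as leaves ensures at most $n$ non-empty instances per level.
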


\begin{proof}
    It suffices to apply \cref{lem:vEB_reduce} recursively. After $\Oh(\lg \ell)$ levels of recursion, the universe and thus also the size of each set is constant. On each level, the total number of elements in all sets is exactly $n$, and thus the construction time and space are as claimed.
\end{proof}

As a first improvement, we observe that the recursion can be aborted early. As soon as the universe of each subinstance is small enough, we can either use complete tabulation or a dynamic fusion node to directly solve the subinstance.

\begin{corollary}[{see \cite[Section 5.4.1]{DBLP:journals/corr/abs-cs-0603043}}]\label{lem:vEB_logl_over_a}
    Let $S \subseteq [0\dd 2^\ell)$ of size $\absolute{S} = n$ with $\ell \geq 2$ and $n, 2^\ell \in 2^{\Oh(w)}$ be given as an array of $\ell$-bit integers. For $m \geq n$, let $a = \lg(m / n) + \lg w$.
    A deterministic data structure that answers rank and predecessor queries in $\Oh(\lg \frac{\ell}{a})$ time can be built in $\Oh(m + n \lg^2 \lg n\cdot \lg \ell)$ time and words of space.
\end{corollary}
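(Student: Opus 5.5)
We prove the corollary by running the recursion of \cref{lem:vEB_reduce} only until the universe becomes small, and then handling every leaf subinstance by a direct method whose cost is linear in the budget $m$.

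\textbf{Recursion and leaf universe.} We apply \cref{lem:vEB_reduce} recursively as in \cref{lem:vEB_logl}. Each level halves the bit-length of the universe, so after $d=\Oh(\lg\frac{\ell}{a})$ levels every subinstance lives in a universe $[0\dd 2^{\ell'})$ with $\ell'\le a$. If $\ell\le a$, we skip the recursion entirely. On every level the total number of elements is exactly $n$. By \cref{lem:vEB_reduce}, the reductions therefore cost $\Oh(n\lg^2\lg n)$ time and space per level, and $\Oh(n\lg^2\lg n\cdot\lg\ell)$ in total. A query performs at most one constant-time reduction per level, giving $\Oh(\lg\frac{\ell}{a})$ time before it reaches a leaf.

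\textbf{Leaf subinstances.} A leaf is a set $S_j\subseteq[0\dd 2^{\ell'})$ with $\ell'\le a=\lg(m/n)+\lg w$, so its universe has size at most $2^a=\frac{m}{n}\cdot w$. We split into two cases according to the size of the leaf.

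\emph{Small leaves,} with $|S_j|\le w^{\Oh(1)}$ (more precisely, at most $w^{\epsilon}$ elements), are handled by a fusion node. Deterministic fusion nodes answer rank and predecessor in $\Oh(1)$ time and are built in polynomial time in the node size, hence $\Oh(|S_j|)$ amortized after a shared $\Oh(\text{small})$-time table precomputation.

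\emph{Large leaves} use complete tabulation over the small universe. We store a bitmask of length $2^{\ell'}$ with a rank/select directory, which costs $\Oh(2^{\ell'}/w)$ words and the same construction time. Predecessor is then rank followed by select, each in $\Oh(1)$ time. The cost of this directory is $\Oh(\frac{m}{n})\le \Oh(\frac{m}{n}\cdot|S_j|)$ words. Summed over all leaves, this is $\Oh(\frac{m}{n}\sum_j|S_j|)=\Oh(m)$.

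\textbf{Combining and the main obstacle.} Adding the two parts gives total time and space $\Oh(m+n\lg^2\lg n\cdot\lg\ell)$. The step needing the most care is the leaf accounting. Charging each tabulated leaf's $\Oh(2^{a}/w)=\Oh(m/n)$ words to its elements requires every such leaf to be nonempty; empty subinstances are never created, since each $S_i$ corresponds to an existing prefix. The fusion-node alternative must also be constructed deterministically. Alternatively, we can use tabulation uniformly: charge $\Oh(m/n)$ per nonempty leaf, and note that the number of leaves is at most $n$, which is enough on its own.
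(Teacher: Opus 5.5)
Your proof is correct, but it handles the leaves differently from the paper.

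\textbf{Where the routes differ.} The paper stops once keys have at most $a/2$ bits and then splits on which summand of $a$ dominates.
If $a\le 2\lg w$, each leaf universe has at most $2^{a/2}\le w$ elements, and a dynamic fusion node handles it in linear time.
Otherwise $a<2\lg(m/n)$, so each leaf universe has at most $2^{a/2}\le m/n$ elements. The paper then stores the answer to every possible query, one word each, i.e., $\Oh(m/n)$ words per nonempty leaf.
You instead stop at $a$ bits and absorb the factor $w$ in $2^{a}=\frac{m}{n}\cdot w$ by bit-packing. One uniform structure of $\Oh(m/n)$ words per nonempty leaf then covers both regimes, and the fusion-node case becomes optional. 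This is a real simplification of the case analysis.

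\textbf{What it costs.} Your leaf structure is not complete tabulation, as you call it. It is a packed bitmask with constant-time rank and predecessor, built in time linear in its number of words. That presupposes word-level primitives such as constant-time popcount and most-significant-bit, or lookup tables over $\Theta(w)$-bit chunks. This is standard when $w=\Theta(\lg n)$, as in the paper's application, but the paper's explicit answer tables need nothing of the sort. For $m=n$, every leaf is a single $w$-bit word, so you genuinely need rank inside a word in $\Oh(1)$ time with $\Oh(1)$ construction.

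\textbf{A factual slip.} Empty subinstances can arise. \cref{lem:vEB_reduce} sets $S_i=S_i'\setminus\{\max(S_i')\}$, which is empty whenever the prefix $s_i$ has a single element. This is harmless. The reduction descends into $S_i$ only when $m_i\le j<M_i$, so empty sets are never queried and need no structure. Your accounting only needs what the paper uses: the set sizes on each level sum to $n$, so there are at most $n$ nonempty leaves.
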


\begin{proof}
    We proceed like in the proof of \cref{lem:vEB_logl}, but stop the recursion as soon as the remaining keys consist of $\leq a/2$ bits, which happens after $\Oh(\lg \frac\ell a)$ levels.    
    If $a \leq 2\lg w$, then there are at most $2^a = \Oh(\poly(w))$ keys in each set of the final level. In this case, we use a predecessor data structure with constant query time, and construction time and space linear in the number of elements (e.g., a dynamic fusion node \cite{DBLP:conf/focs/PatrascuT14}).
    Otherwise, it holds $a \leq 2\lg(m / n)$. In this case, for each set, we precompute all possible queries. There are at most $2^{a/2} \leq m/n$ possible queries, and hence their answers can be computed and stored in $\Oh(m / n)$ time and space.
    We observe that the total number of non-empty sets in each level of the recursion of \cref{lem:vEB_logl} is at most $n$.
    Since we always stop the recursion on the same level, it is clear that the additional time and space sum to $\Oh(m)$.
\end{proof}

To reach the final query time, we also use tabulation for the initial $\lg m$ bits of each key.

\begin{corollary}[{see \cite[Section 5.4.2]{DBLP:journals/corr/abs-cs-0603043}}]\label{lem:vEB_final_query_time}
    Let $S \subseteq [0\dd 2^\ell)$ of size $\absolute{S} = n$ with $\ell \geq 2$ and $n, 2^\ell \in 2^{\Oh(w)}$ be given as an array of $\ell$-bit integers. For $m \geq n$, let $a = \lg(m / n) + \lg w$.
    A deterministic data structure that answers rank and predecessor queries in $\Oh(\lg \frac{\ell - \lg m}{a})$ time can be built in $\Oh(m + n \lg^2 \lg n\cdot \lg \ell)$ time and words of space.
\end{corollary}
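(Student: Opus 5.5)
We prove \cref{lem:vEB_final_query_time} by adding one tabulation layer on top of \cref{lem:vEB_logl_over_a}. If $\lg m \ge \ell - a$, then $\lg\frac{\ell-\lg m}{a} = \Oh(1)$ and we want constant query time. In that case we store a table indexed by the top $\ell - \floor{a/2}$ bits (at most $2^{\ell}\le$ roughly $m\cdot 2^{-a/2}\cdot\ldots$ entries), which costs $\Oh(m)$. Below the table, each bucket handles the remaining $\Oh(a)$ bits with the constant-level base case of \cref{lem:vEB_logl_over_a}: either a fusion node or a complete table. So from now on assume $\lg m < \ell - a$.

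Let $h = \floor{\lg m}$. Split every key into its top $h$ bits (the \emph{bucket index}) and its low $\ell - h$ bits. Build a direct-access array $D[0\dd 2^h)$, which has $\Oh(m)$ size. For every bucket index $t$ the array stores the following: whether the bucket $S_t=\{s\in S : \floor{s/2^{\ell-h}}=t\}$ is non-empty; $\rank_S(t\cdot 2^{\ell-h})$; the predecessor of $t\cdot 2^{\ell-h}-1$ in $S$ (the maximum of the nearest non-empty earlier bucket); and a pointer to a structure for the low parts of $S_t$. All of this is filled in by one left-to-right scan over the sorted array $S$ and over $D$, in $\Oh(m+n)=\Oh(m)$ time.

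For each non-empty bucket of size $n_t$, build the structure of \cref{lem:vEB_logl_over_a} on the universe $[0\dd 2^{\ell-h})$. Use the parameter $m_t = n_t\cdot m/n$, so that $\lg(m_t/n_t)+\lg w = a$ exactly, and pad $\ell-h$ up to at least $2$ if necessary. The query time is then $\Oh(\lg\frac{\ell-h}{a}) = \Oh(\lg\frac{\ell-\lg m}{a})$. The total cost is $\sum_t \Oh(m_t + n_t\lg^2\lg n_t\cdot\lg \ell) = \Oh(m + n\lg^2\lg n\cdot \lg\ell)$, as claimed.

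A query $j$ runs as follows. Compute $t=\floor{j/2^{\ell-h}}$ and read $D[t]$. If $S_t$ is empty, or $j$ lies below $\min S_t$, answer from the stored predecessor and rank of the bucket start. Otherwise, query the bucket structure on $j \bmod 2^{\ell-h}$ and add the stored offset.

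The delicate point is the accounting. The per-bucket parameters must keep $a$ unchanged, and the sum of the $m_t$ must not exceed $m$; the choice $m_t = n_t m/n$ makes this hold, and rounding is harmless. Empty buckets cost only their $D$ entry. Finally, we must check that \cref{lem:vEB_logl_over_a} tolerates buckets of tiny universe, i.e.\ $\ell-h<2$ or $\ell-h\le a$. In that situation we fall back directly to its base case, which gives constant time.
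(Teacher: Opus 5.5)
Your main construction is the paper's argument. The paper also makes its first reduction step split off the top $\floor{\lg m}$ bits, tabulates all queries over this top-level universe $[0\dd m)$ in $\Oh(m)$ time and space, and handles the low $\ell-\floor{\lg m}$ bits with the recursion of \cref{lem:vEB_logl_over_a}. Two of your choices differ from the paper, and both are harmless. Your direct-access array $D$ avoids the top-level dictionary of \cref{lem:vEB_reduce}. Your proportional budgets $m_t=n_t m/n$ replace the paper's global count of at most $n$ non-empty sets per recursion level.

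The step that does not work is your separate first case $\lg m\ge \ell-a$. A table indexed by the top $\ell-\floor{a/2}$ bits has $2^{\ell-\floor{a/2}}$ entries. From $\ell\le \lg m+a$ you only get the bound $m\cdot 2^{\ceil{a/2}}=\Theta(m\sqrt{wm/n})$, not $\Oh(m)$. For instance, $m=n$ and $\ell=\lg n+\lg w$ give roughly $n\sqrt{w}$ entries.

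The repair is to drop the case split and run your main construction throughout with $h=\min(\ell,\floor{\lg m})$.
\begin{itemize}
\item If $h=\ell$, the array $D$ alone answers all queries in $\Oh(2^\ell)\subseteq\Oh(m)$ space.
\item Otherwise, when $\lg m\ge\ell-a$, every bucket has $\ell-h<a+1$ bits. Then \cref{lem:vEB_logl_over_a}, with your $m_t$ and after padding to at least $2$ bits, answers in $\Oh(1)$ time.
\end{itemize}
This is also the only situation in which the ``tiny universe'' caveat of your last paragraph can occur. Under your standing assumption $\lg m<\ell-a$, every bucket has $\ell-h>a$ bits, so that caveat never applies there.
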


\begin{proof}
    We use \cref{lem:vEB_logl_over_a}.
    However, during the initial application of \cref{lem:vEB_reduce} on the very first level of recursion, we define the set $S_1$ such that it considers the highest $\floor{\lg m}$ (rather than $\floor{\ell/2}$) bits of each key. The sets $S_2, \dots, S_k$ then consider the lowest $\ell - \floor{\lg m}$ bits.
    Since the elements of $S_1$ are integers consisting of $\floor{\lg m}$ bits, the universe for predecessor queries on $S_1$ is $[0 \dd m)$. Hence, instead of recursing on $S_1$, we can afford to precompute the answers to all queries in $\Oh(m)$ time and words of space (by scanning~$S_1$).
    The sets $S_2, \dots, S_k$ are solved recursively like before.
    The depth of the recursion becomes $\Oh(\lg (\ell - \lg m))$ without the improvement of \cref{lem:vEB_reduce}, and $\Oh(\lg \frac{\ell - \lg m}a)$ with the improvement, leading to the claimed query time.
\end{proof}

Finally, by only computing the data structure for every $w^2$-th element of $S$ and solving each range of $w^2$ consecutive keys directly with a dynamic fusion node, we can reduce the space and construction time without increasing the query time (see \cite[Section 5.3]{DBLP:journals/corr/abs-cs-0603043}).

\restatevEB*

\begin{proof}
    If $n = \Oh(\poly(w))$, then a dynamic fusion node achieves the claimed complexities~\cite{DBLP:conf/focs/PatrascuT14}. Hence, assume $n \geq w^2$.
    The set is given as an increasing array $S[0\dd n)$.
    Let $S'[0 \dd n')$ with $n' := \ceil{n / w^2} = \Oh(n / w^2)$ be defined by $S'[i] := S[i \cdot w^2]$, i.e., we sample every $w^2$-th element of $S$.
    We build the data structure from \cref{lem:vEB_final_query_time} for the set $S'$, which takes $\Oh(m + n \lg^2 \lg n\cdot \lg \ell / w^2) = \Oh(m)$ time and space.
    For each $i \in [0\dd n')$, we build a dynamic fusion node over the elements $S_i :=S[i \cdot w^2\dd \min(n, (i + 1) \cdot w^2))$, which takes $\Oh(n)$ time and space overall, and allows us to answer queries with respect to $S_i$ in constant time~\cite{DBLP:conf/focs/PatrascuT14}.
    Given a query $j$, we immediately return $\pred_S(j) = -\infty$ and $\rank_S(j) = 0$ if $j < S[0]$. Otherwise, we obtain $r = \rank_{S'}(j)$.
    Clearly, $\pred_S(j)$ is one of the elements in $S_r$, and we obtain $\pred_S(j) = \pred_{S_r}(j)$ and $\rank_S(j) = r \cdot w^2 + \rank_{S_r}(j)$.
\end{proof}

\subsection{Rank and Select Support for Sparse Encodings}
\label{sec:sparserankselect}

Now we are ready to present the support data structures for sparsely encoded bitmasks.
The auxiliary \lcnamecref{lem:decompose_for_rank_select} below decomposes the encoding into small pieces.
Rank and select queries relative to these pieces can already be solved in constant time via precomputed lookup tables.

\restatedecomposeencoding*

\begin{proof}
    We perform the $\tilde\Oh(N)$ time preprocessing from \cref{lem:sparse_longest_prefix_in_word}.
    Now we compute the sequence of tuples, initialized with $(0, p_0, e_0, r_0) = (0,0,0,0)$.
    Assume that we have already created tuple $(i, p_i, e_i, r_i)$. If $p_i < n$, then our goal is to create $(i + 1, p_{i + 1}, e_{i + 1}, r_{i + 1})$.
    To this end, we access the data structure from \cref{lem:sparse_longest_prefix_in_word} with bitmask $\senc{A}[e_i\dd \absolute{\senc{A}})$ and $\ell = \ceil{\lg N}$.
    This results in the maximal $b \in [0\dd \min(\absolute{\senc{A}} - e_i, \ceil{\lg N}))$ such that $\senc{A}[e_i\dd e_i + b)$ is a sparse encoding. If $b > 0$, then we also obtain the length $a$ of the encoded string and the number $a_+$ of its non-zero symbols. Hence, we can assign $p_{i + 1} := p_i + a$, $e_{i + 1} = e_i + b$, and $r_{i + 1} = r_i + a_+$. Note that $a_+ < b \leq \lg N$, i.e., $r_{i + 1} - r_i \leq e_{i + 1} - e_{i} \leq \lg N$.
    
    If, however, $b = 0$, then it is clear that the next token of the encoding consists of $\geq \lg N$ bits (as otherwise we would have had $b > 0$), which also implies that it is a zero-run token encoding some run $\texttt0^x$ (as every $1$-bit is encoded as a literal token consisting of $2 < \lg N$ bits).
    We obtain $x$ and the length $2\cdot \floor{\lg x} + 2$ of the encoded token (in bits) in $\Oh(\lg x / \lg N)$ time using \cref{lem:decode_elias},
    and assign $p_{i + 1} := p_i + x$, $e_{i + 1} = e_i + 2\cdot \floor{\lg x} + 2$, and $r_{i + 1} = r_i$.
    The term $\Oh(\lg x / \lg N)$ sums to $\Oh(\absolute{\senc{A}} / \lg N)$ over all tokens that consist of $\geq \lg N$ bits.

    It is clear that this procedure decomposes $\senc{A}$ and $A$ correctly due to \cref{obs:encoding_prefix_of_other}. Also, apart from the large zero-runs, each element of the sequence is created in constant time.
    Hence, it remains to be shown that only $\Oh(1 + \absolute{\senc{A}} / \lg N)$ tuples are created.
    Consider any $i \in [1\dd h)$.
    We claim that $e_{i + 1} - e_{i - 1} > \lg N$.
    Indeed, if $e_{i + 1} - e_{i - 1} < \lg N$, then, while creating tuple $(i, p_i, e_i, r_i)$, we would have accessed the data structure from \cref{lem:sparse_longest_prefix_in_word} with $\senc{A}[e_{i - 1} \dd \absolute{\senc{A}})$, and it would have returned $b \geq e_{i + 1} - e_{{i - 1}}$.
    Therefore, every two consecutive tuples advance the encoding by at least $\lg N$ bits, which implies that there are $\Oh(1 + \absolute{\senc{A}} / \lg N)$ tuples.

    Finally, whenever $e_{i + 1} - e_{i} > \lg N$, we know that $A[p_i\dd p_{i + 1})$ is all-zero. The answer for all rank queries in this range is $r_i$, and there is no select query with answer in this range. If $e_{i + 1} - e_{i} \leq \lg N$, then we can already answer rank and select queries relative to $A[p_i\dd p_{i + 1})$ in constant time by accessing the data structure from \cref{lem:sparse_longest_prefix_in_word} with $\senc{A}[e_i \dd e_{i  + 1})$. Hence, we can answer global queries by offsetting the relative rank answers by $r_i$, and the relative select answers by $p_i$.
\end{proof}

\restatelemmasparseselect*

\begin{proof}
    We use the tuple sequence $(i, p_i, e_i, r_i)_{i = 0}^{h}$ and data structures from \cref{lem:decompose_for_rank_select,lem:sparse_longest_prefix_in_word}.     
    For a global select query $j$, our intermediate goal is to quickly identify the tuple $(i, p_i, e_i, r_i)$ with $r_i < j \leq r_{i + 1}$.
    To this end, we compute two auxiliary bitmasks $B[0\dd \absolute{\senc{A}})$ and $C[0\dd \absolute{\senc{A}})$.
    The former marks all the positions $e_i$ with $i \in [0\dd h]$, and can trivially be constructed in $\Oh(\absolute{\senc{A}} / w + h + 1)$ time.
    The latter marks the starting positions of all the literal tokens in $\senc{A}$.
    For any tuple $(i, p_i, e_i, r_i)$ with $e_{i + 1} - e_i \leq \lg N$, we can obtain $C[e_i\dd e_{i + 1})$ in constant time by accessing the data structure from \cref{lem:sparse_longest_prefix_in_word} with $\senc{A}[e_i\dd e_{i + 1})$ (while all other regions of $C$ are all-zero).
    Hence, also $C$ can be obtained in $\Oh(\absolute{\senc{A}} / w + h + 1)$ time.
    After an $\Oh(N)$ time preprocessing, it takes $\Oh(1 + \absolute{\senc{A}} / \lg N)$ time and $\Oh(\absolute{\senc{A}})$ bits of space to prepare $B$ and $C$ for constant time rank and select queries \cite[Lemma 2.1]{DBLP:conf/soda/BabenkoGKS15}\footnote{We invoke \cite[Lemma 2.1]{DBLP:conf/soda/BabenkoGKS15} with word width $\Theta(\lg N)$ and preprocessing time $\Oh(\sqrt{N})$. However, we can then only construct the data structure if $\absolute{\senc{A}} \in 2^{\Oh(\lg N)}$. If $\absolute{\senc{A}} > N$, then we perform the preprocessing of \cite[Lemma 2.1]{DBLP:conf/soda/BabenkoGKS15} at query time, using words of width $\Theta(\lg\; \absolute{\senc{A}})$, spending $\Oh(\sqrt{\absolute{\senc{A}}}) \subset \Oh(\absolute{\senc{A}} / \lg N)$ preprocessing time, and $\Oh(\absolute{\senc{A}} / \lg\; \absolute{\senc{A}}) \subseteq \Oh(\absolute{\senc{A}} / \lg N)$ construction time.}.
    
    The select query $j$ is answered by first retrieving the position $\select_C(j)$ of the $j$-th literal token in $\senc{A}$. Then, $r_i < j \leq r_{i + 1}$ holds for tuple $i = \rank_B(\select_C(j))$. Finally, we can obtain the answer in constant time by querying the data structure from \cref{lem:decompose_for_rank_select} with $i$ and $j$.
    Observing that $h = \Oh(\absolute{\senc{A}} / \lg N)$, and that words of width $\Oh(\lg n)$ bits are sufficient for storing the tuples, we obtain the claimed time and space complexity.
\end{proof}

For implementing rank support, we construct the data structure from \cref{lem:vEB_final_space_and_time} for the sequence $p_0, \dots, p_h$ from \cref{lem:decompose_for_rank_select}. This way, for a query $\rank_A(j)$, we can find the unique $i$ such that $j \in [p_{i} \dd p_{i + 1})$, allowing us to answer the query with \cref{lem:decompose_for_rank_select}.

\restatesparserank*

\begin{proof}
    We use the tuple sequence $(i, p_i, e_i, r_i)_{i = 0}^{h}$ and data structure from \cref{lem:decompose_for_rank_select}.
    We build the data structure from \cref{lem:vEB_final_space_and_time} for the sequence $p_0, \dots, p_h$, simulating words of width $w = \Theta(\lg n)$ and using space parameter $m' = m + h + 1$. Then, given any $j \in [0\dd n)$, we can use a rank query to find the unique $i$ such that $j \in [p_{i} \dd p_{i + 1})$ in $\Oh(\lg \frac{\lg n - \lg m}{\lg \lg n})$ time.
    Then, the query can be answered in constant time using the data structure from \cref{lem:decompose_for_rank_select}.
    The overall space is $\Oh((m + h) \lg n) = \Oh(m \lg n)$ bits, as required.
\end{proof}

\section{Accelerating Transducers}
\label{sec:transducer_main_sec}

In this section, we show how to efficiently process sparse encodings using transducers. For convenience, we repeat their definition.
\transducerdef

\subsection{Implementing a Single-Stream Transducer}\label{sec:lem:accelerate_single_stream}

We first show how to implement a single-stream transducer for sparse encodings. 
Later, we will show how to reduce a multi-stream transducer to a single-stream transducer, even when the multi-stream transducer operates on sparse encodings.
We therefore state the result for single-stream transducers in a more general form.
We will use precomputed information to fast-forward through multiple transitions.

The data structure below will be constructed for the graph induced by all transitions that have both input and output symbol zero. This will allow us to quickly skip through long chains of such transitions.

\begin{lemma}\label{lem:pseudoforest_levelancestor}
Consider a directed graph of $n$ nodes in which each node has at most one outgoing edge. The graph can be preprocessed in $\Oh(n)$ time and space such that the following queries take $\Oh(1)$ time. 
\smallskip
\begin{itemize}
    \item \emph{Jump query:} Given a node $v$ and a distance $d \in [0\dd 2^w)$, return the unique node $v'$ such that there is a directed path of length exactly $d$ edges from $v$ to $v'$, or return $\bot$ if such $v'$ does not exist.
    \item \emph{Furthest jump query:} Given a node $v$, return the maximal $d \in \mathbb{Z}_+\cup \{\infty\}$ such that a jump query $(v, d)$ returns a node (rather than $\bot$; we always have $d\in [0\dd n)\cup \{\infty\}$).
\end{itemize}
\end{lemma}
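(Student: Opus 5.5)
Each weakly connected component of a functional graph (outdegree at most one) is either a tree whose edges point toward a root with no outgoing edge, or a single directed cycle with in-trees hanging off it. The plan is to decompose the graph into these pieces in $\Oh(n)$ time and to answer jumps by combining a level-ancestor structure on the trees with modular arithmetic on the cycles.

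\textbf{Decomposition.} First compute in-degrees and find all cycle nodes in $\Oh(n)$ time. The standard approach is to repeatedly peel nodes of in-degree zero in topological (Kahn-style) order; the nodes that are never peeled are exactly the cycle nodes. For each cycle, fix a representative, number its nodes $0,\dots,c-1$ along the edges, and store the cycle length $c$ and the nodes in an array indexed by position. Next, build a forest in which every non-cycle node points to its successor. Each tree is rooted either at a node without an outgoing edge or at a cycle node; in the latter case the cycle node is the root of its in-tree and its cycle edge is omitted from the forest. For every node $v$, compute its depth $\mathrm{dep}(v)$ in this forest and its root $\rho(v)$ by a single DFS/BFS from the roots along reversed edges. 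Finally, build the $\Oh(n)$-time, $\Oh(1)$-query level ancestor structure of Bender and Farach-Colton~\cite{DBLP:journals/tcs/BenderF04} on the forest.

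\textbf{Queries.} For a furthest jump query at $v$: if $\rho(v)$ lies on a cycle, the answer is $\infty$, since the walk never terminates. Otherwise the walk ends at the root after exactly $\mathrm{dep}(v)$ steps, so the answer is $\mathrm{dep}(v)\in[0\dd n)$.

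For a jump query $(v,d)$: if $d\le \mathrm{dep}(v)$, return the level ancestor of $v$ at depth $\mathrm{dep}(v)-d$. If $d>\mathrm{dep}(v)$ and $\rho(v)$ has no outgoing edge, return $\bot$. Otherwise $\rho(v)$ is a cycle node at position $j$ on a cycle of length $c$, and the answer is the node at position $(j+d-\mathrm{dep}(v))\bmod c$. This requires only constant-time arithmetic on $w$-bit integers, and $d<2^w$ keeps everything within one word.

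\textbf{Main obstacle.} All of the difficulty is delegated to the level ancestor structure; once it is available, the remainder consists of routine linear-time graph traversals. The one point needing care is handling cycles correctly: we must break each cycle edge at its in-tree root so that the forest is acyclic, and we must make the $d$ steps spent traversing the tree consistent with the steps taken around the cycle. This is handled by the offset $d-\mathrm{dep}(v)$ in the jump query above.
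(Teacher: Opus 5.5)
Your proposal is correct and follows essentially the same approach as the paper: drop the cycle edges to get a forest rooted at sinks or cycle nodes, answer $d \le \mathrm{dep}(v)$ with a Bender--Farach-Colton level ancestor query, and otherwise return $\bot$ or step around the cycle via $(j + d - \mathrm{dep}(v)) \bmod c$, with furthest jumps answered by the depth or $\infty$. The only differences are cosmetic: you handle all components at once and make the cycle detection explicit by in-degree peeling, whereas the paper treats each weakly connected component separately.
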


\begin{proof}
    Without loss of generality, assume that the graph is weakly connected (as otherwise we can treat each weakly connected component separately).
    If the graph contains no directed cycle, then the graph is a tree (with edges directed towards the root) and we can directly use a data structure for level ancestor queries~(e.g., \cite{DBLP:journals/tcs/BenderF04}). In this case, we answer a jump query with $\bot$ whenever $d$ exceeds the depth of the query node, and otherwise we return the tree ancestor of $v$ at distance $d$. We answer a furthest jump query by returning the depth of the query node in the tree.
    
    Otherwise, there is exactly one directed cycle (because every node has at most one outgoing edge, and thus a node cannot reach two different cycles).
    Let $S = \{s_0, \dots, s_{m - 1}\}$ be the nodes of this unique cycle, ordered such that there is an edge from $s_i$ to $s_{(i + 1) \bmod m}$ for every $i \in [0\dd m)$. 
    We consider a forest obtained by discarding the edges of the cycle.
    In this forest, edges are directed towards the root nodes, which are exactly the nodes from $S$.
    For every node $v$, we precompute the unique $\ell(v) \in [0\dd n)$ and $r(v) \in [0\dd m)$ such that $v$ is in the tree with root $s_{r(v)}$, and the path from $v$ to $s_{r(v)}$ is of length exactly $\ell(v)$ (measured in edges).
    We also construct a level ancestor data structure for each tree.
    
    Now we can answer a jump query $(v, d)$ as follows. 
    If $d \leq \ell(v)$, then we obtain the result using a level ancestor query in the tree with root $s_{r(v)}$.
    If $d > \ell(v)$, then we return $s_{(r(v) + d - \ell(v)) \bmod m}$ instead.
    For a furthest jump query, we simply return $\infty$, as every node can reach the cycle, and thus we can always reach a node at arbitrary distance.
    
    It takes $\Oh(n)$ time and space to construct a level ancestor data structure for each tree (e.g., \cite{DBLP:journals/tcs/BenderF04}). The remaining parts of the precomputation take $\Oh(n)$ time and space using standard techniques.
\end{proof}

\restatesingletape*

\begin{proof}
    For some small constant $\eps > 0$, let $M = \Theta(N^\eps)$ be a power of two. The preprocessing time will be $\Oh(q \cdot \poly(M)) \subseteq \Oh(qN)$ (where the exponent of $\poly(M)$ defines $\eps$), and the query time will be $\Oh(1+ (\absoluteenc{S} + \absoluteenc{T}) / \lg M) = \Oh(1+ (\absoluteenc{S} + \absoluteenc{T}) / \lg N)$.
    We precompute some information that will allow us to fast-forward through multiple transitions.

    \subparagraph{Lookup table for forwarding multiple tokens.} For every state $s \in [0\dd q)$,
    we precompute a lookup table $L_{s}$.
    For every bitmask $B \in \{0,1\}^{\lg M}$, the entry $L_{s}[B] = (b, a, s', z_1, z_2, D)$ indicates that $B[0\dd b)$ is a sparse encoding of a length-$a$ string, and $B[0\dd b')$ is not a sparse encoding for any $b' \in (b\dd \lg M]$.
	For explaining the remaining components, let $A[0\dd a)$ be the string encoded by $B[0\dd b)$.
	Then $s'$ is the state reached by the transducer when reading string $A$ in state $s$. Let $A'[0\dd a)$ denote the produced output string, then $z_1, z_2 \in [0\dd a]$ are the respectively maximal integers such that $A'[0\dd z_1) = 0^{z_1}$ and $A'[a - z_2\dd a) = 0^{z_2}$ (where $z_1 = z_2 = a$ if $A'[0\dd a)$ is all-zero).
	Finally, if $A'$ is not all-zero, then $D$ is the data structure from \crefbullet{lem:encode_ds} with preprocessing parameter $M$ computed for $A'[z_1\dd a - z_2)$.
	
	Given $B$, we can obtain $b$ and $a$ in constant time by using \cref{lem:sparse_longest_prefix_in_word} with preprocessing parameter~$M$.
        Since the lemma         also implies that $A$ is over alphabet $[0\dd M)$, we can obtain $A$ in $\lg M$-bit representation in $\Oh(a)$ time.
        We obtain $A'$ by performing $a$ transitions in overall $\Oh(a)$ time.
	Finally, we count the leading and trailing zeros of $A'$, and (if $A'$ is not all-zero) we compute $D$ in $\Oh(a)$ time.
	
	Computing an entry of $L_s$ takes $\Oh(1 + a)$ time. Since $A[0\dd a)$ is encoded by a prefix of~$B$, it holds $a = \Oh(M)$ by \cref{obs:sparse_size_lower}. There are $\Oh(qM)$ entries across all tables, and the preprocessing time is $\Oh(qM^2)$.
    
    \subparagraph{Data structure for forwarding long runs of zeros.}
    For every state $s$, we compute an additional table~$L'_s$ for fast-forwarding through zero-runs. For every $y \in [1\dd \floor{\sqrt[4]{M}}]$, we store $L'_s[y] = L_{s}[B]$, where $B=\senc{0^y} \cdot \one \cdot \zero^{\lg M - \absolute{\senc{0^y}} - 1}$. 
    Here, $B$ is always defined due to $\absolute{\senc{0^y}} = 2\floor{\lg y} + 2 < \lg M$.
    Since $B$ consists of $\senc{0^y}$ and an incomplete literal token, the entry $L'_s[y] = (b, a, s', z_1, z_2, D)$ satisfies $b = \absolute{\senc{0^y}}$ and $a = y$.
        Clearly, these tables can be computed in $\Oh(q \cdot \poly(M))$ time.

    We use a separate solution for forwarding through even longer runs of zeros.
    We construct a directed graph with nodes $[0\dd q)$. There is an edge from node $s$ to node $s'$ if and only if the transducer transitions from $s$ to $s'$ with input and output $0$.
    Hence, each node has at most one outgoing edge.
    The graph can be constructed in $\Oh(q)$ time by evaluating the transition with input symbol $0$ for every state.
    In additional $\Oh(q)$ time, we build the data structure $\mathcal D$ from \cref{lem:pseudoforest_levelancestor} for this graph.

    \subparagraph{Running the transducer.}   
    Our goal is to compute $\senc{T}$ from $X:=\senc{S}$.
    We perform a sequence of macro-steps. As an invariant, before each macro-step, we maintain the following information:
    \begin{itemize}
        \item integers $x \in [0\dd \absolute{X}]$ and $n \in [0\dd \absolute{S}]$ such that $X[0\dd x)=\senc{S[0\dd n)}$, and
        \item the state $s \in [0\dd q)$ reached by the transducer after reading $S[0\dd n)$, and
        \item the largest $z \in [0\dd n]$ such that $T[0\dd n) = T[0\dd n - z) \cdot 0^z$, and
        \item the sparse encoding $Y=\senc{T[0\dd n-z)}$. (Since $T[0\dd n-z)$ is either empty or ends with a non-zero symbol, $Y$ is a prefix of $\senc{T}$, recall \cref{obs:encoding_prefix_of_other}.)
    \end{itemize}
    Before the first macro-step, $s$ is the initial state of the transducer, and the other variables are initialized so that $x = n = z = 0$ and $Y = \emptystring$.
    Now, we show how to perform the next macro-step.
    We focus on describing the algorithm and analyze the complexity later.
    We access $L_{s}[X[x\dd x + \lg M)] = (b, a, s', z_1, z_2, D)$. If $x + \lg M > \absolute{X}$, then we access the table with $X[x\dd \absolute{X}) \cdot \one \cdot \zero^{\lg M + x - \absolute{X} -1}$ instead, padding with an incomplete token and thus not affecting the result. (Here, we exploit that tokens, just like Elias-$\gamma$ codes, are prefix-free, and thus an incomplete token prevents decoding beyond the end of the sparse encoding.)
    
    \subparagraph{Case 1: Forward by at least one short token.} If $b > 0$, then we first check if $a = z_1$, i.e., if the corresponding output string is all-zero.
    If $a = z_1$, then we leave $Y$ unchanged and assign $z \gets z + a$.
    Otherwise, we append $\senc{0^{z + z_1}}$ (if $z + z_1 > 0$) and the sparse encoding obtained from data structure $D$ to $Y$. In this case, we also assign $z \gets z_2$.
    Either way, we update $s \gets s'$, $x \gets x + b$, and $n \gets n + a$.
    This maintains the invariant for the next macro-step.

    \subparagraph{Case 2: Forward by one large token.}
    If $b = 0$, then the first token of $X[x\dd ]$ consists of over $\lg M$ bits. We decode it using \cref{lem:decode_elias}. Let $b'$ be the size of the token in bits.
    
    \subparagraph{Case 2a: Forward by one literal token.} If $X[x\dd x+b')=\senc{S[n]}$ is a literal token, then we perform the next transition naively.
    If the transducer transitions from $s$ to $s'$ with input $S[n]$ and output $t > 0$, then we append the encoding of $0^{z}$ (if $z > 0$) followed by the encoding of $t$ to $Y$. We then assign $z \gets 0$.
    If the output is $t = 0$, then we leave $Y$ unchanged and assign $z \gets z + 1$.
    Either way, we assign $s \gets s'$, $x \gets x + b'$, and $n \gets n + 1$.

    \subparagraph{Case 2b: Forward by one zero-run token.} If $X[x\dd x+b')=\senc{0^y}$ is a zero-run token representing the prefix $S[n\dd n+y) = 0^y$ of the remaining input, then the procedure is more complicated.
    We perform the macro-step in a series of micro-steps. Each micro-step decreases $y$, and we terminate the macro-step as soon as $y$ becomes $0$.
    We alternate between \emph{fixed} and \emph{flexible} micro-steps.
    
    For a fixed micro-step, our goal is to advance the output by $M' = \min(y, \floor{\sqrt[4]{M}})$ symbols.
    Hence, we look up $L'_s[M'] = (b, a, s', z_1, z_2, D)$. Similarly to Case~1, if $a \neq z_1$, then we first append the sparse encoding of $0^{z + z_1}$ to $Y$ (if $z + z_1 > 0$), and then we also append the sparse encoding obtained from data structure $D$. We further assign $z \gets z_2$. If $a = z_1$, then we leave $Y$ unchanged and assign $z \gets z + M'$ (in which case $a = z_1 = z_2 = M'$).    
    In any case, we update $s \gets s'$, $n \gets n + a$, and $y \gets y - M'$. (We do not update $x$ after a micro-step.)

    For a flexible micro-step, our goal is to advance the output to the next non-zero symbol. 
    Using a furthest jump query to $\mathcal D$, we obtain the maximal length $d$ such that reading $0^d$ in state $s$ leads to output $0^d$; this value satisfies $d\in [0\dd q)\cup \{\infty\}$.
    Let $\ell = \min(y, d)$, and let $s'$ be the result of a jump query $(s, \ell)$ to $\mathcal D$.
    Then, $s'$ is the state reached when reading input $0^\ell$ in state $s$.
    We assign $s \gets s'$, $z \gets z + \ell$, $n \gets n + \ell$, and $y \gets y - \ell$. (Again, we leave $x$ unchanged.)

    As soon as some micro-step terminates with $y = 0$, we conclude the macro-step by assigning $x \gets x + b'$.

    \subparagraph{Finalizing the computation.}
    Once the final macro-step terminates, if $z > 0$, we append $\senc{0^z}$ to $Y$ so that $Y=\senc{T}$. The correctness follows from maintaining the invariant for the variables, and from the description of the cases. It remains to analyze the time complexity.

    \subparagraph{Counting the steps.}
    We start by showing that the number of macro-steps is bounded by $\Oh(\ceil{\absoluteenc{S} / \lg M})$. 
    Consider two consecutive macro-steps that advance the input encoding by respectively $b_1$ and $b_2$ bits, i.e., they process $X_1 = X[x\dd x+b_1)$ and $X_2 = X[x+ b_1\dd x+b_1+b_2)$. Since both $X_1$ and $X_2$ are sparse encodings, also $X_1 \cdot X_2 = X[x\dd x+b_1 + b_2)$ is a sparse encoding. This implies $b_1 + b_2 > \lg M$ because $b_1$ is at least the maximal value from $[0\dd \lg M]$ such that $X[x\dd x+b_1)$ is a sparse encoding.
    Thus, any two consecutive macro-steps advance the input encoding by more than $\lg M$ bits, which implies that there are $\Oh(\absolute{X} / \lg M) = \Oh(\absoluteenc{S} / \lg M)$ macro-steps overall.

    Now we count the total number of micro-steps. Consider four consecutive micro-steps that are part of the same macro-step, where the first and third micro-steps are flexible, and the second and fourth are fixed.
    We define the steps via the amount of (plain) input symbols they covered. Let $m_1, \dots, m_4$ be such that, for $i \in [1\dd 4]$, the $i$-th step processed $S[n+\sum_{j=1}^{i - 1} m_j\dd n+m_i + \sum_{j=1}^{i - 1} m_j)$.
    Since each of the fixed steps is preceded by a flexible one, and each flexible step advances the output to the next non-zero symbol, it is clear that the first output symbol written in each of the fixed steps is a non-zero symbol.
    Hence, $T' = T[n+m_1\dd n+m_1+m_2+m_3]$ both starts and ends with a non-zero symbol. 
    The sparse encoding of $T'$ is therefore a substring of the sparse encodings that were appended to $Y$ during the micro-steps.
    Note that $m_2 = \floor{\sqrt[4]{M}}$ bounds the length of $T'$ from below, which also implies that the sparse encoding of $T'$ consists of $\Omega(\lg M)$ bits by \cref{obs:sparse_size_lower}.
    If we ignore the initial and final four micro-steps performed by each macro-step, then every micro-step advances the output by $\Omega(\lg M)$ bits on average. 
    Hence, there can only be $\Oh(\lceil\absoluteenc{T} / \lg M\rceil)$ micro-steps. 
    The initial and final four micro-steps of each macro-step sum to another $\Oh(\lceil\absoluteenc{S} / \lg M\rceil)$ micro-steps.

    \subparagraph{Analyzing the time complexity.}
    We have already shown that the number of micro- and macro-steps is $\Oh(\lceil(\absoluteenc{S} + \absoluteenc{T}) / \lg M \rceil)$. Thus, we can afford constant time for each of these steps. Whenever we spend more than constant time during one of the steps, the computation falls into one of the following categories:

    \begin{itemize}
        \item We look up $(b, a, s', z_1, z_2, D)$ in either of the lookup tables, and append the encoding $C$ obtained from $D$ to $Y$. This occurs in Case 1, as well as in fixed micro-steps in Case 2b. 
        By \crefbullet{lem:encode_ds}, the encoding can be obtained and appended in $\Oh(1 + \absolute{C} / \lg M)$ time.
                \item We encode a single symbol $x$ (in Case 2a) or a zero-run $0^x$ (in Cases 1 and 2a, and in a fixed micro-step in Case 2b) and append the encoding to $Y$. Either way, the token consists of $\Theta(\lg (1 + x))$ bits, and we take $\Oh(1 + \lg x / \lg M)$ time using \cref{lem:encode_elias}.         \item We decode the initial token of $X[x\dd ]$ in Case 2. The token consists of $b'$ bits, and we decode it in $\Oh(1 + b' / \lg M)$ time using \cref{lem:decode_elias}.         \item We perform a transition in Case 2a. The input symbol is either $0$, in which case the transition takes constant time, or it is some literal $x > 0$ with encoding of size $\absolute{\senc{x}} = \Theta(\lg (x + 1))$ bits. In the latter case, the transition takes $\Oh(1 + \lg(1 + x) / \lg M)$ time.     \end{itemize}
    As seen above, whenever we exceed constant time, we do so by some term $\Oh(r / \lg M)$. We then also advance either the input encoding or the output encoding by $\Theta(r)$ bits.
    Hence, the additional time sums to $\Oh((\absoluteenc{S} + \absoluteenc{T}) / \lg M)$, as required.
\end{proof}

\subsection{Reducing Multi-Stream Transducers to Single-Stream Transducers}

In this section, we show how to reduce a multi-stream transducer to single-stream transducers.
It may seem tempting to use the following straightforward reduction: rather than using $t$ input streams, say, each over alphabet $[0\dd 2^u)$, we may use a single input stream over alphabet $[0\dd 2^{u \cdot t})$. Instead of multiple length-$n$ input strings $S_1,\dots, S_t$, we would then process a single length-$n$ string $S$ defined by $\forall_{i \in [0\dd n)}\; S[i] = \sum_{j = 1}^t 2^{u \cdot (j - 1)} \cdot S_j[i]$.
However, with this technique, $\senc{S}$ can be significantly larger than the combined size of $\senc{S_1},\dots, \senc{S_t}$.
For example, if $\forall_{ j \in [1\dd t)}\; S_j = 0^n$ and $S_t = 1^n$, then $\sum_{j = 1}^t \aabsolute{S_j} = \Theta(n + (t - 1) \cdot \lg n)=\Theta(n)$, but $\aabsolute{S} = \Theta(n \cdot u \cdot (t - 1))=\Theta(n\cdot u)$.
This motivates the following definition of a zipped string that more accurately preserves the size of the sparse encoding.

\restatezippeddefinition*

We may view $x = \zip(A_1, \dots, A_t)[i]$ as a bitmask of length $\absolute{\senc{A_1[i]A_2[i]\dots A_t[i]}}$ or as an integer with binary representation $\senc{A_1[i]A_2[i]\dots A_t[i]}$. From now on, in a slight abuse of notation, we switch between these interpretations depending on which one is more convenient.
For example, we may write either $\floor{\lg x} + 1$ or $\absolute{x}$ to refer to the length of bitmask $\senc{A_1[i]A_2[i]\dots A_t[i]}$.
In the encoding $\senc{\zip(A_1, \dots, A_t)}$ of the zipped string, we will encode $x$ as a literal token $\senc{x} = \senc{\senc{A_1[i]A_2[i]\dots A_t[i]}}$. For this nested notation $\senc{\senc{\dots}}$, the inner value $\senc{\dots}$ is interpreted as an integer.

Now we analyze $\absolute{x}$ for the case where all strings $A_j$ are over $[0\dd 2^u)$ with $u \geq 1$.
Every $j\in[1\dd t]$ with $A_j[i] = 0$, even if it is not adjacent to other zeros in $A_1[i]A_2[i]\dots A_t[i]$, contributes at most two bits to~$x$.
Every $j\in[1\dd t]$ with $A_j[i] > 0$ contributes its Elias-$\gamma$ code, which consists of $2\floor{\lg A_j[i]} + 1 < 2u$ bits. 
Hence, $x$ consists of less than $2ut$ bits, which implies:

\begin{observation}\label{obs:zipped_alphabet}
Let $A_1, \dots, A_t$ be strings of equal length, each over alphabet $[0\dd 2^u)$ for a positive integer $u$. 
Then, $\zip(A_1, \dots, A_t)$ is over alphabet $[0\dd 2^{2ut})$.
\end{observation}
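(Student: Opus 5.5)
The statement to prove is \cref{obs:zipped_alphabet}: every symbol of $\zip(A_1,\dots,A_t)$ lies in $[0\dd 2^{2ut})$. The plan is to bound, for a fixed position $i$, the bit-length of $\senc{A_1[i]A_2[i]\cdots A_t[i]}$, viewed as an integer. Since that bitmask begins with a $\mathtt{1}$ when at least one symbol is nonzero, we can read it as a positive integer whose binary representation has exactly that many bits. If the length is at most $2ut-1$, then $x<2^{2ut-1}\le 2^{2ut}$. If all symbols are zero, the definition sets $\zip(\cdot)[i]=0$, which is trivially in range.

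To get the length bound, fix $i$ and let $C=A_1[i]\cdots A_t[i]$, a string of length $t$ over $[0\dd 2^u)$. We charge the tokens of $\senc{C}$ to its entries.

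\emph{Literal tokens.} A nonzero entry $A_j[i]=v$ contributes the literal token $\mathtt{1}\cdot\gamma(v)$. This has $2\floor{\lg v}+2\le 2(u-1)+2=2u$ bits, using $v<2^u$.

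\emph{Zero-run tokens.} A maximal zero-run of length $y\ge1$ contributes $2\floor{\lg y}+2$ bits. Since $\floor{\lg y}\le y-1$, this is at most $2y$ bits, i.e., at most two bits per zero entry.

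Summing, $\absolute{\senc{C}}\le 2u\cdot\#\{\text{nonzero}\}+2\cdot\#\{\text{zero}\}\le 2ut$, using $u\ge1$.

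It remains to get a strict inequality. The case needing care is when every entry is nonzero and equals $2^u-1$, which gives exactly $2ut$ bits. Here the leading bit of the bitmask is the literal indicator $\mathtt{1}$, so as an integer $x<2^{2ut}$ anyway. More generally, any bitmask of length at most $2ut$ read as a binary integer is below $2^{2ut}$. Thus the bound $\absolute{\senc{C}}\le 2ut$ already suffices, and no strict inequality is needed. The only real subtlety, then, is fixing the integer interpretation of the bitmask; the rest is the token-by-token count above.
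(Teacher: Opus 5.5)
Your proof is correct and follows the paper's own argument: each zero entry of $A_1[i]\cdots A_t[i]$ is charged at most two bits and each nonzero entry at most $2u$ bits, so the bitmask has at most $2ut$ bits. Your count is actually more careful than the paper's, which drops the literal indicator bit and claims strictly fewer than $2ut$ bits; as you note, length $\le 2ut$ already suffices. Do drop the unused remark that the bitmask begins with $\mathtt{1}$ whenever some entry is nonzero, since it is false: $0\cdot 1$ encodes as $\mathtt{0111}$.
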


Zipping asymptotically preserves the size of sparse encodings for a constant number of strings.

\restatezippedsize*

\begin{proof}
Let $Z = \zip(A_1, \dots, A_t)$. 
If $Z[i\dd j)$ is a lengthwise maximal run of zeros in $Z$, then $A_h[i \dd j)$ is all-zero for every $h \in [1\dd t]$, and there must be at least one $h$ such that $A_h[j] \neq 0$ (or $j = n$). Hence, for each zero-run in $Z$, there is at least one zero-run in some $A_h$ that has the same ending position and is at least as long as the zero-run in $Z$. It follows that encoding the zero-runs of $Z$ takes at most as many bits as encoding the zero-runs of $A_1, \dots, A_t$.

Now consider a non-zero symbol of $Z$, i.e., a literal $Z[i] = \senc{A_1[i]\dots A_t[i]}$.
In $\senc{Z}$, the contribution of $Z[i]$ is $2 \cdot \absolute{\senc{A_1[i]\dots A_t[i]}} + 2$ bits. We can ignore constant factors, and hence it suffices to analyze the bit complexity of $\senc{A_1[i]\dots A_t[i]}$.
We compare $\senc{A_1[i]\dots A_t[i]}$ against the sum of all the $\absolute{\senc{A_j[i]}}$ for which it holds $A_j[i] \neq 0$. It is easy to see that a non-zero symbol $A_j[i] = x$ contributes exactly $2\floor{\lg x} + 2$ bits to both $\senc{A_1[i]\dots A_t[i]}$ and $\senc{A_j[i]}$.
The zero-runs within $\senc{A_1[i]\dots A_t[i]}$ are encoded in a constant number of bits (due to the constant number of strings $t$), and hence we can ignore them. 
We have shown that $\absolute{\senc{A_1[i]\dots A_t[i]}}$ equals $\sum_{j \in [1\dd t], A_j[i] \neq 0} \absolute{\senc{A_j[i]}}$ up to constant factors, from which the claim follows.
\end{proof}

\begin{lemma}\label{lem:reduce_multi_to_single_nonsparse}
Consider a transducer over alphabet $[0\dd \sigma)$ with states $[0\dd q)$, where $\sigma, q \in 2^{\Oh(w)}$, and a constant number $t$ of input streams.
Assume that the transition function can be evaluated in constant time.
Then, there is a single-stream transducer with states $[0\dd q)$ that, for input $\zip(S_1, \dots, S_t)$, produces the same output as the multi-stream transducer for input $S_1, \dots, S_t$.
For every ${N \in [2\dd 2^w]}$, after an $\Oh(N)$-time preprocessing, a transition from any state of the single-stream transducer with input symbol $x$ takes $\Oh(1 + \lg (1 + x) / \lg N)$ time.
\end{lemma}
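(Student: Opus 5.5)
The single-stream transducer keeps the state set $[0\dd q)$ and the initial state $s_0$ of the multi-stream transducer. Its transition on a zipped symbol $x$ first recovers the tuple $(S_1[i],\dots,S_t[i])$ and then applies $\delta$ once. There are two cases. If $x=0$, then by definition of $\zip$ every $S_j[i]=0$, so we return $\delta(s,0,\dots,0)$. If $x>0$, then $x$, read as a bitmask, equals $\senc{S_1[i]\cdots S_t[i]}$; we decode it to the length-$t$ tuple, evaluate $\delta(s,S_1[i],\dots,S_t[i])$ in constant time, and output the resulting state and symbol. The output alphabet is $[0\dd\sigma)$, which is fine because the single-stream model only needs a common alphabet. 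We can take the union alphabet $[0\dd \max(\sigma,2^{2ut}))$ with $u=\ceil{\lg\sigma}$, using \cref{obs:zipped_alphabet}; this size is still $2^{\Oh(w)}$ because $t=\Oh(1)$. Correctness is then immediate by induction over the steps: at every step both transducers are in the same state and write the same symbol.

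The substantive part is the cost of one transition, which has to be $\Oh(1+\lg(1+x)/\lg N)$. We perform the $\Oh(N)$-time preprocessing of \cref{lem:decode_sparse_naive_host}. The bitmask $\senc{S_1[i]\cdots S_t[i]}$ has exactly $\floor{\lg x}+1$ bits, and its first bit must be a $\mathtt 1$ or a zero-run token; in either case its length is known from $x$, so we have the bitmask explicitly. By \cref{lem:decode_sparse_naive_host}(\ref{lem:decode_sparse_naive_bullet}), applied with the length-$t$ string and $u$-bit output, decoding costs $\Oh(t+\absolute{x}/\lg N)=\Oh(1+\lg(1+x)/\lg N)$, since $t$ is constant. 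The entries $S_j[i]<\sigma\le 2^{\Oh(w)}$ then fit in $\Oh(1)$ words, so evaluating $\delta$ takes $\Oh(1)$ time.

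The one point needing care is how $x$ is handed to the transition. One option is to view $x$ as an integer whose binary representation is the bitmask, in which case it fits in $\Oh(ut)=\Oh(w)$ bits and is a constant number of words. The other is to have it passed as a $\floor{\lg x}+1$-bit string. Either way, stripping leading zeros is unnecessary, because a nonzero $x$ starts with bit $\mathtt 1$ by construction: every $\senc{\cdot}$ of a nonempty string begins with either a literal token starting with $\mathtt 1$, or a zero-run token $\mathtt 0\gamma(\cdot)$. That second case is the subtle one. If the first token is a zero-run, the integer view drops the leading $\mathtt 0$, so the bitmask cannot be recovered from $x$ alone. To fix this, I would prepend a fixed $\mathtt 1$ marker bit in the integer interpretation, or equivalently note that the tuple length $t$ is known, so a missing leading $\mathtt 0$ can be restored by testing whether decoding $\mathtt 0\cdot x$ yields exactly $t$ symbols. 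I expect this encoding/interpretation subtlety to be the only real obstacle. Everything else is a direct application of \cref{lem:decode_sparse_naive_host}, and in any event decoding a constant number of tokens with \cref{lem:decode_elias} handles it token by token within the same bound.
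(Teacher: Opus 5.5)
Your core argument is the paper's proof. You keep the state set, decode a nonzero zipped symbol with part~(i) of \cref{lem:decode_sparse_naive_host} in $\Oh(t+\lg x/\lg N)$ time, and evaluate $\delta$ once. The paper treats $x$ directly as the bitmask $\senc{S_1[i]\cdots S_t[i]}$ and never mentions leading zeros, so your last paragraph goes beyond it. You are right that this is the delicate point, but your two fixes are not equivalent, and the second one fails.

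\textbf{Why the second fix fails.} An initial zero-run $0^y$ of the tuple produces $1+\floor{\lg y}$ leading zeros, not just one. Worse, for $t\ge 3$ the plain integer reading, which is what the paper's nested notation $\senc{\senc{\cdots}}$ prescribes, is not injective. For example, $\senc{(2,0,1)}=\mathtt{1010}\cdot\mathtt{01}\cdot\mathtt{11}$ and $\senc{(0,0,7)}=\mathtt{0010}\cdot\mathtt{100111}$ both denote the integer $167$. So no test based on the known length $t$, whether it restores one zero or several, can recover the tuple. The paper's proof tacitly needs a fix here too.

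\textbf{The fix that works.} Commit to the marker-bit version: take as the symbol the integer $x$ whose binary representation is $\mathtt{1}\cdot\senc{S_1[i]\cdots S_t[i]}$. Then:
\begin{itemize}
\item the bitmask is exactly the last $\floor{\lg x}$ bits of $x$;
\item the alphabet bound of \cref{obs:zipped_alphabet} grows by only one bit;
\item the decoding cost $\Oh(t+\absolute{\senc{S_1[i]\cdots S_t[i]}}/\lg N)$ is exactly $\Oh(1+\lg(1+x)/\lg N)$;
\item the proof of \cref{lem:zipped_encoding_size} is unaffected.
\end{itemize}
Without the marker, two of your claims are false whenever $S_1[i]=0$: that a nonzero $x$ begins with $\mathtt 1$, and that the bitmask has exactly $\floor{\lg x}+1$ bits.
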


\begin{proof}
By \cref{obs:zipped_alphabet}, the single-tape transducer has input alphabet of size $2^{\Oh(w)}$.
It uses the same states as the multi-stream transducer and merely simulates its transition function.
If the input symbol of the single-stream transducer is $x \neq 0$, then it is some sparse encoding ${x = \senc{x_1\cdots x_t}}$. We obtain $x_1\cdots x_t$ using \crefbullet{lem:decode_sparse_naive} in $\Oh(1 + \lg x / \lg N)$ time (due to $t = \Oh(1)$).
Otherwise, we use $x_1 = x_2 = \dots = x_t = 0$. 
Now, we evaluate the transition function of the multi-stream transducer for the current state with input symbols $x_1, \dots, x_t$, which takes constant time and results in the new state and the output symbol. To allow for $x = 0$, we write the complexity as $\Oh(1 + \lg(x + 1) / \lg N)$.
\end{proof}

\begin{corollary}\label{lem:accelerate_multi_stream_on_zipped}
	Consider a transducer over alphabet $[0\dd \sigma)$ with states $[0\dd q)$, where $\sigma, q \in 2^{\Oh(w)}$, and a constant number $t$ of input streams.	
	Assume that the transition function can be evaluated in constant time.
	For every $N \in [2 \dd 2^w)$, after an $\Oh(qN)$ time preprocessing, the following holds.

    Let $S_1, \dots, S_t$ be input strings of length $n \in 2^{\Oh(w)}$ for which the transducer produces output string $T$. Then, in ${\Oh(1 + {(\sum_{j = 1}^t \absoluteenc{S_j} + \absoluteenc{T}) / \lg N})}$ time, one can compute $\senc{T}$ from $\senc{\zip(S_1, \dots, S_t)}$.
\end{corollary}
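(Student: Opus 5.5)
The plan is to combine \cref{lem:reduce_multi_to_single_nonsparse} with \cref{lem:accelerate_single_stream}, and then use \cref{lem:zipped_encoding_size} to rewrite the bound in terms of the original inputs.

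First, \cref{lem:reduce_multi_to_single_nonsparse} turns the given $t$-stream transducer into a single-stream transducer. This transducer has the same state set $[0\dd q)$ and reads $Z:=\zip(S_1,\dots,S_t)$. By \cref{obs:zipped_alphabet}, its input alphabet has size $2^{\Oh(w)}$: we may take $u=\Oh(w)$ because $\sigma\in 2^{\Oh(w)}$ and $t=\Oh(1)$. On input $Z$ it produces exactly the output $T$ of the multi-stream transducer. After an $\Oh(N)$-time preprocessing, evaluating its transition function on an input symbol $x$ costs $\Oh(1+\lg(1+x)/\lg N)$. This is precisely the hypothesis of \cref{lem:accelerate_single_stream}.

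Next, we apply \cref{lem:accelerate_single_stream} to this single-stream transducer with the same parameter $N$. Its preprocessing takes $\Oh(qN)$ time, so the total preprocessing is $\Oh(N)+\Oh(qN)=\Oh(qN)$. The string $Z$ has length $n\in 2^{\Oh(w)}$, so the theorem computes $\senc{T}$ from $\senc{Z}$ in $\Oh(1+(\absoluteenc{Z}+\absoluteenc{T})/\lg N)$ time.

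Finally, \cref{lem:zipped_encoding_size} gives $\absoluteenc{Z}=\Oh(\sum_{j=1}^t\absoluteenc{S_j})$ since $t=\Oh(1)$, which yields the claimed running time.

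One point needs care. The transition-cost hypothesis of \cref{lem:accelerate_single_stream} is stated for evaluation time with precomputation that must fit into the $\Oh(qN)$ budget. The precomputation of \cref{lem:reduce_multi_to_single_nonsparse} is independent of the state and costs $\Oh(N)$, so it fits. The tabulation inside \cref{lem:accelerate_single_stream} simulates transitions on short inputs; those symbols are below $M$, so each such transition costs $\Oh(1)$, and that tabulation stays within its $\Oh(qN)$ bound. I expect no real obstacle here: the step needing the most attention is simply checking that the alphabet and length parameters satisfy the $2^{\Oh(w)}$ hypotheses, and the above covers it.
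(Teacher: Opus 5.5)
Your proposal is correct and follows the paper's proof essentially verbatim. You use \cref{obs:zipped_alphabet} to keep the zipped alphabet within $2^{\Oh(w)}$, build the single-stream transducer on $\zip(S_1,\dots,S_t)$ via \cref{lem:reduce_multi_to_single_nonsparse}, accelerate it with \cref{lem:accelerate_single_stream}, and bound $\absoluteenc{\zip(S_1,\dots,S_t)}$ by $\Oh(\sum_j \absoluteenc{S_j})$ with \cref{lem:zipped_encoding_size}; your extra check that the $\Oh(N)$ preprocessing of the reduction fits into the $\Oh(qN)$ budget is a detail the paper leaves implicit.
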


\begin{proof}
Due to $\sigma \in 2^{\Oh(w)}$, and due to the constant number of streams, \cref{obs:zipped_alphabet} implies that $\zip(S_1, \dots, S_t)$ is over integer alphabet of size $2^{\Oh(w)}$.
Hence, we can use the transducer from \cref{lem:reduce_multi_to_single_nonsparse} and preprocess it with \cref{lem:accelerate_single_stream}. The query time bound then follows from \cref{lem:zipped_encoding_size}.
\end{proof}

\subsection{Zipping Sparse Encodings}

In this section, we show how to efficiently zip strings in sparse encoding.
Then, the main result stated in \cref{lem:accelerate_multi_stream} is a direct corollary of \cref{lem:fast_multi_zip} (below) and \cref{lem:accelerate_multi_stream_on_zipped}.

\begin{theorem}\label{lem:fast_pair_zip}
For every $N \in [2 \dd 2^w]$, after an $\Oh(N)$-time preprocessing, the following holds.
Let $A_1, A_2 \in [0\dd \sigma)^n$ be strings with $n,\sigma \in {2^{\Oh(w)}}$.
Given $\senc{A_1}$ and $\senc{A_2}$, one can compute $\senc{\zip(A_1, A_2)}$ in $\Oh(1 + (\absoluteenc{A_1} + \absoluteenc{A_2}) / \lg N)$ time.
\end{theorem}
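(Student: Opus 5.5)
We treat the two encodings as a two-pointer merge, in the same macro-step/micro-step style as the proof of \cref{lem:accelerate_single_stream}. The state is: a position $n$ in the plain strings, bit offsets $x_1,x_2$ with $\senc{A_j}[0\dd x_j)=\senc{A_j[0\dd n)}$ (except possibly a partially consumed zero-run token), remaining lengths $y_1,y_2$ of zero-run tokens that have been decoded but not yet fully consumed, a counter $z$ of pending trailing zeros of $Z=\zip(A_1,A_2)$, and the output prefix $Y=\senc{Z[0\dd n-z)}$. By \cref{obs:encoding_prefix_of_other}, $Y$ is always a prefix of $\senc{Z}$.

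\textbf{Preprocessing.} For a small constant $\eps$, let $M=\Theta(N^\eps)$. We tabulate, for every pair of windows $B_1,B_2\in\{0,1\}^{\lg M/4}$ (giving $\Oh(\sqrt M)$ pairs), the following data, using \cref{lem:sparse_longest_prefix_in_word} to parse them:
\begin{itemize}
\item the largest common plain length $a$ such that both windows begin with complete encodings of strings $A'_1,A'_2$ of length $a$;
\item the bit lengths $b_1,b_2$ of those encodings;
\item the counts $z_1,z_2$ of leading and trailing zeros of $\zip(A'_1,A'_2)$;
\item a \crefbullet{lem:encode_ds} structure for the middle part of $\senc{\zip(A'_1,A'_2)}$.
\end{itemize}
A window may end inside a zero-run token. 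We handle this by also indexing the table by a small number $y\le M^{1/4}$ of zeros owed from a partially consumed run, or equivalently by cutting runs with $\senc{0^y}$ prefixes as in the $L'_s$ tables. Every entry is computed by naive simulation in $\poly(M)$ time, so the total preprocessing is $\Oh(N)$.

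\textbf{Query.} Each step does one of three things.
\begin{enumerate}
\item If both streams currently start with short tokens, one table lookup advances both streams and appends output as in Case~1 of \cref{lem:accelerate_single_stream}.
\item If a stream starts with a long token (over $\lg M/4$ bits), we decode it using \cref{lem:decode_elias}. A long literal is paired with the current symbol of the other stream, which must be decoded as well. We then encode the pair with \cref{lem:encode_elias}, nesting the Elias codes. This costs $\Oh(1+r/\lg M)$, which is charged to the $\Theta(r)$ input bits consumed.
\item A long zero-run $0^y$ in one stream is the non-synchronized case and the main obstacle. While it lasts, $Z$ equals the other stream with each non-zero symbol $v$ replaced by $\senc{0v}$, and each zero-run copied unchanged. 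We therefore run a single-stream "re-encoding" macro-step on the other stream, capped at $y$ plain symbols, using a table indexed by a window of that stream alone. A long zero-run in the other stream too is the synchronized case: we take $\ell=\min(y,y')$, add $\ell$ to $z$, and subtract $\ell$ from both.
\end{enumerate}

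\textbf{Accounting.} As in the single-stream proof, any two consecutive table steps together consume more than $\Omega(\lg M)$ input bits, by maximality of the parsed prefixes. The only exception is a step cut short by the end of a pending run. Each such cut finishes a token, so we charge it to that token, of which there are at most $\Oh(1+\sum_j\absoluteenc{A_j}/\lg M)$ counted among long tokens. Synchronized zero steps each end a token, so the same bound applies. Steps with superconstant cost are paid by the bits they read or write. By \cref{lem:zipped_encoding_size}, the output size is $\Oh(\absoluteenc{A_1}+\absoluteenc{A_2})$. Together this gives the claimed $\Oh(1+(\absoluteenc{A_1}+\absoluteenc{A_2})/\lg N)$ time.

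The subtle point, which we expect to need the most care, is showing that the cut-short steps in case 3 do not accumulate beyond one per token of the long run. A single long run in $A_1$ can span many short tables steps on $A_2$. Each of these still consumes a near-full window of $A_2$, except the last one, which is charged to the run.
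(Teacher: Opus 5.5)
Your proposal is correct and is essentially the paper's argument. The paper also uses a table indexed by $\Theta(\lg N)$-bit windows of both encodings plus a count of dangling zeros; its constraint that the longer decoded side may overhang only by zeros is what your owed zeros from partially consumed runs implement. It likewise decodes long tokens separately, and it bounds the number of steps by the same maximality argument, namely that two consecutive lookups advance one of the encodings by a full window. Your separate single-stream re-encoding table for long unsynchronized runs corresponds to the paper's observation that entries with $z_1\ge M$ coincide with the entry for $z_1=M$ except for the returned $z_1'$, which lets the paper get by with a single table.
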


\begin{proof}
For some small constant $\eps > 0$, we use $m, M \in \mathbb Z_+$ with $2^m = M = \Theta(N^\eps)$.
We can afford preprocessing time $\Oh(\poly(M))$ and we aim to gain speedup $m = \Theta(\lg N)$.
We define $A_3 = \zip(A_1, A_2)$ and sparse encodings $B_j = \senc{A_j}$ for $j \in [1\dd 3]$.
We have to show how to compute $B_3$ from $B_1$ and $B_2$. 

We use a precomputed lookup table $L$ to move through the merging process in steps of encoded blocks of length approximately $m$.
Since runs of zeros are not necessarily aligned in $A_1$ and $A_2$, we index the lookup table not only with the upcoming bits of $B_1$ and $B_2$ but also with some number of already decoded but not yet processed zeros.
The table stores, for every $X,Y \in \{0,1\}^{m}$ and $(z_1, z_2) \in \{ (0, z), (z, 0) \mid z \in \Zz \}$, an entry $L[X, Y, z_1, z_2] = (x, y, z_1', z_2', \ell, r, c, C)$.
(We will show later that it suffices to consider $\Oh(\poly(M))$ values of $(z_1,z_2)$.)
Here, $x,y \in [0\dd m]$ must satisfy the following constraints:

\begin{enumerate}[(1)]
\item\label{prop:defineX} The integer $x$ indicates that $X[0 \dd x) = \senc{\mathcal X'}$ for some string $\mathcal X'$. Let $\mathcal X = 0^{z_1} \cdot \mathcal X'$.
\item\label{prop:defineY} The integer $y$ indicates that $Y[0 \dd y) = \senc{\mathcal Y'}$ for some string $\mathcal Y'$. Let $\mathcal Y = 0^{z_2} \cdot \mathcal Y'$.
\item\label{prop:danglingzeros} If $\absolute{\mathcal X} < \absolute{\mathcal Y}$, then $\mathcal Y$ has suffix $0^{\absolute{\mathcal Y} - \absolute{\mathcal X}}$. If $\absolute{\mathcal X} > \absolute{\mathcal Y}$, then $\mathcal X$ has suffix $0^{\absolute{\mathcal X} - \absolute{\mathcal Y}}$.
\end{enumerate}

For each entry, among all the assignments of $x$ and $y$ that satisfy the constraints, we choose one that maximizes $x + y$. (Trivially, there is at least one suitable assignment, i.e., $x = y = 0$).
Once we have fixed $x$ and $y$, we can define the rest of the entry as follows:

\begin{itemize}
\item The integer ${\ell \in [0\dd \min(\absolute{\mathcal X}, \absolute{\mathcal Y})]}$ is the largest value such that $\mathcal X[0\dd \ell) = \mathcal Y[0\dd \ell) = 0^\ell$, and ${r \in [0\dd \min(\absolute{\mathcal X}, \absolute{\mathcal Y})]}$ is the smallest value such that both $\mathcal X[r \dd \absolute{\mathcal X})$ and $\mathcal Y[r \dd \absolute{\mathcal Y})$ are all-zero. Note that $\ell \geq r$ if and only if both $\mathcal X$ and $\mathcal Y$ are all-zero, which is the case if and only if $r = 0$.
\item If $r > 0$, then $C = \senc{\zip(\mathcal X[\ell\dd r), \mathcal Y[\ell\dd r))}$ and $c = \absolute{C}$. If $r = 0$, then $C$ is empty and $c = 0$.
\item Finally, $z_1' = \absolute{\mathcal X} - r$ and $z_2' = \absolute{\mathcal Y} - r$ indicate the number of trailing zeros of $\mathcal X$ and $\mathcal Y$ that have not been encoded in $C$.
\end{itemize}

We cannot afford to compute the table for arbitrary $z_1,z_2$. For efficiently implementing the case when $z_1 \geq M$, we analyze the strings $\mathcal X, \mathcal Y$ in the definition of entry 
\[L[X, Y, z_1, 0] = (x, y, z_1', z_2', \ell, r, c, C).\]
Since $Y$ is of length at most $m$, the string $\mathcal Y' = \mathcal Y$ must be of length less than $M$ (see \cref{obs:sparse_size_lower}).
The string $\mathcal X$ has prefix $0^{z_1}$ and is thus of length at least $M$.
Due to $\absolute{\mathcal Y} < M \leq \absolute{\mathcal X}$ and $\mathcal X[0\dd M) = 0^M$, constraint~\ref{prop:danglingzeros} implies that $\mathcal X$ is all-zero, which implies that $x$ is the (unique and) maximal integer such that $X[0\dd x)$ encodes a zero-run token $0^{x'}$ (possibly $x = x' = 0$). For this value $x$, and for any $y$ such that $Y[0\dd y)$ is a sparse encoding, constraint~\ref{prop:danglingzeros} is trivially satisfied.
Consequently, it must be that $y$ is the maximal value such that $Y[0\dd y)$ is a sparse encoding. Note that $x$ depends only on $X$, and $y$ depends only on~$Y$, but neither of the two depends on $z_1$.
Since $\min(\absolute{\mathcal X},\absolute{\mathcal Y}) = \absolute{\mathcal Y}$ is independent of $z_1$, also the values $\ell$, $r$, and $z_2'$ are independent of $z_1$. Finally, we have $z_1' = \absolute{\mathcal X} - r = z_1 + x' - r$, where $x'$ depends on $X$, but not on~$z_1$.
We have shown that the entire entry, apart from $z_1'$, is independent of $z_1$.

Instead of looking up $L[X, Y, z_1, 0] = (x, y, z_1 + x' - r, z_2', \ell, r, c, C)$, we can look up the almost identical entry $L[X, Y, M, 0] = (x, y, M + x' - r, z_2', \ell, r, c, C)$. Then, we can compute $z_1' = (M + x' - r ) - M + z_1$ in constant time. Hence, we can simulate access to entries with $z_1 \geq M$ without explicitly precomputing them. A symmetric approach can be used for $z_2 \geq M$. Therefore, it suffices to compute the table for $(z_1, z_2) \in \{ (0, z), (z, 0) \mid z \in [0\dd M] \}$. 
The table has $\Oh(\poly(M))$ entries and each entry can be computed naively by brute-force in $\Oh(\poly(M))$ time.

\subparagraph{Running the merging procedure.} Now we are ready to describe the merging procedure.
We proceed in a sequence of steps. Before each step, we maintain the following information.

 	\begin{itemize}
        \item For non-negative integers $b_1, b_2$ (initialized with $0$), we have already processed $B_1[0\dd b_1)$ and $B_2[0\dd b_2)$, and we have written bitmask $B_3$; each of the three strings is a sparse encoding (i.e., we did not split a token), and $B_3$ does not end with a zero-run token.
        \item We know non-negative integers $a, z_1, z_2, z_3$ (initialized with $0$) such that
        \begin{itemize}
            \item $B_1[0\dd b_1)$ encodes $A_1[0\dd a + z_1) = A_1[0\dd a) \cdot 0^{z_1}$, and
            \item $B_2[0\dd b_2)$ encodes $A_2[0\dd a + z_2) = A_2[0\dd a) \cdot 0^{z_2}$, and
            \item $B_3$ encodes $A_3[0\dd a - z_3)$, and $A_3[0\dd a) = A_3[0\dd a - z_3) \cdot 0^{z_3}$.
        \end{itemize}
    \end{itemize}
    
At the beginning of a step, let $z = \min(z_1, z_2)$. If $z > 0$, then we decrease~$z_1$ and $z_2$ by $z$, and increase~$a$ and $z_3$ by~$z$, which does not affect the above invariants and ensures that at least one of $z_1$ and $z_2$ is zero.

\subparagraph{Using the lookup table.}
We retrieve \[L[B_1[b_1 \dd b_1 + m), B_2[b_2 \dd b_2 + m), z_1, z_2] = (b_1', b_2', z_1', z_2', \ell, r, c, C)\]
and assign $b_1 \gets b_1 + b_1'$, $b_2 \gets b_2 + b_2'$, $z_1 \gets z_1'$, and $z_2 \gets z_2'$. (We use the same padding as in the proof of \cref{lem:accelerate_single_stream} when we are close to the end of $B_1$ or $B_2$.)
We terminate the step if $c = 0$.
Otherwise, we first append $\senc{0^{z_3 + \ell}}$ (assuming $z_3 + \ell > 0$) and then $C$ to $B_3$, and conclude the step by assigning $a \gets a + r$ and $z_3 \gets 0$.
Due to the constraints satisfied by the lookup table, this correctly maintains the invariants of the algorithm.
Producing $\senc{0^{z_3 + \ell}}$ takes $\Oh(1 + \absolute{\senc{0^{z_3 + \ell}}} / m)$ time with \cref{lem:encode_elias}.
The term $\Oh(\absolute{\senc{0^{z_3 + \ell}}} / m)$ amortizes to $\Oh(1 / m)$ time per bit of $B_3$, summing to $\Oh(\absoluteenc{A_3} / m)$ time overall.
Apart from that, the time per step is constant.
Later, we show that the number of steps is $\Oh((\absoluteenc{A_1} + \absoluteenc{A_2}) / m)$. Hence, we can afford constant time per step.

\subparagraph{Processing large tokens.}
If the table returns $b_1' = b_2' = 0$, then we make no progress and need to pursue a different approach.
In this case, we first observe that neither $B_1[b_1\dd \absolute{B_1})$ nor $B_2[b_2\dd \absolute{B_2})$ starts with a zero-run token of $m$ bits or less.
Otherwise, the table would have been able to advance $B_1$ or $B_2$ by at least this token.
If $B_1[b_1\dd \absolute{B_1})$ starts with a zero-run token $\senc{0^x}$, i.e., if $B_1[b_1] = 0$, we obtain $x$ using \cref{lem:decode_elias}. 
Then, we increase $b_1$ by $\absolute{\senc{0^x}}$ and $z_1$ by $x$, which concludes the step.
This takes overall $\Oh(1 + \lg x / m)$ time, which amortizes to $\Oh(1 / m)$ time per decoded bit (due to $\lg x = \Omega(m)$).
If $B_2[b_2\dd \absolute{B_2})$ starts with a zero-run token, then we proceed symmetrically.

It remains to consider the case where both $B_1[b_1\dd \absolute{B_1})$ and $B_2[b_2\dd \absolute{B_2})$ start with a literal. If $z_3 > 0$, we append $\senc{0^{z_3}}$ to $B_3$ and assign $z_3 \gets 0$.
If $z_1 = 0$, then $B_1[b_1\dd \absolute{B_1})$ starts with $\senc{A_1[a]}$. We obtain $A_1[a]$ using \cref{lem:decode_elias} and increase $b_1$ by $\absolute{\senc{A_1[a]}}$.
If $z_2 = 0$, then $B_2[b_2\dd \absolute{B_2})$ starts with $\senc{A_2[a]}$. We obtain $A_2[a]$ using \cref{lem:decode_elias} and increase $b_2$ by $\absolute{\senc{A_2[a]}}$.
Now we proceed according to the following three cases. If $z_1 = z_2 = 0$, then we obtain $D = \senc{A_1[a]\cdot A_2[a]}$.
If $z_2 > 0$ (and hence $z_1 = 0$), then we instead produce $D = \senc{A_1[a] \cdot 0}$ and assign $z_2 \gets z_2 - 1$.
Finally, if $z_1 > 0$ (and hence $z_2 = 0$), then we produce $D = \senc{0 \cdot A_2[a]}$ and assign $z_1 \gets z_1 - 1$.
We conclude the step by computing $\senc{D}$, appending it to $B_3$, and assigning $a \gets a + 1$.

Obtaining $A_1[a]$ with \cref{lem:decode_elias} takes $\Oh(1 + \absoluteenc{A_1[a]} / m)$ time, where the term $\Oh(\absoluteenc{A_1[a]} / m)$ amortizes to $\Oh(1/m)$ per bit of $B_1$. The same argument holds for $A_2[a]$ with respect to the bits of $B_2$.
Computing $D$ and then $\senc{D}$ takes $\Oh(1 + \absolute{D} / m) = \Oh(1 + \absoluteenc{D} / m)$ time using \cref{lem:encode_elias}, where $\Oh(\absoluteenc{D} / m)$ amortizes to $\Oh(1/m)$ per bit of $B_3$.

\subparagraph{Counting the steps.}
Now we count the number of times we access the lookup table.
For any two consecutive lookups, if at least one of them returns $b_1' + b_2' = 0$, then we will process a large token consisting of over $m$ bits. 
Hence, we can run into this case at most $\Oh((\absoluteenc{A_1} + \absoluteenc{A_2}) / m)$ times.
Now consider two consecutive lookups such that neither of them leads to $b_1' + b_2' = 0$.
In the first one, we retrieve entry%
\[L[B_1[b_1 \dd b_1 + m), B_2[b_2 \dd b_2 + m), z_1, z_2] = (b_1', b_2', z_1', z_2', \ell, r, c, C),\]
and in the second one we retrieve, for values $z_1'' = z_1' - \min(z_1', z_2')$ and $z_2'' = z_2' - \min(z_1', z_2')$, entry%
\[L[B_1[b_1 + b_1' \dd b_1 + b_1' + m), B_2[b_2 + b_2' \dd b_2 + b_2' + m), z_1'', z_2''] = (b_1'', b_2'', z_1''', z_2''', \ell', r', c', C').\]
We claim that either $b_1' + b_1'' \geq m$ or $b_2' + b_2'' \geq m$. Otherwise, there would have been values $\ell'', r'', c'', C''$ such that the first table entry could have been 
\[L[B_1[b_1 \dd b_1 + m), B_2[b_2 \dd b_2 + m), z_1, z_2] = (b_1' + b_1'', b_2' + b_2'', z_1''', z_2''', \ell'', r'', c'', C'').\]
Here, if $\ell \neq r$ then $\ell'' = \ell$, and otherwise $\ell'' = \ell'$. Similarly, if $r' \neq \ell'$ then $r'' = r'$, and otherwise $r'' = r$.
Finally, if $\ell'' \neq r''$, then $C'' = C \cdot \senc{0^x} \cdot C'$. 
It can be readily verified that the entry satisfies the conditions required by the lookup table. 
This contradicts the fact that we choose each entry such that it maximizes the progress.
Hence, in any two consecutive steps, we either encounter the case where we cannot use the table to make any progress, or we advance by at least $m$ bits in at least one of $B_1$ and $B_2$.
\end{proof}

\label{sec:lem:fast_multi_zip}
\restatemultizip*

\begin{proof}
The algorithm is recursive. For the base case $t = 1$, we describe a single-stream transducer that computes $\zip(A_1)$ from $A_1$. The transducer has a single state $s$, and the transition function is defined by $\delta(s, 0) = (s,0)$ and $\forall_{x \in [1\dd \sigma)}\; \delta(s, x) = (s,\senc{x})$, which can be evaluated in $\Oh(1 + \lg x / \lg N)$ time using \cref{lem:encode_elias}. 
Hence, we can apply \cref{lem:accelerate_single_stream} with preprocessing parameter $N$. Then, we can obtain $\senc{\zip(A_1)}$ from $\senc{A_1}$ in $\Oh(\absoluteenc{A_1} / \lg N) = \Oh(\absoluteenc{\zip(A_1)}/ \lg N)$ time.

For $t > 1$, we recursively compute $\senc{\zip(A_1, \dots, A_{t - 1})}$. By \cref{obs:zipped_alphabet}, the string $\zip(A_1, \dots, A_{t - 1})$ is over integer alphabet of size $2^{\Oh(w)}$. Thus, we can use \cref{lem:fast_pair_zip} to compute $\senc{\zip(\zip(A_1, \dots, A_{t-1}), A_t)}$ from $\senc{\zip(A_1, \dots, A_{t - 1})}$ and $\senc{A_t}$ in time $\Oh(1 + (\absoluteenc{\zip(A_1, \dots, A_{t-1})} + \absoluteenc{A_t}) / \lg N)$. By \cref{lem:zipped_encoding_size}, the time complexity can be written as $\Oh(1 + \sum_{j = 1}^t \absoluteenc{A_j} / \lg N)$. 
Finally, we describe a single-stream and single-state transducer that produces $\senc{\zip(A_1, \dots, A_t)}$ given $\senc{\zip(\zip(A_1, \dots, A_{t-1}), A_t)}$. On input zero, it produces output zero. In the most general case, a non-zero input symbol is of the form $\senc{\senc{A_1[i]\dots A_{t - 1}[i]} \cdot A_t[i]}$ for some $i \in [0\dd n)$, and the transition outputs $\senc{A_1[i]\dots A_t[i]}$.
Using \crefbullet{lem:encode_sparse_naive} (with constant $n$), we can decode $\senc{\senc{A_1[i]\dots A_{t - 1}[i]} \cdot A_t[i]}$ into plain symbols $A_1[i], \dots, A_t[i]$ and then obtain $\senc{A_1[i]\dots A_t[i]}$ in \[\Oh(\absoluteenc{A_1[i]\dots A_t[i]}) = \Oh(\absolute{\senc{\senc{A_1[i]\dots A_{t - 1}[i]} \cdot A_t[i]}})\] time.
Hence, we can apply \cref{lem:accelerate_single_stream} with preprocessing parameter $N$. Then, we can obtain $\senc{\zip(A_1\dots A_t)}$ from $\senc{\zip(\zip(A_1, \dots, A_{t-1}), A_t)}$ by running the transducer in $\Oh(1 + \sum_{j = 1}^t \absoluteenc{A_j} / \lg N)$ time (where the time again follows from \cref{lem:zipped_encoding_size}).
There are $t=\Oh(1)$ levels of recursion, and the overall time is as claimed.

\end{proof}

\section{Faster Recompression Queries}
\label{sec:appendixrecompression}

For a more efficient implementation of recompression, we report the sets in the following representation.
For positive integer $j$, we define the array $\mathcal B_j[0 \dd n)$ as $\mathcal B_j[0] := 0$ and \[\forall_{i \in [1 \dd n)}\; \mathcal B_j[i] := \max(\{ k - j + 1 \mid i \in \B_k\} \cup \{0\}).\]%
Since $\B_k$ is of size $\Oh(n / \lambda_k)$, it holds $\sum_{i \in [0\dd n), \mathcal B_j[i] = k} \lg(1 + \mathcal B_j[i]) = \Oh(n \lg (k - j + 2) / \lambda_k)$.
Therefore, we have $\sum_{i = 0}^{n - 1} \lg(1 + \mathcal B_j[i]) = \Oh(\sum_{k = j}^\infty n \lg (k - j + 2) / \lambda_k) = \Oh(n / \lambda_j)$.
Hence, by \cref{obs:sparse_size}, the non-zero entries contribute $\Oh(n / \lambda_j)$ bits to $\absolute{\senc{\mathcal B_j}}$, while the $\Oh(n / \lambda_j)$ runs of zeros contribute $\Oh(n \lg (1 + \lambda_j) / \lambda_j)$ bits.

\begin{observation}\label{obs:sizeofBj}
    It holds $\absolute{\senc{\mathcal B_j}} = \Oh(n \lg (1 + \lambda_j) / \lambda_j)$.
\end{observation}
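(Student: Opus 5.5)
We bound the two contributions to $\absolute{\senc{\mathcal B_j}}$ from \cref{obs:sparse_size} separately: the literal tokens (the non-zero entries) and the zero-run tokens.

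\emph{Literal tokens.} Each literal $u$ costs $2\floor{\lg u}+2=\Oh(\lg(1+u))$ bits. Group the non-zero entries by their value. An entry with $\mathcal B_j[i]=k-j+1$ means that $i\in\B_k$, so the number of entries with that value is at most $\absolute{\B_k}\le 4n/\lambda_k$ by \cref{prop:bk_properties}\eqref{prop:bk_properties:set_size}. Hence the literal bits are bounded by
$\Oh\bigl(\sum_{k\ge j} \frac{n}{\lambda_k}\lg(k-j+2)\bigr)$.
Since $\lambda_k=(\frac87)^{\floor{k/2}}$, we have $\lambda_k\ge \lambda_j\cdot(\frac87)^{(k-j-1)/2}$. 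The sum is therefore dominated by $\frac{n}{\lambda_j}\sum_{d\ge0}\lg(d+2)\,(\frac78)^{(d-1)/2}$. This series converges to a constant, so the literal tokens contribute $\Oh(n/\lambda_j)$ bits.

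\emph{Zero-run tokens.} The set of non-zero positions is exactly $\B_j$, because $\B_k\subseteq\B_j$ for all $k\ge j$ and every $i\in\B_j$ gets a value of at least $1$. So the number of non-zeros is $a=\absolute{\B_j}\le 4n/\lambda_j$. By the first bullet of \cref{obs:sparse_size}, the zero-runs contribute $\Oh((a+1)\lg\frac{n+1}{a+1})$ bits. The function $x\mapsto x\lg\frac{n+1}{x}$ is increasing for $x\le (n+1)/e$. If $4n/\lambda_j$ lies in that range, substituting $a+1\le 4n/\lambda_j+1$ gives $\Oh(\frac{n}{\lambda_j}\lg(1+\lambda_j))$. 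Otherwise $\lambda_j=\Oh(1)$, the trivial bound $\Oh(n)$ already suffices, and it matches $\Oh(n\lg(1+\lambda_j)/\lambda_j)$ up to constants. The degenerate case $a=0$ gives $\Oh(\lg n)$, which is also covered since $\lambda_j\le 4n$ whenever $\B_j$ is relevant; otherwise the expression still dominates for large $\lambda_j$ up to the standard $+\Oh(\lg n)$ slack.

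Adding the two parts gives $\Oh(n\lg(1+\lambda_j)/\lambda_j)$. The only delicate step is the monotonicity argument in the zero-run bound; everything else is a geometric-series estimate.
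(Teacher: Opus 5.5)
Your proof is correct and follows the paper's argument. The paper likewise bounds the literal tokens by $\Oh(\sum_{k\ge j} n\lg(k-j+2)/\lambda_k)=\Oh(n/\lambda_j)$ using $\absolute{\B_k}\le 4n/\lambda_k$, and bounds the zero-run tokens through the first bullet of \cref{obs:sparse_size} with $a=\absolute{\B_j}=\Oh(n/\lambda_j)$ non-zero entries. Your extra case analysis makes explicit details the paper leaves implicit: the monotonicity of $x\lg\frac{n+1}{x}$, the case $\lambda_j=\Oh(1)$, and the remark that an empty $\B_j$ with very large $\lambda_j$ needs either $j\le q$ (so that $\lambda_j=\Oh(n)$) or an additive $\Oh(\lg n)$.
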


\begin{restatable}{lemma}{restatecomputeBZero}\label{lem:faster_recomp_intermediate}
    The encoding $\senc{\mathcal B_0}$ can be computed in $\Oh(n / \log_\sigma n)$ time from $T$.
\end{restatable}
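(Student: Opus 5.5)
The plan is to exploit the fact that $\mathcal B_0$ has essentially no zeros. By definition $\mathcal B_0[0]=0$, and every $i\in[1\dd n)=\B_0$ has $\mathcal B_0[i]\ge 1$. So $\senc{\mathcal B_0}$ is one literal token per position after an initial one-symbol zero-run. Encodings of consecutive blocks can therefore be concatenated directly, with no merging of zero-runs across block boundaries. Split the values at the threshold $K+1$. For $i\in[1\dd n)$ we have $\mathcal B_0[i]\le K$ iff $i\notin\B_K$, and then $\mathcal B_0[i]$ is determined by the context $T[i-\alpha_K\dd i+\alpha_K)$, whose length $2\alpha_K$ is at most $(\log_\sigma n)/8$. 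Positions in $\B_K$ are few: there are $\Oh(n/\lambda_K)=\Oh(n/\log_\sigma n)$ of them, and each can be handled individually.

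\textbf{Step 1: capped values via lookup tables.} First compute the bitmasks $\tB_0,\dots,\tB_K$ with \cref{lem:ck}. For every string $S$ of length $2\alpha_K$, tabulate $v(S)=1+\max\{k\le K : S[\alpha_K-\alpha_k\dd \alpha_K+\alpha_k)\in\tB_k\}$. This takes $\tOh(\sqrt n)$ time. As in \cref{lem:bitmask}, for every string $U$ of length between $2\alpha_K$ and $2\alpha_K+b-1$, with $b=\Theta(\log_\sigma n)$, tabulate the concatenation of the literal tokens $\mathtt 1\cdot\gamma(v(\cdot))$ for all windows of $U$. Alongside it, store the bit offsets of the tokens whose value equals $K+1$; these are exactly the windows centred at positions of $\B_K$. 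The output has $\Oh(b\lg K)=\Oh(\lg n)$ bits, so it fits in $\Oh(1)$ words and the table is still computable in $\tOh(\sqrt n)$ time. Work on the padded text $\texttt\textdollar^{\alpha_K}\cdot T\cdot\texttt\textdollar^{\alpha_K}$, cut into overlapping blocks exactly as in \cref{lem:bitmask}, so that each block yields the tokens for $b$ consecutive positions.

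\textbf{Step 2: exact values on $\B_K$.} Obtain $\B_K$ explicitly via \cref{lem:bk-bitmask,lem:bitmasktoexplicit}. Then compute $\B_{K+1}\supseteq\dots\supseteq\B_{q-1}$ with \cref{cor:large_rounds:from_Bk}. Walking these nested sorted lists in parallel with $\B_K$, assign to each $f\in\B_K$ the value $1+\max\{k: f\in\B_k\}$, indexed by the rank of $f$ in $\B_K$. The total cost is $\sum_{k\ge K}|\B_k|=\Oh(n/\log_\sigma n)$.

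\textbf{Step 3: splicing.} Write $\senc{\mathcal B_0}$ left to right. First emit the zero-run token for position $0$. Then, for each block, append its precomputed bit string word-wise. If the block contains positions of $\B_K$, cut the string at the stored offsets and replace each capped token with $\mathtt 1\cdot\gamma(\cdot)$ of the exact value from Step 2. This uses \cref{lem:encode_elias}, and the values are $\Oh(\lg n)$, so each replacement takes constant time. Appending costs $\Oh(1)$ per block plus $\Oh(1)$ per element of $\B_K$. The total is $\Oh(n/\log_\sigma n)$, which is also consistent with $|\senc{\mathcal B_0}|=\Oh(n)$ bits by \cref{obs:sizeofBj}.

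The main obstacle is Step 1's bookkeeping: the tabulated output must be variable-length yet fit in $\Oh(1)$ words, and it must expose the $\B_K$ token offsets so they can be spliced in constant time each. I would first choose $b=\floor{(\log_\sigma n)/8}$. I would check that $b\cdot(2\lg(K+1)+2)=\Oh(\lg n)$; the check holds because $\lg K=\Oh(\lg\lg n)$ and, in the regime where word packing is needed, $\log_\sigma n\le\lg n$. If the check ever failed, I would shrink $b$ by a constant factor.
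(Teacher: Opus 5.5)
Your route is sound in outline, but it differs from the paper's.

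\textbf{How the paper does it.} The paper splits $\mathcal B_0$ into two arrays:
\begin{itemize}
\item the capped array $\mathcal B'=\min(K,\mathcal B_0)$, tabulated blockwise much as in your Step~1;
\item the sparse array $\mathcal B_K$, whose non-zero entries it obtains from the lists $\B_K,\dots,\B_q$ by radix-sorting pairs $(i,k-K+1)$, then encodes with \cref{lem:encode_list}.
\end{itemize}
It then merges $\senc{\mathcal B'}$ and $\senc{\mathcal B_K}$ with a single-state two-stream transducer (output $x$ if $y=0$, else $y+K$), accelerated by \cref{lem:accelerate_multi_stream}.

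\textbf{How yours differs.} You cap at $K+1$, so the capped token itself flags membership in $\B_K$. You then patch the exact values in during a single left-to-right write. This avoids the zip/transducer machinery entirely and is more elementary. The price is ad hoc bookkeeping: offset lists inside table entries, and cutting and reassembling bit strings. The paper's merge, by contrast, is just one more instance of a tool it needs elsewhere anyway. In Step~2, make sure your ``parallel walk'' really costs $\Oh(\sum_{k\ge K}\absolute{\B_k})$, not $\Oh(q-K)$ per element of $\B_K$. For example, walk each $\B_{k+1}$ against $\B_k$, or radix-sort as the paper does.

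\textbf{An error in the analysis.} The bound $b\,(2\lg(K+1)+2)=\Oh(\lg n)$ does not hold in general.
\begin{itemize}
\item Since $b\lg\sigma=\Theta(\lg n)$, the bound would need $\lg K=\Oh(\lg\sigma)$.
\item For constant $\sigma$, you have $b=\Theta(\lg n)$ and $K=\Theta(\lg\lg n)$. So a table entry may have $\Theta(\lg n\cdot\lg\lg\lg n)$ bits and need not fit in $\Oh(1)$ words.
\item Your justification ($\lg K=\Oh(\lg\lg n)$ and $\log_\sigma n\le\lg n$) only yields $\Oh(\lg n\lg\lg n)$.
\item Your fallback does not help either. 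Shrinking $b$ by a constant factor leaves the $\lg K$ factor in place. Shrinking it by a $\Theta(\lg K)$ factor raises the number of blocks to $\omega(n/\log_\sigma n)$ and breaks the time bound.
\end{itemize}

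\textbf{The fix.} Drop the $\Oh(1)$-word requirement and append an $r$-bit entry word-wise in $\Oh(1+r/\lg n)$ time. Each capped value $\min(K+1,\mathcal B_0[i])$ is at most the exact one, so the total number of copied bits is at most $\absolute{\senc{\mathcal B_0}}=\Oh(n)$ by \cref{obs:sizeofBj}. Appending therefore costs $\Oh(n/b+n/\lg n)$ overall, plus $\Oh(1)$ per splice, i.e., $\Oh(\absolute{\B_K})$. This is exactly how the paper treats its own table entries $\senc{H}$ with $\absolute{\senc{H}}>\lg n$. With this amortization in place of the $\Oh(1)$-words claim, your algorithm achieves the stated $\Oh(n/\log_\sigma n)$ time.
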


\begin{proof}
    We will show how to compute the encodings of $\mathcal B_K$ and $\mathcal B'[0 \dd n)$, where the latter is defined by $\forall_{i \in [0\dd n)}\;\mathcal B'[i] := \min(K, \mathcal B_0[i])$.
    Assume for now that $\senc{\mathcal B_K}$ and $\senc{\mathcal B'}$ are given.
    There is a single-state transducer that produces $\mathcal B_0$. It has two input streams, the first reading $\mathcal B'$, the second reading $\mathcal B_K$. On input $(x, y)$, it outputs $x$ if $y = 0$ and $y + K$ otherwise.
    Hence, after preprocessing the transducer in $\Oh(\sqrt{n})$ time with \cref{lem:accelerate_multi_stream}, we can compute $\senc{\mathcal B_0}$ in $\Oh((\absolute{\senc{\mathcal B'}} + \absolute{\senc{\mathcal B_K}} + \absolute{\senc{\mathcal B_0}})/\lg n)$ time. This is bounded by $\Oh(n / \lg n)$ due to $\absolute{\senc{\mathcal B'}} \leq \absolute{\senc{\mathcal B_0}}$ and \cref{obs:sizeofBj}.

    It remains to compute $\senc{\mathcal B'}$ and $\senc{\mathcal B_K}$. For $\senc{\mathcal B_K}$, we obtain each of $\B_K, \dots, \B_q$ as a sorted list using \cref{lem:bk-explicit}. The number of elements in all lists and the required time are $\Oh(\sum_{j = K}^q (1 + n/\lambda_j)) = \Oh(n / \lambda_K) = \Oh(n / \log_\sigma n)$.
    For each $i \in [0\dd n)$ and $j \in [K\dd q]$, if $i \in \B_j$, we produce a pair $(i, j - K + 1)$. Using radix sort, we can filter the pairs such that, for each $i$, we retain only the maximal $j - K + 1$.
    The result is a list of all non-zero entries of $\mathcal B_K$ sorted by position, from which we can obtain $\senc{\mathcal B_K}$ in $\Oh(1 + \absolute{\B_K}) = \Oh(1 + n / \lambda_K) = \Oh(n / \log_\sigma n)$ time with \cref{lem:encode_list}.

    For computing $\senc{\mathcal B'}$, we use a more elaborate version of the idea in \cref{lem:bitmask}.
    We use \cref{lem:ck} to obtain the boundary contexts $\tB_1, \dots,\tB_{K}$ as constant-time oracles in $\Oh(n / \log_\sigma n)$ time.
    We construct a lookup table $L[0\dd \floor{\sqrt{n}})$. For every $S \in [0\dd \sigma)^*$ of length $\absolute{S} \in (2\alpha_K\dd3\alpha_K]$, entry $L[\int(S)]$ stores the following information. Let $H \in [0\dd K]^{\absolute{S} - 2\alpha_K}$ be defined such that, for $i \in [0\dd \absolute{H})$, \[H[i] = \max(\{ j \in [1\dd K] \mid S[i + \alpha_K - \alpha_{j - 1}\dd i + \alpha_K + \alpha_{j - 1}) \in \tB_{j - 1} \} \cup \{0\}).\]
    Then $L[\int(S)] = (\absolute{\senc{H}}, \senc{H})$. 
    Intuitively, $H$ stores the $\mathcal B'$ information for the central fragment of $S$, ignoring the respectively first and last $\alpha_K$ positions of $S$.
    Due to $K, \alpha_K \ll \lg n$ and the constant-time oracles, the table can be computed in $\tilde\Oh(\sqrt{n})$ time.
    Finally, we concatenate the encodings obtained by accessing the table for all the text blocks $T[(h - 1) \cdot \alpha_K \dd \min(n + \alpha_K, (h + 2) \cdot \alpha_K))$ with $h \in [0\dd \ceil{n / \alpha_K})$. (Recall that $T$ is padded with $\texttt\textdollar^{\alpha_K}$ on either side.)
    Note that each fragment $T[h \cdot \alpha_K \dd \min(n, (h + 1) \cdot \alpha_K))$ is the central fragment (not ignored when computing $H$) of exactly one of the considered blocks, which implies that the concatenation of the encodings indeed results in $\senc{\mathcal B'}$.
         Particularly, since $\mathcal B'$ is entirely non-zero except for $\mathcal B'[0]$, we do not need special handling of zero-run tokens.  Whenever we have to append some $\senc{H}$ with $\absolute{\senc{H}} > \lg n$, we append the entry in a word-wise manner in $\Oh(1 + \absolute{\senc{H}} / \lg n)$ time. Hence, the total time is $\Oh(n / \alpha_K + \absolute{\senc{\mathcal B'}} / \lg n) = \Oh(n / \log_\sigma n)$.
\end{proof}

\begin{restatable}{lemma}{restatecomputeBAll}\label{lem:faster_recomp}
    Given $\senc{\mathcal B_0}$, we can compute the sequence $\senc{\mathcal B_1}, \dots, \senc{\mathcal B_q}$ in $\Oh(n / \lg n)$ time.
\end{restatable}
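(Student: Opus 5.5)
Each $\senc{\mathcal B_{j+1}}$ will be derived from $\senc{\mathcal B_j}$ by a single-stream transducer, accelerated with \cref{lem:accelerate_single_stream}. By definition, $\mathcal B_{j+1}[i] = \max(\mathcal B_j[i]-1,0)$ for every $i\in[1\dd n)$, and $\mathcal B_{j+1}[0]=\mathcal B_j[0]=0$. Indeed, $\max(\{k-j\mid i\in\B_k\}\cup\{0\})$ equals the previous value minus one when that value is positive, and $0$ otherwise; here we use that the chain $\B_k$ is descending, so $i\in\B_k$ for all $k$ up to the maximal one. The required transducer therefore has a single state and transition $\delta(s,x)=(s,\max(x-1,0))$. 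Since $\mathcal B_j[i]\le q=\Oh(\lg n)$, every input symbol is $\Oh(\lg n)$ and the transition costs $\Oh(1)$. We therefore meet the hypothesis of \cref{lem:accelerate_single_stream} with $q=1$ state and $N=\sqrt n$ (any $N=n^{\Theta(1)}$ with $N\in\Oh(n/\lg n)$ works), so a single $\Oh(\sqrt n)$-time preprocessing serves all $q$ applications.

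After that preprocessing, the step from $\senc{\mathcal B_j}$ to $\senc{\mathcal B_{j+1}}$ costs $\Oh(1+(\absolute{\senc{\mathcal B_j}}+\absolute{\senc{\mathcal B_{j+1}}})/\lg n)$. By \cref{obs:sizeofBj}, this is $\Oh(1+ n\lg(1+\lambda_j)/(\lambda_j\lg n))$. Summed over $j\in[0\dd q)$, the additive $\Oh(1)$ terms give $q=\Oh(\lg n)$, which is negligible. The other terms form a geometric-type series: $\lambda_j=(8/7)^{\floor{j/2}}$ grows exponentially, so $\sum_j \lg(1+\lambda_j)/\lambda_j=\Oh(1)$. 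In total the time is $\Oh(\sqrt n + \lg n + n/\lg n)=\Oh(n/\lg n)$.

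The only genuinely delicate point is the decrement identity, and in particular that $\mathcal B_j[i]$ is an exact count determined by the maximal $k$. With $\B_k$ nested, $\mathcal B_j[i]>0$ holds iff $i\in\B_j$, and the identity follows directly. Everything else is bookkeeping. One check remains: \cref{obs:sizeofBj} must apply to every $j\le q$ including the last ones. For large $j$ the bound $n\lg(1+\lambda_j)/\lambda_j$ may fall below $\lg n$; the true encoding size of at least $2\floor{\lg n}$ bits (\cref{obs:sparse_size_lower}) is then absorbed by the $\Oh(1)$ term per step, costing $\Oh(q)$ overall.
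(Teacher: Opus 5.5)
Your proposal is correct and follows the paper's proof essentially verbatim. Both use the same single-state decrement transducer $x \mapsto \max(0,x-1)$, accelerated once with preprocessing parameter $N=\Theta(\sqrt n)$, and the same summation $\sum_j n\lg(1+\lambda_j)/(\lambda_j\lg n)=\Oh(n/\lg n)$ via \cref{obs:sizeofBj}. The differences are minor: you invoke \cref{lem:accelerate_single_stream} directly rather than \cref{lem:accelerate_multi_stream}, and you explicitly verify the decrement identity and charge the additive $\Oh(1)$ per round (totaling $\Oh(q)=\Oh(\lg n)$), which the paper leaves implicit.
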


\begin{proof}
     There is a straightforward single-state and single-stream transducer that, for any~$j$, produces ${\mathcal B_{j + 1}}$ from ${\mathcal B_j}$. On input symbol $x$, the transducer outputs symbol ${\max(0, x - 1)}$. Using \cref{lem:accelerate_multi_stream} with preprocessing parameter $\sqrt{n}$, we prepare the transducer for efficiently handling sparse encodings.
     It then takes $\Oh((\absolute{\senc{\mathcal B_{j}}} + \absolute{\senc{\mathcal B_{j + 1}}}) / \lg n)$ time to produce $\senc{\mathcal B_{j + 1}}$ from $\senc{\mathcal B_{j}}$. By \cref{obs:sizeofBj}, this is bounded by $\Oh(n \lg (1+\lambda_j) / (\lambda_j \cdot \lg n))$. Summing over all $j$, the time is $\Oh(\sum_{j = 0}^\infty n \lg (1+\lambda_j) / (\lambda_j \cdot \lg n)) = \Oh(n / \lg n)$, and the preprocessing time is $\Oh(\sqrt{n}) \subset \Oh(n / \lg n)$. 
\end{proof}

\section{Faster \texorpdfstring{\boldmath$\tau$\unboldmath}{tau}-Run Queries}
\label{sec:appendixruns}

\begin{restatable}{theorem}{restatefasterruns}\label{lem:faster_runs}
    Let $c \in \Zp$ be constant. A text $T\in [0\dd \sigma)^{n}$ can be preprocessed in $\Oh(n/\log_\sigma n)$ time for the following type of query. Given $\tau \in [1\dd n]$ and $\ell \in [\tau, c\tau]$, return $\senc{S}, \senc{E}$, where $S, E \in \{0,1\}^n$ indicate the start and end positions of the runs in $\RUNS_{\ell, \floor{\tau/3}}(T)$.
    The output consists of $\Oh(\frac n\tau \lg \tau)$ bits, and a query takes $\Oh(\frac{n \lg \tau}{\tau \lg n})$ time.
\end{restatable}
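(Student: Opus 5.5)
Let $p=\floor{\tau/3}$ and note that $\ell\ge\tau\ge 2p$, so \cref{prp:runs-explicit,lem:runs-vs-r} apply. The plan splits into two regimes, separated by a threshold $\tau_0=\Theta(\log_\sigma n)$, for instance $\tau_0=(\log_\sigma n)/(8c)$. Since runs are in general of length at least $\ell\ge\tau$, and by \cref{fct:overlap} two distinct runs in $\RUNS_{\ell,p}(T)$ overlap by less than $2p\le\frac23\tau$ positions, there are $\Oh(n/\tau)$ such runs. Hence $S$ and $E$ each have $a=\Oh(n/\tau)$ ones, and by \cref{obs:sparse_size} both encodings have $\Oh((a+1)\lg\frac{n+1}{a+1})=\Oh(\frac n\tau\lg\tau)$ bits. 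This settles the output size.

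\textbf{Large $\tau$ ($\tau\ge\tau_0$).} We preprocess $T$ as in \cref{prp:runs-explicit}. At query time we list $\RUNS_{\ell,p}(T)$ in $\Oh(n/(\ell+1-2p))=\Oh(n/\tau)$ time. The runs come sorted by start position and also by end position, so we obtain the non-zero entries of $S$ and of $E$ as two position-sorted lists. We then apply \cref{lem:encode_list} to each list. This gives $\Oh(n/\tau)$ time, and since $\lg\tau=\Omega(\lg\log_\sigma n)$ we have $\Oh(n/\tau)\subseteq\Oh(\frac{n\lg\tau}{\tau\lg n}\cdot\frac{\lg n}{\lg\tau})$. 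Here lies the main obstacle: $\Oh(n/\tau)$ is only within the claimed bound when $\lg n/\lg\tau=\Oh(1)$, which fails for intermediate $\tau$.

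\textbf{Fix for intermediate $\tau$.} I would first try a hierarchical sampling at preprocessing time. For every $\tau'=\tau_0 2^j$, compute $\RUNS_{\ell',\floor{\tau'/3}}$ explicitly, but only at the coarsest granularity $\tau'\ge\lg n$; this costs $\sum_j n/\tau'=\Oh(n/\log_\sigma n)$ in total. For $\tau\ge\tau_1:=\lg^2 n$ we are done, since there $n/\tau\le \frac{n\lg\tau}{\tau\lg n}\cdot\Oh(1)$ whenever $\lg\tau=\Omega(\lg n)$. For $\tau\in[\tau_0,n^{\epsilon}]$, I would instead precompute, during preprocessing, the sparse encoding of the bitmask $\R_{\tau_0,\floor{\tau_0/3}}(T)$ using \cref{prp:runs-bitmask} followed by \cref{lem:encode_ds}. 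At query time I would derive $S$ and $E$ by a constant-state multi-stream transducer, accelerated by \cref{lem:accelerate_multi_stream} with $N=\sqrt n$. Such a transducer, reading sparse encodings of the start/end masks of the finer-level runs together with their periods (as literal values), would filter the runs whose period is at most $p$ and whose length is at least $\ell$. By \cref{lem:runs-vs-r}, runs of period at most $p$ are exactly the maximal intervals of $\R_{\ell,p}$, so length filtering amounts to measuring the zero-run gap between a start and its end. The running time is $\Oh((\text{input}+\text{output encoding})/\lg n)$. The delicate step is ensuring the input encodings for the level of $\tau$ have size $\Oh(\frac n\tau\lg\tau)$ rather than $\Oh(\frac n{\tau_0}\lg\tau_0)$. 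To achieve this, I would store, per dyadic level $2^j$, the sparse encodings of start and end masks (annotated with periods) of $\RUNS_{2^j,\floor{2^j/3}}$, which have total size $\sum_j\Oh(\frac n{2^j}\,j)=\Oh(n/\log_\sigma n)$ bits above $\tau_0$, and compute them level by level from the previous level by such a transducer within preprocessing time. A query then uses the stored level $2^j\le\tau<2^{j+1}$ and runs the constant-state filtering transducer once, keeping runs with period at most $\floor{\tau/3}$ and length at least $\ell$. This works because $\RUNS_{\ell,p}\subseteq\RUNS_{2^j,\floor{2^j/3}}\cup\RUNS_{2^{j+1}\dots}$ after handling the factor-$2$ and $c$ slack by also consulting the level $2^{j-1}$.

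\textbf{Small $\tau$ ($\tau<\tau_0$).} We compute the bitmask $\R_{\ell,p}(T)$ by \cref{prp:runs-bitmask}. By \cref{lem:runs-vs-r}, $S[b]=\R[b]\wedge\neg\R[b-1]$ and $E[e-1]=\R[e-\ell]\wedge\neg\R[e-\ell+1]$, which is $\Oh(1)$ shifted word-parallel operations in $\Oh(n/\lg n)$ time. Since $\tau<\tau_0$ implies $\frac{n\lg\tau}{\tau\lg n}=\Omega(n/\log_\sigma n)$, this fits the bound. We then convert each plain bitmask to its sparse encoding word-wise with a lookup table mapping $\floor{\lg n/2}$-bit chunks to token fragments. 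This is a single-stream transducer with an identity-like transition, so \cref{lem:accelerate_single_stream} applies, using the all-ones-tokens trick, within $\Oh(n/\lg n)$ time.
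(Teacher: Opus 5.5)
\textbf{The gap is in your small-$\tau$ branch.} You rely on the claim that $\tau<\tau_0$ implies $\frac{n\lg\tau}{\tau\lg n}=\Omega(n/\log_\sigma n)$, and this is false. The target query bound does not depend on $\sigma$. It is $o(n/\lg n)$ as soon as $\tau=\omega(1)$. For example, for $\sigma=2$ and $\tau=\Theta(\lg n)$ it is $\Theta(\frac{n\lg\lg n}{\lg^2 n})$, a factor $\lg n/\lg\lg n$ below $n/\lg n$. So a query may not build or scan any plain length-$n$ bitmask, and computing $\mathsf{R}_{\ell,p}(T)$ with \cref{prp:runs-bitmask} at query time is already too slow. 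Tabulating the answers for all $\Theta((\log_\sigma n)^2)$ small pairs $(\tau,\ell)$ does not fit the preprocessing budget either.

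What is missing is a query-independent, already sparse intermediate object that gets smaller as $\tau$ grows. The paper builds it in $\Oh(n/\log_\sigma n)$ time with a lookup table over windows of $\Theta(\log_\sigma n)$ symbols. It is an array whose $i$-th entry sparsely encodes the periods and capped lengths of all runs starting at $i$ that have period below $P=\Theta(\log_\sigma n)$ and length at least three times the period. Chained accelerated single-state transducers then derive filtered copies $R_j$ that keep only runs of length at least $3\cdot 2^j$. These satisfy $|\senc{R_j}|=\Oh(\frac{n}{2^j}\,j)$ and take $\Oh(n/\lg n)$ total time. A query with $\floor{\tau/3}\in[2^j\dd 2^{j+1})$ filters $\senc{R_j}$ with one more single-state transducer, testing each entry by table lookup, in $\Oh(|\senc{R_j}|/\lg n)=\Oh(\frac{n\lg\tau}{\tau\lg n})$ time.

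\textbf{Intermediate regime.} The architecture is right and matches the paper's approach for $\tau\ge P$, but three details fail as written.
(i) Dyadic levels $\mathsf{RUNS}_{2^j,\floor{2^j/3}}(T)$ do not cover a query with $\tau\in[2^j\dd 2^{j+1})$. A run of period close to $2^{j+1}/3$ and length $\ell=\tau<2^{j+1}$ lies in none of the levels $2^{j-1},2^j,2^{j+1}$. Raising the period bound to about $2^{j+1}/3$ is not allowed: it violates $\ell'\ge 2p'$, which \cref{prp:runs-explicit} requires and which also guarantees at most one run per start position. The ratio between consecutive levels must therefore be below $3/2$; the paper uses $\mathsf{RUNS}_{\floor{1.1^j},\floor{0.4\cdot 1.1^j}}(T)$.
(ii) A left-to-right transducer cannot decide $S[b]$ from the distance to the run's future end, and counting that distance would need $\Theta(\tau)$ states. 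Instead, store each run's length and period as literals at its start (and, for $E$, at its end), so the filter becomes a single-state two-stream transducer.
(iii) A level cannot be obtained by filtering the previous one, because it contains runs with larger periods. Computing each level directly with \cref{prp:runs-explicit} and encoding it with \cref{lem:encode_list} costs $\Oh(n/1.1^j)$, which sums to $\Oh(n/\log_\sigma n)$ over the levels above $\Theta(\log_\sigma n)$.

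Finally, $\tau\ge\lg^2 n$ does not give $\lg\tau=\Omega(\lg n)$, so direct listing is within budget only for $\tau\ge n^{\Omega(1)}$. Also, the filtering transducer depends on $(\tau,\ell)$, so its $\Oh(\sqrt n)$ preprocessing is paid per query. That is why the paper uses transducers only for $\tau\le\sqrt n/\lg n$ and lists runs directly above that.
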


\begin{proof}
    We split the construction depending on $\tau \leq P := \floor{\log_\sigma n / (16c + 4) } = \Theta(\log_\sigma n)$. We only describe the solution for $S$ because the solution for $E$ works analogously.
    \subparagraph{Preprocessing for large~\boldmath$\tau$.\unboldmath}
    If $\tau \geq P$, we build $\Oh(\lg n)$ data structures, one for each range $[\floor{1.1^j}\dd\floor{1.1^{j + 1}})$ with integer $j$ satisfying $P / 1.1 <\floor{1.1^j} \leq n$. We focus on a fixed $j$.
    We obtain $R_{j} := \RUNS_{\floor{1.1^j}, \floor{{0.4 \cdot 1.1^j}}}(T)$ as a sorted list using \cref{prp:runs-explicit} in $\Oh(\frac n{1.1^j})$ time, which sums to $\Oh(n / \log_\sigma n)$ over all the $j$.
    
    We define arrays $S_{j,\per}[0\dd n)$ and $S_{j,\len}[0\dd n)$, where $S_{j,\per}[i]$ and $S_{j,\len}[i]$ respectively store the period and length of the unique run in $R_{j}$ starting at position $i$, or $0$ if no such run exists. The uniqueness is guaranteed by \cref{fct:overlap}. Next, we analyze $\absolute{\senc{S_{j,\len}}}$.  As evident from \cref{prp:runs-explicit}, there are $\Oh(n/1.1^j)$ runs in $R_j$. Each run is of length at least $1.1^j$ and has period $\leq 0.4 \cdot 1.1^j$, and hence \cref{fct:overlap} implies that the lengths sum to $\Oh(n)$. By \cref{obs:sparse_size}, it holds $\absolute{\senc{S_{j, \len}}} = \Oh(\frac{n}{1.1^j} \cdot \lg \frac{n}{n / 1.1^j}) = \Oh(\frac n{1.1^j} \cdot j)$.
    The period of a run is less than its length, and thus the encoding of $S_{j,\per}$ clearly cannot be larger than the one of $S_{j,\len}$.

    For each $j$, we store $R_j$, $\senc{S_{j,\per}}$, and $\senc{S_{j,\len}}$. Each array can be obtained by scanning the sorted $R_j$ and applying \cref{lem:encode_list}. This takes $\Oh(\absolute{R_j}) = \Oh(\frac n{1.1^j})$ time, summing to $\Oh(n / \log_\sigma n)$ for all $j$. We also store the sequence of pairs $(\floor{1.1^0},0), (\floor{1.1^1}, 1), \dots, (\floor{1.1^{\ceil{\log_{1.1} n}}}, \ceil{\log_{1.1} n})$.

    \subparagraph{Answering queries for large~\boldmath$\tau$.\unboldmath}
    We obtain the unique $j$ such that $\tau \in [\floor{1.1^j}\dd \floor{1.1^{j + 1}})$ by scanning the sequence of pairs from right to left in $\ceil{\log_{1.1} n} - j + 1 = \Oh(\lg \frac n\tau) = \Oh(\frac{n \lg \tau}{\tau \lg n})$ time. Note that $\RUNS_{\ell, \floor{\tau/3}} (T) \subseteq \RUNS_{\floor{1.1^j}, \floor{{0.4 \cdot 1.1^j}}}(T) = R_j$ due to $\tau / 3 < 0.4 \cdot 1.1^j$.
    Hence, $S$ can be obtained from an appropriately filtered $R_j$. 

    If $\tau \leq \sqrt{n} / \lg n$, we can afford an extra $\Oh(\sqrt{n})\subseteq \Oh(\frac{n \lg \tau}{\tau \lg n})$ term in the query time.  Then, we obtain~$S$ with the following single-state transducer. It reads $S_{j, \per}$ and $S_{j,\len}$ on two input streams. Whenever $S_{j,\len}[i] \geq \ell$ and $S_{j,\per}[i] \leq \floor{\tau / 3}$, it outputs $S[i] = 1$, and otherwise $S[i] = 0$. We preprocess the transducer with \cref{lem:accelerate_multi_stream} and preprocessing parameter $N = \Theta(\sqrt{n})$. It then runs in $\Oh((\absolute{\senc{S_{j, \per}}} + \absolute{\senc{S_{j, \len}}} + \absolute{\senc{S}}) / \lg n)$ time, which is bounded by $\Oh(\frac n{1.1^j} \cdot j / \lg n) = \Oh(\frac{n \lg \tau}{\tau \lg n})$.
    (We have already shown that the input encodings are of size $\Oh(\frac n{1.1^j} \cdot j)$ bits, and this bound also applies to $\senc{S}$).

    If $\tau > \sqrt{n} / \lg n$, then $\frac{\lg n}{\lg \tau} = \Oh(1)$ and we can afford $\Oh(n / \tau)$ query time. Hence, we scan and filter the $\Oh(\frac n{1.1^j}) = \Oh(\frac n\tau)$ runs in $R_j$ and produce $\senc{S}$ using \cref{lem:encode_list}.

    \subparagraph{Preprocessing for small~\boldmath$\tau$.\unboldmath}
    Now we consider the runs of period less than $P$.
    We say that a run is \emph{relevant} if its period is less than $P$ and its length is at least three times its period.
    Note that a query with $\floor{\tau / 3} < P$ is only concerned with relevant runs.
    
    In a moment, we show how to precompute the following array $R[0\dd n)$ in sparse encoding. Each entry $R[i]$ contains $\senc{p_1\ell_1  p_2\ell_2 \dots p_h\ell_h}$, where $p_1 < p_2 < \dots < p_h$ are the periods of all relevant runs starting at position $i$, and $\ell_1, \dots, \ell_h$ are the corresponding lengths.
    For the lengths, however, we truncate their values to be at most $4cP$.
    For any query $\tau, \ell$ with $\floor{\tau / 3} < P$, it holds $\ell  \leq c\tau \leq 4cP$, and thus we gather enough information to determine whether the length of each run exceeds $\ell$. Whenever there is no run starting at some position~$i$, we store $R[i] = \senc{0}$ instead. Note that $R$ contains only non-zero entries.

    If we focus on the relevant runs with period in range $[2^j, 2^{j + 1})$ for some integer $j$, then \cref{fct:overlap} implies that their lengths and periods sum to $\Oh(n)$, and that there are only $\Oh(n / 2^j)$ such runs. The log-sum of lengths is $\Oh(n/2^j \cdot j)$ due to the concave log-function and Jensen's inequality.
    Summing over all the $j$, storing the literal tokens for each period and length takes $\Oh(\sum_{j = 0}^\infty n/2^j \cdot j) = \Oh(n)$ bits.
    Hence, $\Oh(n)$ is also an upper bound for $\absolute{\senc{R}}$.

    For computing $\senc{R}$, we proceed as follows.
    In a lookup table $L$, for every string $S \in [0\dd \sigma)^*$ with $\absolute{S} \in (4cP, 4cP + P]$, we store $L[\int(S)] = \senc{R'[0\dd \absolute{S} - 4cP)}$, where $R'[i]$ is defined exactly like $R[i]$ but for the string $S$ instead of $T$. Note that $4cP + P \leq (\log_\sigma n) / 4$, and hence $\int(S)$ is well-defined. The table can be computed with a naive algorithm in $\Oh(\sqrt{n} \cdot \poly(P)) \subset \Oh(n / \log n)$ time.
    Finally, we compute $\senc{R}$, starting with an empty sequence. For each $x \in [0\dd \ceil{n/P})$, we append $L[\int(T[xP\dd \min(n + 4cP, (x + 4c + 1) \cdot P)))]$, which is always defined due to the padding with $\texttt\textdollar$. Since $R$ contains no zero-run tokens, we do not have to separately handle zero-runs that span multiple length-$P$ chunks. 
    By appending in a word-wise manner, the time is $\Oh(n / P + \absolute{\senc{R}} / w) = \Oh(n / \log_\sigma n)$.
    It is also easy to see that the concatenation of the precomputed encodings is equivalent to $R$. Particularly, since in $R$ we truncate lengths of runs to $4cP$, it is clear that the substring used for accessing the table is sufficiently long.

    \textit{Defining filtered versions of $R$.}
    We created an encoding that consists of $n$ bits and, in principle, stores all the information needed to answer queries with $\floor{\tau / 3} < P$.
    However, we cannot always afford to spend $\Oh(n / \log n)$ query time to scan $\senc{R}$.
    Therefore, we create filtered sequences $R_j$ with $j \in [0\dd \ceil{\lg P})$, where $R_j$ will be responsible for answering queries with $\floor{\tau / 3} \in [2^j\dd 2^{j + 1})$. Note that for such queries, we have $\ell \geq \tau \geq 3\cdot2^j$.
    
    The initial $R_0$ is obtained by replacing all entries $R[i] = \senc{0}$ with $R_0[i] = 0$, and leaving all other entries unchanged.
    Then, each $R_j$ with $j > 0$ can be obtained from $R_{j - 1}$ by discarding all the runs of length less than $3 \cdot 2^j$.
    By our earlier observations, it is clear that there are only $\Oh(\sum_{j' = j}^\infty n/2^{j'})$ runs contributing to $R_j$, and encoding their lengths and periods takes $\absolute{\senc{R_j}} = \Oh(\sum_{j' = j}^\infty n/2^{j'} \cdot j') = \Oh(n / 2^j \cdot j)$ bits.

    \textit{Computing the filtered versions of $R$.}
    If we focus on a single entry of any of the arrays, by \cref{fct:overlap}, the number of relevant runs starting at any position is bounded by $\Oh(\lg P)$. Since the stored periods and lengths are bounded by $\Oh(P)$, it is clear that $\absolute{\senc{p_1\ell_1  p_2\ell_2 \dots p_h\ell_h}} = \Oh(\lg^2 P) = \Oh(\lg^2 \lg n)$, i.e., each $R_j$ is over alphabet $[0\dd 2^{\Oh(\lg^2 \lg n)}]$.
    Hence, we can use a precomputed lookup table $L_j$ that, given $R_{j - 1}[i]$, outputs $R_{j}[i]$ in constant time. 
    Clearly, such a table can be constructed in $\tilde\Oh(\sqrt{n})$ time for all the $j$.
    
    We use a single-state transducer $\mathcal T_j$ to compute $R_j$, where $\mathcal T_0$ merely replaces each symbol $\senc{0}$ with $0$.
    For $j > 0$, transducer $\mathcal T_j$ replaces each non-zero symbol by accessing the lookup table $L_j$.
    Either way, a transition takes constant time, and, by preprocessing all the transducers with \cref{lem:accelerate_single_stream} and $N = \Theta(\sqrt{n})$, we can compute $\senc{R_j}$ in $\Oh((\absolute{\senc{R_{j - 1}}} + \absolute{\senc{R_{j}}}) / \lg n) = \Oh(n / 2^j \cdot j / \lg n)$ time.
    Summing over all the $j$, this results in $\Oh(\sum_{j = 0}^{\ceil{\lg P} - 1}n / 2^j \cdot j / \lg n) = \Oh(n / \log n)$ time.

    Finally, we will also use transducers and a precomputed lookup table to answer queries. This table, when accessed with $\tau, \ell$ such that $\floor{\tau / 3} \in [0\dd P)$ and a sparse encoding $\senc{p_1\ell_1  p_2\ell_2 \dots p_h\ell_h}$ of length less than $(\lg n) / 2$ bits, returns true if and only if there is a pair $p_x\ell_x$ with $p_x \leq \floor{\tau / 3}$ and $\ell_x \geq \ell$. The table can be constructed in $\tilde\Oh(\sqrt{n})$ time.

    \subparagraph{Answering queries for small~\boldmath$\tau$.\unboldmath}

    Given a query $\tau, \ell$ with $\floor{\tau / 3} < P$, we can afford $\Oh(\frac{n \log \tau}{\tau\log n}) \supseteq \Oh(n / \log^2 n)$ query time.
    We first find $j$ such that $\floor{\tau/3} \in [2^j\dd 2^{j + 1})$ by trying all the values in $\Oh(\lg P) \subseteq \Oh(\lg n)$ time.    
    Then, we obtain $S$ by appropriately filtering $R_j$ with a single-state single-stream transducer. For each entry of $R_j$, the transducer answers in constant time whether a run of period $\leq \floor{\tau / 3}$ and length $\geq \ell$ is present using the precomputed lookup table (outputting either a one-bit or a zero-bit depending on the answer).
    We preprocess the transducer with \cref{lem:accelerate_single_stream} and parameter $N = \Theta(\sqrt{n})$.
    Hence, given the precomputed $\senc{R_j}$, we can obtain $\senc{S}$ in $\Oh(\absolute{\senc{R_j}} / \lg n) = \Oh(n / 2^j \cdot j / \log n) = \Oh(\frac{n \lg \tau}{\tau \lg n})$ time.
    \end{proof}

\section{Faster Synchronizing Set Queries}
\label{sec:appendixsynch}

\begin{lemma}\label{lem:mark_two_tau_runs}
    Consider $T \in [0\dd\sigma)^n$ and $\tau \in [1\dd n)$. Let $S$ and $E$ be length-$n$ bitmasks such that $S[a] = 1$ (resp. $E[b - 2\tau + 2] = 1$) if and only if there is a run $T[a\dd b]$ in $\RUNS_{2\tau, \floor{\tau/3}}(T)$. For $i \in [0\dd n)$, it holds $\sum_{j = 0}^i (S[j] - E[j]) \in \{ 0, 1\}$. Furthermore, if $i +2\tau \leq n$, then $\per(T[i\dd i + 2\tau)) \leq \floor{\tau / 3}$ if and only if $\sum_{j = 0}^i (S[j] - E[j]) = 1$.
    
\end{lemma}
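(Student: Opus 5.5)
The plan is to apply \cref{lem:runs-vs-r} with $\ell = 2\tau$ and $p = \floor{\tau/3}$; the hypothesis $\ell \ge 2p$ holds. Assume first that $p \ge 1$; if $\floor{\tau/3} = 0$, both bitmasks are all-zero, no fragment has period $0$, and the claims are immediate. \cref{lem:runs-vs-r} gives a bijection between runs $T[a\dd b] \in \RUNS_{2\tau,\floor{\tau/3}}(T)$ and maximal intervals $[a\dd b+1-2\tau]$ contained in $\R_{2\tau,\floor{\tau/3}}(T)$. Under this correspondence, $S$ marks the left endpoint $a$ of each maximal interval. $E$ marks the position $b-2\tau+2$, which is exactly one past the right endpoint $b+1-2\tau$ of the interval. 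Since the maximal intervals of a set of integers are pairwise disjoint and separated by at least one non-member, the marked positions interleave. In increasing order they read start, end, start, end, and so on, and within one run the start $a$ strictly precedes the end marker $b-2\tau+2 > a$ because $b-a+1 \ge 2\tau$. Two different runs never share a position for $S$, and never share a position for $E$: distinct maximal intervals have distinct left endpoints and distinct right endpoints. A start position of one interval could coincide with the end marker of the preceding interval only if that interval ended immediately before it. That would contradict maximality, since two adjacent intervals would merge.

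From the interleaving I will derive the prefix-sum claim. The value $\sum_{j\le i}(S[j]-E[j])$ counts the intervals whose start is at most $i$ minus the intervals whose end marker is at most $i$. By the interleaving this equals $1$ exactly when $i$ lies in some interval $[a\dd b+1-2\tau]$, that is, when $a \le i < b-2\tau+2$, and $0$ otherwise. Hence the sum lies in $\{0,1\}$ and equals the indicator of $i \in \R_{2\tau,\floor{\tau/3}}(T)$. One subtlety is that $E$ is indexed over $[0\dd n)$. An end marker $b-2\tau+2$ could equal $n-2\tau+2$, which is at most $n-1$ when $\tau \ge 2$, or could fall outside the array. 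If it falls outside, the final interval is simply never closed, and this is consistent with the claim for every $i < n$.

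For the second claim with $i+2\tau \le n$, the definition of $\R_{2\tau,\floor{\tau/3}}(T)$ gives $i \in \R$ if and only if $\per(T[i\dd i+2\tau)) \le \floor{\tau/3}$. Combined with the indicator statement above, this finishes the proof.

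The main obstacle is the careful bookkeeping of the interleaving, in particular excluding coincidences between an $S$-mark and an $E$-mark. I will handle it purely through the maximality and disjointness of the intervals in \cref{lem:runs-vs-r}, so no further periodicity argument is needed.
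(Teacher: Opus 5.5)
Your proof is correct, but it takes a different route from the paper's.

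\textbf{The paper's route.} The paper argues directly from \cref{fct:overlap}. It assigns to each run $T[a\dd b]\in\RUNS_{2\tau,\floor{\tau/3}}(T)$ the closed interval $[a\dd b-2\tau+2]$. If another run had its start or its end marker inside this interval, the two runs would overlap in at least $2\tau-1$ positions, which is impossible when both periods are at most $\floor{\tau/3}$. Disjointness of these intervals gives the $\{0,1\}$ claim. The equivalence with $\per(T[i\dd i+2\tau))\le\floor{\tau/3}$ is then argued again by hand: run extension in one direction, containment in a run in the other.

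\textbf{Your route.} You push all periodicity reasoning into \cref{lem:runs-vs-r}, which already encapsulates that overlap argument. What remains is elementary bookkeeping on the maximal intervals of an integer set. The $S$-marks are members of $\R_{2\tau,\floor{\tau/3}}(T)$. The $E$-marks are the non-members immediately following maximal intervals. Hence the prefix sum is exactly the indicator of $i\in\R_{2\tau,\floor{\tau/3}}(T)$ for every $i\in[0\dd n)$.

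\textbf{Comparison.} Your approach yields a cleaner and stronger intermediate statement from which both parts of the lemma follow at once, at the cost of depending on \cref{lem:runs-vs-r}. The paper's version is self-contained given only \cref{fct:overlap}.

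\textbf{Two small remarks.}
\begin{itemize}
\item \cref{lem:runs-vs-r} is stated for $\ell\in[1\dd n]$, so the case $2\tau>n$ formally needs a separate (trivial) dismissal. In that case there are no such runs and no $i$ with $i+2\tau\le n$.
\item Your concern about $E$-markers falling outside $[0\dd n)$ is vacuous. Runs exist only when $\floor{\tau/3}\ge1$, i.e., $\tau\ge3$, and then $1\le b-2\tau+2\le n+1-2\tau<n$.
\end{itemize}
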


\begin{proof}
    We say that a run $T[a\dd b]$ contributes to interval $[a\dd b - 2\tau + 2]$, meaning that it may cause one-bits in this interval.
    If there is another run $T[c\dd d]$ such that either $c \in [a\dd b - 2\tau + 2]$ or $d - 2\tau + 2 \in [a\dd b - 2\tau + 2]$, then the two runs overlap by at least $2\tau - 1$ symbols. However, since both runs have period at most $\floor{\tau / 3}$, this contradicts \cref{fct:overlap}.
    Therefore, any two distinct runs contribute to disjoint intervals.
    Hence, for any $i \in [0 \dd n)$, there is at most one run $T[a\dd b]$ in $\RUNS_{2\tau,\floor{\tau / 3}}(T)$ for which it holds $a \leq i$ and $b - 2\tau + 2 > i$, which implies $\sum_{j = 0}^i (S[j] - E[j]) \in \{ 0, 1\}$.
    If $\per(T[i\dd i + 2\tau)) \leq \floor{\tau / 3}$, then this fragment extends into a run $T[a\dd b]$ and it holds $i \in [a\dd b - 2\tau +2)$. By our earlier observation, it is clear that $\sum_{j = 0}^{a - 1} (S[j] - E[j]) = 0$, $S[a] = 1 \neq E[a]$, and $\forall_{j \in (a\dd b - 2\tau +2)}\;S[j] = E[j] = 0$.
    Conversely, if $\sum_{j = 0}^i (S[j] - E[j]) = 1$, then this can only be due to a run $T[a\dd b]$ with $a \leq i$ and $b - 2\tau + 2 > i$, which implies that $T[i\dd i + 2\tau)$ is a fragment of this run.
\end{proof}

\begin{lemma}\label{lem:truncateencoding}
    After an $\Oh(\sqrt{n})$-time preprocessing, given $\ell \in [1\dd n)$ and $\senc{V}$ with $V \in [0\dd n)^n$, we can compute $\senc{V[\ell\dd n) \cdot \texttt{\textup{0}}^{\ell}}$ in $\Oh(\ell + \absolute{\senc{V}} / \lg n)$ time.
\end{lemma}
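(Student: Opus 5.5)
The plan is to drop the first $\ell$ symbols of $V$ directly at the level of tokens in $\senc{V}$, splice the truncated remainder onto the untouched suffix of the encoding, and then append the trailing zero-run. The preprocessing consists of the $\Oh(\sqrt{n})$-time tables of \cref{lem:encode_elias,lem:decode_elias} with $N=\Theta(\sqrt{n})$, so that any token of $r$ bits can be decoded or encoded in $\Oh(1+r/\lg n)$ time.

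First, we scan $\senc{V}$ token by token from the left, maintaining the number $p$ of symbols of $V$ covered so far. Each token is decoded with \cref{lem:decode_elias}, and we stop at the first token whose covered range reaches position $\ell$, i.e., when $p+x>\ell$ for a zero-run token $\mathtt{0}\cdot\gamma(x)$, or $p+1>\ell$ for a literal. Every token passed over covers at least one symbol of $V[0\dd \ell)$, so at most $\ell$ tokens are processed. Their decoding cost is $\Oh(1+r/\lg n)$ per token of $r$ bits, which totals $\Oh(\ell+\absolute{\senc{V}}/\lg n)$.

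Next comes the splice at the boundary token. If it is a zero-run covering $V[p\dd p+x)$ with $p<\ell<p+x$, we replace it by the zero-run token $\senc{0^{p+x-\ell}}$. If it starts exactly at position $\ell$, we keep it unchanged. The remaining suffix of the encoding, starting at the bit offset $e$ right after the boundary token (or at the token starting at $\ell$), is then a valid sparse encoding of the rest of $V$. Because tokens are prefix-free and the new first token is either a literal or a zero-run followed by a literal or the end, concatenating the new token with $\senc{V}[e\dd)$ yields exactly $\senc{V[\ell\dd n)}$. The copy is made word-wise, which handles the shifted bit offset in $\Oh(1+\absolute{\senc{V}}/\lg n)$ time using shifts and masks.

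Finally, we append $\mathtt{0}^{\ell}$. This is the step needing the most care, since \cref{obs:encoding_prefix_of_other} shows that a trailing zero-run cannot simply be followed by another zero-run token. To handle it, we first locate the last token of $\senc{V}$. If that token is a zero-run $0^{y}$ (and, for the degenerate case where everything after position $\ell$ is zero, the spliced token itself), we replace it by $\senc{0^{y+\ell}}$, with $y$ taken after the truncation when the two coincide. Otherwise, we append $\senc{0^{\ell}}$.

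Locating the last token can be done by recording the start offset of the last token during a left-to-right pass. A full left-to-right decoding, however, costs one step per token, which can exceed the budget when $V$ has many literals. I expect this to be the main obstacle. The first thing I would try is to parse $\senc{V}$ in $\Theta(\lg n)$-bit chunks with the table of \cref{lem:sparse_longest_prefix_in_word}, which in constant time gives the maximal encodable prefix of a chunk and, via $B'$, the position of the last literal token start. As in \cref{lem:decompose_for_rank_select}, this advances $\Omega(\lg n)$ bits per two steps, apart from long zero-runs that are decoded directly, so the last token's offset and type are found in $\Oh(1+\absolute{\senc{V}}/\lg n)$ time. Re-encoding the final zero-run costs $\Oh(1+\lg n/\lg n)$ time. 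Summing all parts gives $\Oh(\ell+\absolute{\senc{V}}/\lg n)$.
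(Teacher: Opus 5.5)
Your argument is correct, but it takes a different route from the paper. The paper never manipulates tokens directly. It runs a single-state transducer on three input streams $V\cdot\texttt{1}^{\ell}$, $\texttt{1}^{\ell}\cdot\texttt{0}^{n}$, and $\texttt{0}^{n}\cdot\texttt{1}^{\ell}$, whose encodings are trivial to produce from $\senc{V}$. This transducer outputs $\texttt{1}^{\ell}\cdot V[\ell\dd n)\cdot\texttt{0}^{\ell}$, and it is accelerated with \cref{lem:accelerate_multi_stream}. The paper then discards the first $2\ell$ bits of the output encoding, which are exactly $\senc{\texttt{1}^{\ell}}$.

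The artificial prefix $\texttt{1}^{\ell}$ guarantees that the output encoding starts with a fixed block of known length. Both boundary effects you treat by hand are absorbed by the transducer machinery, which already buffers trailing zeros. These are splitting the zero-run that straddles position $\ell$, and merging a trailing zero-run with the appended $\texttt{0}^{\ell}$ (the issue of \cref{obs:encoding_prefix_of_other}). Your token-level approach is more elementary, since it needs no zipping or transducers, and it makes both boundary effects explicit. The price is that you must re-implement a table-driven left-to-right parse just to locate the last token, which is essentially a fragment of what the accelerated transducer does internally.

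A few details in that last step need tightening, though none threatens correctness.
\begin{itemize}
\item Since $V\in[0\dd n)^n$, a step of the chunked parse with $b=0$ can also be caused by a long \emph{literal} token of up to about $2\lg n+2$ bits, not only by a long zero-run. Such a token is likewise decoded directly with \cref{lem:decode_elias} at cost $\Oh(1+r/\lg n)$.
\item The table of \cref{lem:sparse_longest_prefix_in_word} costs $\tilde\Oh(N)$ preprocessing. Choose, e.g., $N=n^{1/3}$ to stay within $\Oh(\sqrt{n})$ while keeping $\lg N=\Theta(\lg n)$.
\item $B'$ marks only literal starts, so it does not locate a final zero-run token. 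If the final chunk ends with a zero-run, its length is $y=a-1-\select_{A'}(a_+)$, or $y=a$ if $a_+=0$. Its start within the chunk is then $b-2\floor{\lg y}-2$.
\end{itemize}
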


\begin{proof}
    We use a single-state transducer with three input streams that read $V_1 := V \cdot \texttt1^{\ell}$, $V_2 := \texttt1^{\ell} \cdot \texttt0^n$, and $V_3 := \texttt0^n \cdot \texttt1^{\ell}$.
    It outputs $V'[0\dd n + \ell)$ with $V'[i] = 1$ if $V_2[i] = 1$, $V'[i] = 0$ if $V_3[i] = 1$, and otherwise $V'[i] = V_1[i]$.
    Hence $V' = \texttt 1^\ell \cdot V[\ell\dd n) \cdot \texttt 0^\ell$.
    We preprocess the transducer in $\Oh(\sqrt{n})$ time with \cref{lem:accelerate_multi_stream}.
    In the claimed query time, it is easy to produce $\senc{V_1}$, $\senc{V_2}$, and $\senc{V_3}$, and then run the transducer to obtain $\senc{V'}$.
    Finally, we discard the initial $2\ell$ bits of $\senc{V'}$, corresponding to $\senc{\texttt1^{\ell}}$.
\end{proof}

\restatefastersss*

\begin{proof}
    If $\tau > \sqrt{n} / \lg n$, then we can afford $\Oh(\frac{n \lg \tau}{\tau \lg n}) = \Oh(n / \tau)$ time to answer a query. In this case, we use \cref{thm:ss-explicit} to obtain the synchronizing set in explicit representation. Then, it is easy to encode its bitmask with \cref{lem:encode_list}, resulting in $\Oh(n / \tau)$ time. From now on, assume $\tau \leq \sqrt{n} / \lg n$, which implies $\tau \le \sqrt{n} = \Oh(\frac{n \lg \tau}{\tau \lg n})$.    
    During preprocessing, we construct the data structures from \cref{lem:faster_recomp_intermediate,lem:faster_recomp,lem:faster_runs}, where the data structure from \cref{lem:faster_runs} will be used with query parameters $\ell = \tau$ and $\ell = 2\tau$.

    Now we explain how to answer a query.
    We use the same technique as in \cref{thm:ss-explicit} to compute $k(\tau)$ in $1 + \Oh(\lg \frac{n}\tau) = \Oh(\frac{n \lg \tau}{\tau \lg n})$ time. 
    We have precomputed $\senc{\mathcal B_{k(\tau)}}$, i.e., the bitmask marking the boundary positions in $\B_{k(\tau)}$. 
    Apart from the synchronizing position contributed by $\tau$-runs, the synchronizing set consists only of (a subset of) positions~$i$ that satisfy $i + \tau \in \B_{k(\tau)}$, or equivalently $\mathcal B_{k(\tau)}[i + \tau] > 0$.
    We shift $\mathcal B_{k(\tau)}$ by removing its initial $\tau$ elements with \cref{lem:truncateencoding}. Afterwards, it holds $\mathcal B_{k(\tau)}[i] > 0$ if and only if $i + \tau$ is a boundary position in $\B_{k(\tau)}$.
    
    Next, we obtain the encodings of $S_\tau$ and $E_\tau$ (resp.\ $S_{2\tau}$ and $E_{2\tau}$) marking the starting and ending positions of all the $\tau$-runs (resp.\ runs in $\RUNS_{2\tau,\floor{\tau / 3}}(T)$), which takes $\Oh(\frac{n \lg \tau}{\tau \lg n})$ time using \cref{lem:faster_runs} with $\ell = \tau$ (resp.\ $\ell = 2\tau$).
    We remove the initial bit of $S_\tau$, and the initial $2\tau - 2$ bits of each of $E_\tau$ and $E_{2\tau}$ using \cref{lem:truncateencoding}. 

    Since each of $\mathcal B_{k(\tau)}$, $S_\tau$, $E_\tau$, and $E_{2\tau}$ is encoded in $\Oh(\frac{n}\tau \lg \tau)$ bits (see \cref{obs:sizeofBj,lem:faster_runs}), truncating them with \cref{lem:truncateencoding} takes $\Oh(\tau + \frac{n \lg \tau}{\tau \lg n}) = \Oh(\frac{n \lg \tau}{\tau \lg n})$ time. 
    From now on, $ \hat{\mathcal B}_{k(\tau)}$, $\hat{S}_\tau$, $\hat{E}_\tau$, and $\hat{E}_{2\tau}$ denote the truncated versions.    
    In particular, $\hat{S}_\tau[i] = 1$ if and only if a $\tau$-run starts at position $i + 1$, and $\hat{E}_\tau[i] = 1$ if and only if a $\tau$-run ends at position $i + 2\tau - 2$, i.e., $\hat{S}_\tau$ and $\hat{E}_\tau$ mark exactly the synchronizing positions contributed by $\tau$-runs.
    The bitmasks $S_{2\tau}$ and $\hat{E}_{2\tau}$ are exactly the ones used in \cref{lem:mark_two_tau_runs}.

    \newcommand{\tstate}[1]{{\setlength{\fboxsep}{2pt}\scriptstyle\textnormal{\fbox{\bfseries#1}}}}

    Finally, we describe a transducer that produces the bitmask of a $\tau$-synchronizing set. It reads $\hat{S}_\tau$, $\hat{E}_\tau$, $S_{2\tau}$, $\hat{E}_{2\tau}$, and $\hat{\mathcal B}_{k(\tau)}$ as input streams.
    The states are $\{\,\tstate0,\tstate1\,\}$ with initial state $0$, where the intended functionality is as follows.
    When reading the input at position~$i$, we enter or already are in state $\tstate 0$ if and only if $\per(T[i\dd i + 2\tau)) > \floor{\tau /3}$.   
    Conversely, we enter or already are in state $\tstate1$ if and only if $\per(T[i\dd i + 2\tau)) \leq \floor{\tau /3}$.
    Hence, regardless of the current state, whenever $S_{2\tau}[i] = 1$, we output $M[i] = 0$ and transition to state $1$.
    Whenever $\hat{E}_{2\tau}[i] = 1$ (this also implies $\hat{E}_\tau[i] = 1$, i.e., the current run has to contribute this position to the synchronizing set), we output $M[i] = 1$ and transition to state $\tstate0$.
    For all so far undefined transitions, we do not change state. The functionality of the states is as intended due to \cref{lem:mark_two_tau_runs}.

    Now we consider the remaining transitions, i.e., the ones with $S_{2\tau}[i] = \hat{E}_{2\tau}[i] = 0$.
    Whenever $\hat{S}_\tau[i] = 1$ or $\hat{E}_\tau[i] = 1$, we output $M[i] = 1$ and stay in the current state (adding the synchronizing positions contributed by $\tau$-runs).
    For all so far undefined transitions from state $\tstate1$, we output $M[i] = 0$ and stay in state $\tstate1$. For all so far undefined transitions from state $\tstate0$, we output $M[i] = \min(1, \hat{\mathcal B}_{k(\tau)}[i])$ and stay in state $0$.    

    The transducer merely emulates the bitmask operations performed in the proof of \cref{thm:ss-bitmask}, and hence it indeed computes the bitmask of a synchronizing set with the claimed properties. 
    We preprocess the transducer with \cref{lem:accelerate_multi_stream} and parameter $N = \Theta(\sqrt{n}) \subset \Oh(n / (\tau \lg n))$. Then, the time for running the transducer is     
    $\Oh((\absolute{\senc{M}} + \absolute{\senc{\hat S_\tau}} + \absolute{\senc{\hat E_\tau}} + \absolute{\senc{S_{2\tau}}} + \absolute{\senc{\hat E_{2\tau}}} + \absolute{\senc{\hat{\mathcal{B}}_{k(\tau)}}}) / \log n)$, which is bounded by $\Oh(\frac{n \lg \tau}{\tau \lg n})$.    
\end{proof}

\restatesparsewithsupport*

\begin{proof}
    We obtain the set in sparse encoding $\senc{M}$ using \cref{thm:sss_sparse}.
    Then, we use \cref{lem:sparse_select,lem:sparse_rank} with preprocessing parameter $N = \Theta(\sqrt{n})$ to obtain the support data structures.
    Recall that the sparse encoding is of size $\Oh(\frac{n \lg \tau}{\tau \lg n})$ bits.
    Hence, for select queries, \cref{lem:sparse_select} immediately results in the claimed complexities.
    For rank queries, we use space parameter $m = \absolute{\senc{M}} / \lg N + \frac{n \lg \tau}{\tau \lg n} = \Theta(\frac{n \lg \tau}{\tau \lg n})$.
    This results in the correct construction time and space. The query time is $\Oh(\lg\frac{\lg (n/m)}{\lg \lg n})= \Oh(\lg\frac{\lg (\tau \cdot \lg n / \lg \tau))}{\lg \lg n})$.
    If $\tau = \Oh(\polylog(n))$ then the time is constant, and otherwise it is bounded by $\Oh(\lg\frac{\lg \tau}{\lg \lg n})$.
\end{proof}

\bibliographystyle{plainurl}
\bibliography{main}

\end{document}